\newtheorem{theorem}{Theorem}[section]
\newtheorem{definition}{Definition}[section]
\newtheorem{lemma}[theorem]{Lemma}
\newtheorem{fact}[theorem]{Fact}
\newtheorem{corollary}[theorem]{Corollary}
\definecolor{light-gray}{gray}{0.92}
\begin{document}

\title{Degrees of Separation: A Flexible Type System \mbox{for Data Race Prevention}}

\author{Yichen Xu}
\email{yichen.xu@epfl.ch}
\affiliation{%
  \institution{EPFL}
  \country{}
}

\author{Aleksander Boruch-Gruszecki}
\email{aleksander.boruch-gruszecki@epfl.ch}
\affiliation{%
  \institution{EPFL}
  \country{}
}

\author{Martin Odersky}
\email{martin.odersky@epfl.ch}
\affiliation{%
  \institution{EPFL}
  \country{}
}

\renewcommand{\shortauthors}{Xu et al.}

\newcounter{todocounter}
\newcommand{\TODO}[1]{%
    \refstepcounter{todocounter}%
    \textcolor{orange}{\textbf{[TODO \#\thetodocounter]:} #1}%
}
\newcommand{\PTODO}[1]{%
    \refstepcounter{todocounter}%
}

\newcommand{\theCalculus}{\textsf{CSC}}

\newcommand{\varemptyset}{\varnothing}

\newcommand{\note}[1]{}

\begin{abstract}
  Data races are a notorious problem in parallel programming.
  There has been great research interest in type systems
  that statically prevent data races.
  Despite the progress in the safety and usability of these systems,
  lots of existing approaches enforce strict alias prevention principles to prevent data races.
  The adoption of these principles is often intrusive,
  in the sense that it invalidates common programming patterns and requires paradigm shifts.
  We propose Capture Separation Calculus (System \theCalculus{}),
  a calculus based on Capture Calculus (System \ccformal{}),
  that achieves static data race freedom while being non-intrusive.
  It allows aliasing in general to permit common programming patterns,
  but tracks aliasing and controls them when that is necessary to prevent data races.
  We study the formal properties of System \theCalculus{} by establishing its type safety and data race freedom.
  Notably, we establish the data race freedom property by proving the confluence of its reduction semantics.
  To validate the usability of the calculus,
  we implement it as an extension to the Scala 3 compiler,
  and use it to type-check the examples in the paper.
\end{abstract}



\keywords{type system, data race, capture calculus}


\maketitle

\setlength\abovedisplayskip{0pt}
\setlength\belowdisplayskip{0pt}
\setlength\abovedisplayshortskip{0pt}
\setlength\belowdisplayshortskip{0pt}

\section{Introduction}

Data races arise when mutable state is shared among multiple parallel procedures,
with at least one of them mutating the state.
They are notorious in parallel programming,
because they cause non-deterministic behaviours in parallel programs,
resulting in bugs that are hard to trace and resolve.

Past decades have witnessed extensive research efforts in data race prevention.
Many of these endeavors have focused on developing type systems that statically eliminate data races.
As the root cause of data races is sharing of mutable state,
these type systems employ aliasing control mechanisms to track and regulate aliases to mutable state.
Examples of such mechanisms
include
reference capabilities \cite{pony1,refcap1},
ownership types \cite{ownership1,ownership2},
balloon types \cite{almeida1997balloon,servetto2013balloon},
and Rust's ownership and borrowing system \cite{rustbook}.

While existing approaches
have made significant progress in terms of both safety and usability,
their alias-prevention principles often invalidate common programming patterns.
As a result,
the adoption of these systems is intrusive, 
requiring a shift of programming paradigms.
This can be a deterrent to migrating existing codebases to these systems,
despite the potential benefits they offer in terms of data race safety.
Take Rust, which offers fearless concurrency thanks to its ownership model \cite{rustbook}, as an example.
Consider the following Rust code:
\inputminted{rust}{code/update.rs}
The \mintinline{rust}{update} function updates the two fields of \mintinline{rust}{Vec2} in place.
The main function
increments the two fields of a \mintinline{rust}{Vec2} by one and simultaneously calculates their sum.
It cannot pass borrow checking as \mintinline{rust}{sum} is mutably borrowed by both closures.
Arguably, this is a reasonable programming pattern.
Aliasing of mutable state in this example is innocuous in terms of data races,
since the two closures are executed sequentially.

While we acknowledge that the benefits of Rust's alias prevention principle extend beyond data race prevention,
playing a vital role in Rust's garbage-collection-free safe memory management,
we argue that data race freedom alone is a significant safety guarantee.
Therefore, 
there is value in exploring a more flexible and non-intrusive type system
that focuses primarily on data race prevention
while being permissive enough to accommodate common programming patterns.
With a better balance between safety and usability,
such type systems facilitates the migration of existing codebases,
thereby promoting broader adoption.

The intrusiveness of the existing approaches can be attributed to the global alias prevention principles they enforce.
Most of the systems maintain an anti-aliasing invariant:
Rust's single ownership rule and the uniqueness of mutable borrows \cite{rustbook},
and the modes on object references in reference capabilities \cite{refcap1}.
A more permissive alternative to such a paradigm
is to stray from enforcing an anti-aliasing invariant globally.
Instead, aliases to mutable state can be generally allowed,
but are tracked so that they can be regulated when necessary.
This constitutes a \emph{control-as-you-need} paradigm.

Following this paradigm,
we propose Capture Separation Calculus (\theCalculus{})
as a non-intrusive approach for data race prevention.
It statically prevents data races
while being flexible and permissive enough to accommodate common programming patterns.
Our approach builds upon Capture Calculus (\ccformal{}) \cite{cc1},
which is originally proposed as an approach to effect checking.
As a lightweight extension to System F$_{<:}$,
\ccformal{} introduces a minimal set of constructs,
while supporting expressive alias tracking with its \emph{capturing types}.
These characteristics make it an ideal foundation for our work.
\theCalculus{} introduces
aliasing control mechanisms on top of \ccformal{},
which regulate mutable state aliases when data races are possible.
This aligns with the \emph{control-as-you-need} paradigm.
In addition, mutable variables and parallelly-executed let bindings are introduced
to model parallel programs with shared mutable states.

Now we demonstrate the \emph{non-intrusiveness} of the calculus with examples.
As \theCalculus{} has been implemented as an extension in the Scala 3 compiler,
we show the examples in Scala.
Consider the following Scala code, which is equivalent to the Rust example provided earlier:
\inputminted{scala}{code/update.scala}
This program is well-typed.
When \theCalculus{} is introduced,
the program remains well-typed without any modification.
The two aliases to \mintinline{scala}{sum} are allowed
since the two closures are executed sequentially,
and no data races can occur.
In contrast,
the following program \emph{should} be rejected
as it \emph{does} incur a data race:
\inputminted{scala}{code/parupdate.scala}
The only difference between \mintinline{scala}{parupdate}
and \mintinline{scala}{update} is that
the two closures are executed in parallel in \mintinline{scala}{parupdate}.
In fact,
\mintinline{scala}{p.x = f(p.x) || p.y = g(p.y)}
is ill-typed in \theCalculus{}
because, to execute two operations in parallel,
the \emph{separation} between them must be established.
To fix this error, the \mintinline{scala}{parupdate} function should
declare its parameters as \texttt{\textbf{sep}}arated:
\inputminted[escapeinside=||]{scala}{code/parupdate2.scala}
Here, \mintinline{scala}{sep{p} f: ...} indicates that
\mintinline{scala}{f} should be separated from \mintinline{scala}{p}.
The set \mintinline{scala}{{p}} is called the \emph{separation degree}.
In \theCalculus{}, \emph{separation} does not imply that
the variables referred to by two parties are non-overlapping.
Instead, it denotes \emph{non-interference}:
if \texttt{f} is separated from \texttt{p},
there is no mutable state
referred to by both \texttt{f} and \texttt{p}
and mutated by either of them.
Similarly, the annotation \mintinline{scala}{sep{p,f}} signifies that
\mintinline{scala}{g} is separated from both \mintinline{scala}{p} and \mintinline{scala}{f}.
After annotating the parameters of \mintinline{scala}{parupdate} with \texttt{\textbf{sep}},
the type error shifts to the call site: 
it is required that \mintinline{scala}{g} is separated from \mintinline{scala}{f},
but both closures mutate \mintinline{scala}{sum}.

The two examples highlight the {non-intrusiveness} of \theCalculus{}:
for sequential programs where data races cannot occur,
they remain unchanged and well-typed;
for parallel programs with shared mutable states,
the alias controlling mechanism ensures data race freedom,
and the additional annotations are concise and informative.
Moreover, the experimental implementation of \theCalculus{} in the Scala 3 compiler
supports \emph{separation degree inference}.
In practice,
this means that users are not required to explicitly specify the separation degrees
of the function arguments
(colored in gray in the example above),
as the compiler is capable of inferring them automatically.
{
Internally,
during the typing process of a function,
the compiler gathers the constraints on the omitted separation degrees and solves the constraints incrementally.
Importantly, the inference is localized.
Once a function is typed, the inferred separation degrees of its arguments are frozen.
}

\note{\color{orange}
You may notice that separation checking bears a resemblance to \emph{framing rules} \cite{refcap1}.
Indeed,
when type-checking a parallel expression \texttt{e1 || e2},
the variable referenced\footnote{More precisely, it should be the \textsf{cv} of a term.} by the two sides are required to be separated.
As the variables referenced by a term delineate a sub-domain of the typing context that is used when type-checking the term,
it is like slicing the typing context into two parts
and type-check each side with the its own portion of the context,
which is parallel to the framing rules.
However, this analogy is imprecise when taking boxes and read-only references into account.
}

In the metatheory,
we establish type safety of the calculus 
by the standard progress and preservation theorems.
Furthermore,
we formally prove data race freedom
by establishing the confluence of reductions.
Specifically, 
despite the possible arbitrary interleaving of the evaluation of the binding and the body in a parallel let binding, 
the program reduces to a deterministic result.

\section{Key Ideas}
\label{sec:informal}

Now we develop the ideas of System \theCalculus{} informally.

\subsection{Capturing Types As an Alias Tracking Device}


\ccshort{} tracks effects by tracking in types the capabilities captured by values.
Capabilities are just program variables in the calculus.
\ccshort{} introduces \emph{capturing types},
which are in the form of $T\capt \set{x_1, \cdots, x_n}$,
where $T$ is a \emph{shape} type that describes the shape of the value
(e.g. an integer, a function, etc.),
and $\set{x_1, \cdots, x_n}$ is the \emph{capture set}
which statically predicts an upper bound of the variables captured by the value.
For instance, the following closure performs an I/O effect via the \texttt{console} capability:
\inputminted{scala}{code/greeting.scala}
Its type is \mintinline{scala}{() ->{console} Unit},
\footnote{Here \mintinline{scala}{A ->{x1,...,xn} B.} is a shorthand for writing a capturing function type, being translated to \mintinline{scala}{(A -> B)^{x1,...,xn}}.}
indicating that the closure at most accesses the capability \texttt{console}.
This essentially tracks the effect of the closure.

In \ccshort{}, capabilities form a hierarchy,
wherein each capability is derived from some existing and more permissive capabilities.
All capabilities are ultimately derived from the special root capability \univ{}.
For instance,
given \mintinline{scala}{f : File^{cap}}
and \mintinline{scala}{logger : Logger^{f}},
we say \mintinline{scala}{logger} is derived from \mintinline{scala}{f}.
It obtains the permission to access a file from \mintinline{scala}{f}
and provides the capability for logging.

Capturing types essentially provide a way to track aliases to capabilities.
To capture a capability is to retain a reference to it, or to \emph{alias} it.
For instance,
by viewing a mutable variable reference
as a capability of accessing the mutable state,
capturing types track their aliases.
In the \texttt{Vec2} example,
the closure 
\mintinline{scala}{x => { sum += x; x + 1 }}
has the capturing type \mintinline{scala}{Int ->{sum} Int},
indicating that it \emph{aliases} the mutable state \texttt{sum}.
The aliasing information can then be used to 
regulate shared mutable state access
in order to prevent data races.
We treat the mutable variable \texttt{sum} as a capability for \emph{full} access to itself.
\theCalculus{} introduces \emph{reader capabilities}
for readonly access to mutable states.
A reader capability of a mutable state $x$ 
can be naturally expressed as a capability derived from $x$,
the full-access capability to the mutable state,
in the capability hierarchy.

\subsection{Separation Degrees and Separation Checking}

Separation degree is attached as an additional set to the bindings,
which is the set of variables that a binding is \emph{separated} from.
Intuitively, it describes the \emph{freshness} of a variable.
For instance, when we are allocating new mutable states:\footnote{The \texttt{new Ref} construct used here corresponds to the $\tLetVarM{D}{x}{y}{t}$ term in the formalism.}
\begin{minted}{scala}
val a = new Ref(0)
//  a :{} Ref[Int]^{cap} 
val b = new Ref(0)
//  ... , b :{a} Ref[Int]^{cap} 
val c = new Ref(0)
//  ... , c :{a,b} Ref[Int]^{cap} 
\end{minted}
We can freely specify the separation degree of the newly-allocated variable,
since it is known to be \emph{fresh}, or \emph{unaliased}.

We have seen in the \texttt{parupdate} example that we can declare a separation degree for function arguments.
For instance, in the following function which resets both mutable integers to zero:
\begin{minted}{scala}
def resetBoth(a: Ref[Int]^{cap}, 
              sep{a} b: Ref[Int]^{cap}): Unit =
  a.set(0) || b.set(0)
\end{minted}
the second argument is declared to be separated from the first argument.
When applying the function,
\emph{separation checking}
ensures that the declared separation degree is respected.
Specifically, separation checking works on two capture sets to see whether they are \emph{separated} from each other,
in the sense that the mutation of a variable does not overlap with read/write accesses to that variable.
For instance, \mintinline{scala}{resetBoth(a, a)} is rejected,
as \texttt{\{a\}} possesses the write-access to \texttt{a} which overlaps with itself.

Separation degrees can be used to establish the separation between variables.
For instance, \mintinline{scala}{resetBoth(a, b)}
typechecks since \texttt{a} is in the separation degree of \texttt{b}.
Additionally, 
as the capture set is an upper bound of what a value may alias,
a variable \mintinline{scala}{x}
is separated from another variable \mintinline{scala}{y}
if what \mintinline{scala}{x} captures is separated from \mintinline{scala}{y}.
\begin{minted}{scala}
val d = a        // d :{} Ref[Int]^{a}
resetBoth(d, b)
\end{minted}
In this example, \texttt{d} is considered to be separated from \texttt{b}
by inspecting its capture set \texttt{\{a\}}.

\subsection{Reader Capabilities}

To differentiate between readonly and writeable aliases,
\theCalculus{} incorporates the concept of \emph{reader capabilities}.
This can be exemplified by:
\begin{minted}{scala}
val cr = c.reader  // cr :{} Rdr[Int]^{c}
a.set(cr.get) || b.set(cr.get)
\end{minted}
Here, \texttt{cr} is a reader capability associated with \texttt{c}
possessing the permission of reading the value of \texttt{c}.
The type \mintinline{scala}{Rdr[Int]} marks reader capabilities.
Any two reader capabilities are considered to be separated from each other.
This is because two readonly accesses to mutable states,
even if they access the same mutable state,
will not result in data races.

In \theCalculus{}, there is a special root capability called the \emph{reader root capability} \rdroot{}. It sits at the top of the capturing hierarchy but below the universal capability \univ{}. 
The introduction of \rdroot{} as a special root capability for readers enhances the expressiveness of the system in terms of polymorphism.
For example, considering adding a \texttt{parmap} method to lists
which maps the elements of the list using the function in parallel:
\begin{minted}{scala}
class List[T]:
  def map[U](f: T ->{cap} U): List[U] = ...
  def parmap[U](f: T ->{rdr} U): List[U] = ...
\end{minted}
Since in \mintinline{scala}{parmap} the function \texttt{f} is run in parallel with itself,
mutating any mutable state from it results in race conditions.
Therefore, it is only safe to read from the mutable states,
which is specified by the capture set \texttt{\{rdr\}}.
In \theCalculus{},
\mintinline{scala}{T ->{cap} U} characterizes the \emph{impure} functions
that perform arbitrary effects and write to any mutable states.
These functions can only be run sequentially.
On the other hand, \mintinline{scala}{T ->{rdr} U} characterizes the \emph{read-only} functions
that can only read from the mutable states,
thus can be run in parallel. 

\subsection{Parallel Semantics}

System \theCalculus{} introduces parallel let-bindings.
Its semantics are similar to \emph{futures}:
the binding and body terms are evaluated in parallel,
until the binding value is needed in the body.
At that point, the body waits for the binding value to be evaluated.
When type checking it,
the separation between the binding and the body is required
in order to prevent data races.
Note that the parallel operator we have been using in the above examples \texttt{t1 || t2}
is actually a syntactic sugar for:
\mintinline{scala}{letpar _ = t1 in t2}.

\section{The Calculus}
\label{sec:formal}

In this section, we present the formal definitions of \theCalculus{}.
Figure \ref{fig:all-typing} shows the typing and evaluation rules of \theCalculus{},
with changes from \ccformal{} highlighted in gray.
We assume readers' familarity with \ccformal{} \cite{cc1}
and focus on explaining the changes and additions made in \theCalculus{}.


\subsection{Syntax}
\newcommand{\bbox}[1]{\mbox{\textbf{\textsf{#1}}}}
\newcommand{\OR}{\mathop{\ \ \ |\ \ \ }}

\begin{figure}[htbp]
  \small
  \begin{center}\rule{1.0\linewidth}{0.4pt}\end{center}

  \resizebox{1.05\linewidth}{!}{%
  $\begin{array}[t]{lllll}
    \bbox{Type Variable} & \multicolumn{2}{l}{X, Y, Z} \\
    \bbox{Variable} & \multicolumn{2}{l}{x, y, z, \univ{}, \rdroot{}} \\[0.5em]
    \bbox{Let Kind} & m & ::= & \seqmode \OR \parmode \\


     \bbox{Value} & v, w & ::= & {\lambda(x \new{:_D} T)t}
        \OR \lambda[X <: S]t
        \OR {\tBox{x}} \OR \new{\reader x} \\

    \bbox{Answer} & a & ::= & v \OR x \\

    \bbox{Term} &s, t& ::= & a
          \OR  x\,y
          \OR  x\,[S]
          \OR  {\textsf{let}_{m}}\ {x} = s \ \textsf{in}\ t \OR \tUnbox{C}{x} \\ 
          & & \OR & \new{\tLetVarM{D}{x}{y}{t}}
          \OR \new{\tRead{x}}  \OR  \new{\tWrite{x}{y}} \\ 

    \bbox{Shape Type} & S & ::= & X
          \OR  \top
          \OR  \forall(x \new{:_D} U) T
          \OR  \forall[X <: S] T \OR \tBox{T} \\
          & & \OR &  \new{\tRef{S}} \OR \new{\tRdr{S}} \\

    \bbox{Type} & T, U & ::= & S \OR {S\capt C} \\




    \bbox{Capture Set} & {C} & ::= & \{x_1, \cdots, x_n\} \\
    \bbox{Separation Degree} & \new{D} & ::= & C \quad\text{if $\univ{}, \rdroot \notin C$} \\[0.5em]

	  \bbox{Typing Context} & \G & ::= & \emptyset \OR \G, X <: S \OR \G, x \new{:_{D}} T \quad \text{if $x \notin \set{\univ{}, \rdroot{}}$} \\

    \bbox{Store Context} &
	  \sta  & ::=  & \cdot \OR \sta, \tVal{x} \mapsto v
    \OR \sta, \tVar{x} := v
    \OR \sta, \tSet{x} := v \\
    \bbox{Evaluation Context} & e  & ::=  &  []  
    \OR  \tLetMode{m}{x}{e}{t}
    \OR \tLetMode{\parmode}{x}{t}{e}
    \end{array}$%
  }

\caption{\label{fig:syntax} Syntax of the system.}
\begin{center}\rule{1.0\linewidth}{0.4pt}\end{center}
\end{figure}

The syntax of \theCalculus{} is presented in Figure \ref{fig:syntax}.

\paragraph{Preliminaries}
The types and terms are mostly the same as in \ccformal{}.
The new type constructors and expressions introduced in \theCalculus{} will be explained later.
\theCalculus{} straightforwardly extend the \textsf{cv} function
to account for the new expressions,
as defined in Section \ref{sec:cv}.
Boxes are important formal devices in \ccformal{},
but they are orthogonal to the development of \theCalculus{},
so we refer interested readers to the \ccformal{} paper \cite{cc1} for more details.

\paragraph{Mutable variables}
In \theCalculus{}, mutable variables are bound by the $\tLetVarM{D}{x}{y}{t}$ expression.
Mutable variable references are assigned the type $\tRef{S}$.
Note that we restrict the type of mutable variable contents to be shape types
so that the user cannot leak a local capability by assigning it to a mutable variable.
$\reader x$ creates a reader capability for the mutable variable $x$,
and $\tRdr{S}$ is the type for readers.
$\tWrite{x}{y}$ writes to and $\tRead{x}$ reads from a mutable variable respectively.

\paragraph{Separation degrees}
A separation degree is a set of program variables,
Note that the two root capabilities $\univ$ and $\rdroot$ are excluded,
since including them will not be meaningful.
Separation degrees are attached to the bindings,
both in the typing contexts and in lambda abstractions.
\subsection{Subcapturing and Subtyping}

The extension of \theCalculus{} does not result in any change in the subtyping rules,
but we still show the rules for the completeness of the presentation.
The last two rules \ruleref{sc-rdr-cap} and \ruleref{sc-reader} are the new rules introduced by \theCalculus{}.
They deal with the subcapturing relations regarding the reader root capability \rdroot{}.
The rule \ruleref{sc-rdr-cap} encodes the subcapturing relation between \rdroot{} and the universal root capability \univ{}.
The rule \ruleref{sc-reader} establishes the subcapturing relation between
reader capabilities and the reader root capability \rdroot{}.

\subsection{Separation Checking}
\begin{figure}[htbp]
  \small
\begin{center}\rule{1.0\linewidth}{0.4pt}\end{center}
  \flushleft{\textbf{Separation \quad $\new{\ninter{\G}{C_1}{C_2}}$}}
  \begin{multicols}{3}
    \infrule[\ruledef{ni-symm}]
    {\ninter{\G}{C_1}{C_2}}
    {\ninter{\G}{C_2}{C_1}}

    \infrule[\ruledef{ni-set}]
    {\overline{\ninter{\G}{x_i}{C_2}}^{i=1,\cdots,n}}
    {\ninter{\G}{\set{x_1, \cdots, x_n}}{C_2}}

    \infrule[\ruledef{ni-degree}]
    {x :_{D} T \in \G \\ y \in D}
    {\ninter{\G}{\set{x}}{\set{y}}}

  \end{multicols}
  \begin{multicols}{2}
    \infrule[\ruledef{ni-var}]
    {x :_{D} S\capt C \in \G \andalso \ninter{\G}{C}{y}}
    {\ninter{\G}{\set{x}}{\set{y}}}
    \infrule[\ruledef{ni-reader}]
    {\sub{\G}{\set{x}}{\set{\rdroot}}\\ \sub{\G}{\set{y}}{\set{\rdroot}}}
    {\ninter{\G}{\set{x}}{\set{y}}}
  \end{multicols}
  \vspace{-0.5cm}
  \caption{Separation Checking Rules}
  \label{fig:ninter}
\begin{center}\rule{1.0\linewidth}{0.4pt}\end{center}
\end{figure}

Figure \ref{fig:ninter} shows the separation checking rules.
The \ruleref{ni-symm} and \ruleref{ni-set} in combination states that
the separation between two sets $\{x_1, \cdots, x_n\}$ and $\{y_1, \cdots, y_m\}$
is equivalent to the separation between each pair of $x_i$ and $y_j$,
i.e. $\ninter{\G}{\set{x_i}}{\set{y_j}}$ for all $i \in \set{1,\cdots, n}$ and $j \in \set{1,\cdots, m}$.
The \ruleref{ni-degree} rule states that $x$ and $y$ is separated if $y$ is an element in $x$'s separation degree.
The \ruleref{ni-var} rule makes use of the capture set,
to determine whether two variable are separated.
The last rule \ruleref{ni-reader} states that all readers are separated from each other.
Indeed, two readonly references cannot introduce data races no matter whether they aliases.

The separation between terms, used in \ruleref{let},
is a syntactic sugar of the separation between their \textsf{cv}:
\begin{equation*}
\ninter{\G}{t}{u} \quad \stackrel{\text{def}}{=} \quad \ninter{\G}{{\cv{t} \cap \dom{\G}}}{{\cv{u} \cap \dom{\G}}}
\end{equation*}

\begin{figure*}[htbp]
\small

\flushleft{\textbf{Subcapturing \quad $\subDft{C_1}{C_2}$}}

\vspace{0.3em}

  \begin{multicols}{6}
    \infrule[\ruledef{sc-trans}]
    {\subDft{C_1}{C_2} \\ \subDft{C_2}{C_3}}
    {\subDft{C_1}{C_3}}

    \infrule[\ruledef{sc-var}]
    {x : \tCap{C}{S} \in \G}
    {\subDft{\set{x}}{C}}

    \infrule[\ruledef{sc-elem}]
    {x \in C}
    {\subDft{\set{x}}{C}}

    \infrule[\ruledef{sc-set}]
    {\overline{\subDft{\set{x}}{C_2}}^{x \in C_1}}
    {\subDft{C_1}{C_2}}

    \newruletrue
    \infax[\ruledef{sc-rdr-cap}]
    {\subDft{\set{\rdroot}}{\set{\univ}}}

    \infrule[\ruledef{sc-reader}]
    {\isrdr{\G}{x}}
    {\subDft{\set{x}}{\set{\rdroot}}}
    \newrulefalse
  \end{multicols}

  where $\new{\isrdr{\G}{x}} \quad\triangleq\quad (x : T \in \G) \wedge (\sub{\G}{T}{\tRdr{S}\capt C})$ for some $T, C, S$

\begin{center}\rule{1.0\linewidth}{0.4pt}\end{center}

  \flushleft{\textbf{Subtyping \quad $\subDft{T}{U}$}}

  \begin{multicols}{3}

    \infax[\ruledef{refl}]
    {\subDft{T}{T}}

    \infrule[\ruledef{trans}]
    {\subDft{T_1}{T_2} \andalso \subDft{T_2}{T_3}}
    {\subDft{T_1}{T_3}}

    \infrule[\ruledef{tvar}]
    {X <: S \in \G}
    {\subDft{X}{S}}

    \infax[\ruledef{top}]
    {\subDft{T}{\top}}

    \infrule[\ruledef{fun}]
    {\subDft{U_2}{U_1} \\
      \sub{\G, {x}:_D {U_2}}{T_1}{T_2}}
    {\subDft{\forall(x :_{{D}} U_1) T_1}{\forall(x :_{{D}} U_2) T_2}}

    \infrule[\ruledef{tfun}]
    {\subDft{S_2}{S_1} \\
      \sub{\G, X <: S_2}{T_1}{T_2}}
    {\subDft{\tTForall{X}{S_1}{T_1}}{\tTForall{X}{S_2}{T_2}}}

    \infrule[\ruledef{capt}]
    {\subDft{C_1}{C_2} \\
      \subDft{S_1}{S_2}}
    {\subDft{\tCap{C_1}{S_1}}{\tCap{C_2}{S_2}}}

    \infrule[\ruledef{boxed}]
    {\subDft{T_1}{T_2}}
    {\subDft{\tBox{T_1}}{\tBox{T_2}}}

  \end{multicols}

\begin{center}\rule{1.0\linewidth}{0.4pt}\end{center}

\flushleft{\textbf{Typing \quad $\typDft{t}{T}$}}

\begin{multicols}{3}

  \infrule[\ruledef{var}]
  {x : S\capt C \in \G}
  {\typ{\G}{x}{S\capt \set{x}}}

  \infrule[\ruledef{sub}]
  {\typDft{t}{T} \\
  \subDft{T}{U}}
  {\typDft{t}{U}}

  \infrule[\ruledef{abs}]
  {\typ{\G, x:_D U}{t}{T} \\
    \wfTyp{\G}{U} \andalso
    \wfTyp{\G}{D}}
  {\typDft{\lambda(x:_D U). t}{\tCap{\cv{t}\!\setminus\!x}{\forall(x:_D U) T}}}

  \infrule[\ruledef{tabs}]
  {\typ{\G, {X} <: {S}}{t}{T} \andalso \wfTyp{\G}{S}}
  {\typDft{\tTLambda{X}{S}{t}}{\tCap{\cv{t}}{\tTForall{X}{S}{T}}}}

  \infrule[\ruledef{app}]
  {\typM{}{\G}{x}{\tCap{C}{\forall(z {:_D} T) U}} \\
    \typ{\G}{y}{T} \andalso \new{\ninter{\G}{y}{D}}}
  {\typDft{x\ y}{\fSubst{z}{y}{U}}}

  \infrule[\ruledef{tapp}]
  {\typDft{x}{\tCap{C}{\tTForall{X}{S}{T}}}}
  {\typDft{x[S]}{\fSubst{X}{S} {U}}}

  \infrule[\ruledef{box}]
  {\typDft{x}{\tCap C S} \\ C \subseteq \dom{\G}}
  {\typDft{\tBox{x}}{\tBox{\tCap C S}}}

  \infrule[\ruledef{unbox}]
  {\typDft{x}{\tBox{\tCap C S}} \\ C \subseteq \dom{\G}}
  {\typDft{\tUnbox{C}{x}}{\tCap{C}{S}}}

\infrule[\ruledef{let}]
{\typDft{s}{T}\\
  \typ{\G, {x :_{\set{}} T}}{t}{U}\andalso
  x \notin \fv{U} \\
  \new{\ninter{\G}{s}{t} \quad\text{if $m = \parmode$}}}
{\typDft{\tLetMode{m}{x}{s}{t}}{U}}

\end{multicols}



\begin{multicols}{4}

  \newruletrue
  \infrule[\ruledef{dvar}]
  {\typDft{y}{S} \\
    \typ{\G, x :_{D} \tCap{\set{\univ}}{\tRef{S}}}{t}{U}\\
    x \notin \fv{U} \andalso \wfTyp{\G}{D}}
  {\typDft{\tLetVarM{D}{x}{y}{t}}{U}}
  \newrulefalse

  \newruletrue
  \infrule[\ruledef{reader}]
  {\typDft{x}{\tRef{S}\capt C}}
  {\typDft{\reader x}{\tRdr{S}\captset{x}}}

  \infrule[\ruledef{read}]
  {\typDft{x}{\tCap{C}{\tRdr{S}}}}
  {\typDft{\tRead{x}}{S}}

  \infrule[\ruledef{write}]
  {\typDft{x}{\tCap{C}{\tRef{S}}}\\
    \typDft{y}{S}}
  {\typDft{\tWrite{x}{y}}{S}}
  \newrulefalse

\end{multicols}

\begin{center}\rule{1.0\linewidth}{0.4pt}\end{center}

\textbf{Reduction \quad $\reduction{\stDft{t}}{\st{\sta^\prime}{{t^\prime}}}$}

\begin{multicols}{4}

  \infrule[\ruledef{apply}]
  {\sta(\tVal{x}) = \lambda(z :_{D} T) t}
  {\reduction{\stDft{\evctx{x\,y}}}{\stDft{\evctx{[y/z]t}}}}

  \infrule[\ruledef{tapply}]
  {\sta(\tVal{x}) = \lambda[X <: S^\prime] t}
  {\reduction{\stDft{\evctx{x[S]}}}{\stDft{\evctx{[S/X]t}}}}

  \infrule[\ruledef{open}]
  {\sta(\tVal{x}) = \Box{} \, y}
  {\reduction{\stDft{\evctx{\tUnbox{C}{x}}}}{\stDft{\evctx{y}}}}

  \newruletrue
  \infrule[\ruledef{get}]
  {\sta(\tVal{x}) = \reader y\\ \sta(\tVar{y}) = v}
  {\reduction{\stDft{\evctx{\tRead{x}}}}{\stDft{\evctx{v}}}}
  \newrulefalse

\end{multicols}

\begin{multicols}{2}

  \infrule[\ruledef{lift-let}]
  {\fv{v} \subseteq \dom{\sta}}
  {\reduction{\stDft{\evctx{\tLetMode{m}{x}{v}{t}}}}{\st{\gamma, \tVal{x} \mapsto v}{\evctx{t}}}}

  \infax[\ruledef{rename}]
  {\reduction{\stDft{\evctx{\tLetMode{m}{x}{y}{t}}}}{\stDft{\evctx{[y/x]t}}}}

\end{multicols}

\begin{multicols}{2}

  \newruletrue
  \infrule[\ruledef{lift-var}]
  {\sta(\tVal{y}) = v}
  {\reduction{\stDft{\evctx{\tLetVarM{D}{x}{y}{t}}}}{\st{\sta, \tVar{x} = v}{\evctx{t}}}}

  \infrule[\ruledef{lift-set}]
  {\sta(\tVal{y}) = v}
  {\reduction{\stDft{\evctx{\tWrite{x}{y}}}}{\st{\sta,\tSet{x} = v}{\evctx{v}}}}
  \newrulefalse

\end{multicols}

\caption{Typing and Evaluation Rules of System \theCalculus{}}
  \label{fig:all-typing}

\end{figure*}

\subsection{Typing}




The majority of the rules are unchanged from \ccformal{}.


In addition to that in \ccformal{},
the \ruleref{app} rule now applies the separation checking
to ensure that the argument conforms to the required separation degree.
\ruleref{let} now accounts for parallel let-bindings,
the separation between the binding and the body terms is checked.
This is to prevent data races when the two terms are evaluated in parallel.
The \ruleref{dvar} rule introduces a mutable variable.
Since the variable is newly allocated and fresh,
we can declare an arbitrary separation degree for it.
\ruleref{reader} types a reader capability derived from a mutable variable $x$.
The \ruleref{read} and \ruleref{write} rules type the reading and writing of a mutable variable respectively.
The \ruleref{read} rule requires a reader while the \ruleref{write} rule requires a mutable variable reference,
which can be viewed as a full read-write capability for accessing the mutable state.

\subsection{Reduction}

The reduction judgement $\st{\sta}{t}\red \st{\sta'}{t'}$ reduces a evaluation configuration to another.
The evaluation configuration consists of a store context
and the term being reduced.
The reduction rules in \ccformal{} has a similar notion for store context 
but formulated in a different way \cite{cc1}.
The rules for the constructs in \ccformal{} remain unchanged.
The notations $\sta(\tVal{x})$ and $\sta(\tVar{x})$
are used to lookup up an immutable or mutable variable in the store context,
respectively.
The definition of $\sta(\tVal{x})$ follows the standard approach of
locating the corresponding binding and returning the associated value.
$\sta(\tVar{x})$ looks up the latest update to the mutable variable $x$,
i.e. the $\tSet{x} := v$ binding,
and if no update is found,
the initial value (in $\tVar{x} := v$) is returned.
The formal definition can be found in Definition \ref{def:lookup-mut}.

The evaluation context $e$
(defined in Figure \ref{fig:syntax})
focuses on a subterm that is to be reduced.
The parallelism of reduction semantics is reflected by the fact that
the evaluation context can focus on
either the binding term or the body term of a parallel let binding.
So that the evaluation of the two terms can be interleaved.
In Section \ref{sec:example_parallel} we illustrate an example of parallel evaluation.

The rules for mutable variables are introduced by \theCalculus{}.
The \ruleref{lift-var} rule creates a mutable variable binding in the store.
It first looks up the initial value $v$,
then lifts $\tVar{x} = v$ to the store.
The \ruleref{write} rule updates a mutable variable $x$ with a new value $v$
by lifting a $\tSet{x} = v$ binding.
The \ruleref{read} rule looks up the value of a mutable variable using the $\sta(\tVar{x})$ function.

\section{Metatheory}
\label{sec:metatheory}

In this section, we present the main formal results developed in the metatheory.
The complete proof
can be found in Appendix \ref{sec:proof}.

\subsection{Type Soundness}

We take the syntactic approach \cite{typesoundness} to establish type safety.
The following two theorems, namely progress and preservation, are proven.
\begin{restatable}[Preservation]{theorem}{preservation}
  \label{thm:preservation}
  If
  (i) $\new{\match{}{\sta}{\G}}$,
  (ii) $\typ{\G}{t}{T}$,
  and (iii) $\reduction{\stDft{t}}{\st{\sta^\prime}{t^\prime}}$,
  then $\exists \G^\prime$ such that
  (1) $\match{}{\sta^\prime}{\G^\prime}$
  and (2) $\typ{\G^\prime}{t^\prime}{T}$.
\end{restatable}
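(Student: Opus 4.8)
The plan is to follow the standard syntactic recipe: prove preservation by case analysis on the reduction rule that fires in hypothesis (iii), using inversion on the typing derivation (ii) and the store-matching hypothesis (i). Since every reduction rule embeds an evaluation context $e$, I would first reduce the general case to the hole case ($e = []$) through a factorization (plugging) lemma, proved by induction on $e$: whenever $\typ{\G}{\evctx{t}}{T}$ holds, the focused subterm $t$ is typeable at some $T_0$ in a context $\G,\Delta$ that extends $\G$ with the let-bindings collected along $e$, and replacing $t$ by any $t'$ typeable at $T_0$ in a suitably extended store recomposes to $\typ{\G'}{\evctx{t'}}{T}$. Alongside this I would assemble the usual lemmas — weakening for fresh term- and type-bindings, a term substitution lemma, and a type substitution lemma — and, crucially for this calculus, show that all of these operations preserve subcapturing, subtyping, and the separation judgment $\ninter{\G}{\cdot}{\cdot}$.

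With that machinery in place I would dispatch the cases. The $\beta$-rules \ruleref{apply} and \ruleref{tapply} follow by inverting the typing of the applied variable (peeling \ruleref{sub} to expose a function or type-function type), using matching to identify the stored value as the corresponding abstraction, and then invoking the substitution lemmas; here the premise $\ninter{\G}{y}{D}$ of \ruleref{app} is exactly what keeps the substituted body well-typed. \ruleref{open} is immediate from inversion on \ruleref{box}/\ruleref{unbox}, and \ruleref{rename} from term substitution. The store-lifting rules each extend $\sta$, so for each I would exhibit the witness $\G'$ and then discharge (1) by reestablishing matching and (2) by weakening: \ruleref{lift-let} adds the immutable binding $x :_{\set{}} T$; \ruleref{lift-var} adds $x :_{D} \tRef{S}\capt\set{\univ}$; \ruleref{lift-set} leaves $\G$ unchanged but appends a fresh $\tSet{x} = v$ entry, re-checking matching. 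Finally, the dereference rules \ruleref{get}/\ruleref{read} use matching to recover that the stored contents carry the declared shape type $S$.

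I expect the real difficulty to concentrate in two interacting invariants. First, $\match{}{\sta}{\G}$ must be strong enough to survive mutation: after \ruleref{lift-set} the appended $\tSet{x} = v$ must preserve matching, which forces the invariant to record that every stored update respects the reference's declared shape type and that no local capability has leaked into a mutable cell — precisely the reason the calculus confines mutable contents to shape types and why \ruleref{write} demands the bare shape type $S$ of its argument. Second, transporting the separation judgments through \ruleref{apply} is delicate: substituting $y$ for $z$ must carry every $\ninter{\G, z:_D T}{\cdot}{\cdot}$ fact down to $\G$, which needs $\ninter{}$ to be monotone under subcapturing (so replacing a variable by one with a smaller capture set cannot destroy separation) and needs the premise $\ninter{\G}{y}{D}$ to reconstruct the instances that \ruleref{ni-degree} previously justified from $z$'s declared degree $D$. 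Threading store well-formedness and separation stability simultaneously through the evaluation-context factorization — in particular when $e$ focuses on one branch of a parallel let-binding while the shared store is being mutated — is where I anticipate the bulk of the work.
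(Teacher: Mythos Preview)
Your overall structure matches the paper's: case analysis on the reduction rule, an evaluation-context factorization lemma, substitution lemmas that preserve separation, and store-matching maintenance. However, your plugging lemma as stated has a genuine gap. You write that ``replacing $t$ by any $t'$ typeable at $T_0$ \ldots recomposes to $\typ{\G'}{\evctx{t'}}{T}$,'' but typeability alone is insufficient: every parallel $\mathsf{let}$ on the spine of $e$ carries a side condition $\ninter{\G}{s}{u}$, and that condition is stated on the \emph{captured variables} of the focused subterm, not on its type. A replacement $t'$ with the same type as $t$ can have a larger or unrelated $\mathsf{cv}$, and then the separation premise fails and the recomposition does not type-check.

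The paper closes this gap with an explicit side condition on the factorization: the replacement $s'$ must be \emph{fresher} than $s$, written $\subi{\G;\Delta}{s'}{s}$, meaning that every separation check satisfied by $s$ in any surrounding evaluation context is also satisfied by $s'$ (Definition~\ref{defn:evctx-inversion}). Each reduction case then discharges this freshness obligation separately. Most cases are easy (the reduct's $\mathsf{cv}$ is a subset, or subcaptures, the redex's), but \ruleref{apply} is not: after $\beta$, $\mathsf{cv}([y/z]s)$ can include all free variables of the function body, which are not in $\mathsf{cv}(x\,y)=\{x,y\}$. The paper handles this by observing that store typing gives $x$ a precise capture set $\mathsf{cs}(T_x)=\mathsf{cv}(s)\setminus\{z\}$, so the reduct's $\mathsf{cv}$ is contained in $\{y\}\cup\mathsf{cs}(T_x)$, and then invokes a dedicated \emph{widening} lemma (Lemma~\ref{lemma:widening-pres-ninter} and Lemma~\ref{lemma:widening-to-subi}) showing that replacing $x$ by its capture set preserves separation in an inert environment. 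You correctly flag separation-through-factorization as the hard part, but you have not identified the mechanism that makes it go through; without the fresher-term relation and the widening lemma, the \ruleref{apply} case will not close. A secondary point: \ruleref{lift-var} does not fit the generic factorization either, since the lifted binding must receive separation degree $\dom{\G}$ rather than the locally declared $D$; the paper proves a bespoke inversion lemma (Lemma~\ref{lemma:evctx-typing-inversion-mut}) for that case.
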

\begin{restatable}[Progress]{theorem}{progress}
  \label{thm:progress}
  If
  (i) $\new{\match{}{\sta}{\G}}$,
  (ii) $\typ{\G}{t}{T}$,
  then
  either $t$ is an answer $a$,
  or $\exists \sta^\prime, t^\prime$ such that
  $\reduction{\stDft{t}}{\st{\sta^\prime}{t^\prime}}$.
\end{restatable}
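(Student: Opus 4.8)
The plan is to prove the statement by induction on the typing derivation $\typ{\G}{t}{T}$, with a case analysis on the last rule applied. The cases \ruleref{var}, \ruleref{abs}, \ruleref{tabs}, \ruleref{box}, and \ruleref{reader} are immediate, since in each the term is already an answer (a bare variable or a value), so the left disjunct holds. The \ruleref{sub} case is discharged directly by the induction hypothesis, as subsumption leaves the term untouched. The real content lies in the elimination forms and the binding constructs, for each of which I must produce a reduction step.

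The common engine for those cases is a canonical-forms argument routed through the store. Every redex is triggered by looking a variable up in $\sta$, so I first appeal to $\match{}{\sta}{\G}$ to guarantee that $\dom{\sta} = \dom{\G}$ --- making the relevant lookups $\sta(\tVal{x})$ and $\sta(\tVar{y})$ total --- and that each stored value is typeable at (a subtype of) the type recorded for the variable in $\G$. Combining this with an inversion lemma on value typing, which reads off the syntactic shape of a value from the shape of its type, fixes the form of the looked-up value in each case. In \ruleref{app} the variable $x$ carries a capturing function type, so $\sta(\tVal{x})$ is a $\lambda$-abstraction and \ruleref{apply} fires; \ruleref{tapp} and \ruleref{unbox} are handled identically through \ruleref{tapply} and \ruleref{open}. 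For shared state, \ruleref{read} gives $x$ a reader type, so $\sta(\tVal{x}) = \reader y$ for a mutable variable $y$ whose current contents $\sta(\tVar{y})$ exist, and \ruleref{get} applies; \ruleref{write} and \ruleref{dvar} reduce by \ruleref{lift-set} and \ruleref{lift-var} once $\match{}{\sta}{\G}$ certifies that the referenced variables are bound.

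The only case needing an internal split is \ruleref{let}, and this is where the evaluation contexts carry their weight. If the binding $s$ is a value, \ruleref{lift-let} lifts it into the store --- its side condition $\fv{s} \subseteq \dom{\sta}$ being discharged by the routine fact that a well-typed term's free variables lie in $\dom{\G} = \dom{\sta}$; if $s$ is a variable, \ruleref{rename} fires; and otherwise the induction hypothesis supplies a step $\reduction{\stDft{s}}{\st{\sta'}{s'}}$, which --- since each reduction rule already embeds an evaluation context --- I propagate through $\tLetMode{m}{x}{e}{t}$ to step the whole binding. I expect the main obstacle to be the canonical-forms/store-typing lemma: stating $\match{}{\sta}{\G}$ with exactly the invariants that simultaneously guarantee the existence of the store bindings and the syntactic shape of their values, and threading the three flavours of store binding ($\tVal{x}$, $\tVar{x}$, $\tSet{x}$) correctly through the mutable-variable cases so that, for instance, a reader is always backed by a genuine mutable variable. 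The separation premises ($\ninter{\G}{y}{D}$ in \ruleref{app} and $\ninter{\G}{s}{t}$ in \ruleref{let}) are irrelevant here, as reduction never re-examines separation; they enter only in preservation and confluence.
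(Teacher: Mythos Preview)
Your proposal is correct and mirrors the paper's proof essentially point-for-point: induction on the typing derivation, immediate answer cases for \ruleref{var}/\ruleref{abs}/\ruleref{tabs}/\ruleref{box}/\ruleref{reader}, IH for \ruleref{sub}, store-lookup plus canonical-forms lemmas for each elimination form, and the same three-way split (value / variable / reducible) in the \ruleref{let} case with the step propagated through the evaluation context (the paper packages this last step as a separate ``trampolining'' lemma). Your remark that the separation premises are irrelevant to progress is also exactly right.
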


$\new{\match{}{\sta}{\G}}$ denotes the typing of store $\sta$:
the bindings in $\G$ consistently assigns types to the store bindings in $\sta$.
The lemmas necessary to prove the two theorems are mostly standard.
The major challenge arise when proving preservation.
When reducing a term $\evctx{s}$ to $\evctx{s'}$,
we must show that typing is preserved after replacing $s$ inside the evaluation context with $s'$.
To address this challenge, 
we introduce the concept of \emph{evaluation context inversion} to analyze evaluation contexts.
It reasons about what terms can be plugged into the evaluation context.
The details can be found in Section \ref{sec:evctx-inv}

\subsection{Data Race Freedom}

We demonstrate that \theCalculus{} is capable of statically eliminating data races
by proving the confluence of reductions,
as stated in the following theorems.
\begin{restatable}[Confluence]{theorem}{confluence}
  \label{thm:confluence}
  Given two equivalent configurations
  $\new{\st{\sta_1}{t} \sequiv \st{\sta_2}{t}}$,
  if
  (1) $\sta_1\vdash t$ and $\sta_2 \vdash t$;
  (2) $\st{\sta_1}{t}\new{\redt} \st{\sta'_1}{t_1}$;
  and (3) $\st{\sta_2}{t}\redt \st{\sta'_2}{t_2}$,
  then
  there exists $\sta''_1, t', \sta''_2$ such that
  (1) $\st{\sta'_1}{t_1}\redt \st{\sta''_1}{t'}$,
  (2) $\st{\sta'_2}{t_2}\redt \st{\sta''_2}{t'}$,
  and (3) $\st{\sta''_1}{t'} \sequiv \st{\sta''_2}{t'}$.
\end{restatable}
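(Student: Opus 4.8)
The plan is to reduce global confluence to a local, one-step statement and then tile, working throughout with the single-step relation $\red$ and proving confluence of its reflexive–transitive closure $\redt$ \emph{modulo} the configuration equivalence $\sequiv$, in the style of Huet's confluence-modulo-$\equiv$. The first ingredient is to pin down what $\sequiv$ guarantees: two configurations are equivalent when their terms coincide up to renaming of store-bound names and their stores agree on the observable contents of every mutable variable (so they may differ only in the ordering of independent updates and in the choice of fresh names). The second ingredient is the observation that the reduction rules are deterministic once an evaluation-context position is fixed; the \emph{only} source of non-determinism is the branching at a parallel let $\tLetMode{\parmode}{x}{s}{t}$, whose evaluation context may focus on either the binding or the body.

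First I would prove that $\sequiv$ is a strong bisimulation for $\red$: if $\st{\sta_1}{t}\sequiv\st{\sta_2}{t}$ and $\st{\sta_1}{t}\red\st{\sta'_1}{t'}$, then $\st{\sta_2}{t}\red\st{\sta'_2}{t'}$ for some $\sta'_2$ with $\st{\sta'_1}{t'}\sequiv\st{\sta'_2}{t'}$. This holds because every rule either leaves the store alone or reads and writes through the $\sta(\tVar{x})$ lookup, which by construction depends only on the observable contents that $\sequiv$ preserves; hence the focused \emph{term} step is identical on both sides and only the store component is transported along the equivalence (with fresh-name choices matched up to renaming).

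The core is the local diamond modulo $\sequiv$: from a single configuration $\st{\sta}{t}$, two one-step reductions $\st{\sta}{t}\red\st{\sta_1}{t_1}$ and $\st{\sta}{t}\red\st{\sta_2}{t_2}$ can be joined by at most one further step on each side to $\sequiv$-equivalent configurations. I would case-split on the relative positions of the two focused redexes. If they coincide, determinism gives $\st{\sta_1}{t_1}=\st{\sta_2}{t_2}$. If one redex is the whole parallel let $\tLetMode{\parmode}{x}{v}{t}$ (a \ruleref{lift-let} step adding the \emph{immutable} binding $\tVal{x}\mapsto v$) while the other lies inside its body, the two steps commute immediately, since appending an immutable binding commutes with any store update and does not disturb the body's redex. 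The decisive case is when the two redexes sit on the two sides of a common parallel let, where I must show the two effectful steps commute up to $\sequiv$. This is exactly where data-race freedom is used: the \ruleref{let} premise $\ninter{\G}{s}{t}$ forces the binding and body to be separated, so no mutable variable is written by one side and accessed by the other. Consequently the two store updates target distinct mutable locations, or are both reads — which commute trivially and are additionally covered by \ruleref{ni-reader} — and swapping their order preserves the observable contents, hence $\sequiv$. To make this robust through an arbitrarily deep evaluation context, I would carry the separation as a typing invariant, re-establishing it after each step via Theorem~\ref{thm:preservation} together with the evaluation-context inversion of Section~\ref{sec:evctx-inv}.

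Finally I would tile. From the local diamond modulo $\sequiv$ and the bisimulation lemma I obtain a strip lemma (one step against a multi-step) by induction on the length of the multi-step reduction, invoking the bisimulation lemma to realign the two sides whenever a diamond tile closes only up to $\sequiv$ rather than on the nose. A second induction, on the length of the first reduction, upgrades the strip lemma to full confluence modulo $\sequiv$ from a single source. Since the theorem starts from two merely equivalent configurations $\st{\sta_1}{t}\sequiv\st{\sta_2}{t}$, I would first transport the reduction $\st{\sta_2}{t}\redt\st{\sta'_2}{t_2}$ across the initial equivalence using the bisimulation lemma, thereby reducing to the single-source case, and then close the diagram. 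The main obstacle is the decisive sub-case of the local diamond: establishing commutation of the two effectful steps on the branches of a parallel let, and in particular showing that the separation witnessed by $\ninter{\G}{s}{t}$ is preserved under reduction and correctly localized through nested evaluation contexts, so that the store-update reordering genuinely lands inside $\sequiv$.
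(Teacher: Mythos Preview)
Your proposal is correct and follows essentially the same route as the paper: a one-step diamond (modulo $\sequiv$) established by case analysis on the relative positions of the two redexes, a bisimulation lemma transporting steps across $\sequiv$, a strip lemma by induction on the length of the longer reduction, and a final tiling. The paper organizes the diamond slightly differently---it already starts the one-step diamond from two $\sequiv$-equivalent configurations rather than a single one, and it packages the redex-position analysis into an explicit lemma classifying any two focuses of the same term as either a subfocus pair or a branched pair---but the substance is the same, including the use of the \ruleref{let} separation premise (via inversion through the evaluation context) to show that in the branched case the two effectful steps touch distinct mutable locations. One small inaccuracy: in the paper $\sequiv$ does \emph{not} quotient by renaming of store-bound names; terms must agree on the nose and stores differ only by permutation of bindings subject to agreement of $\sta(\tVar{x})$ for every mutable $x$, with the Barendregt convention handling freshness globally.
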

\begin{restatable}[Uniqueness of Answer]{theorem}{uniquenessofanswer}
	\label{thm:uniqueness-of-answer}
	For any $t$,
  if (i) $\sta\vdash t$,
  (ii) $\st{\sta}{t}\redt \st{\sta_1}{a_1}$
  and (iii) $\st{\sta}{t}\redt \st{\sta_2}{a_2}$
  then $a_1 = a_2$ and $\sta_1 \sequiv \sta_2$.
\end{restatable}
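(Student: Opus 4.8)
The plan is to obtain the uniqueness of answers as a direct corollary of the Confluence theorem (Theorem~\ref{thm:confluence}). First I would instantiate confluence at the two given reduction sequences. Since $\sequiv$ is reflexive, we have $\st{\sta}{t} \sequiv \st{\sta}{t}$, and hypothesis (i) supplies $\sta \vdash t$ for both copies. Feeding the two reductions $\st{\sta}{t}\redt\st{\sta_1}{a_1}$ and $\st{\sta}{t}\redt\st{\sta_2}{a_2}$ into confluence yields a common meeting point: there exist $\sta_1'', t', \sta_2''$ with $\st{\sta_1}{a_1}\redt\st{\sta_1''}{t'}$, $\st{\sta_2}{a_2}\redt\st{\sta_2''}{t'}$, and $\st{\sta_1''}{t'}\sequiv\st{\sta_2''}{t'}$.

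The crux is then a small auxiliary lemma: every answer is a normal form, i.e. no configuration $\st{\sta}{a}$ takes a single step under $\red$. I would prove this by inspecting the reduction rules of Figure~\ref{fig:all-typing}. Each rule fires only on a configuration of the form $\st{\sta}{\evctx{r}}$ where $r$ is a redex of a fixed syntactic shape (an application $x\,y$, a type application $x[S]$, an unbox $\tUnbox{C}{x}$, a read $\tRead{x}$, a write $\tWrite{x}{y}$, a let-binding, or a mutable-variable binding). An answer is either a value $v$ (a $\lambda$-abstraction, a type abstraction, a box $\tBox{x}$, or a reader $\reader x$) or a bare variable $x$; neither matches any redex shape at the head, and the only nontrivial evaluation contexts descend into let-bindings, so a bare answer admits no decomposition $\evctx{r}$. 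Hence no rule applies, independently of the store $\sta$.

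Combining the two, since $a_1$ and $a_2$ are answers they are irreducible, so the only way to have $\st{\sta_1}{a_1}\redt\st{\sta_1''}{t'}$ (and likewise for the second) is the reflexive, zero-step case. Therefore $t' = a_1$, $\sta_1'' = \sta_1$, and symmetrically $t' = a_2$, $\sta_2'' = \sta_2$. This forces $a_1 = t' = a_2$, establishing the first conclusion. For the second, the join equivalence becomes $\st{\sta_1}{a_1}\sequiv\st{\sta_2}{a_1}$; since configuration equivalence with a common term entails equivalence of the underlying stores, we read off $\sta_1 \sequiv \sta_2$, as required.

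The heavy lifting of this theorem lives entirely in Confluence, which I am assuming; given it, the present statement is routine. The only steps requiring care are the normal-form lemma---where I must check that no reduction rule (in particular none of the store-manipulating rules \ruleref{lift-var}, \ruleref{lift-set}, and \ruleref{get}) can fire on a bare answer regardless of $\sta$---and the final inversion that extracts store equivalence $\sta_1\sequiv\sta_2$ from configuration equivalence once the terms are known to coincide. I expect the former to be the main, though still minor, obstacle.
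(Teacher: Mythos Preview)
Your proposal is correct and follows essentially the same approach as the paper: instantiate Confluence on the two given reductions from the same starting configuration, then use the auxiliary fact that answers are normal forms (the paper states this as a separate lemma and its corollary that $\st{\sta}{a}\redt\st{\sta'}{t}$ forces $t=a$ and $\sta'=\sta$) to collapse the two joining reductions to zero steps and read off $a_1=a_2$ and $\sta_1\sequiv\sta_2$.
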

Theorem \ref{thm:uniqueness-of-answer} is a straightforward corollary of Theorem \ref{thm:confluence}.
Here, $\st{\sta}{t}\new{\redt} \st{\sta'}{t'}$ denotes the transitive and reflexive closure of the small step reduction $\st{\sta}{t}\red \st{\sta'}{t'}$.
Two configurations are considered equivalent ($\new{\st{\sta_1}{t} \sequiv \st{\sta_2}{t}}$)
if the two stores are equivalent (written $\new{\sta_1\sequiv \sta_2}$)
and the terms are equal.

The concept of store equivalence arises from the fact that
bindings can be lifted to the store in different orders
during difference reduction paths.
$\sta_1\sequiv\sta_2$ indicates that $\sta_1$ and $\sta_2$ are \emph{externally indistinguishable}
despite the fact that the bindings may be permuted.
\begin{definition}[Equivalent stores]
  We say two stores $\sta_1$
  and $\sta_2$ are equivalent,
  written $\sta_1 \sequiv \sta_2$,
  iff
  \begin{enumerate}[(1)]
  \item $\sta_2$ is permuted from $\sta_1$;
  \item $\forall x \in \bvar{\sta_1}$,
    $\sta_1(\tVar{x}) = \sta_2(\tVar{x})$.
  \end{enumerate}
\end{definition}


We take the standard approach to prove confluence by showing the diamond property \cite{churchrosser}.
The full proof can be found in Section \ref{sec:confluence-proof}.

\section{Encoding Compound Data Types with Separation}
\label{sec:discussion}


Given that there is no compound data types in the formal system,
how to represent them in the calculus?
It is already demonstrated that compound data types, like linked lists, can be church-encoded in \ccformal{} \cite{cc1}.
Such encodings applies to \theCalculus{} as well.
More interestingly,
in \theCalculus{},
we can extend the church-encoding to retain the separation information in data structures.

The following shows the Böhm-Berarducci representation of a linked list containing \emph{mutually-separated} mutable references:
\begin{minted}{scala}
type Op[E, R] = (x: Ref[E]^) ->{} R ->{x} R
type RefList[X] =
  [R] -> (op: Op[X, R]) ->{cap} (x0: R) ->{cap} R

def nil[X] =
  [R] => (op: Op[X, R]) => (x0: R) => x0
def cons[X](x: Ref[X]^, sep{x} xs: RefList[X]^): 
  RefList[X]^{x, xs} =
  [R] => op => x0 =>
    letpar acc = xs[R](op, x0) in
      op(x, acc)
\end{minted}
The \mintinline{scala}{type} definitions can be thought of as type synonyms.
When applied, they expand to their right-hand-sides with the variables being substituted by the provided arguments.
A \mintinline{scala}{RefList[X]} carries the separation information that
the references stored in it are mutually-disjoint:
there is no aliasing between any two of them.
When constructing a list,
the separation between the new element and the rest of the list
is ensured by the \mintinline{scala}{sep{x}} annotation on the second argument of \mintinline{scala}{cons}.
When eliminating a \mintinline{scala}{RefList[X]},
the separation between the head and the tail is utilized
so that the head and the tail can be processed in parallel
with static data race freedom guarantees.

The following example makes use of the \mintinline{scala}{RefList[X]} data structure
and increments a list of mutable integer references simontaneously.
\begin{minted}{scala}
def incrementAll(xs: RefList[Int]^): Unit =
  xs[Unit](x => acc => x.update(_ + 1), ())

val x = new Ref(0)
val y = new Ref(0)
val z = new Ref(0)
val xs: RefList[Int]^{x, y, z} = 
  cons(x, cons(y, cons(z, nil)))
incrementAll(xs)
\end{minted}
Thanks to the separation information encoded in the list,
it is assured that the parallel updates on the last line are free from data races,
a guarantee upheld by the type system.

It is worth noting that
similar guarentees can be achieved in other systems, 
such as ownership types \cite{ownership1,ownership2} and unique types \cite{refcap1}.
Compared to the existing approaches,
\theCalculus{} is more flexible.
First,
ownership systems impose restrictions on the structure of data by default,
e.g. enforcing a tree-like structure on the heap.
By contrast, \theCalculus{} only enforces the separation between components 
when constructing separation-aware data structures like \mintinline{scala}{RefList[X]}.
Arbitrary aliasing is allowed by default in ordinary compound data types.
For instance, the church-encoded list in \ccformal{} is still valid in \theCalculus{},
allowing arbitrary aliasing between the elements.
Second,
putting a reference into a data structure like \mintinline{scala}{RefList[X]}
does not forbid the usage of the reference itself.
Indeed, it is safe to use the reference after putting it into the list.
What should be prevented is to use the reference \emph{in parallel with} the list.
The following code, which is modified from the previous code snippet, exemplifies the idea:
\begin{minted}{scala}
...
val xs: RefList[Int]^{x, y, z} = 
  cons(x, cons(y, cons(z, nil)))
x.update(_ + 1) // OK
incrementAll(xs) // OK
x.update(_ + 1) || incrementAll(xs) // error
\end{minted}
In unique types,
inserting a reference into a container with unique elements immediately renders the reference unusable.
We acknowledge other systems with stricter limitations
can offer stronger safety guarantees and facilitate advanced compiler optimizations.
Yet, the flexibility of \theCalculus{} is key to ensuring that
the system remains non-intrusive to existing code bases
and retains compatibility with \ccformal{}.

\note{
Note that the encoding only works for \emph{monomorphic} data structures.
To see why,
consider the following examples,
which tries to church-encode a pair of disjoint elements:
\begin{minted}{scala}
type SPair[A, B] = 
  [R] -> (op: (a: A) => (sep{a} b: B) => R) -> R
def spair[A, B](a: A, sep{a} b: B): SPair[A, B] = 
  [R] => op => op(a)(b)
\end{minted}
At first glance, the encoding seems to work.
But in fact, the \mintinline{scala}{sep{a}} annotation,
which ostensibly asserts that the two components are separate,
falls short of doing its job.
The reason is simple but crucial:
type variables and their instances are always pure.
Therefore, any requirements of separation between \mintinline{scala}{a} and \mintinline{scala}{b} are trivially met 
because \mintinline{scala}{a}'s capture set is empty to begin with.
}

\note{\color{red}
On the surface, \theCalculus{} seems equipped to represent data structures with separation information that is \emph{polymorphic}.
Take the following example: 
it employs a straightforward Church encoding to create a pair of disjoint elements.
\begin{minted}{scala}
type SPair[A, B] = [R] -> (op: (a: A) => (sep{a} b: B) => R) -> R
def spair[A, B](a: A, sep{a} b: B): SPair[A, B] = [R] => op => op(a)(b)
\end{minted}
At first glance, the encoding seems to work.
But in fact, the \mintinline{scala}{sep{a}} annotation,
which ostensibly asserts that the two components are separate,
falls short of doing its job.
The reason is simple but crucial:
type variables always instantiate as pure types. 
Therefore, any requirements of separation between \mintinline{scala}{a} and \mintinline{scala}{b} are trivially met 
because \mintinline{scala}{a}'s capture set is empty to begin with.
Actually, it is entirely possible to create a pair of aliasing references:
\begin{minted}[escapeinside=||]{scala}
val x = new Ref(0)
val p: SPair[|$\Box$| Ref[Int]^{x}, |$\Box$| Ref[Int]^{x}] = 
  spair(|$\Box$| x, |$\Box$| x)
\end{minted}
Remarkably, this does not invalidate the soundness of the system.
Before a boxed reference can be used, it must be unboxed, 
and the unboxing step exposes any overlap between the references. 
The following code, which is a continuation of the previous example, illustrates the point.
It has a type error on the second line
because both parties reference \mintinline{scala}{x} and the separation check will not pass:
\begin{minted}[escapeinside=@@]{scala}
def f(a: @$\Box$@ Ref[Int]^{x}, sep{a} b: @$\Box$@ Ref[Int]^{x}) =
  ({x} @$\multimapinv$@ a).update(_ + 1) || ({x} @$\multimapinv$@ b).update(_ + 1)
p[Unit](f)
\end{minted}
}

\note{\color{orange}

Indeed, we can extend \theCalculus{} to directly handle complex data types like pairs and product types. 
This lifts the limitations of using type variables in church-encoded pairs by allowing for impure elements within a pair. 
Furthermore, we can introduce a special product type where both elements are separate from each other.

However, adding pairs to the system complicates the notion of capture sets and separation degrees.
Instead of just variables, 
we now have to consider \emph{paths} to accurately model how different parts of a pair are captured.
The exploration of such a system is beyond the scope of this work.
}

\section{Related Work}
\label{sec:related-work}

The underlying ideas of \theCalculus{} have individually appeared in the literature in different forms.
We now examine the related work sharing the ideas of \theCalculus{}
and discuss how our work differs.

\subsection{Flexibility}

Considerable research attention has been given to
achieving a harmonious balance between flexibility and safety
in type systems for safe concurrency or, more generally, alias control.

One of the earliest work is syntactic control of interference \cite{reynolds1978syntactic}.
Its design principle largely inspires \theCalculus{}:
aliases are possible but are syntactically detectable. 
Its proposed approach for syntactically detecting interference
is similar to separation checking in \theCalculus{}.
The notion of \emph{passive} expressions
characterizes expressions that do not write to global variables.
Two passive expressions therefore never interfere.
Being passive is similar to subcapturing \rdroot{} in \theCalculus{}.
Notably, their interference detection is stricter than separation checking:
when calling a function, the argument must be non-interfering with the function.
This constraint is not enforced in \theCalculus{}
but is expressible as requiring the argument being separated from the function's capture set, i.e.,
\mintinline[breaklines]{scala}{(sep{x1,...,xn} z: T) ->{x1,...,xn} U}.

Recently, \citeauthor{milano2022flexible} proposes a flexible type system for fearless concurrency.
It groups objects into regions.
Intra-region references freely link objects within the same region;
inter-region references are controlled.
Fearless concurrency is achieved
by ensuring that the \emph{reservation} of each thread,
the regions the thread accesses,
is disjoint from others'.
To control inter-region links,
it starts by enforcing a global heap invariant,
requiring the regions pointed to being isolated:
an inter-region reference should be the unique pointer to the reachable subgraph of the target region.
This is a restrictive invariant;
to relax it, they propose \emph{tempered domination},
a mechanism similar to \emph{focus} \cite{fhndrich2002focus},
which exempts the targets of an isolated reference
from the domination invariant
at the cost of tracking aliases to them in the type system.
\citeauthor{milano2022flexible}
ensure the global domination invariant by default,
and permits and tracks aliasing as a special case.
In comparison, \theCalculus{} tracks and allows aliasing by default,
and enforces separation conditions when needed.
\citeauthor{milano2022flexible} formalize message-passing concurrent programs,
where threads communicate via send/recv primitives,
and the reservations are always disjoint;
whereas \theCalculus{} formalizes data-sharing concurrency,
where mutable states are shared directly,
and a state can be shared by multiple threads if they only read the state.

\subsection{Immutability}

\theCalculus{} allows using several reader capabilities in multiple parallel procedures
to immutably share a mutable state.
Similar functionalities are also present in existing systems.

Fractional permissions \cite{boyland2003fractional,boyland2010fractional}
are linear keys guarding accesses to mutable states.
They are splittable:
a permission can be split into arbitrary fractions to perform shared read accesses.
However, a write access requires the full permission,
so fractions of a permission have to be merged before mutating the state.

Other systems offer similar functionalities.
Rust \cite{rustbook} allows creating multiple immutable borrows to a variable;
all immutable borrows must be dropped before the variable can be moved or mutated.
Capability Calculus \cite{crary1999capabilities} uses bounded quantification
to enable the temporary sharing of a capability and recover the full capability afterwards.
With reference capabilities \cite{refcap1},
one can use an isolated reference at mutable or immutable mode within a scope,
then recover the isolation.

In most existing systems,
the original variable (or permission, capability) is rendered unusable
once it is shared,
and becomes usable again when the shared references are relinquished.
In comparison,
\theCalculus{} allows the original variable to co-exist and be used together with its shared references
as long as the usages are not parallelized.

\subsection{Systems with Global Invariants}

A significant number of existing work incorporate certain global anti-aliasing invariants.

Linearity \cite{wadler1990linear,fhndrich2002focus,walker2001linear,rustbook} enforces each variable to be used exactly once.
This effectively disables aliasing, prevents mutable states from being shared,
and thus eliminates data races.
Reference capabilities \cite{refcap1,pony1} tag the references with \emph{modes} that indicate the aliasing status of the referenced memory.
Only isolated references can be mutated in parallel procedures,
and the isolated mode indicates that the reference to the state is globally unique.
Therefore, mutating an isolated reference will not cause data races.
\citeauthor{ownership1} employ ownership types \cite{ownership2} to control object aliasing and to statically enforce the correct synchronization mechanisms.
For instance, thread-local objects, which are objects that are \emph{owned} by the thread and thus cannot be aliased by other threads, can be accessed without the protection of locks.
Rust \cite{rustbook,weiss2019oxide,pearce2021rust} achieves data race freedom via the anti-aliasing principle enforced by its ownership system.
Specifically, when spawning a thread in Rust, the data captured by the thread must be owned by the function,
which disables the aliasing of the mutable states.

\section{Conclusion}
\label{sec:conclusion}

In this paper, we have presented \theCalculus{}, 
a calculus for modeling parallelism with shared mutable states while statically preventing data races. 
\theCalculus{} follows the paradigm of \emph{control-as-you-need} 
by allowing aliases to mutable states in general and regulating them when necessary to prevent data races. 
In the metatheory,
we demonstrate the type soundness of \theCalculus{} by showing the progress and preservation theorems.
Furthermore, we formally prove the data race freedom of \theCalculus{} by showing the confluence of reductions.
A prototype of \theCalculus{} is implemented as an extension to the Scala 3 compiler.


\bibliographystyle{ACM-Reference-Format}
\bibliography{references}

\appendix







\section{Additional Details}

\subsection{Definition of \textsf{cv}}
\label{sec:cv}

\begin{align*}
\small
&\cv{\lambda(x\colon T)t} \quad &= \quad &\cv{t} \setminus {x}, & \\
&\cv{\lambda[X<:S]t} \quad &= \quad &\cv{t}, & \\
&\cv{x} \quad &= \quad &\set{x}, & \\
&\cv{\tLetMode{m}{x}{v}{u}}  \quad &= \quad &\cv{u}, &\text{if $x \notin \cv{u}$} \\
&\cv{\tLetMode{m}{x}{t}{u}}  \quad &= \quad &\cv{t} \cup \cv{u} \setminus {x}, & \\
&\cv{x\,y} \quad &= \quad &\set{x, y}, & \\
&\cv{x[S]} \quad &= \quad &\set{x}, & \\
&\cv{\Box\ x} \quad &= \quad &\set{}, & \\
&\cv{\tUnbox{C}{x}} \quad &= \quad &C \cup \set{x}, & \\
&\new{\cv{\tLetVarM{D}{x}{y}{u}}} \quad &= \quad &\set{y} \cup \cv{u} \setminus {x}, & \\
&\new{\cv{\tWrite{x}{y}}} \quad &= \quad &\set{x, y}, & \\
&\new{\cv{\reader{x}}} \quad &= \quad &\set{x}, & \\
&\new{\cv{\tRead{x}}} \quad &= \quad &\set{x}. & \\
\end{align*}

\subsection{Example of Parallel Evaluation}
\label{sec:example_parallel}

Now we illustrate a concrete example for the reduction of parallel let bindings.
Let the following be the initial evaluation configuration:
\begin{align*}
  \sta \ | \ &\textsf{let}_{\parmode}\ z_1 =  \\
  &\quad \textsf{let}_{\epsilon}\ z_2 = \tRead{x}\ \textsf{in} \\
  &\quad\quad \lambda(z: \textsf{Nat}).\ z_2 + z \\
  &\textsf{in}\ \textsf{let}_{\epsilon}\ z_3 = \tRead{y}\ \textsf{in}\ z_1\ z_3
\end{align*}
It reads two mutable variables $x$ and $y$ (or the reader capabilities or the two mutable states)
and adds them together.
Note that we assume a standard natural number semantics as part of the language,
which is just for the sake of the example and not essential to the parallel reduction.
Additionally, we assume the initial store to be $\sta$
with the value resulted from reading $x$ and $y$ being $1$ and $2$ respectively.
We reduce the reading operation from $x$ in the first step (which is in the binding term of the parallel let binding),
resulting in the following term:
\begin{align*}
  \sta \ | \ &\textsf{let}_{\parmode}\ z_1 =  \\
  &\quad \textsf{let}_{\epsilon}\ z_2 = \new{1}\ \textsf{in} \\
  &\quad\quad \lambda(z: \textsf{Nat}).\ z_2 + z \\
  &\textsf{in}\ \textsf{let}_{\epsilon}\ z_3 = \tRead{y}\ \textsf{in}\ z_1\ z_3
\end{align*}
The second step focuses on the body term of the parallel let binding,
reducing it into:
\begin{align*}
  \sta \ | \ &\textsf{let}_{\parmode}\ z_1 =  \\
  &\quad \textsf{let}_{\epsilon}\ z_2 = 1\ \textsf{in} \\
  &\quad\quad \lambda(z: \textsf{Nat}).\ z_2 + z \\
  &\textsf{in}\ \textsf{let}_{\epsilon}\ z_3 = \new{2}\ \textsf{in}\ z_1\ z_3
\end{align*}
This illustrates the \emph{parallelism} of \theCalculus{}'s reduction, 
the binding term and the body term of the parallel binding are reduced interleavingly.
We continue to evaluate the body term, reducing it into:
\begin{align*}
  (\sta, \new{\tVal{z_3}\mapsto 2}) \ | \ &\textsf{let}_{\parmode}\ z_1 =  \\
  &\quad \textsf{let}_{\epsilon}\ z_2 = 1\ \textsf{in} \\
  &\quad\quad \lambda(z: \textsf{Nat}).\ z_2 + z \\
  &\textsf{in}\ \new{z_1\ z_3}
\end{align*}
This step uses the \ruleref{lift-let} rule to lift the value of $z_3$ to the store context.
At this point, we cannot evaluate the body term any further,
since it requires the value of $z_1$ which is not available yet.
The reduction of body term is blocked until the binding term is fully reduced,
which is similar to awaiting a future.
We continue to reduce the binding term, finishing the reduction of the binding:
\begin{align*}
  (\sta, {\tVal{z_3}\mapsto 2}, \new{\tVal{z_2}\mapsto 1}, \new{\tVal{z_1}\mapsto \lambda(z: \textsf{Nat}).\ z_2 + z}) \ | \ \new{z_1\, z_3}
\end{align*}
At this point,
the binding is fully evaluated and lifted to the store,
which unblocks the reduction of the body term.
This is similar to the resolution of a future.
Finally, we use the \ruleref{apply} rule and reduce the term into:
\begin{align*}
  (\sta, {\tVal{z_3}\mapsto 2}, {\tVal{z_2}\mapsto 1}, {\tVal{z_1}\mapsto \lambda(z: \textsf{Nat}).\ z_2 + z}) \ | \ \new{3}
\end{align*}

\subsection{Definition of Mutable Variable Lookup}

\begin{definition}[Mutable variable lookup]
  \label{def:lookup-mut}
  $\sta(\tVar{x})$ denotes the result of looking up the mutable variable $x$ in the store $\sta$.
  It is defined as follows:
  \begin{align*}
  &\sta(\tVar{x}) \quad &= \quad &v, \quad&\text{if $\sta = \sta^\prime, \tSet{x} = v$}, \\
  &\sta(\tVar{x}) \quad &= \quad &v, \quad&\text{if $\sta = \sta^\prime, \tVar{x} = v$}, \\
  &\sta(\tVar{x}) \quad &= \quad &\sta^\prime(\tVar{x}), \quad&\text{if $\sta = \sta^\prime, \tVar{y} = v$}, \\
  &\sta(\tVar{x}) \quad &= \quad &\sta^\prime(\tVar{x}), \quad&\text{if $\sta = \sta^\prime, \tSet{y} = v$}, \\
  &\sta(\tVar{x}) \quad &= \quad &\sta^\prime(\tVar{x}), \quad&\text{if $\sta = \sta^\prime, \tVal{y} \mapsto v$}. \\
  \end{align*}
  In the last three lines, $x$ and $y$ are distinct variables, i.e. $x \neq y$.
\end{definition}

\section{Proofs}
\label{sec:proof}

In the proofs, we follow the Barendregt convention where all the bound variables are distinct.

\subsection{Proof Devices}

We first introduce the supporting proof devices.
We begin by presenting the typing of stores and evaluation contexts.
Figure \ref{fig:matching-env} defines these judgments.

\begin{wide-rules}
  \textbf{Store Typing \quad $\match{}{\sta}{\G}$}

  \begin{multicols}{2}

    \infax[\ruledef{st-empty}]
    {\match{}{\cdot}{\emptyset}}

    \infrule[\ruledef{st-val}]
    {\match{}{\sta}{\G} \andalso
      \typ{\G}{v}{\tCap{\cv{v}}{S}}}
    {\match{}{(\sta, \tVal{x} \mapsto v)}{\G, x :_{\set{}} \tCap{\cv{v}}{S}}}

    \infrule[\ruledef{st-var}]
    {\match{}{\sta}{\G} \andalso \typ{\G}{v}{S}}
    {\match{}{(\sta, \tVar{x} := v)}{\G, x :_{\dom{\G}} \tCap{\set{\univ}}{\tRef{S}}}}

    \infrule[\ruledef{st-set}]
    {\match{}{\sta}{\G} \andalso x :_{D} \tCap{C}{\tRef{S}} \in \G \\
      \typ{\G}{v}{S}}
    {\match{}{(\sta, \tSet{x} := v)}{\G}}

  \end{multicols}

  \textbf{Evaluation Context Typing \quad $\match{\G}{e}{\Delta}$}

    \infax[\ruledef{ev-empty}]
    {\match{\G}{[]}{\emptyset}}

  \begin{multicols}{2}

    \infrule[\ruledef{ev-let-1}]
    {\match{\G}{e}{\Delta}}
    {\match{\G}{\tLetMode{m}{x}{e}{s}}{\Delta}}

    \infrule[\ruledef{ev-let-2}]
    {\typ{\G}{s}{T}\andalso 
      \match{\G, x :_{\set{}} T}{e}{\Delta}}
    {\match{\G}{\tLetMode{\parmode}{x}{s}{e}}{x :_{\set{}} T, \Delta}}

  \end{multicols}

  \caption{Store and Evaluation Context Typing}
  \label{fig:matching-env}

\end{wide-rules}

\subsubsection{Store Typing}
$\match{}{\sta}{\G}$ states that the store $\sta$ can be typed as a typing context $\G$.
\ruleref{st-val} and \ruleref{st-var} types the immutable and mutable bindings in the store.
\ruleref{st-val} types the value $v$ as capturing a precise capture set $\cv{v}$,
and \ruleref{st-var} introduces the mutable variable binding with the separation degree spanning over the entire context (i.e. being $\dom{\G}$).
These treatments type the store as a precise and strong typing context, which eases the proof.
The \ruleref{st-set} rule types an update to the variable $x$
by verifying that the mutable variable $x$ is defined in the context
and the type of the new value $v$ matches the type of $x$.

\subsubsection{Evaluation Context Typing}
\theCalculus{} additionally introduces the typing of evaluation contexts.
$\match{\G}{e}{\Delta}$ states that the evaluation context $e$ can be typed as $\Delta$ under an existing typing context $\G$,
where in the metatheory the $\G$ is always obtained from the typing of a store $\sta$ ($\match{}{\sta}{\G}$).
The need for this judgment arises from the parallel semantics of the calculus.
When reasoning about a term $\evctx{s}$ under a store $\sta$ (typed as $\match{}{\sta}{\G}$),
there may be not-yet reduced parallel bindings in $e$ and they can be referred to in $s$.
Therefore, to reason about the focused term $s$
we have to extract and type the bindings from the evaluation context $e$ as well (using the judgment $\match{\G}{e}{\Delta}$),
so that $s$ is well-typed under $\G, \Delta$.

In the proof we use $\match{}{\sta; e}{\G; \Delta}$ as a shorthand for
$\match{}{\sta}{\G} \wedge \match{\G}{e}{\Delta}$.
Also, $\sta\vdash t$ denotes
$\exists \G, T. \match{}{\sta}{\G} \wedge \typ{\G}{t}{T}$.

\subsubsection{Evaluation Context Inversion}
\label{sec:evctx-inv}

\begin{definition}[Evaluation context inversion]
  \label{defn:evctx-inversion}
  We say $\typec{\G}{e}{\Delta}{U}{s}{T}$ iff
  (i) $\match{\G}{e}{\Delta}$,
  (ii) $\forall s^\prime.$
  $\typ{\G, \Delta}{s'}{U}$
  and $\greyed{\subi{\G; \Delta}{s^\prime}{s}}$
  imply $\typ{\G}{\evctx{s^\prime}}{T}$.
\end{definition}
The inversion of an evaluation context 
characterizes the terms that can be plugged into the context
while preserving the typing of the entire term.
The notion $\greyed{\subi{\G;\Delta}{s'}{s}}$ denotes that $s'$ is \emph{fresher} than $s$.
This means that replacing $s$ with $s'$ in the evaluation context 
preserves the separation checks,
as stated formally in the following definition:
\begin{definition}[Fresher Term]
	The term $s$ is considered fresher than another term $t$ under typing context $\G, \Delta$
  (written $\subi{\G; \Delta}{s}{t}$),
  iff
  given any $\Delta_1, \Delta_2, e$,
  such that $\Delta = \Delta_1, \Delta_2$,
  $\match{\G, \Delta_1}{e}{\Delta_2}$
  we have
  $\forall C.$
  $\ninter{\G, \Delta_1}{\evctx{t}}{C}$
  implies
  $\ninter{\G, \Delta_1}{\evctx{s}}{C}$.
\end{definition}
This notion is necessary to ensure that the separation checks involved in the typing of the entire term
are preserved when replacing $s$ with $s'$.

\subsubsection{Well-formed Environment}

In the metatheory we assume that all the environments (or typing contexts) we deal with are well-formed.
An environment $\G$ is well-formed if all bindings it contains are well-formed in the defining environment, i.e. given $\G_0, x : T$, we have $\wfTyp{\G_0}{T}$.
Since well-formedness is implicitly assumed,
all the transformations on the environments should preserve well-formedness.

\subsubsection{Inertness}

We introduce the notion of inertness,
which is a property for typing contexts,
to reflect the idea that the separation degrees of the bindings are well-grounded.
Specifically, the separation degree should either
\begin{itemize}
  \item be introduced by a mutable variable binding,
    as a fresh mutable variable can specify an arbitrary separation degree;
  \item or be derivable from the capture set of the type.
    In other words, given a binding $x :_D \tCap{S}{C}$,
    $D$ is consistent with the capture set $C$ if $C\bowtie D$.
\end{itemize}

\begin{definition}[Inert environment]
	We say $\G$ is inert iff
  $\forall x :_{D} T \in \G$,
  either
  (1) $\univ{} \in \new{\cs{T}}$,
  then $\G = \G_1, x :_{D} T, \G_2$ for some $\G_1, \G_2$
  implies that
  $D = \dom{\G_1}$;
  or (2) $\univ{} \notin \cs{T}$,
  then $\G = \G_1, x :_{D} T, \G_2$ for some $\G_1, \G_2$
  implies that
  $\ninter{\G_1}{D}{{\cs{T}}}$.
\end{definition}

$\new{\cs{T}}$ denotes the capture set part of $T$, i.e. $\cs{S\capt C} = C$.

The following facts can be straightforwardly verified by inspecting the definition of store and evaluation context typing.

\begin{fact}
	$\match{}{\sta}{\G}$ implies that $\G$ is inert.
\end{fact}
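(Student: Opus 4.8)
The plan is to proceed by induction on the derivation of $\match{}{\sta}{\G}$, checking that each store-typing rule either leaves $\G$ unchanged or appends a single binding whose separation degree falls into one of the two clauses of the inertness definition. The structural observation that makes the induction go through is that the inertness condition for a binding $x :_D T$ constrains only the \emph{prefix} context $\G_1$ in the decomposition $\G = \G_1, x :_D T, \G_2$ --- it asks either $D = \dom{\G_1}$ or $\ninter{\G_1}{D}{\cs{T}}$, both of which depend solely on what lies to the left of $x$. Appending a fresh binding at the right end of $\G$ therefore leaves the prefix of every pre-existing binding literally unchanged, so by the induction hypothesis all old bindings stay inert without any need for weakening of the separation judgment. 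It thus suffices to verify inertness of the newly introduced binding in each rule.

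For \ruleref{st-empty} the context is empty and inertness holds vacuously, and for \ruleref{st-set} the context $\G$ is unchanged, so the induction hypothesis applies directly. The two interesting cases are \ruleref{st-var} and \ruleref{st-val}. In \ruleref{st-var} the new binding is $x :_{\dom{\G}} \tCap{\set{\univ}}{\tRef{S}}$, whose capture set is exactly $\set{\univ}$; hence $\univ \in \cs{T}$, clause (1) applies, and it demands $D = \dom{\G_1}$ with $\G_1 = \G$ --- satisfied precisely because the rule assigns separation degree $\dom{\G}$. In \ruleref{st-val} the new binding is $x :_{\set{}} \tCap{\cv{v}}{S}$ with empty separation degree. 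Here I would invoke the observation that a well-typed term never contributes a root capability to its $\cv$, so the premise $\typ{\G}{v}{\tCap{\cv{v}}{S}}$ gives $\univ \notin \cv{v} = \cs{T}$; this places us in clause (2), which requires $\ninter{\G}{\set{}}{\cv{v}}$, and this holds vacuously since the empty set is separated from every set by \ruleref{ni-set} with no premises.

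The only real obstacle lies in the \ruleref{st-val} case, namely justifying $\univ \notin \cv{v}$ for well-typed $v$. I expect this to follow by a routine induction on the typing derivation: $\univ$ is never bound in a typing context, so it is not typeable as a term via \ruleref{var}; the only other route by which a capability enters a $\cv$ is the explicit annotation $C$ on an unbox, and \ruleref{unbox} forces $C \subseteq \dom{\G}$, which excludes $\univ$. Since every other $\cv$ clause merely unions and removes variables drawn from typeable subterms, no sub-derivation of a well-typed value can ever contribute $\univ$. With that fact in hand, the required separation in \ruleref{st-val} collapses to the vacuous case, and the entire argument reduces --- as the paper's remark suggests --- to reading off the separation degrees that the store-typing rules assign.
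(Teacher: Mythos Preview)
Your proposal is correct and follows exactly the approach the paper gestures at (``straightforwardly verified by inspecting the definition of store \ldots\ typing''): induction on the store-typing derivation, with the observation that inertness of a binding depends only on its prefix, so extending $\G$ on the right preserves inertness of all existing bindings. You go further than the paper by explicitly identifying and discharging the one non-trivial obligation in the \ruleref{st-val} case---that $\univ \notin \cv{v}$ for a well-typed value $v$---which the paper leaves implicit; your sketch of why this holds (variables in $\cv$ arise only from typeable subterms in $\dom{\G}$ or from the unbox annotation $C \subseteq \dom{\G}$) is sound.
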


\begin{fact}
	$\match{\G}{e}{\Delta}$ implies that $\Delta$ is inert.
\end{fact}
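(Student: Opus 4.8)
The plan is to proceed by structural induction on the derivation of $\match{\G}{e}{\Delta}$, exploiting the observation that $\Delta$ is assembled exclusively by the rules \ruleref{ev-empty}, \ruleref{ev-let-1}, and \ruleref{ev-let-2}, and that only the last of these contributes a binding. Crucially, every binding that \ruleref{ev-let-2} introduces carries the \emph{empty} separation degree $\set{}$. Since inertness is quantified over all bindings $x :_D T \in \Delta$ together with the prefix $\Delta_1$ preceding them, the heart of the argument is to show that a binding with $D = \set{}$ is always well-grounded, and that this property survives the way \ruleref{ev-let-2} grows $\Delta$.

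First I would dispatch the two easy cases. For \ruleref{ev-empty}, $\Delta = \emptyset$ is vacuously inert. For \ruleref{ev-let-1}, the rule passes $\Delta$ through unchanged from the premise $\match{\G}{e}{\Delta}$, so the induction hypothesis yields inertness directly. The interesting case is \ruleref{ev-let-2}, where $\Delta = x :_{\set{}} T, \Delta'$ with $\match{\G, x :_{\set{}} T}{e}{\Delta'}$ and $\typ{\G}{s}{T}$. The freshly prepended binding $x :_{\set{}} T$ sits leftmost, so its prefix is empty; whether or not $\univ \in \cs{T}$, it is trivially well-grounded (either $\set{} = \dom{\emptyset}$, or the separation obligation is discharged vacuously). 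The remaining work is to re-establish inertness for the inherited bindings of $\Delta'$, whose prefixes have all been lengthened by the prepended $x :_{\set{}} T$.

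The hard part will be reconciling this prepend operation with the position-dependence of inertness: the first disjunct of the definition (the one triggered by $\univ \in \cs{T}$) demands $D = \dom{\Delta_1}$, a condition that is destroyed the moment a nonempty prefix is prepended to such a binding. The way through is to observe that no binding type occurring in $\Delta$ carries the root capability, i.e. $\univ \notin \cs{T}$ for every $x :_{\set{}} T \in \Delta$: the binding terms $s$ of parallel lets are typed with capture sets drawn from the program variables of $\G$, never the special capability $\univ$. With this observation every binding falls under the \emph{second} disjunct, whose obligation $\ninter{\Delta_1}{\set{}}{\cs{T}}$ holds vacuously for any prefix $\Delta_1$ and any $\cs{T}$, being an instance of \ruleref{ni-set} over an empty index set. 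Because this obligation does not mention $\Delta_1$ in any essential way, it is manifestly stable under prepending, which is exactly what closes the inductive step. I would therefore isolate the $\univ$-exclusion as a small auxiliary lemma about the typing of parallel-let bindings and feed it into the induction; modulo that lemma, the remainder is the routine case analysis sketched above.
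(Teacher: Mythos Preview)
The paper offers no proof for this statement, dispatching it together with the companion fact for store typing as ``straightforwardly verified by inspecting the definition.'' Your inductive structure is the natural one, and you correctly isolate the only interesting case: \ruleref{ev-let-2} prepends a binding with empty separation degree, and the position-sensitive first clause of inertness (the one triggered by $\univ\in\cs{T}$) is not obviously preserved under prepending.

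The gap is your proposed auxiliary lemma. The claim that $\univ\notin\cs{T}$ for every type $T$ introduced by \ruleref{ev-let-2} is false: the premise of that rule is merely $\typ{\G}{s}{T}$, and via \ruleref{sub} together with the fact that every capture set subcaptures $\set{\univ}$, one can always instantiate $T$ as $S\capt\set{\univ}$. So nothing in the rules excludes a non-initial binding of $\Delta$ whose capture set contains $\univ$, and under the case-split reading of inertness (which the paper's own use in Lemma~\ref{lemma:widening-pres-ninter} requires), such a $\Delta$ fails to be inert. You have in fact surfaced an imprecision in the paper's statement. What the metatheory actually relies on, and what \emph{is} immediate from the rules, is the weaker fact that every binding in $\Delta$ carries separation degree $\set{}$: at the one place inertness of such a binding could be consulted (the \ruleref{ni-degree} case of Lemma~\ref{lemma:widening-pres-ninter}), the hypothesis $x\in D'$ is vacuous when $D'=\set{}$, and the second clause $\ninter{\G_1}{\set{}}{\cs{T}}$ is an instance of \ruleref{ni-set} over an empty index. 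Prove that instead; do not chase the $\univ$-exclusion.
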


\subsubsection{Binding Depth}

Binding depth is the ``index'' of a binding in the context.
It helps us to define the induction schemes on capture sets.

\begin{definition}[Depth]
	We define the depth of the variable $x$ in an environment $\G$
  as the index of $x$ in $\G$, i.e.
  \begin{align*}
    &\fDepth{\G, x : T}{x} \quad &= \quad &| \G | \\
    &\fDepth{\G, x : T}{y} \quad &= \quad &\fDepth{\G}{x} \\
  \end{align*}
\end{definition}

Note that $\fDepth{\G}{x}$ is a partial function:
it is only defined on the domain of $\G$.
Notably, it is not defined on the special root capabilities
$\univ$ and $\rdroot$.

\begin{definition}[Depth of Capture Set]
  We define the depth of a capture set $C$, written $\fEmbed{C}$, as the maximal depth of the variables in $C$,
  i.e.
  \begin{equation*}
    \fEmbed{C} = \max_{x \in C\setminus\set{\univ{}, \rdroot{}}} \fDepth{\G}{x}.
  \end{equation*}
  Specially, we let $\fEmbed{C} = -1$
  if $C\setminus\set{\univ{}, \rdroot{}}$ is empty.
\end{definition}

In Lemma \ref{lemma:widening-pres-ninter}, the induction is carried out on a lexical order of 
the depth of the capture set and the height of the derivation tree.

\subsubsection{Auxilliary Judgment for \textsf{is-reader}}

The \ruleref{sc-reader} subcapturing rule uses the $\isrdr{\G}{x}$ judgment,
which itself depends on subtyping.
This results in the subcapturing and the subtyping rules being mutually dependent,
which complicates the induction scheme on this two judgments.

To disentangle subcapturing and subtyping rules,
we define the following judgment which is equivalent to the check implemented by \textsf{is-reader},
but eliminates the dependency on subtyping.


  \infax[\ruledef{rd-reader}]
  {\isrdrtype{\G}{\tRdr{S}\capt C}}

  \infrule[\ruledef{rd-tvar}]
  {X<:S \in \G\andalso \isrdrtype{\G}{S\capt C}}
  {\isrdrtype{\G}{X\capt C}}


Later, we will prove the equivalence between $\isrdrtype{\G}{T}$ and $\exists C, S. \sub{\G}{T}{\tRdr{S}\capt C}$.

\subsection{Soundness}

\subsubsection{Properties of Subcapturing}

\begin{lemma}[Decomposition of subcapturing]
  \label{lemma:subcapt-decompose}
	$\subDft{C_1}{C_2}$
  implies
  $\forall x \in C_1. \subDft{\set{x}}{C_2}$.
\end{lemma}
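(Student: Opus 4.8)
The plan is to proceed by a straightforward induction on the derivation of $\subDft{C_1}{C_2}$, dispatching on the last rule applied. The guiding observation is that the only rules that can introduce a non-singleton left-hand side are \ruleref{sc-set} (which bundles element-wise facts) and \ruleref{sc-trans} (which composes them); every other rule forces $C_1$ to be a singleton. Consequently the induction is shallow and the element-wise statement is recovered essentially by unfolding the derivation.

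First I would dispatch the base cases in which $C_1$ is forced to be a singleton $\set{x}$, namely \ruleref{sc-var}, \ruleref{sc-elem}, \ruleref{sc-rdr-cap}, and \ruleref{sc-reader}. In each of these the conclusion is literally $\subDft{\set{x}}{C_2}$ for the unique $x \in C_1$, so the desired statement coincides with the judgment itself and there is nothing further to show. The case \ruleref{sc-set} is equally immediate: its premises are exactly the family $\overline{\subDft{\set{x}}{C_2}}^{x \in C_1}$, which is precisely what we must establish, so the conclusion is read off directly from the premises.

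The only case that invokes the induction hypothesis is \ruleref{sc-trans}. Here the derivation ends with premises $\subDft{C_1}{C}$ and $\subDft{C}{C_2}$ yielding $\subDft{C_1}{C_2}$. For an arbitrary $x \in C_1$, applying the induction hypothesis to the first premise gives $\subDft{\set{x}}{C}$, and then a single application of \ruleref{sc-trans} with the second premise $\subDft{C}{C_2}$ produces $\subDft{\set{x}}{C_2}$, as required. Since every case is either trivial or one use of the induction hypothesis followed by a re-application of transitivity, there is no genuine obstacle; the lemma is a routine inversion of the subcapturing rules, and the only point worth recording explicitly is that \ruleref{sc-set} and \ruleref{sc-trans} are the sole sources of a non-singleton $C_1$.
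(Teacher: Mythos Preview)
Your proposal is correct and essentially identical to the paper's own proof: both proceed by induction on the subcapturing derivation, discharge the singleton-$C_1$ rules (\ruleref{sc-var}, \ruleref{sc-elem}, \ruleref{sc-rdr-cap}, \ruleref{sc-reader}) trivially, read off \ruleref{sc-set} from its premises, and handle \ruleref{sc-trans} by applying the IH to the first premise and then re-applying transitivity with the second.
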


\begin{proof}
	By induction on the subcapture derivation.

  \emph{Case \ruleref{sc-var}, \ruleref{sc-elem}, \ruleref{sc-rdr-cap} and \ruleref{sc-reader}}.
  The proof is concluded trivially
  since in these cases $C_1 = \set{x}$.

  \emph{Case \ruleref{sc-set}}.
  This case follows directly from the preconditions.

  \emph{Case \ruleref{sc-trans}}.
  Then $\subDft{C_1}{C}$ and $\subDft{C}{C_2}$.
  By using the IH, we have
  $\overline{\subDft{\set{x}}{C}}^{x \in C_1}$,
  from which we can conclude by using the \ruleref{sc-trans} rule repeatedly.
\end{proof}







\begin{lemma}[Reflexivity of subcapturing]
	$\subDft{C}{C}$.
\end{lemma}

\begin{proof}
	We begin by showing that $\overline{\subDft{\set{x}}{C}}^{x \in C}$ using the \ruleref{sc-elem} rule.
  Afterwards,
  we may conclude this case by the \ruleref{sc-set} rule.
\end{proof}

\begin{lemma}[Set inclusion implies subcapturing]
	$C_1 \subseteq C_2$ implies $\subDft{C_1}{C_2}$.
\end{lemma}

\begin{proof}
	By repeated \ruleref{sc-elem} and \ruleref{sc-set}.
\end{proof}

\begin{lemma}
  \label{lemma:subcapt-join-left}
	$\sub{\G}{C_1}{C}$
  and $\sub{\G}{C_2}{C}$
  implies $\sub{\G}{C_1 \cup C_2}{C}$.
\end{lemma}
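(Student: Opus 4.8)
The plan is to avoid any direct induction on subcapturing derivations and instead reduce everything to single-element facts via the already-established Lemma~\ref{lemma:subcapt-decompose} (Decomposition of subcapturing), then reassemble with \ruleref{sc-set}. This works because \ruleref{sc-set} says precisely that a subcapturing goal into a fixed right-hand side $C$ holds as soon as it holds element-by-element on the left, so the only thing I need is that every element of $C_1 \cup C_2$ subcaptures $C$.

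First I would apply the Decomposition of subcapturing lemma to the first hypothesis $\sub{\G}{C_1}{C}$, obtaining $\sub{\G}{\set{x}}{C}$ for every $x \in C_1$. Applying the same lemma to the second hypothesis $\sub{\G}{C_2}{C}$ yields $\sub{\G}{\set{x}}{C}$ for every $x \in C_2$. Since each element of $C_1 \cup C_2$ lies in $C_1$ or in $C_2$, combining the two families gives $\sub{\G}{\set{x}}{C}$ for every $x \in C_1 \cup C_2$. I would then conclude by a single application of \ruleref{sc-set} with left-hand set $C_1 \cup C_2$ and right-hand set $C$.

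There is no real obstacle here: the entire burden of the argument — in particular any use of transitivity or of the various base rules such as \ruleref{sc-var}, \ruleref{sc-elem}, \ruleref{sc-rdr-cap}, and \ruleref{sc-reader} — is already discharged inside Lemma~\ref{lemma:subcapt-decompose}, so this lemma is an immediate corollary and requires neither induction nor case analysis on the shapes of $C_1$, $C_2$, or $C$.
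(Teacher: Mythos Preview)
Your proposal is correct and essentially identical to the paper's own proof: decompose both hypotheses via Lemma~\ref{lemma:subcapt-decompose} to obtain $\sub{\G}{\set{x}}{C}$ for every $x \in C_1$ and every $x \in C_2$, then conclude by \ruleref{sc-set}.
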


\begin{proof}
  We begin by showing that
  $\overline{\subDft{\set{x}}{C}}^{x \in C_1}$\\
  and $\overline{\subDft{\set{x}}{C}}^{x \in C_2}$
  using Lemma \ref{lemma:subcapt-decompose}.
  Now we can conclude this case by applying the \ruleref{sc-set} rule.
\end{proof}

\begin{lemma}
  \label{lemma:subcapt-join-right}
	$\sub{\G}{C}{C_1}$,
  implies $\sub{\G}{C}{C_1 \cup C_2}$.
\end{lemma}

\begin{proof}
	By induction on the subcapture derivation.

  \emph{Case \ruleref{sc-set}}.
  Then $C = \set{x}$ and $x \in C_1$.
  We have $x \in C_1 \cup C_2$
  and conclude this case by applying the \ruleref{sc-set} rule again.

  \emph{Other cases.}
  By IH and the same rule.
\end{proof}

\begin{corollary}
  \label{coro:subcapt-join-both}
	$\sub{\G}{C_1}{C_2}$
  and $\sub{\G}{D_1}{D_2}$
  implies $\sub{\G}{C_1 \cup D_1}{C_2 \cup D_2}$.
\end{corollary}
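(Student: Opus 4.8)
The plan is to derive the conclusion purely from the two join lemmas already established, without any induction. Observe that the goal $\sub{\G}{C_1 \cup D_1}{C_2 \cup D_2}$ has a union on the left, so by Lemma \ref{lemma:subcapt-join-left} it suffices to establish the two facts $\sub{\G}{C_1}{C_2 \cup D_2}$ and $\sub{\G}{D_1}{C_2 \cup D_2}$ separately and then join them.

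To obtain the first fact, I would start from the hypothesis $\sub{\G}{C_1}{C_2}$ and apply Lemma \ref{lemma:subcapt-join-right}, which widens the right-hand side by an arbitrary extra set; instantiating that extra set as $D_2$ yields $\sub{\G}{C_1}{C_2 \cup D_2}$ directly. Symmetrically, for the second fact I would start from $\sub{\G}{D_1}{D_2}$ and again invoke Lemma \ref{lemma:subcapt-join-right}, this time widening by $C_2$, giving $\sub{\G}{D_1}{D_2 \cup C_2}$. Since the union of capture sets is commutative at the level of the underlying sets, $D_2 \cup C_2$ is the same set as $C_2 \cup D_2$, so this is exactly the desired $\sub{\G}{D_1}{C_2 \cup D_2}$.

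With both $\sub{\G}{C_1}{C_2 \cup D_2}$ and $\sub{\G}{D_1}{C_2 \cup D_2}$ in hand, a single application of Lemma \ref{lemma:subcapt-join-left} (with the common upper bound $C = C_2 \cup D_2$) produces $\sub{\G}{C_1 \cup D_1}{C_2 \cup D_2}$, closing the goal.

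I do not expect any genuine obstacle here: the corollary is a routine consequence of the monotonicity of subcapturing in each argument, which is precisely what the two preceding lemmas package. The only point worth stating explicitly is the tacit use of commutativity of set union when reconciling $D_2 \cup C_2$ with $C_2 \cup D_2$, so that both widened judgments share the identical right-hand side required by the join-left lemma.
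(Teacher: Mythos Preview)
Your proof is correct and is exactly the intended derivation: the paper states this as a corollary with no explicit proof, and the natural way to obtain it from Lemmas~\ref{lemma:subcapt-join-left} and~\ref{lemma:subcapt-join-right} is precisely the two-step widening on the right followed by the join on the left that you describe.
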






\begin{lemma}[Capture set is irrelevant in reader checking]
  \label{lemma:captset-irrelevant-rdr-checking}
	If $\isrdrtype{\G}{S\capt C}$
  then $\isrdrtype{\G}{S\capt C'}$.
\end{lemma}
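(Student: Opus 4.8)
The plan is to prove Lemma~\ref{lemma:captset-irrelevant-rdr-checking} by a straightforward induction on the derivation of $\isrdrtype{\G}{S\capt C}$. The key observation is that the auxiliary judgment $\isrdrtype{\G}{\cdot}$ was deliberately designed to inspect only the shape type component, so the capture set should play no role in whether the judgment holds. Since there are only two rules defining this judgment---\ruleref{rd-reader} and \ruleref{rd-tvar}---the case analysis is short, and I expect the entire argument to go through mechanically.

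\textbf{First I would} handle the base case \ruleref{rd-reader}. Here the shape type must be a reader type, i.e.\ $S = \tRdr{S'}$, and the rule $\isrdrtype{\G}{\tRdr{S'}\capt C}$ holds for \emph{any} capture set by virtue of being an axiom. Thus $\isrdrtype{\G}{\tRdr{S'}\capt C'}$ follows immediately by reapplying \ruleref{rd-reader} with the new capture set $C'$, since the rule places no constraint on the annotation.

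\textbf{Next I would} handle the inductive case \ruleref{rd-tvar}. Here $S = X$ is a type variable, and the premises are $X <: S'' \in \G$ together with $\isrdrtype{\G}{S''\capt C}$. I apply the induction hypothesis to the premise $\isrdrtype{\G}{S''\capt C}$, obtaining $\isrdrtype{\G}{S''\capt C'}$ for the target capture set $C'$. Combining this with the unchanged binding $X <: S'' \in \G$, I reapply \ruleref{rd-tvar} to conclude $\isrdrtype{\G}{X\capt C'}$, which is exactly the goal.

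\textbf{There is no real obstacle here:} the lemma holds essentially by construction, because the capture set never appears in any premise of either rule and is merely carried along as an inert annotation. The only point requiring a moment's care is ensuring that the induction hypothesis is stated for an arbitrary target capture set (rather than being tied to a fixed one), so that it can be instantiated at $C'$ in the \ruleref{rd-tvar} case; phrasing the statement with both capture sets universally quantified makes this automatic. This lemma will presumably serve as a small technical stepping stone toward the promised equivalence between $\isrdrtype{\G}{T}$ and $\exists C, S.\ \sub{\G}{T}{\tRdr{S}\capt C}$, where it lets one freely adjust capture-set annotations while reasoning about the reader check.
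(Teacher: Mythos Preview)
Your proposal is correct and follows exactly the same approach as the paper's proof: a straightforward induction on the derivation, handling \ruleref{rd-reader} by reapplying the axiom and \ruleref{rd-tvar} by the IH followed by the same rule. The paper's version is simply more terse.
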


\begin{proof}
  By straightforward induction on the derivation.
  In the \ruleref{rd-reader} case we conclude from the premise immediately.
  In the \ruleref{rd-tvar} case we conclude by the IH and the same rule.
\end{proof}

\begin{lemma}[Subtyping preserves reader checking]
  \label{lemma:subtype-pres-rdr-checking}
	If $\isrdrtype{\G}{T}$ and $\sub{\G}{T'}{T}$
  then $\isrdrtype{\G}{T'}$.
\end{lemma}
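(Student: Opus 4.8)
The plan is to proceed by induction on the derivation of $\sub{\G}{T'}{T}$, exploiting the fact that, by Lemma \ref{lemma:captset-irrelevant-rdr-checking}, whether $\isrdrtype{\G}{\cdot}$ holds depends only on the shape component of a type and not on its capture set. From the two rules defining the judgment, $\isrdrtype{\G}{T}$ forces $T$ to be a capturing type $\tCap{C_2}{S_2}$ whose shape $S_2$ is either a reader $\tRdr{S}$ (rule \ruleref{rd-reader}) or a type variable $X$ whose bound, followed through $\G$, reaches a reader (rule \ruleref{rd-tvar}). I will call such a shape a \emph{reader shape}; the content of the lemma is thus that subtyping into a reader shape can only come from a reader shape.

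First I would dispatch the structural cases. For \ruleref{refl} we have $T' = T$ and conclude immediately. For \ruleref{trans}, with $\sub{\G}{T'}{T''}$ and $\sub{\G}{T''}{T}$, I apply the induction hypothesis to the upper derivation to obtain $\isrdrtype{\G}{T''}$, and then to the lower derivation to obtain $\isrdrtype{\G}{T'}$. The rules \ruleref{top}, \ruleref{fun}, \ruleref{tfun}, and \ruleref{boxed} are vacuous: their conclusion places $\top$, a (type-)function type, or a box on the right, and no $\isrdrtype$ rule accepts such a shape, contradicting the hypothesis $\isrdrtype{\G}{T}$.

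The two remaining cases carry the actual argument. In \ruleref{tvar}, the conclusion is $\sub{\G}{X}{S}$ with $X <: S \in \G$; supplying a capture set via Lemma \ref{lemma:captset-irrelevant-rdr-checking} turns the hypothesis into $\isrdrtype{\G}{S\capt C}$, and rule \ruleref{rd-tvar} then yields $\isrdrtype{\G}{X\capt C}$, as required. In \ruleref{capt}, the conclusion is $\sub{\G}{\tCap{C_1}{S_1}}{\tCap{C_2}{S_2}}$ with premises $\sub{\G}{S_1}{S_2}$ and $\subDft{C_1}{C_2}$; since $\isrdrtype{\G}{\tCap{C_2}{S_2}}$ holds, $S_2$ is a reader shape, and I apply the induction hypothesis to the shape premise $\sub{\G}{S_1}{S_2}$ (read modulo capture sets via Lemma \ref{lemma:captset-irrelevant-rdr-checking}) to conclude that $S_1$ is a reader shape, hence $\isrdrtype{\G}{\tCap{C_1}{S_1}}$.

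The main obstacle is purely bookkeeping: the judgment $\isrdrtype{\G}{\cdot}$ is defined only on capturing types, whereas the premise of \ruleref{capt} is a subtyping between bare shapes, so I must phrase the induction hypothesis at the shape level. The clean way is to carry the invariant ``$S$ is a reader shape'' (i.e.\ $\isrdrtype{\G}{S\capt C}$ for some, equivalently every, $C$) uniformly through the induction, re-attaching an arbitrary capture set whenever the rules \ruleref{rd-reader}/\ruleref{rd-tvar} or Lemma \ref{lemma:captset-irrelevant-rdr-checking} are invoked. With this framing, the shape premise of \ruleref{capt} and the variable-bound premise of \ruleref{tvar} are exactly the subderivations to which the induction hypothesis applies, and the argument closes.
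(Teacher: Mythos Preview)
Your proposal is correct and follows essentially the same approach as the paper: induction on the subtyping derivation, with \ruleref{refl} immediate, \ruleref{trans} by two uses of the IH, \ruleref{tvar} by rule \ruleref{rd-tvar}, and \ruleref{capt} by the IH together with Lemma~\ref{lemma:captset-irrelevant-rdr-checking}. You are simply more explicit than the paper about the vacuous cases (\ruleref{top}, \ruleref{fun}, \ruleref{tfun}, \ruleref{boxed}) and about the capture-set bookkeeping in the \ruleref{capt} case; the paper silently relies on the convention that a shape type $S$ is identified with $S\capt\{\}$, so the ``obstacle'' you flag dissolves under that convention.
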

\begin{proof}
  By induction on the subtyping derivation.

  \emph{Case \ruleref{refl}}. Immediate.

  \emph{Case \ruleref{capt}}.
  By the IH and Lemma \ref{lemma:captset-irrelevant-rdr-checking}.

  \emph{Case \ruleref{trans}}. By repeated application of the IH.

  \emph{Case \ruleref{tvar}}.
  Then $T' = X$ and $X <: T \in \G$.
  We conclude by using the \ruleref{rd-tvar} rule.
\end{proof}

\begin{lemma}[Equivalence between reader checking]
  \label{lemma:is-rdr-eqv}
  $\isrdrtype{\G}{T}$ iff $\exists C, S.\ \sub{\G}{T}{\tRdr{S}\capt C}$.
\end{lemma}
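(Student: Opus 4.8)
The plan is to prove the two directions of the biconditional separately, as the two rules \ruleref{rd-reader} and \ruleref{rd-tvar} defining $\isrdrtype{\G}{T}$ mirror exactly the two ways a type can subtype into a reader type.

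\textbf{The ($\Leftarrow$) direction is immediate from a previous lemma.} Suppose $\exists C, S.\ \sub{\G}{T}{\tRdr{S}\capt C}$ and fix such witnesses $C, S$. By \ruleref{rd-reader} we have $\isrdrtype{\G}{\tRdr{S}\capt C}$, and since $\sub{\G}{T}{\tRdr{S}\capt C}$, Lemma \ref{lemma:subtype-pres-rdr-checking} (subtyping preserves reader checking) yields $\isrdrtype{\G}{T}$ directly. No induction is needed here.

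\textbf{The ($\Rightarrow$) direction goes by induction on the derivation of $\isrdrtype{\G}{T}$.} In the \ruleref{rd-reader} case, $T = \tRdr{S}\capt C$, so we take the same $S, C$ as witnesses and conclude $\sub{\G}{\tRdr{S}\capt C}{\tRdr{S}\capt C}$ by \ruleref{refl}. In the \ruleref{rd-tvar} case, $T = X\capt C$ with $X <: S \in \G$ and a shorter derivation of $\isrdrtype{\G}{S\capt C}$. The induction hypothesis supplies witnesses $C', S'$ with $\sub{\G}{S\capt C}{\tRdr{S'}\capt C'}$. From $X <: S \in \G$ the \ruleref{tvar} rule gives $\sub{\G}{X}{S}$; pairing this with reflexivity of subcapturing $\sub{\G}{C}{C}$ through \ruleref{capt} yields $\sub{\G}{X\capt C}{S\capt C}$; finally \ruleref{trans} composes this with the induction hypothesis to give $\sub{\G}{X\capt C}{\tRdr{S'}\capt C'}$, so $C', S'$ are the required witnesses.

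The proof is short and the only delicate point — the step I expect to need the most care — is aligning the capture sets in the \ruleref{rd-tvar} case: the induction hypothesis fixes a capture set $C'$ on the reader shape that need not coincide with $C$, and one must thread $X\capt C$ through $S\capt C$ rather than trying to match capture sets directly. This is handled cleanly by introducing the intermediate type $S\capt C$ via \ruleref{capt} with reflexive subcapturing on $C$ and then appealing to transitivity; no appeal to Lemma \ref{lemma:captset-irrelevant-rdr-checking} is required in this direction.
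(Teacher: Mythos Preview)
Your proof is correct. The $(\Rightarrow)$ direction matches the paper's argument, with the chain through \ruleref{capt} and \ruleref{trans} in the \ruleref{rd-tvar} case spelled out more explicitly than the paper's terse ``conclude by the IH and the \ruleref{tvar} rule.''

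The $(\Leftarrow)$ direction is genuinely simpler than the paper's. The paper proceeds by induction on the subtyping derivation, handling the \ruleref{refl}, \ruleref{trans}, and \ruleref{tvar} cases individually --- and in the \ruleref{trans} case it ends up invoking Lemma~\ref{lemma:subtype-pres-rdr-checking} anyway. You observe that this lemma already packages the entire induction: since \ruleref{rd-reader} gives $\isrdrtype{\G}{\tRdr{S}\capt C}$ outright, a single application of Lemma~\ref{lemma:subtype-pres-rdr-checking} with the assumed subtyping finishes the direction with no induction. This is cleaner and avoids the paper's slight redundancy.
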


\begin{proof}
	We prove the two directions in the equivalence respectively.

  \textbf{($\Rightarrow$)}:
  Proceed the proof by induction on the derivation.
  In the \ruleref{rd-reader} case, we conclude immediately by the reflexivity of subtyping.
  In the \ruleref{rd-tvar} case, we conclude by the IH and the \ruleref{tvar} rule.

  \textbf{($\Leftarrow$)}.
  By induction on the subtyping derivation.
  In the \ruleref{refl} case we conclude immediately by the \ruleref{rd-reader} rule.
  In the \ruleref{trans} case, 
  we have $\sub{\G}{T}{T'}$ and $\sub{\G}{T'}{C\,\tRdr{S}}$ for some $T'$.
  We first use the IH to show that
  $\isrdrtype{\G}{T'}$.
  Then we invoke Lemma \ref{lemma:subtype-pres-rdr-checking} to conclude this case.
  Finally, the \ruleref{tvar} case can be concluded immediately using the \ruleref{rd-tvar} rule.
\end{proof}





\begin{lemma}[Reader checking strengthening]
	Given $\G = \G_1, \Delta, \G_2$,
  if $\G' = \G_1, \G_2$ is still well-formed,
  $\isrdrtype{\G}{T}$,
  and $T$ is well-formed in $\G'$,
  then $\isrdrtype{\G'}{T}$.
\end{lemma}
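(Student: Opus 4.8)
The plan is to proceed by induction on the derivation of $\isrdrtype{\G}{T}$, following the two-rule structure of the auxiliary reader-checking judgment. Since the judgment has only the two rules \ruleref{rd-reader} and \ruleref{rd-tvar}, there are just two cases, and the base case is immediate.

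\emph{Case \ruleref{rd-reader}.} Here $T = \tRdr{S}\capt C$ and the rule is an axiom that does not inspect the context at all. Hence $\isrdrtype{\G'}{T}$ holds by the same rule, with nothing to strengthen.

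\emph{Case \ruleref{rd-tvar}.} Here $T = X\capt C$, with premises $X <: S \in \G$ and $\isrdrtype{\G}{S\capt C}$. The first step is to establish that the bound of $X$ survives the removal of $\Delta$. Since $T = X\capt C$ is assumed well-formed in $\G' = \G_1, \G_2$, the variable $X$ must be bound somewhere in $\G'$: if $X$ were declared only inside $\Delta$, then $T$ would refer to a variable absent from $\G'$ and could not be well-formed there. By the Barendregt convention the binding of $X$ in $\G$ is unique, so it lies in $\G_1$ or $\G_2$ and is carried over unchanged, giving $X <: S \in \G'$ with the very same bound $S$.

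It then remains to feed the induction hypothesis a well-formedness hypothesis for the subterm $S\capt C$. The capture set $C$ is well-formed in $\G'$ since it is the capture-set component of $T$, which is well-formed in $\G'$ by assumption; and $S$ is well-formed in $\G'$ because $\G'$ is a well-formed environment containing the binding $X <: S$, so $S$ is well-formed in the prefix preceding $X$ and a fortiori in $\G'$. Thus $S\capt C$ is well-formed in $\G'$, and the IH applied to the subderivation $\isrdrtype{\G}{S\capt C}$ yields $\isrdrtype{\G'}{S\capt C}$. Re-applying \ruleref{rd-tvar} with $X <: S \in \G'$ then delivers $\isrdrtype{\G'}{X\capt C}$, as required. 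The only genuine subtlety, and hence the step I expect to be the main obstacle, is this bound-preservation argument: it is exactly the hypothesis that $T$ is well-formed in $\G'$ that excludes the degenerate situation where $X$ is declared inside the discarded segment $\Delta$. Everything else is routine well-formedness bookkeeping under the standing assumption that all environments are well-formed.
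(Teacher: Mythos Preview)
Your proposal is correct and follows essentially the same approach as the paper: induction on the derivation of $\isrdrtype{\G}{T}$, with the \ruleref{rd-reader} case immediate and the \ruleref{rd-tvar} case handled by locating the binding of $X$ in $\G'$, establishing well-formedness of the bound, and invoking the IH. Your version is more explicit than the paper's about why $X<:S$ must lie in $\G'$ (using the well-formedness of $T$ in $\G'$) and about assembling the well-formedness of $S\capt C$ for the IH; the paper's proof compresses these steps into a single remark about the well-formedness of $\G'$.
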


\begin{proof}
	By straightforward induction on the derivation.
  In the \ruleref{rd-reader} case we conclude immediately using the same rule.
  In the \ruleref{rd-tvar} case, we have $T = X\capt C$,
  $X<:R \in \G$,
  and $\isrdrtype{\G}{R}$.
  By the well-formedness of $\G'$ we can show that
  $R$ is well-formed in $\G'$.
  Then we conclude by the IH and the \ruleref{rd-tvar} rule.
\end{proof}

\begin{corollary}[\textsf{is-reader} strengthening]
  \label{coro:isrdr-strengthening}
	Given $\G = \G_1, \Delta, \G_2$,
  if $\G' = \G_1, \G_2$ is still well-formed,
  $\isrdrtype{\G}{T}$,
  and $x \in \dom{\G'}$,
  then $\isrdr{\G'}{x}$.
\end{corollary}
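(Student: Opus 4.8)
The plan is to obtain this corollary as an immediate instance of the preceding Reader checking strengthening lemma, repackaged through the equivalence Lemma~\ref{lemma:is-rdr-eqv}. Here I read $T$ as the type bound to $x$ in $\G$, so that the hypothesis $\isrdrtype{\G}{T}$ together with $x : T \in \G$ is exactly the content of $\isrdr{\G}{x}$ after unfolding; this is the only reading under which the conclusion $\isrdr{\G'}{x}$, which speaks about $x$, can follow from a premise phrased in terms of $T$.

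First I would locate the binding of $x$. Since $\G = \G_1, \Delta, \G_2$ and $x \in \dom{\G'} = \dom{\G_1}\cup\dom{\G_2}$, the Barendregt convention (all bound names distinct) forces the binding $x : T$ to lie in $\G_1$ or $\G_2$, so in particular $x \notin \dom{\Delta}$. Consequently the same binding survives the deletion of $\Delta$, i.e.\ $x : T \in \G'$. Next I would discharge the well-formedness side condition of the strengthening lemma: since $\G'$ is assumed well-formed and contains $x : T$, the type $T$ is well-formed in the prefix of $\G'$ preceding $x$, and hence, by weakening of well-formedness under the standing conventions, $\wfTyp{\G'}{T}$.

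With $\G = \G_1, \Delta, \G_2$, $\G'$ well-formed, $\isrdrtype{\G}{T}$, and $\wfTyp{\G'}{T}$ all in hand, the Reader checking strengthening lemma applies and yields $\isrdrtype{\G'}{T}$. By Lemma~\ref{lemma:is-rdr-eqv} this gives $\sub{\G'}{T}{\tRdr{S}\capt C}$ for some $S, C$, and combined with $x : T \in \G'$ this is precisely $\isrdr{\G'}{x}$ by definition. I do not anticipate a real obstacle: the statement is essentially a transport of the strengthening lemma through the definition of $\isrdr{}$, so the only genuine care is the bookkeeping that $x$'s binding lies outside $\Delta$ (handled by Barendregt) and that its type stays well-formed once $\Delta$ is dropped (handled by the well-formedness convention).
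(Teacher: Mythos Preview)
Your proof is correct and matches the intended derivation: the paper states this as a corollary with no proof, and the obvious route is exactly the one you take—locate $x$'s binding outside $\Delta$ via the Barendregt convention, note that well-formedness of $\G'$ makes $T$ well-formed there, apply the preceding Reader checking strengthening lemma to get $\isrdrtype{\G'}{T}$, and repackage via Lemma~\ref{lemma:is-rdr-eqv} and the definition of $\isrdr{}{}$. Your explicit reading of $T$ as the type bound to $x$ in $\G$ is the right (and only sensible) interpretation of the statement as written, and it is confirmed by how the corollary is actually invoked in the proof of Lemma~\ref{lemma:subcapt-strengthening}.
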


\begin{lemma}[Subcapture strengthening]
  \label{lemma:subcapt-strengthening}
	Given $\G = \G_1, \Delta, \G_2$,
  if $\G^\prime = \G_1, \G_2$ is still well-formed,
  and $\sub{\G}{C_1}{C_2}$,
  then $\sub{\G^\prime}{C_1\setminus \dom{\Delta}}{C_2\setminus \dom{\Delta}}$.
\end{lemma}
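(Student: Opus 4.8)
The plan is to proceed by induction on the derivation of $\sub{\G}{C_1}{C_2}$. Throughout I will use two elementary observations. First, by the Barendregt convention the domains $\dom{\G_1}$, $\dom{\Delta}$, $\dom{\G_2}$ are pairwise disjoint, so $\dom{\G'} \cap \dom{\Delta} = \emptyset$. Second, the empty set subcaptures any capture set via \ruleref{sc-set} with no premises, so whenever the left-hand side collapses to $\emptyset$ after removing $\dom{\Delta}$ the goal is immediate. Many of the cases reduce to checking, for a singleton left-hand side $\set{x}$, whether $x \in \dom{\Delta}$ (left side becomes empty, trivial) or $x \notin \dom{\Delta}$ (so $\set{x}\setminus\dom{\Delta} = \set{x}$).

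The structural cases are routine. For \ruleref{sc-elem} we have $C_1 = \set{x}$ with $x \in C_2$: if $x \in \dom{\Delta}$ the left side is empty, and otherwise $x \in C_2 \setminus \dom{\Delta}$, so \ruleref{sc-elem} applies again. For \ruleref{sc-set} I apply the induction hypothesis to each premise $\sub{\G}{\set{x}}{C_2}$ with $x \in C_1 \setminus \dom{\Delta}$ (for which $\set{x}\setminus\dom{\Delta} = \set{x}$), the premises for $x \in \dom{\Delta}$ being dropped, and reassemble with \ruleref{sc-set}. For \ruleref{sc-trans} I apply the induction hypothesis to both premises, the intermediate set $C$ being stripped of $\dom{\Delta}$ in both, and recombine with \ruleref{sc-trans}. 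Finally \ruleref{sc-rdr-cap} is unaffected, since $\univ, \rdroot \notin \dom{\Delta}$ leaves $\set{\rdroot}$ and $\set{\univ}$ intact.

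The two cases that carry the real content are \ruleref{sc-var} and \ruleref{sc-reader}. In \ruleref{sc-var} we have $C_1 = \set{x}$ and $x : \tCap{C}{S} \in \G$ with target $C$. If $x \in \dom{\Delta}$ the goal is trivial; otherwise $x \in \dom{\G'}$ and the binding $x : \tCap{C}{S}$ survives unchanged in $\G'$. The key step is to invoke the well-formedness of $\G'$: writing $\G' = \G'_1, x : \tCap{C}{S}, \G'_2$, well-formedness gives $\wfTyp{\G'_1}{\tCap{C}{S}}$, whence $C \subseteq \dom{\G'_1} \subseteq \dom{\G'}$, so $C \cap \dom{\Delta} = \emptyset$ and $C \setminus \dom{\Delta} = C$. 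Then \ruleref{sc-var} in $\G'$ yields $\sub{\G'}{\set{x}}{C}$, which is exactly the goal. For \ruleref{sc-reader} we have $\sub{\G}{\set{x}}{\set{\rdroot}}$ from $\isrdr{\G}{x}$; if $x \in \dom{\Delta}$ the goal is trivial, and otherwise $x \in \dom{\G'}$ and, since $\set{\rdroot}\setminus\dom{\Delta} = \set{\rdroot}$, it suffices to re-establish $\isrdr{\G'}{x}$. Unfolding $\isrdr{\G}{x}$ via Lemma \ref{lemma:is-rdr-eqv} gives $\isrdrtype{\G}{T}$ for $x$'s type $T$, and Corollary \ref{coro:isrdr-strengthening} transports this to $\isrdr{\G'}{x}$; \ruleref{sc-reader} then closes the case.

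I expect the \ruleref{sc-var} case to be the main obstacle, precisely because it is the only place where removing $\Delta$ from the target is not merely set-theoretic bookkeeping: one must argue that the capture set recorded in $x$'s type could never have mentioned a $\Delta$-variable in the first place, which is exactly what the well-formedness hypothesis on $\G'$ buys us. All remaining steps are closures of the same rule under the induction hypothesis, modulo the uniform singleton split on membership in $\dom{\Delta}$.
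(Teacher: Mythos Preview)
Your proof is correct and follows essentially the same inductive approach as the paper: case-splitting each singleton on membership in $\dom{\Delta}$, invoking well-formedness of $\G'$ in the \ruleref{sc-var} case and Corollary~\ref{coro:isrdr-strengthening} in the \ruleref{sc-reader} case. One minor imprecision: in \ruleref{sc-var} the claim $C \subseteq \dom{\G'_1}$ should be $C \subseteq \dom{\G'_1} \cup \{\univ, \rdroot\}$, but since $\univ, \rdroot \notin \dom{\Delta}$ the required fact $C \cap \dom{\Delta} = \emptyset$ still follows.
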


\begin{proof}
	By induction on the subcapture derivation.

  \emph{Case \ruleref{sc-trans}}.
  Then $\sub{\G}{C_1}{C}$ and $\sub{\G}{C}{C_2}$ for some $C$.
  By the IH we can show that
  $\sub{\G^\prime}{C_1\setminus \dom{\Delta}}{C\setminus \dom{\Delta}}$
  and $\sub{\G'}{C\setminus \dom{\Delta}}{C_2\setminus \dom{\Delta}}$.
  Hence we conclude using the \ruleref{sc-trans} rule.

  \emph{Case \ruleref{sc-var}}.
  Then $C_1 = \set{x}$,
  and $x :_D \tCap{C}{S} \in \G$.
  Proceed by a case analysis on whether $x$ is bound in $\Delta$.
  \begin{itemize}
  \item If $x \in \dom{\Delta}$,
    we have $C_1\setminus \dom{\Delta} = \varemptyset$
    and can conclude this case by (\textsc{sc-set}).

  \item If $x \notin \dom{\Delta}$,
    by the well-formedness of $\G^\prime$ we have $C \cap \dom{\Delta} = \varemptyset$,
    which implies that $C\setminus \dom{\Delta} = C$.
    Note that $\set{x}\setminus \dom{\Delta} = \set{x}$.
    We can therefore apply the \ruleref{sc-var} rule to conclude.
  \end{itemize}

  \emph{Case \ruleref{sc-elem}}.
  Then $C_1 = \set{x}$
  and $x \in C_2$.
  Again we proceed by a case analysis on whether $x$ is bound in $\Delta$.
  \begin{itemize}
  \item If $x \in \dom{\Delta}$,
    we have $C_1\setminus \dom{\Delta} = \varemptyset$ and
    thus conclude this case using the \ruleref{sc-set} rule.

  \item Otherwise if $x \notin \dom{\Delta}$,
    we have $x \in C_2\setminus \dom{\Delta}$
    and can conclude this case by the \ruleref{sc-elem} rule.
  \end{itemize}

  \emph{Case \ruleref{sc-set}}.
  We conclude by repeated IH and the same rule.

  \emph{Case \ruleref{sc-rdr-cap}}.
  We conclude immediately using the same rule
  since $\set{\rdroot}\setminus \dom{\Delta} = \set{\rdroot}$
  and $\set{\univ}\setminus \dom{\Delta} = \set{\univ}$.

  \emph{Case \ruleref{sc-reader}}.
  Then $C_1 = \set{x}$,
  $\isrdr{\G, x :_D P, \Delta}{x}$,
  and $C_2 = \set{\rdroot}$.
  If $x \in \dom{\Delta}$,
  then $C_1 \setminus \dom{\Delta} = \set{}$ and we can conclude directly.
  Otherwise, if $x \notin \dom{\Delta}$,
  we can show that $x\in \dom{\G'}$,
  and then use Corollary~\ref{coro:isrdr-strengthening} to conclude
  to show that $\isrdr{\G'}{x}$.
  Note that $\set{\rdroot}\setminus \dom{\Delta} = \set{\rdroot}$.
  Now we can conclude this case by the \ruleref{sc-reader} rule.
\end{proof}

\subsubsection{Properties of Typing and Subtyping}

\begin{lemma}[Subtype inversion: type variable]
  \label{lemma:subtype-inversion-tvar}
  If $\subDft{U}{\tCap{C}{X}}$,
  then $U = \tCap{C^\prime}{Y}$
  for some $C^\prime, Y$,
  such that
  $\subDft{C^\prime}{C}$,
  and $\subDft{Y}{X}$.
\end{lemma}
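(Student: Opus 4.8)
The plan is to proceed by induction on the derivation of $\subDft{U}{\tCap{C}{X}}$. The key observation that organizes the whole argument is that the right-hand side is a \emph{capturing} type whose shape is a bare type variable; inspecting the subtyping rules, the only ones that can conclude a judgment of this form are \ruleref{refl}, \ruleref{trans}, and \ruleref{capt}, since \ruleref{top}, \ruleref{fun}, \ruleref{tfun}, \ruleref{boxed}, and \ruleref{tvar} all produce a shape type (not a capturing type) on the right. This cuts the induction down to three cases.

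The \ruleref{refl} case is immediate: $U = \tCap{C}{X}$, so I take $C' = C$ and $Y = X$, with $\subDft{C}{C}$ and $\subDft{X}{X}$ holding by reflexivity. For \ruleref{trans}, I have $\subDft{U}{T_2}$ and $\subDft{T_2}{\tCap{C}{X}}$; applying the IH to the second premise gives $T_2 = \tCap{C''}{Y'}$ with $\subDft{C''}{C}$ and $\subDft{Y'}{X}$, and crucially this tells me $T_2$ is again of the required form, so the IH applies to the first premise and yields $U = \tCap{C'}{Y}$ with $\subDft{C'}{C''}$ and $\subDft{Y}{Y'}$. Transitivity of subcapturing (\ruleref{sc-trans}) and subtyping (\ruleref{trans}) then closes the case.

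The substantive case is \ruleref{capt}, where $U = \tCap{C_1}{S_1}$ and the premises are $\subDft{C_1}{C}$ and $\subDft{S_1}{X}$. The subcapturing obligation $\subDft{C_1}{C}$ is already in hand; what remains is to show that the shape $S_1$ is itself a type variable $Y$ with $\subDft{Y}{X}$. This is where the real work lies, and I would isolate it as an auxiliary \emph{shape}-level inversion: if $\subDft{S_1}{X}$ holds with $X$ a type variable, then $S_1 = Y$ for some type variable $Y$ and $\subDft{Y}{X}$. I would prove this by a separate induction on the shape-subtyping derivation, where again only \ruleref{refl}, \ruleref{trans}, and \ruleref{tvar} can conclude with a bare type variable on the right (the remaining shape rules carry $\top$, a function type, a $\forall$, or a box on the right). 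The \ruleref{refl} subcase gives $S_1 = X$ directly; \ruleref{tvar} gives $S_1 = X'$ with $X' <: X \in \G$, which is a type variable whose subtyping to $X$ is exactly the rule's conclusion; and \ruleref{trans} chains two applications of the inner IH via transitivity.

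I expect the main obstacle to be organizational rather than technical: keeping the outer inversion on full types and the inner inversion on shapes cleanly separated, and carrying out the exhaustive rule-analysis that rules out every other last rule. In particular, the transitivity case relies on the fact that no rule lets a bare shape type be a subtype of a capturing type, so that the intermediate type $T_2$ cannot be a non-capturing type slipping past the induction hypothesis. Once this rule inventory is pinned down, every case is discharged by the induction hypothesis together with transitivity of subcapturing and subtyping.
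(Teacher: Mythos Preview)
Your overall strategy is sound, but there is a gap in the case analysis. You exclude \ruleref{tvar} from the outer induction on the grounds that it ``produces a shape type (not a capturing type) on the right.'' In this calculus, however, a shape type $S$ is identified with the capturing type $S\capt\varnothing$ (the paper states this explicitly in its own proof of the \ruleref{capt} case). Hence the judgment $\subDft{U}{\tCap{C}{X}}$ with $C=\varnothing$ is literally $\subDft{U}{X}$, and that \emph{can} be concluded directly by \ruleref{tvar} with $U=Y$ and $Y<:X\in\G$. The paper accordingly treats \ruleref{tvar} as a top-level case: one takes $C'=\varnothing$, $Y$ the variable on the left, and concludes $\subDft{\varnothing}{\varnothing}$ and $\subDft{Y}{X}$ by the rule itself. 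Your inner ``shape-level inversion'' does capture essentially this argument, but only \emph{inside} the \ruleref{capt} case; it does not cover a derivation whose last rule is \ruleref{tvar} outright.

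The organizational difference in the \ruleref{capt} case is worth noting too. You set up a separate inner induction to show that $\subDft{S_1}{X}$ forces $S_1$ to be a type variable. The paper instead exploits the identification $S_1 \equiv S_1\capt\varnothing$ to apply the \emph{same} induction hypothesis directly to the premise $\subDft{S_1}{X}$, obtaining $S_1 = Y\capt\varnothing = Y$ in one step. Your auxiliary induction is correct, but unnecessary once you adopt the paper's convention; and adopting that convention is precisely what forces you to add \ruleref{tvar} back into the outer case analysis. So the two issues are really the same issue seen from two sides: treat shape types as capturing types with empty capture set, handle \ruleref{tvar} at the top level, and the \ruleref{capt} case collapses to a single IH call.
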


\begin{proof}
  By induction on the subtype derivation, wherein only the following cases are possible.

  \emph{Case \ruleref{refl}}. Immediate.

  \emph{Case \ruleref{tvar}}.
  Then $U = Y$,
  $Y <: X \in \G$,
  and $C = \set{}$.
  Now we conclude by applying the \ruleref{tvar} rule again.

  \emph{Case \ruleref{trans}}.
  Then $\subDft{U}{U^\prime}$
  and $\subDft{U^\prime}{\tCap{C}{X}}$
  for some $U^\prime$.
  By IH,
  we can first show that
  $U^\prime = \tCap{C^\prime}{Y}$ for some $C^\prime$ and $Y$.
  Now, we can invoke IH on the derivation $\subDft{U}{\tCap{C^\prime}{Y}}$
  to show that
  $U = \tCap{C^{\prime\prime}}{Z}$,
  $\subDft{C^{\prime\prime}}{C^\prime}$
  and $\subDft{Z}{Y}$.
  Finally we conclude by the transitivity of both subcapturing and subtyping.

  \emph{Case \ruleref{capt}}.
  Then $U = \tCap{C'}{S}$ for some $C, S$,
  $\subDft{C'}{C}$,
  and $\subDft{S}{X}$.
  Now we invoke the IH To show that
  $S = Y\capt C''$ for some $Y$, where $C'' = \set{}$
  (note that we consider $S$ to be equivalent to a capturing type with an empty capture set),
  and $\subDft{Y}{X}$.
  This case is therefore concluded.
\end{proof}

\begin{lemma}[Subtype inversion: mutable reference]
  \label{lemma:subtype-inversion-mut}
  If $\subDft{U}{\tRef{S}\capt C}$,
  then
  either (i) $U$ is of the form $\tCap{C^\prime}{X}$,
  $\subDft{C^\prime}{C}$
  and $\subDft{X}{\tRef{S}}$,
  or (ii) $U$ is of the form $\tCap{C^\prime}{\tRef{S}}$,
  and $\subDft{C^\prime}{C}$.
\end{lemma}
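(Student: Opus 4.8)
The plan is to proceed by induction on the derivation of $\subDft{U}{\tRef{S}\capt C}$, generalizing over the capture set $C$ so that the induction hypothesis may be reapplied at a different capture set. Since the shape on the right is the mutable-reference shape $\tRef{S}$, the rules \ruleref{top}, \ruleref{fun}, \ruleref{tfun}, and \ruleref{boxed} cannot be the last rule applied, because their conclusions force a $\top$, function, type-function, or box shape on the right, none of which unifies with $\tRef{S}$. This leaves only \ruleref{refl}, \ruleref{tvar}, \ruleref{capt}, and \ruleref{trans} to consider.

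First I would dispatch the three direct cases. For \ruleref{refl}, we have $U = \tRef{S}\capt C$, which is alternative (ii) with $C^\prime = C$ and $\subDft{C}{C}$ by reflexivity of subcapturing. For \ruleref{tvar}, the right-hand side must be the bound of a type variable; as a bound is a shape type carrying no capture set, this forces $C = \set{}$ and the bound to be exactly $\tRef{S}$, so $U = X$ with $X <: \tRef{S} \in \G$. Viewing $X$ as $\tCap{\set{}}{X}$, this is alternative (i), since $\subDft{\set{}}{C}$ and $\subDft{X}{\tRef{S}}$ hold immediately. For \ruleref{capt}, we have $U = \tCap{C^\prime}{S^\prime}$ with $\subDft{C^\prime}{C}$ and $\subDft{S^\prime}{\tRef{S}}$; treating $\tRef{S}$ as $\tRef{S}\capt\set{}$, the premise $\subDft{S^\prime}{\tRef{S}}$ is a strictly smaller derivation with a mutable-reference target, so the induction hypothesis applies and tells us $S^\prime$ is either a type variable $X$ with $\subDft{X}{\tRef{S}}$ (yielding alternative (i)) or $\tRef{S}$ itself (yielding alternative (ii)).

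The hard part will be the \ruleref{trans} case, where $\subDft{U}{U^\prime}$ and $\subDft{U^\prime}{\tRef{S}\capt C}$ for some intermediate $U^\prime$. Applying the induction hypothesis to the second premise, $U^\prime$ is either of the form $\tCap{C^{\prime\prime}}{X}$ with $\subDft{X}{\tRef{S}}$ and $\subDft{C^{\prime\prime}}{C}$, or of the form $\tCap{C^{\prime\prime}}{\tRef{S}}$ with $\subDft{C^{\prime\prime}}{C}$. In the first subcase the intermediate shape is a type variable, so the hypothesis of this lemma no longer applies; instead I would invoke the already-established Lemma~\ref{lemma:subtype-inversion-tvar} on $\subDft{U}{\tCap{C^{\prime\prime}}{X}}$ to obtain $U = \tCap{C^{\prime\prime\prime}}{Y}$ with $\subDft{C^{\prime\prime\prime}}{C^{\prime\prime}}$ and $\subDft{Y}{X}$, and then close with transitivity of subcapturing and subtyping to get $\subDft{C^{\prime\prime\prime}}{C}$ and $\subDft{Y}{\tRef{S}}$, which is alternative (i). In the second subcase the intermediate is $\tCap{C^{\prime\prime}}{\tRef{S}}$, so I would recursively apply the induction hypothesis to the first premise $\subDft{U}{\tCap{C^{\prime\prime}}{\tRef{S}}}$ and then compose the returned capture-set bound with $\subDft{C^{\prime\prime}}{C}$ via \ruleref{sc-trans}, preserving whichever alternative the hypothesis produces. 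The crux is recognizing that a type-variable intermediate must be routed through the companion inversion lemma rather than through this lemma's own hypothesis, and threading the two alternatives cleanly across each use of transitivity.
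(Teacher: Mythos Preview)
Your proposal is correct and follows the same approach as the paper: induction on the subtyping derivation, with the impossible cases ruled out by shape mismatch. In fact your proof is more complete than the paper's own, which tersely lists only the \ruleref{refl}, \ruleref{tvar}, and \ruleref{capt} cases and omits \ruleref{trans} entirely; your explicit handling of \ruleref{trans} via Lemma~\ref{lemma:subtype-inversion-tvar} for the type-variable intermediate mirrors exactly what the paper does in the more detailed proof of Lemma~\ref{lemma:subtype-inversion-fun}, so the approaches are essentially identical.
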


\begin{proof}
	By induction on the subtype derivation.
  wherein only the following cases apply.

  \emph{Case \ruleref{refl}}.
  Then $U = \tRef{S}\capt C$.
  This case is concluded immediately.

  \emph{Case \ruleref{tvar}}.
  Then $U = X$
  and $X <: \tRef{S} \in \G$.
  We conclude this case by the \ruleref{tvar} rule.

  \emph{Case \ruleref{capt}}.
  By IH.
\end{proof}

\begin{lemma}[Subtype inversion: reader]
  \label{lemma:subtype-inversion-reader}
  If $\subDft{U}{\tRdr{S}\capt C}$,
  then
  either (i) $U$ in the form of $X\capt C'$
  where $\subDft{C'}{C}$ and $\subDft{X}{\tRdr{S}}$,
  or (ii) $U$ is in the form of $C'\,\tRdr{S}$,
  where $\subDft{C'}{C}$.
\end{lemma}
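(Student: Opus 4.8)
The plan is to proceed by induction on the derivation of $\subDft{U}{\tRdr{S}\capt C}$, following the same template as the preceding inversion lemmas (Lemma~\ref{lemma:subtype-inversion-tvar} and Lemma~\ref{lemma:subtype-inversion-mut}). The key observation is that the shape on the right-hand side is $\tRdr{S}$, and the only subtyping rules whose conclusion can have a reader-shaped right-hand side are \ruleref{refl}, \ruleref{tvar}, \ruleref{capt}, and \ruleref{trans}; the rules \ruleref{top}, \ruleref{fun}, \ruleref{tfun}, and \ruleref{boxed} all produce a right-hand side of incompatible shape and are vacuous.

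First I would dispatch the base cases. In \ruleref{refl} we have $U = \tRdr{S}\capt C$, which directly gives alternative (ii) with $C' = C$, using reflexivity of subcapturing. In \ruleref{tvar} the right-hand side must be a bare shape, forcing $C = \set{}$ and $U = X$ with $X <: \tRdr{S} \in \G$; this is alternative (i) with $C' = \set{}$, where $\subDft{X}{\tRdr{S}}$ follows from \ruleref{tvar} itself. The \ruleref{capt} case is handled by the induction hypothesis: here $U = \tCap{C_1}{S_1}$ with $\subDft{C_1}{C}$ and $\subDft{S_1}{\tRdr{S}}$ as subderivations. Applying the IH to $\subDft{S_1}{\tRdr{S}}$ (reading $\tRdr{S}$ as $\tRdr{S}\capt\set{}$) and noting that the shape type $S_1$ has empty capture set, I obtain either $S_1 = X$ with $\subDft{X}{\tRdr{S}}$, giving alternative (i) with $C' = C_1$, or $S_1 = \tRdr{S}$, giving alternative (ii) with $C' = C_1$; in both subcases $\subDft{C_1}{C}$ discharges the capture-set requirement, and the fixed shape $S$ is threaded through unchanged by the IH.

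The main obstacle is the \ruleref{trans} case, where $\subDft{U}{U'}$ and $\subDft{U'}{\tRdr{S}\capt C}$. Applying the IH to the upper judgment classifies $U'$. If $U'$ has reader shape, i.e.\ $U' = \tRdr{S}\capt C''$, then the second inversion again has a reader-shaped target and the reader IH applies directly, combining the capture-set bounds by \ruleref{sc-trans}. The delicate subcase is when the IH yields $U' = X\capt C''$ with $\subDft{X}{\tRdr{S}}$ and $\subDft{C''}{C}$: now $\subDft{U}{X\capt C''}$ has a \emph{type-variable}-shaped target, so the reader IH no longer applies, and I must instead invoke the previously established type-variable inversion lemma (Lemma~\ref{lemma:subtype-inversion-tvar}) to obtain $U = Y\capt C'''$ with $\subDft{Y}{X}$ and $\subDft{C'''}{C''}$.

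To close this subcase, I chain $\subDft{Y}{X}$ with $\subDft{X}{\tRdr{S}}$ via \ruleref{trans} and the two capture-set bounds via \ruleref{sc-trans}, recovering alternative (i). Managing this cross-lemma dependency---recognizing that a reader-shaped target can degenerate into a captured type variable under transitivity, and dispatching to the appropriate inversion lemma for each shape---is the crux of the argument; the remaining bookkeeping is routine transitivity of subcapturing and subtyping.
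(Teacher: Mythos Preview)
Your proposal is correct and takes essentially the same approach as the paper, which simply states the proof is analogous to Lemma~\ref{lemma:subtype-inversion-mut}. Your detailed case analysis---including the handling of \ruleref{trans} via Lemma~\ref{lemma:subtype-inversion-tvar} when the intermediate type is a captured type variable---matches the template laid out in the proof of Lemma~\ref{lemma:subtype-inversion-fun}, which is the fully spelled-out instance of this pattern.
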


\begin{proof}
	Analogous to the proof of Lemma \ref{lemma:subtype-inversion-mut}.
\end{proof}

\begin{lemma}[Subtype inversion: term abstraction]
  \label{lemma:subtype-inversion-fun}
  If $\subDft{P}{\forall(x :_D U) T\capt C}$,
  then
  either (i) $P$ is of the form $\tCap{C^\prime}{X}$,
  $\subDft{C^\prime}{C}$
  and $\subDft{X}{\forall(x :_D U) T}$,
  or (ii) $P$ is of the form $\tCap{C^\prime}{\forall(x :_D U^\prime) T^\prime}$ such that
  $\subDft{C^\prime}{C}$,
  $\subDft{U}{U^\prime}$,
  and $\sub{\G, x :_D U^\prime}{T^\prime}{T}$.
\end{lemma}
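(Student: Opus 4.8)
The plan is to proceed by induction on the derivation of $\subDft{P}{\forall(x :_D U) T\capt C}$, following the same strategy as in Lemma~\ref{lemma:subtype-inversion-mut} and Lemma~\ref{lemma:subtype-inversion-tvar}. Since the supertype has a function shape, only the rules whose conclusion can produce such a supertype are applicable: \ruleref{refl}, \ruleref{tvar}, \ruleref{capt}, \ruleref{fun}, and \ruleref{trans}. The rules \ruleref{top}, \ruleref{tfun}, and \ruleref{boxed} are ruled out immediately by the shape mismatch between their conclusions and the fixed supertype.

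The base cases are routine. For \ruleref{refl} we have $P = \forall(x :_D U) T\capt C$, so we land in alternative (ii) with $C' = C$, $U' = U$, $T' = T$, and all three obligations discharged by reflexivity of subcapturing and subtyping. For \ruleref{tvar}, $P$ is a bare type variable $X$ with $X <: \forall(x :_D U) T \in \G$, which gives alternative (i) with $C' = \set{}$ (so $\subDft{C'}{C}$ holds trivially) and $\subDft{X}{\forall(x :_D U) T}$ re-derived by \ruleref{tvar}. For \ruleref{capt}, the rule exposes $P = S_1\capt C_1$ with $\subDft{C_1}{C}$ and a shape subtyping $\subDft{S_1}{\forall(x :_D U) T}$; applying the induction hypothesis to the latter (viewed with an empty capture set) yields the shape of $S_1$ in the required form, and the outer capture set is recombined using $\subDft{C_1}{C}$ together with \ruleref{sc-trans}.

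The case \ruleref{fun} is the one that directly produces alternative (ii). Reading off its premises, the contravariant premise gives $\subDft{U}{U'}$ where $U'$ is the subtype's parameter type, and the covariant premise gives the result subtyping; presenting it under the parameter binding demanded by alternative (ii) is an instance of the same binding-reconciliation step discussed below, so it is convenient to treat \ruleref{fun} and \ruleref{trans} together.

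The main obstacle is the \ruleref{trans} case, where $\subDft{P}{P_1}$ and $\subDft{P_1}{\forall(x :_D U) T\capt C}$. I would apply the induction hypothesis to the right premise, obtaining $P_1$ in form (i) or (ii); in the type-variable case (i) I would then invoke Lemma~\ref{lemma:subtype-inversion-tvar} on the left premise to conclude $P$ in form (i). In the function case (ii) I apply the induction hypothesis again to the left premise and combine: the capture-set obligation follows by \ruleref{sc-trans}, and the parameter obligation $\subDft{U}{U'}$ by \ruleref{trans}. The delicate point is the result-subtyping obligation: the two inductive hypotheses deliver result subtypings checked under \emph{different} parameter bindings, and they must be brought under a common binding before they can be chained by \ruleref{trans}. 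Reconciling these bindings is exactly where a narrowing lemma for subtyping is essential, and getting this alignment right --- so that the separation-degree annotation $D$ and the parameter types line up consistently across the two judgments --- is the crux of the argument.
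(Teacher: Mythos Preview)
Your proposal is correct and follows the same approach as the paper's proof: induction on the subtyping derivation, handling \ruleref{trans} by applying the induction hypothesis to the right premise and then either invoking Lemma~\ref{lemma:subtype-inversion-tvar} (when the intermediate type is a variable) or applying the induction hypothesis again on the left premise and chaining by transitivity. You are more explicit than the paper about needing narrowing to reconcile the parameter bindings in the \ruleref{trans} case---the paper simply writes ``conclude by the transitivity of subcapturing and subtyping''---but the overall strategy is identical.
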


\begin{proof}
	By induction on the subtype derivation.

  \emph{Case \ruleref{refl}}.
  Then $U = \forall(x :_D U) T\capt C$.
  We conclude immediately
  by the reflexivity of subcapture and subtyping.

  \emph{Case \ruleref{tvar}}.
  Then $P = X$,
  $C = \set{}$,
  and $X <: \forall(x :_D U) T$.
  Then we conclude immediately.

  \emph{Case \ruleref{fun}}.
  Then $P = \forall(x :_D U^\prime) T^\prime\capt C'$
  and we conclude from the preconditions.

  \emph{Case \ruleref{trans}}.
  Then $\subDft{P}{P^\prime}$
  and $\subDft{P^\prime}{C\,\forall(x :_D U) T}$
  for some $P^\prime$.
  By IH we can show that
  $P^\prime$ is either of the form $\tCap{C^\prime}{X}$
  such that
  $\subDft{C^\prime}{C}$
  and $\subDft{X}{\forall(x :_D T) U}$,
  or $P^\prime = C^\prime\,\forall(x :_D T^\prime) U^\prime$,
  such that $\subDft{C^\prime}{C}$,
  $\subDft{T}{T^\prime}$,
  and $\sub{\G, x :_D T^\prime}{U^\prime}{U}$.
  In the first case,
  we invoke Lemma \ref{lemma:subtype-inversion-tvar}
  to show that
  $P = \tCap{C_1}{Y}$,
  $\subDft{C_1}{C^\prime}$,
  and $\subDft{Y}{X}$.
  Now we can conclude by the transitivity of subcapturing and subtyping.
  In the other case,
  we invoke IH again on the first subtype derivation
  and conclude by the transitivity of subcapturing and subtyping.

  \emph{Case \ruleref{capt}}.
  By IH.
\end{proof}

\begin{lemma}[Subtype inversion: type abstraction]
  \label{lemma:subtype-inversion-tfun}
  If $\subDft{P}{\forall[X <: S] T\capt C}$,
  then
  either (i) $P$ is of the form $\tCap{C^\prime}{Y}$,
  $\subDft{C^\prime}{C}$
  and $\subDft{Y}{\forall[X <: S] T}$,
  or (ii) $P$ is of the form $\tCap{C^\prime}{\forall[X <: S^\prime] T^\prime}$ such that
  $\subDft{C^\prime}{C}$,
  $\subDft{S}{S^\prime}$,
  and $\sub{\G, X <: S^\prime}{T^\prime}{T}$.
\end{lemma}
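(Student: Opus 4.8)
The plan is to prove the type-abstraction subtype inversion lemma by induction on the subtyping derivation, exactly mirroring the structure of the term-abstraction inversion lemma (Lemma \ref{lemma:subtype-inversion-fun}) proved just above. The target type $\forall[X <: S] T\capt C$ is a capturing type whose shape is a type-polymorphic function, so only a handful of subtyping rules can produce a derivation ending in it: \ruleref{refl}, \ruleref{tvar}, \ruleref{tfun}, \ruleref{trans}, and \ruleref{capt}. I would dispatch these cases in the same order as in the analogous proof.

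\begin{itemize}
\item In the \ruleref{refl} case, $P$ equals the target, so conclusion (ii) holds with $C^\prime = C$, $S^\prime = S$, $T^\prime = T$, discharged by reflexivity of subcapturing and subtyping.
\item In the \ruleref{tvar} case, $P = Y$ with $Y <: \forall[X <: S] T \in \G$ and $C = \set{}$, giving conclusion (i) immediately.
\item In the \ruleref{tfun} case, $P$ is itself a type-polymorphic function type $\tCap{C^\prime}{\forall[X <: S^\prime] T^\prime}$, and the premises of the rule supply $\subDft{S}{S^\prime}$ and $\sub{\G, X <: S^\prime}{T^\prime}{T}$ directly, establishing conclusion (ii).
\item The \ruleref{capt} case follows from the induction hypothesis applied to the shape subtyping premise.
\end{itemize}

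The \ruleref{trans} case is where the real work lies, and I expect it to be the main obstacle, just as in Lemma \ref{lemma:subtype-inversion-fun}. Here the derivation factors through an intermediate type $P^\prime$ with $\subDft{P}{P^\prime}$ and $\subDft{P^\prime}{\forall[X <: S] T\capt C}$. Applying the induction hypothesis to the second derivation splits into two subcases according to whether $P^\prime$ is of form (i), headed by a type variable, or form (ii), headed concretely by a $\forall[X <: S^\prime] T^\prime$ shape. In the form-(i) subcase I would invoke the type-variable subtype inversion lemma (Lemma \ref{lemma:subtype-inversion-tvar}) on the first derivation $\subDft{P}{\tCap{C^\prime}{Y}}$ to learn that $P$ too is headed by a type variable, then chain the subcapturing and subtyping facts by their respective transitivity rules to land in conclusion (i). In the form-(ii) subcase I would re-apply the induction hypothesis to the first derivation $\subDft{P}{P^\prime}$ and again close by transitivity, with the bounded body comparison $\sub{\G, X <: S^\prime}{T^\prime}{T}$ assembled from the two pieces; the contravariant treatment of the bound $S$ together with threading the refined context $\G, X <: S^\prime$ through the body subtyping is the delicate bookkeeping step. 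Because the argument is structurally identical to the already-proven term-abstraction case, I anticipate the proof can be stated briefly by appeal to that analogy, and indeed the only genuinely new ingredient over Lemma \ref{lemma:subtype-inversion-fun} is the substitution of Lemma \ref{lemma:subtype-inversion-tvar} in place of itself and the replacement of the term-binder subtyping premises with the type-binder ones.
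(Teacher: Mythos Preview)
Your proposal is correct and matches the paper's approach exactly: the paper's entire proof is the single line ``Analogous to the proof of Lemma \ref{lemma:subtype-inversion-fun},'' and you have spelled out precisely that analogy case by case, including the use of Lemma \ref{lemma:subtype-inversion-tvar} in the \ruleref{trans} case.
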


\begin{proof}
  Analogous to the proof of Lemma \ref{lemma:subtype-inversion-fun}.
\end{proof}

\begin{lemma}[Subtype inversion: boxed term]
  \label{lemma:subtype-inversion-boxed}
  If $\subDft{P}{(\Box\ T)\capt C}$,
  then
  either (i) $P$ is of the form $\tCap{C^\prime}{Y}$,
  $\subDft{C^\prime}{C}$
  and $\subDft{Y}{\Box\ T}$,
  or (ii) $P$ is of the form $\tCap{C^\prime}{(\Box\ T^\prime)}$ such that
  $\subDft{C^\prime}{C}$,
  and $\sub{\G}{T^\prime}{T}$.
\end{lemma}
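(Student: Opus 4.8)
The plan is to mirror the proof of Lemma~\ref{lemma:subtype-inversion-fun}, proceeding by induction on the derivation of $\subDft{P}{(\Box\ T)\capt C}$. As in the earlier inversion lemmas, I would first observe that only the rules \ruleref{refl}, \ruleref{tvar}, \ruleref{boxed}, \ruleref{capt}, and \ruleref{trans} can conclude a subtyping judgment whose right-hand shape type is a box; the rules \ruleref{top}, \ruleref{fun}, and \ruleref{tfun} produce structurally incompatible shapes and so are vacuous here.

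The base and one-step structural cases are routine. In \ruleref{refl} we have $P = (\Box\ T)\capt C$, so case (ii) holds with $C^\prime = C$ and $T^\prime = T$, using reflexivity of subcapturing and subtyping. In \ruleref{tvar}, $P = Y$ with $Y <: \Box\ T \in \G$ and $C = \set{}$, which gives case (i) with $C^\prime = \set{}$ and $\subDft{Y}{\Box\ T}$ by re-applying \ruleref{tvar}. In \ruleref{boxed}, $P = \Box\ T^\prime$ with $\subDft{T^\prime}{T}$ and an empty capture set, which is exactly case (ii). For \ruleref{capt} we have $P = \tCap{C^\prime}{S}$ with $\subDft{C^\prime}{C}$ and $\subDft{S}{\Box\ T}$; treating the shape $\Box\ T$ as a capturing type with empty capture set, the induction hypothesis applied to the latter premise yields $S$ in one of the two admissible forms, and I conclude by carrying the subcapturing premise $\subDft{C^\prime}{C}$ along.

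The main work, and the only delicate point, is the \ruleref{trans} case, where $\subDft{P}{P^\prime}$ and $\subDft{P^\prime}{(\Box\ T)\capt C}$. I would first apply the induction hypothesis to the second derivation, obtaining that $P^\prime$ has one of the two admissible forms. If $P^\prime = \tCap{C^\prime}{X}$ with $\subDft{C^\prime}{C}$ and $\subDft{X}{\Box\ T}$, then I appeal to Lemma~\ref{lemma:subtype-inversion-tvar} on $\subDft{P}{\tCap{C^\prime}{X}}$ to learn that $P = \tCap{C_1}{Y}$ with $\subDft{C_1}{C^\prime}$ and $\subDft{Y}{X}$, and I close case (i) by transitivity of subcapturing and subtyping. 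If instead $P^\prime = \tCap{C^\prime}{(\Box\ T^\prime)}$ is itself a box, I invoke the induction hypothesis a second time on $\subDft{P}{P^\prime}$ and again conclude by transitivity. This stitching together of the two inversion results through the type-variable lemma is the crux; the remainder of the argument is entirely analogous in structure to Lemma~\ref{lemma:subtype-inversion-fun}, so I expect no further obstacles.
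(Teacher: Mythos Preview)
Your proposal is correct and follows essentially the same approach as the paper, which simply defers the proof as analogous to Lemma~\ref{lemma:subtype-inversion-fun}. You have faithfully spelled out that analogy case by case, including the use of Lemma~\ref{lemma:subtype-inversion-tvar} in the \ruleref{trans} case when the intermediate type has type-variable shape.
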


\begin{proof}
  Analogous to the proof of Lemma \ref{lemma:subtype-inversion-fun}.
\end{proof}

\subsubsection{Properties of Separation}
Now we establish some properties of separation checking.

\begin{lemma}[Separation checking inversion: elements]
  \label{lemma:ninter-to-elem}
	If $\ninter{\G}{C_1}{C_2}$,
  then
  (1) $\forall x \in C_1$ we have $\ninter{\G}{x}{C_2}$;
  and (2) $\forall x \in C_2$ we have $\ninter{\G}{C_1}{x}$.
\end{lemma}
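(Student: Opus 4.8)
The plan is to prove parts (1) and (2) \emph{together} by a single induction on the derivation of $\ninter{\G}{C_1}{C_2}$. Proving (1) on its own does not go through: in the \ruleref{ni-symm} case the conclusion $\ninter{\G}{C_1}{C_2}$ is obtained from the swapped premise $\ninter{\G}{C_2}{C_1}$, and the induction hypothesis for (1) applied to that premise only yields information about the elements of $C_2$, not of $C_1$. The natural fix is to carry both directions through the induction simultaneously, since the symmetry rule makes each part feed the other.

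I would first dispatch the three rules whose conclusion is a pair of singletons, namely \ruleref{ni-degree}, \ruleref{ni-var}, and \ruleref{ni-reader}. In each of these $C_1 = \set{x}$ and $C_2 = \set{y}$, so the only element of $C_1$ is $x$ and the only element of $C_2$ is $y$; both (1) and (2) then reduce to the conclusion itself, which we already have. These cases are immediate.

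The two substantive cases are \ruleref{ni-set} and \ruleref{ni-symm}. For \ruleref{ni-set}, where $C_1 = \set{x_1, \cdots, x_n}$ is derived from the premises $\ninter{\G}{x_i}{C_2}$, part (1) is nothing more than the premises themselves. For part (2) I would apply the induction hypothesis for (2) to each premise $\ninter{\G}{\set{x_i}}{C_2}$, obtaining $\ninter{\G}{\set{x_i}}{\set{y}}$ for every $y \in C_2$; fixing such a $y$, the $n$ judgments $\ninter{\G}{\set{x_i}}{\set{y}}$ are exactly what \ruleref{ni-set} requires to conclude $\ninter{\G}{\set{x_1, \cdots, x_n}}{\set{y}}$. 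For \ruleref{ni-symm}, the conclusion $\ninter{\G}{C_1}{C_2}$ follows from $\ninter{\G}{C_2}{C_1}$; I would apply the induction hypothesis for part (2) to this premise to get $\ninter{\G}{C_2}{x}$ for each $x \in C_1$ and flip it via \ruleref{ni-symm} to obtain part (1), and symmetrically combine the induction hypothesis for part (1) with \ruleref{ni-symm} to obtain part (2).

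The only real subtlety, which is mild, is recognizing that \ruleref{ni-symm} forces the two parts to be established jointly rather than deriving (2) as a corollary of (1); once this coupling is in place, each case closes immediately by the corresponding separation rule.
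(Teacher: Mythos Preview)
Your proposal is correct and follows essentially the same approach as the paper's proof: both prove (1) and (2) simultaneously by induction on the separation derivation, dispatch the three singleton-conclusion rules immediately, and handle \ruleref{ni-set} and \ruleref{ni-symm} as the substantive cases using the induction hypothesis plus \ruleref{ni-set} and \ruleref{ni-symm} respectively. Your explicit observation that the two parts must be proved jointly because of \ruleref{ni-symm} is exactly the point the paper leaves implicit.
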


\begin{proof}
  By induction on the derivation.

  \emph{Case \ruleref{ni-symm}}.
  Concluding by swapping the two conclusions in the IH.

  \emph{Case \ruleref{ni-set}}.
  Then we have $\overline{\ninter{\G}{x}{C_2}}^{x \in C_1}$.
  The first part of the goal is immediate.
  Now we show the second part of the goal.
  By applying IH repeatedly we can deduce that
  $\forall x_1 \in C_1, \forall x_2 \in C_2$
  we have $\ninter{\G}{x_1}{x_2}$.
  We therefore show that
  $\forall x_2 \in C_2$
  we have $\ninter{\G}{C_1}{x_2}$
  by \ruleref{ni-symm} and \ruleref{ni-set},
  thus concluding this case.

  \emph{Case \ruleref{ni-degree}, \ruleref{ni-var} and \ruleref{ni-reader}}.
  These cases are immediate since both $C_1$ and $C_2$ are singletons.
\end{proof}

\begin{lemma}[Reader capability specialization]
  \label{lemma:reader-specialization}
  Given any $\G$ and $h$, we have:
  (1) $| \ninter{\G}{\set{\rdroot}}{C_2} | \le h$
  implies $\ninter{\G}{\set{x}}{C_2}$ for every $x$ such that
  $\isrdr{\G}{x}$;
  and (2) $| \ninter{\G}{C_1}{\set{\rdroot}} | \le h$
  implies $\ninter{\G}{C_1}{\set{x}}$ for every $x$ such that
  $\isrdr{\G}{x}$.
\end{lemma}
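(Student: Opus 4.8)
The plan is to induct on the height bound $h$, proving parts (1) and (2) simultaneously so that the symmetric rule \ruleref{ni-symm} can switch between them. Fix a reader $x$ with $\isrdr{\G}{x}$; by rule \ruleref{sc-reader} this immediately gives $\sub{\G}{\set{x}}{\set{\rdroot}}$, which is the only property of $x$ I shall need. The central observation is that $\rdroot$ is never bound in a typing context, since the syntax of $\G$ excludes $\univ$ and $\rdroot$ from the domain. Consequently, whenever $\set{\rdroot}$ appears on the left of a separation judgment, the last rule can be neither \ruleref{ni-degree} nor \ruleref{ni-var}, as both require the left-hand variable to carry a binding in $\G$.

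For part (1), I would proceed by case analysis on the last rule deriving $\ninter{\G}{\set{\rdroot}}{C_2}$. If it is \ruleref{ni-symm}, the premise $\ninter{\G}{C_2}{\set{\rdroot}}$ has strictly smaller height, so the induction hypothesis for part (2) yields $\ninter{\G}{C_2}{\set{x}}$, and one application of \ruleref{ni-symm} concludes. If it is \ruleref{ni-set}, the decomposed set is the singleton $\set{\rdroot}$, so the sole premise is again a judgment $\ninter{\G}{\set{\rdroot}}{C_2}$ of smaller height, and the induction hypothesis for part (1) closes the case directly. The cases \ruleref{ni-degree} and \ruleref{ni-var} are vacuous by the observation above. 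The only substantive case is \ruleref{ni-reader}: here $C_2 = \set{y}$ and the premises give $\sub{\G}{\set{y}}{\set{\rdroot}}$; combining this with $\sub{\G}{\set{x}}{\set{\rdroot}}$ and reapplying \ruleref{ni-reader} produces $\ninter{\G}{\set{x}}{\set{y}}$, as required.

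Part (2) is entirely symmetric, exchanging the two sides and swapping the two induction hypotheses at the \ruleref{ni-symm} step. I expect the delicate points to be organizational rather than conceptual. The statement must be phrased with the explicit height bound $h$, rather than ``by induction on the derivation'', precisely because the \ruleref{ni-symm} case crosses between parts (1) and (2); a single measure shared by both parts is what makes the mutual recursion well-founded. The reader-specialization step itself is immediate once \ruleref{sc-reader} is invoked, so the main obstacle is merely to set up the simultaneous induction correctly and to confirm that the argument $\rdroot \notin \dom{\G}$ does indeed eliminate the two binding-dependent rules.
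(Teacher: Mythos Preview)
Your argument for part (1) is correct and matches the paper. The gap is in your claim that part (2) is ``entirely symmetric'' and that $\rdroot \notin \dom{\G}$ eliminates \ruleref{ni-var} there as well. In part (2) the judgment is $\ninter{\G}{C_1}{\set{\rdroot}}$: the left-hand side is the arbitrary set $C_1$, not $\set{\rdroot}$, so \ruleref{ni-var} \emph{can} fire with $C_1 = \set{x'}$ for some $x' :_{D} S\capt C \in \G$, yielding a premise $\ninter{\G}{C}{\set{\rdroot}}$ of smaller height. The paper handles this case by invoking the induction hypothesis for part (2) on that premise to get $\ninter{\G}{C}{\set{x}}$ and then reapplying \ruleref{ni-var}. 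Rule \ruleref{ni-degree} does remain vacuous in part (2), but for a different reason than you give: it would require $\rdroot$ to lie in the separation degree $D$ of $x'$, and separation degrees syntactically exclude $\rdroot$.

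So the overall strategy (simultaneous induction on $h$, case analysis on the last rule, \ruleref{sc-reader} to feed \ruleref{ni-reader}) is exactly the paper's; you have only overlooked that the two parts are asymmetric in the \ruleref{ni-var} case.
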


\begin{proof}
  By induction on the derivation depth $h$.
  We prove each of the conclusion respectively, starting by analyzing the cases of the first one.

  \emph{Case \ruleref{ni-symm} and \ruleref{ni-set}}.
  By applying the IH.

  \emph{Case \ruleref{ni-degree} and \ruleref{ni-var}}.
  Not applicable.

  \emph{Case \ruleref{ni-reader}}.
  Then $C_2 = \set{y}$ and $\subDft{\set{y}}{\set{\rdroot}}$.
  By the \ruleref{sc-reader} we can show that
  $\sub{\G}{\set{x}}{\set{\rdroot}}$.
  This case can therefore be concluded using the \ruleref{ni-reader} rule.

  Now we inspect the derivation in the second case.

  \emph{Case \ruleref{ni-symm}}. By the IH.

  \emph{Case \ruleref{ni-set}}.
  Then $\overline{\ninter{\G}{x}{\set{\rdroot}}}^{x \in C_1}$.
  By the IH we can show that given any $y$
  such that $\isrdr{\G}{y}$,
  we can show that
  $\overline{\ninter{\G}{x}{y}}^{x \in C_1}$.
  We can therefore conclude this case by the \ruleref{ni-set} rule.

  \emph{Case \ruleref{ni-degree}}. Not applicable.

  \emph{Case \ruleref{ni-var}}. By the IH and the same rule.

  \emph{Case \ruleref{ni-reader}}.
  This case can be concluded analogously to the one in the previous subgoal.
\end{proof}

\begin{lemma}[Universal capability specialization]
  \label{lemma:univ-specialization}
  Given any $\G$ and $h$, we have:
  (1) $| \ninter{\G}{\set{\univ}}{C_2} | \le h$
  implies $\ninter{\G}{C}{C_2}$ for any $C$,
  and 
  (2) $| \ninter{\G}{C_1}{\set{\univ}} | \le h$
  implies $\ninter{\G}{C_1}{C}$ for any $C$.
\end{lemma}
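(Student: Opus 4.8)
The plan is to prove parts~(1) and~(2) \emph{simultaneously} by strong induction on the height bound $h$, in the same style as Lemma~\ref{lemma:reader-specialization}. The guiding intuition is that $\univ{}$ is the maximally-interfering capability, so a capture set that is separated from $\univ{}$ must be effectively pure and is therefore separated from everything. As a first step I would reduce each goal to a singleton $C$: for part~(1), $\ninter{\G}{C}{C_2}$ with $C = \set{z_1, \dots, z_k}$ follows from the singleton facts $\ninter{\G}{\set{z_i}}{C_2}$ by \ruleref{ni-set}; for part~(2), $\ninter{\G}{C_1}{C}$ follows from the $\ninter{\G}{C_1}{\set{z_i}}$ by \ruleref{ni-symm}, \ruleref{ni-set}, and \ruleref{ni-symm}. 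Hence it suffices to establish, for an arbitrary single variable $z$, that $\ninter{\G}{\set{\univ}}{C_2}$ implies $\ninter{\G}{\set{z}}{C_2}$ and that $\ninter{\G}{C_1}{\set{\univ}}$ implies $\ninter{\G}{C_1}{\set{z}}$.

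For the inductive step I would case on the last rule. In part~(1) the left-hand set is the singleton $\set{\univ}$, and only \ruleref{ni-symm} and a trivial unfolding by \ruleref{ni-set} can apply: this sends the goal to part~(2) at strictly smaller height, from which the result follows by the IH and \ruleref{ni-symm}. In part~(2), \ruleref{ni-symm} dually crosses to part~(1); \ruleref{ni-set} decomposes $C_1$ into its elements, applies the IH to each, and recombines by \ruleref{ni-set}; and \ruleref{ni-var}, where $C_1 = \set{x}$ with $x :_{D} S\capt C' \in \G$ and subderivation $\ninter{\G}{C'}{\univ}$, applies the IH to obtain $\ninter{\G}{C'}{\set{z}}$ and then re-applies \ruleref{ni-var} to conclude $\ninter{\G}{\set{x}}{\set{z}}$. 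This last case is the engine of the induction: it walks down the capture set of each variable until it bottoms out at the empty set (an instance of \ruleref{ni-set} with no premises), which is separated from every $z$.

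The heart of the argument---and the only place where the claim could fail---is showing that \ruleref{ni-degree} and \ruleref{ni-reader} can never conclude a separation judgment with $\univ{}$ on either side. \ruleref{ni-degree} is impossible because it would require $\univ{}$ to lie in some separation degree $D$, yet separation degrees exclude $\univ{}$ by construction; likewise \ruleref{ni-var} and \ruleref{ni-degree} cannot place $\univ{}$ on the \emph{left}, since that would need $\univ{} \in \dom{\G}$, which never holds. \ruleref{ni-reader} is impossible because it would require $\subDft{\set{\univ}}{\set{\rdroot}}$, which is underivable: inspecting the subcapturing rules, $\univ{}$ can occur on the left of a derivable subcapturing only through membership (\ruleref{sc-elem}), and $\univ{} \notin \set{\rdroot}$, while the sole rule relating the two roots, \ruleref{sc-rdr-cap}, orients them the other way, giving only $\subDft{\set{\rdroot}}{\set{\univ}}$. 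I expect establishing this underivability cleanly to be the main obstacle; once it and the $\univ{} \notin D$ observation are in hand, the remaining cases are the routine structural manipulations already used for Lemma~\ref{lemma:reader-specialization}.
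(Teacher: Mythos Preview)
Your proposal is correct and follows essentially the same approach as the paper: a simultaneous induction on the derivation height, ruling out \ruleref{ni-degree}, \ruleref{ni-var} (on the $\univ$ side), and \ruleref{ni-reader} as impossible via $\univ \notin \dom{\G}$, $\univ \notin D$, and the underivability of $\subDft{\set{\univ}}{\set{\rdroot}}$, and handling the remaining structural cases by the IH. Your preliminary reduction to a singleton $C$ is a minor presentational addition; the paper works with arbitrary $C$ directly but otherwise the case analysis and key observations coincide.
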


\begin{proof}
  By induction on the derivation depth $h$.
  We establish the two conclusions respectively.
  We start by showing that we can prove the first one in each case.

  \emph{Case \ruleref{ni-symm} and \ruleref{ni-set}}.
  By using the IH.

  \emph{Case \ruleref{ni-degree} and \ruleref{ni-var}}.
  Not applicable since $\univ{} \notin \dom{\G}$.

  \emph{Case \ruleref{ni-reader}}.
  Then $\subDft{\set{\univ{}}}{\set{\rdroot{}}}$.
  By induction on the subcapturing derivation
  we can derive a contradiction in each case,
  rendering this case impossible.

  Then we show that we can prove the second conclusion in each case.
  \emph{Case \ruleref{ni-symm} and \ruleref{ni-set}}.
  By the IH.

  \emph{Case \ruleref{ni-degree}}.
  Then $C_1 = \set{x}$, $x:_D T \in \G$ and $\univ{} \in D$.
  This is contradictory since the $D$ cannot contain $\univ{}$.

  \emph{Case \ruleref{ni-var}}.
  Then $C_1 = \set{x}$, $x: S\capt C' \in \G$ and $\ninter{\G}{\set{\univ{}}}{C'}$.
  By the IH we can show that $\ninter{\G}{C}{C'}$ for any $C$,
  and conclude by the \ruleref{ni-var} rule.

  \emph{Case \ruleref{ni-reader}}.
  Similarly we have $\sub{\G}{\set{\univ{}}}{\set{\rdroot{}}}$
  and derive a contradiction from it.
\end{proof}

\begin{lemma}[Subcapture preserves separation]
  \label{lemma:subcapt-pres-ninter}
	If $\ninter{\G}{C_1}{C_2}$
  and $\sub{\G}{C_0}{C_1}$,
  then $\ninter{\G}{C_0}{C_2}$.
\end{lemma}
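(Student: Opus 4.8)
The plan is to induct on the derivation of the subcapturing hypothesis $\sub{\G}{C_0}{C_1}$ while keeping the separation hypothesis $\ninter{\G}{C_1}{C_2}$ fixed. The two inductive rules are routine. For \ruleref{sc-trans}, with intermediate set $C$, I would first apply the induction hypothesis to $\sub{\G}{C}{C_1}$ to obtain $\ninter{\G}{C}{C_2}$, and then apply it again to $\sub{\G}{C_0}{C}$ to conclude. For \ruleref{sc-set} I would apply the induction hypothesis to each elementwise premise $\sub{\G}{\set{x}}{C_1}$ and reassemble the results with \ruleref{ni-set}. So the real work lies in the four base cases, in each of which $C_0$ is a singleton variable or a root capability.

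For \ruleref{sc-elem}, where $C_0 = \set{x}$ with $x \in C_1$, the conclusion is immediate from Lemma~\ref{lemma:ninter-to-elem} applied to the hypothesis $\ninter{\G}{C_1}{C_2}$. For \ruleref{sc-var}, where $C_0 = \set{x}$ and $x :_D S\capt C_1 \in \G$, I would use Lemma~\ref{lemma:ninter-to-elem} to extract from $\ninter{\G}{C_1}{C_2}$ that $\ninter{\G}{C_1}{y}$ holds for every $y \in C_2$; each such fact yields $\ninter{\G}{\set{x}}{\set{y}}$ by \ruleref{ni-var}, and \ruleref{ni-symm} together with \ruleref{ni-set} then recombines these singleton facts into $\ninter{\G}{\set{x}}{C_2}$.

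The two root-capability cases are discharged by the specialization lemmas. For \ruleref{sc-rdr-cap}, where $C_0 = \set{\rdroot}$ and $C_1 = \set{\univ}$, the hypothesis becomes $\ninter{\G}{\set{\univ}}{C_2}$, and Lemma~\ref{lemma:univ-specialization} directly gives $\ninter{\G}{\set{\rdroot}}{C_2}$ (separation from $\univ$ entails separation from anything). For \ruleref{sc-reader}, where $C_0 = \set{x}$ with $\isrdr{\G}{x}$ and $C_1 = \set{\rdroot}$, Lemma~\ref{lemma:reader-specialization} turns the hypothesis $\ninter{\G}{\set{\rdroot}}{C_2}$ into the desired $\ninter{\G}{\set{x}}{C_2}$.

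I expect the main obstacle to be the \ruleref{sc-var} case: the rule \ruleref{ni-var} only ever produces singleton–singleton separations, so bridging from the set-level hypothesis $\ninter{\G}{C_1}{C_2}$ to the set-level goal $\ninter{\G}{\set{x}}{C_2}$ requires the elementwise decomposition of Lemma~\ref{lemma:ninter-to-elem} and a careful reassembly through \ruleref{ni-symm} and \ruleref{ni-set}. The root cases look delicate at first glance but collapse to one-line invocations once the two specialization lemmas are available, which is why I would establish those lemmas before attempting this proof.
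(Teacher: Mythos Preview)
Your proposal is correct and follows essentially the same induction on the subcapturing derivation as the paper, case by case. Your \ruleref{sc-var} case is in fact more explicit than the paper's: you spell out the elementwise decomposition via Lemma~\ref{lemma:ninter-to-elem} and reassembly via \ruleref{ni-symm}/\ruleref{ni-set}, whereas the paper just says ``conclude by the \ruleref{ni-var} rule'' (and even mentions an IH, which is spurious since \ruleref{sc-var} has no subcapture premise).
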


\begin{proof}
	By induction on the subcapture derivation.

  \emph{Case \ruleref{sc-trans}}.
  Then $\sub{\G}{C_0}{C}$ and $\sub{\G}{C}{C_1}$ for some $C$.
  We conclude by applying the IH twice.

  \emph{Case \ruleref{sc-var}}.
  Then $C_0 = \set{x}$,
  $x : \tCap{C}{S} \in \G$,
  and $\sub{\G}{C}{C_1}$.
  By IH we can show that
  $\ninter{\G}{C}{C_2}$,
  and conclude by the \ruleref{ni-var} rule.

  \emph{Case \ruleref{sc-elem}}.
  Then $C_0 = \set{x}$ and $x \in C_1$.
  We conclude this case by Lemma \ref{lemma:ninter-to-elem}.

  \emph{Case \ruleref{sc-set}}.
  Then $\overline{\ninter{\G}{x}{C_2}}^{x \in C_1}$.
  This case can be concluded by applying IH repeatedly and using the \ruleref{ni-set} rule.

  \emph{Case \ruleref{sc-reader}}.
  Then $C_0 = \set{x}$,
  $\isrdr{\G}{x}$,
  and $C_1 = \set{\rdroot}$.
  Now we can conclude this case by invoking Lemma \ref{lemma:reader-specialization}.

  \emph{Case \ruleref{sc-rdr-cap}}.
  Then $C_0 = \set{\rdroot}$
  and $C_1 = \set{\univ}$.
  Now we conclude by invoking Lemma \ref{lemma:univ-specialization}.
\end{proof}

\begin{corollary}[Set inclusion preserves separation]
  \label{coro:subset-pres-ninter}
	If $\ninter{\G}{C_1}{C_2}$
  and ${C_0} \subseteq {C_1}$,
  then $\ninter{\G}{C_0}{C_2}$.
\end{corollary}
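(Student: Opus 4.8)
The plan is to reduce this corollary directly to the two results that immediately precede it, since it is a trivial specialization of Lemma \ref{lemma:subcapt-pres-ninter}. The only gap between the hypothesis $C_0 \subseteq C_1$ and the hypothesis required by Lemma \ref{lemma:subcapt-pres-ninter} is that the latter is phrased in terms of subcapturing rather than raw set inclusion. So the first step is to bridge this gap using the earlier lemma that set inclusion implies subcapturing, which gives $\sub{\G}{C_0}{C_1}$ from $C_0 \subseteq C_1$.

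With $\sub{\G}{C_0}{C_1}$ in hand, I would then instantiate Lemma \ref{lemma:subcapt-pres-ninter} (Subcapture preserves separation) with the given separation fact $\ninter{\G}{C_1}{C_2}$ and the freshly derived subcapturing relation $\sub{\G}{C_0}{C_1}$. Its conclusion is exactly $\ninter{\G}{C_0}{C_2}$, which is the goal. No induction or case analysis is needed here; all the real work — in particular the delicate handling of the reader and universal root capabilities in the \ruleref{sc-reader} and \ruleref{sc-rdr-cap} cases — has already been discharged inside the proof of Lemma \ref{lemma:subcapt-pres-ninter}.

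I do not anticipate any obstacle: the corollary is a one-line composition of two established facts, and its purpose is merely to expose the convenient set-inclusion form that will be invoked repeatedly later in the metatheory (for instance when restricting a capture set via $\cv{t} \cap \dom{\G}$ in the definition of separation between terms). The only thing to be careful about is citing the correct monotonicity direction: set inclusion is contravariant in the sense that shrinking the \emph{left} operand of a separation judgment preserves it, which is precisely what Lemma \ref{lemma:subcapt-pres-ninter} delivers for the first argument; the symmetric case for the second argument then follows by \ruleref{ni-symm} if it is ever needed.
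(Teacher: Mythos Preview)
Your proposal is correct and matches the paper's intended approach: the corollary is placed immediately after Lemma~\ref{lemma:subcapt-pres-ninter} with no explicit proof, precisely because it is the trivial composition you describe (set inclusion implies subcapturing, then apply the lemma). There is nothing to add.
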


\begin{lemma}[Evaluation context reification over subcapture]
  \label{lemma:evctx-reification-subcapt}
  If
  (i) $\match{\G}{e}{\Delta}$,
  (ii) $\sub{\G, \Delta}{\cv{s^\prime}}{\cv{s}}$,
  and (iii) $s$ is not a value,
  then
  $\sub{\G}{\cv{\evctx{s^\prime}}}{\cv{\evctx{s}}}$.
\end{lemma}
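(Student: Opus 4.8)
The plan is to proceed by induction on the structure of the evaluation context $e$, equivalently on the derivation of $\match{\G}{e}{\Delta}$ (Figure \ref{fig:matching-env}). There are three cases. In the base case $e = []$, rule \ruleref{ev-empty} forces $\Delta = \emptyset$, so $\evctx{s'} = s'$ and $\evctx{s} = s$, and hypothesis (ii) reads $\sub{\G}{\cv{s'}}{\cv{s}}$, which is exactly the goal. The two inductive cases correspond to a hole in the binding of a let (\ruleref{ev-let-1}) and a hole in the body of a parallel let (\ruleref{ev-let-2}); in both I expand $\cv{\evctx{s'}}$ and $\cv{\evctx{s}}$ according to the two clauses for \textsf{let} in the definition of \textsf{cv} (Section \ref{sec:cv}) and recombine the sub-results with Corollary \ref{coro:subcapt-join-both} and Lemma \ref{lemma:subcapt-strengthening}. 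Throughout I rely on the Barendregt convention so that the let-bound variable is fresh, hence absent from the \textsf{cv} of the binding term.

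For $e = \tLetMode{m}{x}{e'}{t}$, rule \ruleref{ev-let-1} gives $\match{\G}{e'}{\Delta}$, so the induction hypothesis already yields $\sub{\G}{\cv{e'[s']}}{\cv{e'[s]}}$ under $\G$ itself. Since $s$ is not a value, neither is $e'[s]$ (if $e'=[]$ it equals $s$, otherwise it is a \textsf{let}), so only the second \textsf{cv} clause applies to $\evctx{s}$, giving $\cv{\evctx{s}} = \cv{e'[s]}\cup(\cv{t}\setminus x)$. The binding $e'[s']$ is a value only when $e'=[]$ and $s'$ is a value, a degenerate subcase dispatched directly by the set-inclusion lemma or Corollary \ref{coro:subcapt-join-both} according to whether $x \in \cv{t}$; in every other subcase the second clause applies on both sides, and I combine the induction hypothesis with the reflexive fact $\sub{\G}{\cv{t}\setminus x}{\cv{t}\setminus x}$ via Corollary \ref{coro:subcapt-join-both}. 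The Barendregt convention ($x\notin\cv{e'[s]},\cv{e'[s']}$) ensures the set differences interact cleanly.

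The case $e = \tLetMode{\parmode}{x}{t}{e'}$ is the delicate one. Here \ruleref{ev-let-2} gives $\Delta = x :_{\set{}} T, \Delta'$ together with $\typ{\G}{t}{T}$ and $\match{\G, x :_{\set{}} T}{e'}{\Delta'}$, and the induction hypothesis yields $\sub{\G, x :_{\set{}} T}{\cv{e'[s']}}{\cv{e'[s]}}$ (the premise transports correctly since $(\G, x :_{\set{}} T), \Delta' = \G, \Delta$). Writing $C_{s'} = \cv{e'[s']}$ and $C_{s} = \cv{e'[s]}$, I case-split on whether the binding $t$ is a value and on whether $x$ lies in $C_{s'}$ and in $C_{s}$. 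When $t$ is not a value, both sides use the second \textsf{cv} clause, and Lemma \ref{lemma:subcapt-strengthening} (dropping $x$) followed by Corollary \ref{coro:subcapt-join-both} closes the goal; likewise when $t$ is a value and $x$ lies in both, in neither, or only in $C_{s}$, the combination of Lemma \ref{lemma:subcapt-strengthening}, the set-inclusion lemma, and Corollary \ref{coro:subcapt-join-both} suffices.

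The main obstacle is the single remaining combination: $t$ is a value, $x \in C_{s'}$ but $x \notin C_{s}$. Here $\cv{\evctx{s'}} = \cv{t}\cup(C_{s'}\setminus x)$ whereas $\cv{\evctx{s}} = C_{s}$, so the binding's own capture set $\cv{t}$ survives on the left but is dropped on the right, and strengthening alone does not discharge the $\cv{t}$ summand. To close it I intend to exploit the typing $\typ{\G}{t}{T}$ supplied by $\match{\G}{e}{\Delta}$: from $x \in C_{s'}$ and Lemma \ref{lemma:subcapt-decompose} I obtain $\sub{\G, x :_{\set{}} T}{\set{x}}{C_{s}}$, and because $x \notin C_{s}$ this subcapturing must route through $x$'s bound, i.e. through $\cs{T}$; combining this with the standard \ccformal{} fact that a value's \textsf{cv} is bounded by the capture set of its type, namely $\sub{\G}{\cv{t}}{\cs{T}}$, and with Lemma \ref{lemma:subcapt-strengthening} to remove $x$, yields $\sub{\G}{\cv{t}}{C_{s}}$, after which Corollary \ref{coro:subcapt-join-both} with $\sub{\G}{C_{s'}\setminus x}{C_{s}}$ finishes the case. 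Equivalently, this step is a subcapturing-substitution argument replacing the occurrence of $x$ in $C_{s'}$ by $\cv{t}$, and verifying that such substitution preserves subcapturing is where the real work lies.
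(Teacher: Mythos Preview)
Your induction on $e$ and your treatment of the first two cases coincide with the paper's proof. In the third case $e=\tLetMode{\parmode}{x}{t}{e'}$ you are actually more careful than the paper: the paper simply asserts $\cv{\evctx{s}}=\cv{t}\cup\cv{e'[s]}\setminus\set{x}$ and the analogous equality for $s'$, implicitly taking the second \textsf{cv} clause on both sides, and then finishes with Lemma~\ref{lemma:subcapt-strengthening} and Corollary~\ref{coro:subcapt-join-both} exactly as you do in your non-degenerate subcases. The configuration you isolate as Case~4 ($t$ a value, $x\in\cv{e'[s']}$, $x\notin\cv{e'[s]}$) is precisely the one the paper's asserted equality glosses over.

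Your fix for Case~4, however, is not yet a proof. The step ``$\sub{\G,x:_{\set{}}T}{\set{x}}{C_s}$ with $x\notin C_s$ must route through $\cs{T}$'' is an inversion the paper never states, and a direct induction on the subcapturing derivation does not close cleanly: in the \ruleref{sc-trans} case, when the intermediate set still contains $x$, Lemma~\ref{lemma:subcapt-decompose} does not hand you a shorter sub-derivation to recurse on; and \ruleref{sc-reader} lets $\set{x}<:\set{\rdroot}$ hold purely by virtue of $x$'s shape type, independently of $\cs{T}$, so the ``routing'' claim fails there. Your alternative ``subcapturing-substitution'' phrasing likewise requires a set-for-variable substitution lemma that the paper does not supply (Lemma~\ref{lemma:term-subst-pres-subcapt} only substitutes one variable for another). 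To complete the argument you would need to state and prove such an auxiliary lemma, or else argue separately that Case~4 cannot arise under the hypotheses; the paper itself does neither and simply does not address this subcase.
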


\begin{proof}
	By induction on $e$.

  \emph{Case $e = []$}.
  Immediate.

  \emph{Case $e = \tLetMode{m}{x}{e^\prime}{u}$}.
  Then we have
  $\cv{\evctx{s}} = \cv{\evpctx{s}} \cup \cv{u}\setminus \set{x}$,
  and $\cv{\evctx{s^\prime}} \subseteq \cv{\evpctx{s^\prime}} \cup \cv{u}\setminus \set{x}$.
  By the reflexivity of subcapturing we have
  $\sub{\G}{\cv{u}\setminus \set{x}}{\cv{u}\setminus \set{x}}$.
  By IH, we have
  $\sub{\G}{\cv{\evpctx{s^\prime}}}{\cv{\evpctx{s^\prime}}}$.
  By Corollary \ref{coro:subcapt-join-both},
  we have
  $\sub{\G}{\cv{\evpctx{s^\prime}} \cup \cv{u}\setminus \set{x}}{\cv{\evpctx{s^\prime}} \cup \cv{u}\setminus \set{x}}$.
  We can therefore conclude this case.

  \emph{Case $e = \tLetMode{\parmode}{x}{t}{e^\prime}$}.
  By inspecting the derivation of $\match{\G}{e}{\Delta}$,
  we can show that
  $\Delta = x :_D T, \Delta^\prime$,
  $\typ{\G}{t}{T}$,
  and $\match{\G, x :_D T}{e^\prime}{\Delta^\prime}$.
  Also, we have
  $\cv{\evctx{s}} = \cv{t} \cup \cv{\evpctx{s}}\setminus \set{x}$
  and $\cv{\evctx{s^\prime}} = \cv{t} \cup \cv{\evpctx{s^\prime}}\setminus \set{x}$.
  By the reflexivity of subcapture,
  we have $\sub{\G}{\cv{t}}{\cv{t}}$.
  By IH,
  we have $\sub{\G, x :_D T}{\cv{\evpctx{s^\prime}}}{\cv{\evpctx{s}}}$.
  Now we invoke Lemma \ref{lemma:subcapt-strengthening}
  and show that
  $\sub{\G}{\cv{\evpctx{s^\prime}}\setminus \set{x}}{\cv{\evpctx{s}}\setminus \set{x}}$.
  We can conclude this case by Corollary \ref{coro:subcapt-join-both}.
\end{proof}

\begin{lemma}[Subcapturing implies fresher terms]
  \label{lemma:subcapt-to-subi}
	Given the environment $\G, \Delta$ and two terms $s, s^\prime$,
  if (i) $s$ is not a value,
  and (ii) $\sub{\G, \Delta}{\cv{s^\prime}}{\cv{s}}$,
  then $\subi{\G; \Delta}{s^\prime}{s}$.
\end{lemma}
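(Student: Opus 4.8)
The plan is to reduce the fresher-term judgment to its definition and then obtain the conclusion by composing two earlier structural results: Lemma \ref{lemma:evctx-reification-subcapt} lifts the hypothesised subcapturing from the focused subterms to the whole plugged contexts, and Lemma \ref{lemma:subcapt-pres-ninter} then transports the assumed separation along that subcapturing.

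Concretely, to prove $\subi{\G; \Delta}{s'}{s}$ I unfold the definition of fresher terms: I fix arbitrary $\Delta_1, \Delta_2, e$ with $\Delta = \Delta_1, \Delta_2$ and $\match{\G, \Delta_1}{e}{\Delta_2}$, fix an arbitrary capture set $C$, and assume $\ninter{\G, \Delta_1}{\evctx{s}}{C}$; the goal becomes $\ninter{\G, \Delta_1}{\evctx{s'}}{C}$. The crucial observation is that both $\evctx{s}$ and $\evctx{s'}$ are now measured through their $\textsf{cv}$, so the goal is really a statement about the capture sets $\cv{\evctx{s}}$ and $\cv{\evctx{s'}}$ under the context $\G, \Delta_1$.

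First I would invoke Lemma \ref{lemma:evctx-reification-subcapt}, instantiating its ambient context with $\G, \Delta_1$, its binding environment with $\Delta_2$, and its evaluation context with $e$. Its three premises then match the data at hand: premise (i) is exactly $\match{\G, \Delta_1}{e}{\Delta_2}$; premise (ii), namely $\sub{(\G, \Delta_1), \Delta_2}{\cv{s'}}{\cv{s}}$, is precisely hypothesis (ii) since $\Delta = \Delta_1, \Delta_2$; and premise (iii), that $s$ is not a value, is hypothesis (i). The lemma thus yields $\sub{\G, \Delta_1}{\cv{\evctx{s'}}}{\cv{\evctx{s}}}$. With this in hand I would apply Lemma \ref{lemma:subcapt-pres-ninter} with $C_0 := \cv{\evctx{s'}}$, $C_1 := \cv{\evctx{s}}$, and $C_2 := C$: from the assumed $\ninter{\G, \Delta_1}{\evctx{s}}{C}$ and the subcapturing just derived it delivers $\ninter{\G, \Delta_1}{\evctx{s'}}{C}$, closing the goal. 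Since $\Delta_1, \Delta_2, e, C$ were arbitrary, this establishes $\subi{\G; \Delta}{s'}{s}$.

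The proof is therefore a short composition rather than an argument with a single hard core; the points that require care are bookkeeping rather than depth. The first is the context split: Lemma \ref{lemma:evctx-reification-subcapt} expects a single context and a trailing environment, and one must line these up with $\G, \Delta_1$ and $\Delta_2$ so that $(\G, \Delta_1), \Delta_2$ coincides with $\G, \Delta$. The second, and the one I would check most carefully, is that the separation judgment appearing in the fresher-term definition ranges over the plain $\textsf{cv}$ of the plugged contexts, so that Lemma \ref{lemma:subcapt-pres-ninter} applies verbatim; were the separation to restrict these capture sets to $\dom{\G, \Delta_1}$, the transport step would additionally need a per-element argument (via Lemma \ref{lemma:subcapt-decompose}) confirming that discarding the root capabilities $\univ$ and $\rdroot$ does not disturb the separation being transferred.
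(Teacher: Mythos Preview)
Your proposal is correct and follows essentially the same approach as the paper's own proof: unfold the definition of fresher terms, invoke Lemma~\ref{lemma:evctx-reification-subcapt} to lift the subcapturing to the plugged contexts, and conclude via Lemma~\ref{lemma:subcapt-pres-ninter}. Your additional remark about the possible restriction to $\dom{\G,\Delta_1}$ in the term-level separation sugar is a valid bookkeeping concern that the paper's proof also glosses over, but it does not change the structure of the argument.
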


\begin{proof}
  Given any $\Delta_1, \Delta_2$ and $e$ such that
  $\Delta = \Delta_1, \Delta_2$
  and $\match{\G, \Delta_1}{e}{\Delta_2}$,
  the goal is to show that
  $\forall C$
  $\ninter{\G, \Delta_1}{\cv{\evctx{s}}}{C}$
  implies $\ninter{\G, \Delta_1}{\cv{\evctx{s^\prime}}}{C}$.
  By Lemma \ref{lemma:evctx-reification-subcapt}
  we have $\sub{\G, \Delta_1}{\cv{\evctx{s^\prime}}}{\cv{\evctx{s}}}$.
  Now we conclude this case by Lemma \ref{lemma:subcapt-pres-ninter}.
\end{proof}

\begin{corollary}[Captured set inclusion implies fresher terms]
	\label{coro:subset-to-subi}
  If $\cv{s^\prime} \subseteq \cv{s}$
  and $s$ is not a value,
  then $\subi{\G; \Delta}{s^\prime}{s}$.
\end{corollary}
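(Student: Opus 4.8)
The plan is to reduce this corollary directly to the preceding Lemma~\ref{lemma:subcapt-to-subi} by upgrading the set-inclusion hypothesis into a subcapturing judgment. The key observation is that the lemma ``Set inclusion implies subcapturing'' ($C_1 \subseteq C_2$ implies $\sub{\G}{C_1}{C_2}$) holds over an arbitrary environment. So instantiating it at the environment $\G, \Delta$ converts the hypothesis $\cv{s^\prime} \subseteq \cv{s}$ into the subcapturing fact $\sub{\G, \Delta}{\cv{s^\prime}}{\cv{s}}$.

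With this subcapturing judgment established, I would note that both premises of Lemma~\ref{lemma:subcapt-to-subi} are now met: hypothesis (i), that $s$ is not a value, is carried over verbatim from the statement of the corollary, and hypothesis (ii), $\sub{\G, \Delta}{\cv{s^\prime}}{\cv{s}}$, is precisely what the previous step produced. Applying Lemma~\ref{lemma:subcapt-to-subi} then yields $\subi{\G; \Delta}{s^\prime}{s}$, which is exactly the goal.

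There is no genuine obstacle here; the corollary is a thin specialization of Lemma~\ref{lemma:subcapt-to-subi}, obtained by recognizing that literal set inclusion is a degenerate instance of subcapturing (realizable through repeated applications of \ruleref{sc-elem} and \ruleref{sc-set}). The only point to keep in mind is that the non-value side condition on $s$ must be threaded through unchanged, since it is required to invoke the underlying lemma and is not otherwise discharged.
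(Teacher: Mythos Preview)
Your proposal is correct and matches the paper's intended approach: the corollary is stated immediately after Lemma~\ref{lemma:subcapt-to-subi} with no explicit proof, precisely because it follows by upgrading set inclusion to subcapturing (via repeated \ruleref{sc-elem} and \ruleref{sc-set}) and then invoking that lemma.
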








\begin{lemma}[Widening preserves separation]
  \label{lemma:widening-pres-ninter}
  Given an inert environment $\G$,
  a variable $x :_D T \in \G$ where $\univ{} \notin \cs{T}$ 
  and $\sub{\G}{\set{x}}{\set{\rdroot{}}}$ does not hold,
  and a natural number $h$,
  then
  (1) $| \ninter{\G}{x}{C} |\ \le h$ implies $\ninter{\G}{\cs{T}}{C}$,
  and (2) $| \ninter{\G}{C}{x} |\ \le h$ implies $\ninter{\G}{C}{\cs{T}}$.
  Here, $| \ninter{\G}{C_1}{C_2} |$ denotes the height of the derivation tree.
\end{lemma}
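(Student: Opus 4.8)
The plan is to prove both conclusions simultaneously by induction on a lexicographic order: first on $\fEmbed{\set{x}} = \fDepth{\G}{x}$ (the binding depth of $x$), and then on the derivation height $h$. As the excerpt hints, this double induction is exactly why the depth machinery was set up. The intuition is that when we replace $x$ by the capture set $\cs{T}$ of its type, every variable in $\cs{T}$ is bound strictly earlier than $x$ in the inert environment $\G$, so the depth strictly decreases, letting us invoke the induction hypothesis on those variables when the recursion bottoms out into \ruleref{ni-var}-style reasoning.

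\textbf{Case analysis on the last separation rule.} I would fix conclusion (1) and analyze the final rule of the derivation $\ninter{\G}{x}{C}$; conclusion (2) follows symmetrically (or directly by \ruleref{ni-symm}). The rule \ruleref{ni-set} cannot fire at the root since $\set{x}$ is a singleton, but if $C$ is a set we reduce to elementwise goals $\ninter{\G}{x}{y}$ for $y \in C$ via Lemma~\ref{lemma:ninter-to-elem} and reassemble with \ruleref{ni-symm}/\ruleref{ni-set}; each elementwise subderivation has height $\le h$, so the inner induction applies. The rule \ruleref{ni-reader} would require $\sub{\G}{\set{x}}{\set{\rdroot}}$, which is excluded by hypothesis, so it is vacuous. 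The rule \ruleref{ni-symm} is handled by applying the induction hypothesis to the (shorter) symmetric subderivation. This leaves \ruleref{ni-degree} and \ruleref{ni-var} as the substantive cases.

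\textbf{The two substantive cases.} In the \ruleref{ni-var} case we have $x :_{D} S' \capt C' \in \G$ with $\ninter{\G}{C'}{C}$ as a strictly shorter subderivation; since the environment is inert and $\univ \notin \cs{T}$, the binding's own capture set must satisfy $\ninter{\G_1}{D}{\cs{T}}$-type constraints, and crucially $\cs{T} = C'$ in this case, so the premise $\ninter{\G}{C'}{C}$ \emph{is} the goal $\ninter{\G}{\cs{T}}{C}$ up to the height budget, concluding directly. The \ruleref{ni-degree} case gives $C = \set{y}$ with $y \in D$; here I expect the main obstacle, because from ``$y$ lies in the separation degree $D$ of $x$'' I must derive $\ninter{\G}{\cs{T}}{y}$, i.e.\ separation of $y$ from the \emph{capture set} rather than from $x$ itself. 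This is precisely where inertness is indispensable: by the inert condition, since $\univ \notin \cs{T}$, the separation degree $D$ is consistent with the capture set, meaning $\ninter{\G_1}{D}{\cs{T}}$ holds in the prefix context. I would then lift this to $\G$ and combine with $y \in D$ (via Lemma~\ref{lemma:ninter-to-elem} and Corollary~\ref{coro:subset-pres-ninter}) to obtain $\ninter{\G}{\cs{T}}{\set{y}}$. The delicate point is that $\cs{T}$'s members are bound before $x$, so invoking separation facts about them stays within the depth budget of the outer induction — this is the reason the lexicographic order on depth is needed rather than a plain induction on derivation height alone.
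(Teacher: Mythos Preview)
Your proposal has a genuine gap: the induction measure is wrong, and conclusion (2) does \emph{not} follow by symmetry from (1).

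\textbf{Wrong induction variable.} The paper inducts on the lexicographic order $(\fEmbed{C}, h)$, where $\fEmbed{C}$ is the maximal binding depth among variables in the \emph{other} set $C$, not on $\fDepth{\G}{x}$. The variable $x$ (and its type $T$) is held fixed throughout the proof; there is no recursive call to the lemma with a different $x$, so an outer induction on $\fDepth{\G}{x}$ never fires.

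\textbf{Conclusion (2) is not symmetric to (1).} The asymmetry is the whole point, and it surfaces in the \ruleref{ni-degree} case for (2). There $C = \{x'\}$ with $x' :_{D'} T' \in \G$ and $x \in D'$ --- i.e., $x$ lies in the degree of $x'$, the opposite orientation from (1). To derive $\ninter{\G}{\{x'\}}{\cs{T}}$ one must appeal to inertness of the binding of $x'$ (not of $x$). In the subcase $\univ \notin \cs{T'}$ inertness yields $\ninter{\G}{D'}{\cs{T'}}$, whence $\ninter{\G}{x}{\cs{T'}}$. Now one must widen the \emph{same} $x$ again, but against the new set $\cs{T'}$, whose derivation height is completely uncontrolled --- so the inner $h$-induction is useless here. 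What \emph{does} decrease is $\fEmbed{\cs{T'}} < \fDepth{\G}{x'} = \fEmbed{C}$ (by well-formedness of $\G$), and that is exactly why the outer measure must be $\fEmbed{C}$. Under your scheme there is no usable IH at this step. Your handling of conclusion (1) is essentially fine (and indeed the depth measure plays no role there), but it is precisely conclusion (2) that carries the difficulty you waved away.
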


\begin{proof}
  By induction on the lexical order $(\fEmbed{C}, h)$.
  We prove the two conclusions respecitively.
  We start by establishing the first one,
  by a case analysis on the last rule applied in the derivation of
  $\ninter{\G}{x}{C}$.

  \emph{Case \ruleref{ni-symm}}.
  Then $\ninter{\G}{C}{x}$.
  This case is concluded by IH.

  \emph{Case \ruleref{ni-set}}. By IH.

  \emph{Case \ruleref{ni-degree}}.
  Then $C = \set{y}$ and $y \in D$.
  By the inertness of $\G$ and that $\univ{} \notin \cs{T}$,
  we have $\ninter{\G}{D}{\cs{T}}$.
  By Lemma \ref{lemma:ninter-to-elem},
  $\ninter{\G}{y}{\cs{T}}$.
  Now we conclude by \ruleref{ni-symm}.

  \emph{Case \ruleref{ni-var}}.
  Then $\ninter{\G}{\cs{T}}{y}$
  and we can conclude this case immediately.

  \emph{Case \ruleref{ni-reader}}.
  This case is not applicable.

  The first conclusion is therefore proven.
  Now we proceed to the second one. 

  \emph{Case \ruleref{ni-symm}}.
  By IH.

  \emph{Case \ruleref{ni-set}}.
  Then $C = \set{y_i}_{i=1,\cdots,n}$,
  and $\ninter{\G}{y_i}{x}$ for $i = 1, \cdots, n$.
  By repeated IH,
  we can demonstrate that $\overline{\ninter{\G}{y_i}{C}}^{i = 1, \cdots, n}$.
  Now we conclude this case by \ruleref{ni-set}.

  \emph{Case \ruleref{ni-degree}}.
  Then $C = \set{x^\prime}$,
  $x^\prime :_{D^\prime} T^\prime \in \G$,
  and $x \in D^\prime$.
  If $\univ{} \in \cs{T^\prime}$,
  then by the inertness of $\G$,
  we have $D' = \dom{\G_0}$
  if we decompose the environment into
  $\G = \G_0, x^\prime :_{D^\prime} T^\prime, \G_1$.
  Since $x \in D'$, by the well-formedness of the environment,
  $x$ is bound in $\G_0$.
  Again by the well-formedness,
  we have $\cs{T} \subseteq \dom{\G_0} = D'$.
  We can therefore prove the goal
  by repeated \ruleref{ni-degree} and \ruleref{ni-set}.
  Otherwise, if $\univ{} \notin \cs{T^\prime}$,
  we have $\ninter{\G}{D^\prime}{\cs{T^\prime}}$.
  By Lemma \ref{lemma:ninter-to-elem}
  we have $\ninter{\G}{x}{\cs{T^\prime}}$.
  Note that $\fEmbed{\cs{T^\prime}} < \fDepth{\G}{x^\prime}$
  by the well-formedness of $\G$.
  We can therefore invoke IH and show that
  $\ninter{\G}{\cs{T}}{\cs{T^\prime}}$
  and conclude the case by \ruleref{ni-symm}.

  \emph{Case \ruleref{ni-var}}.
  Then $C = \set{x^\prime}$,
  $x^\prime :_{D^\prime} T^\prime \in \G$,
  and $\ninter{\G}{\cs{T^\prime}}{x}$.
  We conclude this case by IH and the same rule.

  \emph{Case \ruleref{ni-reader}}.
  Not applicable.
\end{proof}

\begin{lemma}[Widening implies fresher terms]
  \label{lemma:widening-to-subi}
  Given $\G, \Delta, s, s^\prime$ where $s$ is not a value,
  if (i) $\G, \Delta$ is inert,
  (ii) $x :_D T \in \G$, and $x \in \cv{s}$,
  (iii) $\cv{s^\prime} \subseteq \cv{s} - \set{x, x^\prime} \cup \cs{T}$,
  then $\subi{\G; \Delta}{s^\prime}{s}$.
\end{lemma}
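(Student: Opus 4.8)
The plan is to follow the structure of Lemma~\ref{lemma:subcapt-to-subi}, replacing its appeal to subcapturing with the widening machinery of Lemma~\ref{lemma:widening-pres-ninter}. Unfolding the definition of the fresher-term judgment, fix arbitrary $\Delta_1, \Delta_2$ and $e$ with $\Delta = \Delta_1, \Delta_2$ and $\match{\G, \Delta_1}{e}{\Delta_2}$, together with an arbitrary capture set $C$, and assume $\ninter{\G, \Delta_1}{\evctx{s}}{C}$. The goal is then $\ninter{\G, \Delta_1}{\evctx{s'}}{C}$, that is, $\ninter{\G, \Delta_1}{\cv{\evctx{s'}} \cap \dom{\G, \Delta_1}}{C \cap \dom{\G, \Delta_1}}$.

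First I would relate the two plugged terms at the level of captured variables. By induction on $e$, exactly as in Lemma~\ref{lemma:evctx-reification-subcapt}, and using that by the Barendregt convention $x \in \dom{\G}$ is distinct from every let-bound variable appearing in $e$ (so that $x$ is never removed by the context), hypothesis (iii) lifts through the context to yield
\[ \cv{\evctx{s'}} \cap \dom{\G, \Delta_1} \ \subseteq\ \bigl(\cv{\evctx{s}} \cap \dom{\G, \Delta_1}\bigr) \setminus \set{x, x'} \ \cup\ \bigl(\cs{T} \cap \dom{\G, \Delta_1}\bigr). \]
Here $\cs{T} \subseteq \dom{\G} \cup \set{\univ, \rdroot}$ by well-formedness of $\G$, so the right-hand side is meaningful in $\G, \Delta_1$.

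Next I would run the separation argument element-wise. Since $x \in \cv{s}$ and $x$ survives in $\evctx{s}$, we have $x \in \cv{\evctx{s}} \cap \dom{\G, \Delta_1}$; applying Lemma~\ref{lemma:ninter-to-elem} to the assumption gives $\ninter{\G, \Delta_1}{x}{C}$, as well as $\ninter{\G, \Delta_1}{y}{C}$ for every other $y \in \cv{\evctx{s}} \cap \dom{\G, \Delta_1}$. Noting that $\G, \Delta_1$ is inert (being a prefix of the inert context $\G, \Delta$) and that $x :_D T \in \G \subseteq \G, \Delta_1$, I would apply Lemma~\ref{lemma:widening-pres-ninter} to $\ninter{\G, \Delta_1}{x}{C}$ to obtain $\ninter{\G, \Delta_1}{\cs{T}}{C}$, and hence $\ninter{\G, \Delta_1}{z}{C}$ for every $z \in \cs{T} \cap \dom{\G, \Delta_1}$ by Lemma~\ref{lemma:ninter-to-elem} again. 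By the inclusion above, every element of $\cv{\evctx{s'}} \cap \dom{\G, \Delta_1}$ is therefore separated from $C$ (the removed variables $x, x'$ only shrink the set, which is harmless by Corollary~\ref{coro:subset-pres-ninter}), so \ruleref{ni-set} and \ruleref{ni-symm} recombine these facts into $\ninter{\G, \Delta_1}{\evctx{s'}}{C}$, closing the goal.

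The hard part will be discharging the side conditions of Lemma~\ref{lemma:widening-pres-ninter}, namely $\univ \notin \cs{T}$ and that $\sub{\G}{\set{x}}{\set{\rdroot}}$ fails, since these are not literally among the hypotheses. The root capabilities themselves are dropped by the $\cap \dom{\G, \Delta_1}$ intersection, but the residual non-root part of $\cs{T}$ still needs to be separated from $C$. When $\univ \in \cs{T}$, I would instead exploit inertness: it forces $D = \dom{\G_1}$ for the prefix $\G_1$ preceding $x$, whence $\cs{T} \cap \dom{\G, \Delta_1} \subseteq D$, and the rule \ruleref{ni-degree} together with Lemma~\ref{lemma:univ-specialization} supplies the required separations; when $x$ is a reader, the separation is immediate from \ruleref{ni-reader} via Lemma~\ref{lemma:reader-specialization}. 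Getting this case split right — and, in the reification step, keeping the set-inclusion pointing in the correct direction, since widening \emph{enlarges} the captured set rather than shrinking it — is the main subtlety, as the remainder is a direct adaptation of the subcapturing case.
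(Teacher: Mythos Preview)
Your overall structure --- unfold the fresher-term definition, pass from $\ninter{\G,\Delta_1}{\evctx{s}}{C}$ to $\ninter{\G,\Delta_1}{x}{C}$ via Lemma~\ref{lemma:ninter-to-elem}, widen $x$ to $\cs{T}$ via Lemma~\ref{lemma:widening-pres-ninter}, then close up by the set inclusion $\cv{\evctx{s'}}\subseteq\cv{\evctx{s}}\cup\cs{T}$ --- is exactly the paper's route. You also correctly notice something the paper's proof glosses over: Lemma~\ref{lemma:widening-pres-ninter} carries the side conditions $\univ\notin\cs{T}$ and ``$x$ is not a reader'', neither of which appears among the hypotheses of the present lemma.

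The gap is that your proposed patches for those two side cases do not go through, and in fact the lemma as stated is false without the extra hypotheses. For the $\univ\in\cs{T}$ case, take the inert context
\[
\G \;=\; z :_{\{\}} \top\capt\{\univ\},\quad x :_{\{z\}} \top\capt\{z,\univ\},\quad w :_{\{\}} \top\capt\{z\},
\]
with $\Delta=\emptyset$, $e=[]$, $\cv{s}=\{x\}$, $\cv{s'}=\{z\}$, $C=\{w\}$. Then $\ninter{\G}{x}{w}$ holds (via \ruleref{ni-var} on $w$ and \ruleref{ni-degree} on $x$), but $\ninter{\G}{z}{w}$ fails, so $\subi{\G;\emptyset}{s'}{s}$ fails. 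The reader case fails similarly: with two readers $x,y$ of the same mutable variable $z$, \ruleref{ni-reader} gives $\ninter{\G}{x}{y}$, yet $\ninter{\G}{z}{y}$ is underivable. Your appeal to \ruleref{ni-degree} and Lemma~\ref{lemma:univ-specialization} (resp.\ Lemma~\ref{lemma:reader-specialization}) does not bridge this: knowing $z\in D_x$ yields $\ninter{\G}{x}{z}$, not $\ninter{\G}{z}{C}$.

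The correct repair is not to handle these cases but to exclude them: add $\univ\notin\cs{T}$ and ``$\sub{\G}{\{x\}}{\{\rdroot\}}$ does not hold'' as hypotheses, matching Lemma~\ref{lemma:widening-pres-ninter}. At the lemma's sole use (the \ruleref{apply} case of Theorem~\ref{thm:preservation}), $x$ is bound by \ruleref{st-val} to a term abstraction, so $\cs{T_x}=\cv{v}\subseteq\dom{\sta}$ and $x$ has a function type; both conditions are then immediate.
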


\begin{proof}
  Given $\Delta_1, \Delta_2, e$ such that
  $\Delta = \Delta_1, \Delta_2$
  and $\match{\G, \Delta_1}{e}{\Delta_2}$,
  and we have
  $\ninter{\G}{\cv{\evctx{s}}}{C}$.
  Since $s$ is not a value,
  we have $x \in \cv{s} \subseteq \cv{\evctx{s}}$.
  By Lemma \ref{lemma:ninter-to-elem}
  we show that $\ninter{\G}{x}{C}$.
  By Lemma \ref{lemma:widening-pres-ninter},
  we have $\ninter{\G}{\cs{T}}{C}$.
  Therefore,
  $\ninter{\G}{\cv{\evctx{s}} \cup \cs{T}}{C}$.
  Now we show that
  $\cv{\evctx{s^\prime}} \subseteq \cv{\evctx{s}} \cup \cs{T}$
  and conclude this case.
\end{proof}

\begin{lemma}[Fresher terms: boundary shifting]
  \label{lemma:subi-shift-boundary}
  Given $\G, \Delta_1, \Delta_2$,
  $\subi{\G; \Delta_1, \Delta_2}{s^\prime}{s}$
  implies
  $\subi{\G, \Delta_1; \Delta_2}{s^\prime}{s}$.
\end{lemma}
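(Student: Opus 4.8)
The plan is to unfold the definition of fresher terms on both sides of the implication and observe that every way of splitting $\Delta_2$ extends canonically to a splitting of $\Delta_1, \Delta_2$ that keeps $\Delta_1$ on the left. The argument is a pure re-association of the context split; no induction on terms, contexts, or derivations is needed.

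First I would unfold the goal $\subi{\G, \Delta_1; \Delta_2}{s^\prime}{s}$. By the definition of fresher terms this amounts to fixing an arbitrary decomposition $\Delta_2 = \Delta_c, \Delta_d$, an evaluation context $e$ with $\match{\G, \Delta_1, \Delta_c}{e}{\Delta_d}$, and a capture set $C$ with $\ninter{\G, \Delta_1, \Delta_c}{\evctx{s}}{C}$, and then showing $\ninter{\G, \Delta_1, \Delta_c}{\evctx{s^\prime}}{C}$. Next I would instantiate the hypothesis $\subi{\G; \Delta_1, \Delta_2}{s^\prime}{s}$ with the decomposition $\Delta_1, \Delta_2 = (\Delta_1, \Delta_c), \Delta_d$, taking its left part to be $\Delta_1, \Delta_c$ and its right part to be $\Delta_d$. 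Since $\G, (\Delta_1, \Delta_c)$ is definitionally $\G, \Delta_1, \Delta_c$, the matching premise $\match{\G, \Delta_1, \Delta_c}{e}{\Delta_d}$ demanded by the hypothesis is exactly the one I already have for the same $e$, and the separation premise $\ninter{\G, \Delta_1, \Delta_c}{\evctx{s}}{C}$ is likewise available. The hypothesis therefore yields precisely $\ninter{\G, \Delta_1, \Delta_c}{\evctx{s^\prime}}{C}$, which closes the goal.

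The only point needing care is the bookkeeping of the context split: one must check that every cut of $\Delta_2$ into a matching-context prefix $\Delta_c$ and a matched suffix $\Delta_d$ induces, by prepending $\Delta_1$ to the prefix, a legal cut of $\Delta_1, \Delta_2$ of exactly the shape the hypothesis quantifies over, with the same evaluation context $e$ and the same base environment after merging. Because the associativity of environment concatenation $\G, \Delta_1, \Delta_c$ is definitional and $e$ and $C$ are carried through unchanged, this is immediate, so there is no genuine obstacle; the lemma is essentially a statement that the quantification over splittings in the freshness predicate is stable under moving the boundary between the fixed environment and the floating suffix.
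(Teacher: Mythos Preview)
Your proof is correct and follows exactly the paper's approach: both unfold the definition of the freshness predicate, split $\Delta_2$ arbitrarily, and discharge the obligation by instantiating the hypothesis with the induced split of $\Delta_1,\Delta_2$ that keeps $\Delta_1$ entirely on the left. The paper's proof is terser but identical in substance.
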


\begin{proof}
	Given any $\Delta_{3}, \Delta_4, e$ such that
  $\Delta_2 = \Delta_{3}, \Delta_4$
  and $\match{\G, \Delta_1, \Delta_3}{e}{\Delta_4}$,
  if $\ninter{\G, \Delta_1, \Delta_3}{\evctx{s}}{C}$,
  we can show that
  $\ninter{\G, \Delta_1, \Delta_3}{\evctx{s^\prime}}{C}$
  by the premise.
\end{proof}

\begin{lemma}[Subtyping preserves \textsf{is-reader}]
  \label{lemma:subtype-pres-isrdr}
	If $\isrdr{\G}{T}$ and $\sub{\G}{T'}{T}$,
  then $\isrdr{\G}{T'}$.
\end{lemma}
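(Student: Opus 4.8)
The final lemma states: if $\isrdr{\G}{T}$ and $\sub{\G}{T'}{T}$, then $\isrdr{\G}{T'}$. This is essentially a restatement of Lemma~\ref{lemma:subtype-pres-rdr-checking} (Subtyping preserves reader checking), but phrased in terms of the $\isrdr{\G}{\cdot}$ judgment rather than the auxiliary $\isrdrtype{\G}{\cdot}$ judgment. Recall that $\isrdr{\G}{x}$ was defined as $(x : T \in \G) \wedge (\sub{\G}{T}{\tRdr{S}\capt C})$ for some $T, C, S$; here the notation is lifted to apply directly to a type $T$, meaning $\exists C, S.\ \sub{\G}{T}{\tRdr{S}\capt C}$. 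So the natural plan is to reduce this statement to the already-established equivalence between the two characterizations of reader-ness and to the preservation result for the syntactic judgment.

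**The plan.** The cleanest route is to route everything through Lemma~\ref{lemma:is-rdr-eqv} (Equivalence between reader checking), which states $\isrdrtype{\G}{T}$ iff $\exists C, S.\ \sub{\G}{T}{\tRdr{S}\capt C}$. First I would use this equivalence to translate the hypothesis $\isrdr{\G}{T}$ (i.e. $\exists C, S.\ \sub{\G}{T}{\tRdr{S}\capt C}$) into the syntactic form $\isrdrtype{\G}{T}$. Then I would invoke Lemma~\ref{lemma:subtype-pres-rdr-checking}, feeding it $\isrdrtype{\G}{T}$ together with the hypothesis $\sub{\G}{T'}{T}$, to obtain $\isrdrtype{\G}{T'}$. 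Finally I would apply the equivalence lemma again, in the reverse direction, to convert $\isrdrtype{\G}{T'}$ back into $\exists C', S'.\ \sub{\G}{T'}{\tRdr{S'}\capt C'}$, which is exactly $\isrdr{\G}{T'}$. In short, the proof is a two-step sandwich: translate to the syntactic judgment via Lemma~\ref{lemma:is-rdr-eqv}, apply the syntactic preservation Lemma~\ref{lemma:subtype-pres-rdr-checking}, and translate back.

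**Where the real work lives.** There is essentially no new combinatorial content in this lemma; all the genuine effort has already been discharged in the earlier lemmas. The equivalence Lemma~\ref{lemma:is-rdr-eqv} handles the bridge between the subtyping-based definition of a reader and the structural $\isrdrtype{}$ judgment, and Lemma~\ref{lemma:subtype-pres-rdr-checking} carries out the actual induction on the subtyping derivation (with the \ruleref{capt}, \ruleref{trans}, and \ruleref{tvar} cases). The only point deserving slight care is a notational one: I must confirm that the $\isrdr{\G}{\cdot}$ notation applied to a type $T$ is interpreted as $\exists C, S.\ \sub{\G}{T}{\tRdr{S}\capt C}$ rather than to a variable lookup, so that Lemma~\ref{lemma:is-rdr-eqv} applies directly. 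Once that reading is fixed, the proof is immediate from the two cited lemmas and requires no induction of its own; I would expect the main (minor) obstacle to be simply ensuring the statement's quantifiers over $C$ and $S$ line up with those in Lemma~\ref{lemma:is-rdr-eqv}.
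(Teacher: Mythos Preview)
Your proposal is correct, but it takes a different route from the paper. The paper proves the lemma by a direct induction on the derivation of $\sub{\G}{T'}{T}$: it first observes (by inspecting the $\isrdr{}$ derivation) that $T$ must be either a type variable or a reader type, and then handles only the \ruleref{refl} and \ruleref{tvar} cases of the subtyping derivation. Your approach instead factors everything through the already-proven Lemma~\ref{lemma:subtype-pres-rdr-checking}, using Lemma~\ref{lemma:is-rdr-eqv} as a bridge in both directions. This is more modular and avoids re-doing an induction that is essentially a subset of the one already carried out for Lemma~\ref{lemma:subtype-pres-rdr-checking}; the paper's direct induction is terser but partially duplicates that earlier work.

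One small point on your notational worry: the paper's own proof concludes the \ruleref{tvar} case ``by the \ruleref{rd-tvar} rule'', which indicates that in this lemma $\isrdr{\G}{T}$ is being treated as synonymous with the syntactic judgment $\isrdrtype{\G}{T}$ rather than as the subtyping-based existential you unpack. Under that reading, your steps invoking Lemma~\ref{lemma:is-rdr-eqv} become identities and the whole proof collapses to a single application of Lemma~\ref{lemma:subtype-pres-rdr-checking}. Under your reading, the equivalence steps are genuinely needed. Either way your argument goes through, so the ambiguity is harmless.
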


\begin{proof}
	By induction on the derivation of $\sub{\G}{T'}{T}$.
  Note that by inspecting the derivation of $\isrdr{\G}{T}$
  we know that $T$ is either a type variable or a reader,
  which implies that only the following cases are applicable.

  \emph{Case \ruleref{refl}}.
  Then $T' = T$,
  and we conclude immediately from the premise.

  \emph{Case \ruleref{tvar}}.
  Then $T' = X$, $T = S$, and $X <: S \in \G$.
  We conclude immediately by the \ruleref{rd-tvar} rule.
\end{proof}

\subsubsection{Structural Properties of Typing}

\begin{lemma}[Permutation]
  \label{lemma:permutation}
  Given $\G$,
  and $\Delta$ which is a well-formed environment permuted from $\G$:
  \begin{enumerate}[(i)]
  \item $\wfTyp{\G}{T}$ implies $\wfTyp{\Delta}{T}$;
  \item $\wfTyp{\G}{D}$ implies $\wfTyp{\Delta}{D}$;
  \item $\typ{\G}{t}{T}$ implies $\typ{\Delta}{t}{T}$;
  \item $\sub{\G}{T}{U}$ implies $\sub{\Delta}{T}{U}$;
  \item $\sub{\G}{C_1}{C_2}$ implies $\sub{\Delta}{C_1}{C_2}$;
  \item $\isrdr{\G}{x}$ implies $\isrdr{\G}{x}$;
  \item $\ninter{\G}{C_1}{C_2}$ implies $\ninter{\Delta}{C_1}{C_2}$.
  \end{enumerate}
\end{lemma}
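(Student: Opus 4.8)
The plan is to prove all seven statements together, but in a \emph{stratified} order rather than as one undifferentiated mutual induction, so that the dependencies among the judgments are discharged acyclically. The single fact that powers every case is that permutation preserves the domain and all membership information: since $\Delta$ is a permutation of $\G$ we have $\dom{\G} = \dom{\Delta}$, and $x :_D T \in \G \iff x :_D T \in \Delta$ and $X <: S \in \G \iff X <: S \in \Delta$, while set membership $x \in C$ is context-independent. Moreover, whenever a rule extends the context by a binding $b$, the environment $\Delta, b$ is again a well-formed permutation of $\G, b$: it has the same bindings, and $b$'s type and degree (well-formed in $\G$) mention only variables in $\dom{\G} = \dom{\Delta}$, so $\Delta, b$ stays well-formed. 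This is what lets each induction hypothesis transfer across the context extensions in \ruleref{fun}, \ruleref{tfun}, \ruleref{abs}, \ruleref{tabs}, \ruleref{let}, \ruleref{dvar}, and \ruleref{reader}.

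The main obstacle is the apparent circular dependency among the judgments: typing invokes subtyping (\ruleref{sub}) and separation (\ruleref{app}, \ruleref{let}); subtyping invokes subcapturing (\ruleref{capt}); subcapturing invokes the reader check (\ruleref{sc-reader}); and $\isrdr{\G}{x}$ is itself \emph{defined} through subtyping, closing the loop. I would break this cycle exactly as the metatheory does, by routing the reader check through the auxiliary judgment $\isrdrtype{\cdot}{\cdot}$, which depends only on membership. Concretely, I would first prove that $\isrdrtype{\G}{T}$ implies $\isrdrtype{\Delta}{T}$ by a one-line induction: \ruleref{rd-reader} is an axiom, and \ruleref{rd-tvar} uses $X <: S \in \G$, preserved by permutation. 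This gives an order-independent handle on readers with no dependence on the subtyping clause.

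With that in place the statements stratify cleanly, and I would discharge them in this order. Statements (i) and (ii) are immediate, since type- and degree-well-formedness depend only on which variables are in scope, i.e. on $\dom{\G} = \dom{\Delta}$. Statement (v) (subcapturing) follows by induction on its derivation: \ruleref{sc-var} and \ruleref{sc-elem} use membership; \ruleref{sc-reader} converts its premise $\isrdr{\G}{x}$ to $\isrdrtype{\G}{T}$ via Lemma~\ref{lemma:is-rdr-eqv}, transports it to $\Delta$ by the independent induction above, and converts back; \ruleref{sc-trans}, \ruleref{sc-set}, and \ruleref{sc-rdr-cap} close by the induction hypothesis. Statement (iv) (subtyping) then follows by induction, with \ruleref{capt} appealing to (v) and \ruleref{fun}/\ruleref{tfun} using the context-extension argument. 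Statement (vi) (for the intended conclusion $\isrdr{\Delta}{x}$) is obtained by unfolding the definition and combining membership preservation with (iv). Statement (vii) (separation) follows by induction, with \ruleref{ni-degree} and \ruleref{ni-var} using membership and \ruleref{ni-reader} appealing to (v).

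Finally, statement (iii) (typing) follows by induction on the typing derivation, where each rule is handled by preserving its membership side-conditions and applying the already-established statement to each non-typing premise: \ruleref{sub} uses (iv), \ruleref{app} and \ruleref{let} use (vii) for their separation checks, and \ruleref{abs} and \ruleref{dvar} use (i)/(ii) for their well-formedness premises; the context-extending rules invoke the induction hypothesis under the permuted extended context. Because the conclusions involving substitution, such as $\fSubst{z}{y}{U}$ in \ruleref{app}, are syntactic and independent of the context, no further reasoning is needed, and the induction goes through.
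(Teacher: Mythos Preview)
Your proposal is correct and follows the same underlying approach as the paper: the key observation is that no rule depends on the order of bindings, only on membership. The paper's proof is a two-line sketch (``straightforward induction on the derivations; no rule depends on the order of the bindings''), while you spell out the stratification explicitly, in particular breaking the subcapturing--subtyping cycle by routing \ruleref{sc-reader} through the auxiliary $\isrdrtype{\cdot}{\cdot}$ judgment. That is exactly the purpose for which the paper introduced $\isrdrtype{\cdot}{\cdot}$, so your detour is faithful to the metatheory's design even though the paper's own permutation proof does not invoke it explicitly. One small slip: \ruleref{reader} does not extend the context, so it should not appear in your list of context-extending rules; this is cosmetic and does not affect the argument.
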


\begin{proof}
  By straightforward induction on the derivations.
  No rule depends on the order of the bindings.
\end{proof}

\begin{lemma}[Weakening]
  Given $\G, \Delta$,
  \begin{enumerate}[(i)]
  \item $\wfTyp{\G}{T}$ implies $\wfTyp{\G, \Delta}{T}$;
  \item $\wfTyp{\G}{D}$ implies $\wfTyp{\G, \Delta}{D}$;
  \item $\typ{\G}{t}{T}$ implies $\typ{\G, \Delta}{t}{T}$;
  \item $\sub{\G}{T}{U}$ implies $\sub{\G, \Delta}{T}{U}$;
  \item $\sub{\G}{C_1}{C_2}$ implies $\sub{\G, \Delta}{C_1}{C_2}$;
  \item $\isrdr{\G}{x}$ implies $\isrdr{\G, \Delta}{x}$;
  \item $\ninter{\G}{C_1}{C_2}$ implies $\ninter{\G, \Delta}{C_1}{C_2}$.
  \end{enumerate}
\end{lemma}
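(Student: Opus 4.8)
The plan is to prove all seven statements simultaneously by mutual induction on the respective derivation trees. A single monolithic induction is needed because the judgments are mutually dependent: subtyping appeals to subcapturing (via \ruleref{capt}), subcapturing appeals to the reader check (via \ruleref{sc-reader}), typing appeals to subtyping, well-formedness and separation, and separation appeals back to subcapturing (via \ruleref{ni-reader}) and to context lookups (via \ruleref{ni-degree}, \ruleref{ni-var}). Carrying all the pieces together keeps the relevant induction hypothesis available whenever a rule crosses from one judgment into another.

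The observation that makes every case go through is that no rule carries an \emph{anti-monotone} side condition with respect to the context. Each rule either (a) reads a binding out of the context ($x : T \in \G$ in \ruleref{var}, \ruleref{sc-var}, \ruleref{ni-degree}, \ruleref{ni-var}, and in the unfolding of $\isrdr{\G}{x}$), which stays valid after appending $\Delta$; (b) checks a domain inclusion ($C \subseteq \dom{\G}$ in \ruleref{box}/\ruleref{unbox}), which only becomes easier since $\dom{\G} \subseteq \dom{\G, \Delta}$; or (c) imposes a condition on a freshly bound variable or on the purely syntactic $\cv{\cdot}$ function ($x \notin \fv{U}$ in \ruleref{let}/\ruleref{dvar}, and the capture set $\cv{t}\setminus x$ in \ruleref{abs}), which is untouched by enlarging the context. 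So in each non-binder case I weaken every premise by the appropriate induction hypothesis and re-apply the same rule, yielding an identical conclusion.

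The only cases requiring extra care are the \emph{binder} rules — \ruleref{abs}, \ruleref{tabs}, \ruleref{fun}, \ruleref{tfun}, \ruleref{let}, and \ruleref{dvar} — whose premises are typed in a context extended with the bound variable, e.g.\ $\typ{\G, x :_D U}{t}{T}$. Applying the induction hypothesis appends $\Delta$ \emph{after} this binding, giving $\typ{\G, x :_D U, \Delta}{t}{T}$, whereas the rule must be re-applied in $\G, \Delta, x :_D U$. I close the gap with the already-established Permutation lemma (Lemma~\ref{lemma:permutation}): since $\Delta$ is well-formed in $\G$ and, by the Barendregt convention, $x$ is distinct from every variable of $\Delta$, the two contexts are permutations of one another, so $\typ{\G, x :_D U, \Delta}{t}{T}$ yields $\typ{\G, \Delta, x :_D U}{t}{T}$. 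The separation side conditions in \ruleref{let} and \ruleref{app} are discharged directly by the induction hypothesis for part (vii).

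I do not expect a genuinely hard step; the main burden is bookkeeping the mutual dependency. The one place to stay alert is the reader interface: part (vi) must be weakened by unfolding $\isrdr{\G}{x}$ into a lookup plus a subtyping derivation and invoking parts (iii) and (iv), after which the \ruleref{sc-reader} and \ruleref{ni-reader} cases rely on this weakened check. Because appending $\Delta$ preserves the depth index of every variable already in $\G$, none of the depth-based arguments used elsewhere are disturbed, and the induction needs no auxiliary measure beyond the structural size of the derivation.
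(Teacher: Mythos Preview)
Your proposal is correct and matches the paper's own proof: both proceed by induction on the derivations, observe that every side condition is monotone in the context (lookups and domain checks only get easier), and handle the binder rules by appealing to Permutation to reorder $\G, x :_D U, \Delta$ into $\G, \Delta, x :_D U$. The paper is terser and illustrates only the \ruleref{abs} case, but the underlying argument is the same.
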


\begin{proof}
  As usual, the rules only check if a variable is bound in the environment
  and all versions of the lemma are provable by straightforward induction.
  For rules which extend the environment, such as \ruleref{abs}, we need permutation.
  All cases are analogous, so we will illustrate only one.

  \emph{Case \ruleref{abs}}.
  In this case,
  $t = \lambda(z :_D U) s$,
  $T = \forall(z :_D U) T^\prime$,
  and $\typ{\G, z :_D U}{s}{T^\prime}$.
  By IH we have
  $\typ{\G, z :_D U, \Delta}{s}{T^\prime}$.
  Since $D$ and $U$ cannot mention variables in $\Delta$,
  $\G, \Delta, z :_D U$ is still well-formed.
  By permutation we have
  $\typ{\G, \Delta, z :_D U}{s}{T^\prime}$.
  This case is therefore concluded by \ruleref{abs}.
\end{proof}

\begin{lemma}[Bound narrowing]
  \label{lemma:bound-narrowing}
  Given an environment $\G, X <: S, \Delta$
  and the fact that $\sub{\G}{S^\prime}{S}$,
  the followings hold:
  \begin{enumerate}
  \item $\wfTyp{\G, X <: S, \Delta}{T}$ implies $\wfTyp{\G, X <: S^\prime, \Delta}{T}$;
  \item $\wfTyp{\G, X <: S, \Delta}{D}$ implies $\wfTyp{\G, X <: S^\prime, \Delta}{D}$;
  \item $\sub{\G, X <: S, \Delta}{C_1}{C_2}$ implies $\sub{\G, X <: S^\prime, \Delta}{C_1}{C_2}$;
  \item $\sub{\G, X <: S, \Delta}{T}{U}$ implies $\sub{\G, X <: S^\prime, \Delta}{T}{U}$;
  \item $\isrdr{\G, X <: S, \Delta}{x}$ implies $\isrdr{\G, X <: S^\prime, \Delta}{x}$;
  \item $\ninter{\G, X <: S, \Delta}{C_1}{C_2}$ implies $\ninter{\G, X <: S^\prime, \Delta}{C_1}{C_2}$;
  \item $\typ{\G, X <: S, \Delta}{t}{T}$ implies $\typ{\G, X <: S^\prime, \Delta}{t}{T}$.
  \end{enumerate}
\end{lemma}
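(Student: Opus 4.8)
The plan is to prove all seven statements by induction on the respective derivations, proceeding in an order consistent with their dependencies rather than as one monolithic simultaneous induction. Throughout, write $\G' = \G, X <: S', \Delta$ for the narrowed environment; since $\sub{\G}{S'}{S}$ is derivable, $S'$ is well-formed in $\G$, so $\G'$ is a well-formed environment and has the same domain as $\G, X <: S, \Delta$. A crucial structural simplification here, compared with the usual System F$_{<:}$ narrowing argument, is that transitivity is an explicit rule (\ruleref{trans}) rather than an admissible one, so narrowing need not be entangled with a transitivity-elimination argument.

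Parts (1) and (2) are immediate: well-formedness only inspects the domain of the environment, which narrowing leaves unchanged. For the reader machinery I would first dispatch narrowing for the auxiliary judgment $\isrdrtype$, exploiting that it does not mention subtyping directly. By induction on its derivation, the only interesting case is \ruleref{rd-tvar}: if the variable whose bound is narrowed is hit, I recover the reader check at the smaller bound from the one at $S$ by forming $\sub{\G'}{S'\capt C}{S\capt C}$ from the weakened hypothesis via \ruleref{capt} and then applying Lemma~\ref{lemma:subtype-pres-rdr-checking}. Subcapturing (3) then follows by induction: \ruleref{sc-var} references a \emph{term} binding, untouched by narrowing a type bound, and the only subtle rule \ruleref{sc-reader} is handled by rewriting $\isrdr$ to $\isrdrtype$ through Lemma~\ref{lemma:is-rdr-eqv} and appealing to the $\isrdrtype$ narrowing just established; \ruleref{sc-rdr-cap}, \ruleref{sc-elem}, \ruleref{sc-set} and \ruleref{sc-trans} go through by the induction hypothesis.

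With subcapturing in hand, subtyping (4) is by induction on its derivation; the classic narrowing case is \ruleref{tvar} applied to $X$ itself, where I derive $\sub{\G'}{X}{S'}$ by \ruleref{tvar} in $\G'$, weaken the hypothesis $\sub{\G}{S'}{S}$ up to $\G'$, and compose the two with \ruleref{trans} to reconstruct $\sub{\G'}{X}{S}$; a \ruleref{tvar} on any other variable leaves its bound unchanged and concludes directly. The binder rules \ruleref{fun} and \ruleref{tfun} extend the environment on the right, so I invoke the induction hypothesis with $\Delta$ grown by one binding and re-establish well-formedness by part (1), using the Permutation lemma (Lemma~\ref{lemma:permutation}) exactly as in the proof of Weakening. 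The \textsf{is-reader} statement (5) is then immediate: $\isrdr$ unfolds to a term-binding lookup (unchanged) together with a subtyping premise preserved by part (4). Separation (6) is by induction on the separation derivation: \ruleref{ni-degree} and \ruleref{ni-var} consult term bindings and their separation degrees, which narrowing does not disturb, while \ruleref{ni-reader} reduces to subcapturing and is discharged by part (3).

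Finally, typing (7) is by induction on the typing derivation, threading the preceding parts: \ruleref{sub} invokes (4), \ruleref{app} and the parallel case of \ruleref{let} invoke the separation narrowing (6), the binder rules \ruleref{abs}/\ruleref{tabs}/\ruleref{dvar} extend the environment and rely on (1) together with Permutation, and the box rules preserve the side condition $C \subseteq \dom{\G}$ since the domain is fixed; every remaining rule follows by the induction hypothesis and the same rule. The main obstacle is not any single case but arranging the induction so that the apparent mutual dependence between subtyping and subcapturing (through \ruleref{sc-reader} and the definition of $\isrdr$) is broken; the device of first proving narrowing for the subtyping-free judgment $\isrdrtype$ and then transporting it along the equivalence of Lemma~\ref{lemma:is-rdr-eqv} is exactly what makes the layering well-founded.
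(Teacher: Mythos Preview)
Your proposal is correct and follows essentially the same approach as the paper: a straightforward induction over the derivations, with the only genuinely interesting cases being \ruleref{tvar} in subtyping and \ruleref{rd-tvar} in the auxiliary reader judgment. Your explicit layering via $\isrdrtype$ to break the apparent subcapturing/subtyping cycle is exactly the device the paper introduced that judgment for, and your handling of the \ruleref{tvar} case (rebuild via \ruleref{tvar} in the narrowed environment, weaken the hypothesis, compose with \ruleref{trans}) matches the paper's verbatim; the paper simply leaves the ordering implicit and writes ``by straightforward induction on the derivations.''
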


\begin{proof}
	By straightforward induction on the derivations.
  The only two rules that make use of type variable bounds are \ruleref{tvar} in subtyping,
  and \ref{rd-widen} in \textsf{is-reader}.
  In all other cases, we can conclude it by utilizing the IH, other narrowing lemmas and the same rule.
  We will now give the proof of the \ruleref{tvar} case.

  \emph{Case \ruleref{tvar}}.
  In this case, $T_1 = Y$, $T_2 = R$,
  and $Y <: R \in \G, X <: S, \Delta$.
  If $Y \ne X$,
  then $Y <: R \in \G, X <: S^\prime, \Delta$.
  Applying the \ruleref{tvar} rule allows us to derive the conclusion.
  Otherwise, $X = Y$.
  The goal becomes $\sub{\G, X <: S^\prime, \Delta}{X}{S}$.
  By weakening we can show that
  $\sub{\G, X <: S^\prime, \Delta}{S^\prime}{S}$.
  We can therefore conclude this case by \ruleref{tvar} and \ruleref{trans}.
\end{proof}

\begin{lemma}[Type narrowing]
  \label{lemma:type-narrowing}
  Given
  $\G, x : P, \Delta$,
  and that $\sub{\G}{P^\prime}{P}$,
  the following propositions hold:
  \begin{enumerate}
  \item $\wfTyp{\G, x : P, \Delta}{T}$ implies $\wfTyp{\G, x : P^\prime, \Delta}{T}$;
  \item $\wfTyp{\G, x : P, \Delta}{D}$ implies $\wfTyp{\G, x : P^\prime, \Delta}{D}$;
  \item $\sub{\G, x : P, \Delta}{C_1}{C_2}$ implies $\sub{\G, x : P^\prime, \Delta}{C_1}{C_2}$;
  \item $\sub{\G, x : P, \Delta}{T_1}{T_2}$ implies $\sub{\G, x : P^\prime, \Delta}{T_1}{T_2}$;
  \item $\isrdr{\G, x : P, \Delta}{y}$ implies $\isrdr{\G, x : P', \Delta}{y}$;
  \item $\ninter{\G, x : P, \Delta}{C_1}{C_2}$ implies $\ninter{\G, x : P^\prime, \Delta}{C_1}{C_2}$;
  \item $\typ{\G, x : P, \Delta}{t}{T}$ implies $\sub{\G, x : P^\prime, \Delta}{t}{T}$.
  \end{enumerate}
\end{lemma}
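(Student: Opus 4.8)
The plan is to prove all seven statements simultaneously by mutual induction on the respective derivations, since the judgments are mutually dependent: separation appeals to subcapturing through \ruleref{ni-reader}, subcapturing appeals to \textsf{is-reader} through \ruleref{sc-reader}, and \textsf{is-reader} appeals to subtyping. The single fact that drives every nontrivial case is obtained by inverting the hypothesis $\sub{\G}{P'}{P}$: writing $P = S \capt C$ and $P' = S' \capt C'$, the rule \ruleref{capt} gives $\sub{\G}{C'}{C}$ and $\sub{\G}{S'}{S}$, and by Weakening both inequalities also hold in the narrowed context $\G, x : P', \Delta$. Note that narrowing replaces only the type of $x$ and leaves its separation degree $D$ unchanged, so any rule depending solely on $D$ (notably \ruleref{ni-degree}) is unaffected. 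Every rule whose premise instead mentions the capture set or shape of $x$ will, in the narrowed context, speak about the smaller $C'$ or $S'$, and the task is always to bridge back to $C$ or $S$ using these two inequalities together with transitivity.

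First I would dispatch the rules that look up the binding of $x$. In subcapturing, \ruleref{sc-var} applied to $x$ now yields $\sub{\G, x : P', \Delta}{\set{x}}{C'}$; composing with $\sub{\G, x : P', \Delta}{C'}{C}$ via \ruleref{sc-trans} recovers the original $\sub{\G, x : P', \Delta}{\set{x}}{C}$. In the typing statement (7), \ruleref{var} applied to $x$ produces $\typ{\G, x : P', \Delta}{x}{S' \capt \set{x}}$; since $\sub{\G}{S'}{S}$, one application of \ruleref{capt} (with reflexivity on $\set{x}$) followed by \ruleref{sub} promotes this to the required $\typ{\G, x : P', \Delta}{x}{S \capt \set{x}}$. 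In separation, \ruleref{ni-var} applied to $x$ demands $\ninter{\G, x : P', \Delta}{C'}{y}$: I transport its premise $\ninter{\G, x : P, \Delta}{C}{y}$ into the narrowed context by the separation induction hypothesis, then invoke Lemma \ref{lemma:subcapt-pres-ninter} with $\sub{\G, x : P', \Delta}{C'}{C}$ to obtain $\ninter{\G, x : P', \Delta}{C'}{y}$, and conclude with \ruleref{ni-var}. The \ruleref{sc-reader} and \textsf{is-reader} cases that look up $x$ use the same transitivity move: $x$'s new type $P'$ satisfies $\sub{\G}{P'}{P}$ and $P$ already subtypes some $\tRdr{S} \capt C$, hence so does $P'$; when the looked-up variable is not $x$ its type is unchanged and the subtyping hypothesis applies directly.

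The remaining cases are routine. The well-formedness statements (1)--(2) hold because narrowing preserves the domain of the context, and all structural rules (\ruleref{sc-trans}, \ruleref{sc-set}, \ruleref{refl}, \ruleref{trans}, \ruleref{fun}, \ruleref{abs}, \ruleref{capt}, and the \ruleref{ni-symm}/\ruleref{ni-set} congruences) close under the appropriate induction hypotheses and a re-application of the same rule, occasionally needing Permutation (Lemma \ref{lemma:permutation}) and Bound narrowing (Lemma \ref{lemma:bound-narrowing}) where a rule extends the context. The hard part will be organizational rather than computational: because narrowing shrinks $x$'s capture set from $C$ to $C'$, every judgment downstream of subcapturing must be re-derived at the smaller set and then widened back, so the induction must be threaded through the cross-judgment dependencies in a fixed, consistent order, and one must keep the weakened forms of $\sub{\G}{C'}{C}$ and $\sub{\G}{S'}{S}$ available in the full context $\G, x : P', \Delta$ rather than only in $\G$. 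Since the one genuinely substantive bridging step --- the separation case --- is covered by the already-established Lemma \ref{lemma:subcapt-pres-ninter}, no new machinery is needed.
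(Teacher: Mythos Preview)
Your proposal is correct and follows essentially the same approach as the paper: a simultaneous induction on the derivations, with the only nontrivial cases being those that look up $x$'s binding (\ruleref{var}, \ruleref{sc-var}, \ruleref{ni-var}, and the \textsf{is-reader} case), each handled by re-applying the rule in the narrowed context and bridging via transitivity using the weakened facts $\sub{\G}{C'}{C}$ and $\sub{\G}{S'}{S}$, together with Lemma~\ref{lemma:subcapt-pres-ninter} for the separation case. Two small remarks: what you call ``the rule \ruleref{capt} gives'' is really an inversion of $\sub{\G}{P'}{P}$ (the paper phrases it as ``inspecting the judgment''), and there is no need to invoke Bound narrowing here---rules that extend the context are handled simply by applying the induction hypothesis with a longer $\Delta$.
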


\begin{proof}
  The proof follows through a straightforward induction on the derivations,
  wherein only the cases enumerated below are contigent to the variable bindings in the context.
  The other cases can be deduced by employing the IH, the same rule and other narrowing lemmas.

  \emph{Case \ruleref{var}}.
  In this case, $t = y$,
  $y : \tCap{C}{S} \in \G, x : P, \Delta$,
  and $T = \set{x}\,S$.
  If $x \ne y$, then the binding for $y$ is remains unaffected
  and we can directly conclude by applying the \ruleref{var} rule.
  Otherwise, we have $x = y$,
  implying that $P = \tCap{C}{S}$.
  We can demonstrate that $P^\prime = \tCap{C^\prime}{S^\prime}$ for some $C^\prime, S^\prime$.
  Inspecting the judgment $\sub{\G}{P^\prime}{P}$
  we can show that $\sub{\G}{S^\prime}{S}$.
  Consequently, We can establish that $\sub{\G, x : S^\prime, \Delta}{\set{x}\,S^\prime}{\set{x}\,S}$
  by applying the weakening lemma and the \ruleref{capt} rule.
  Finally, we can conclude by utilizing the \ruleref{var} and the \ruleref{sub} rules.

  \emph{Case \ruleref{sc-var}}.
  In this case, $C_1 = \set{y}$,
  $y : \tCap{C}{S} \in \G, x : P, \Delta$,
  and $\sub{\G, x : P, \Delta}{C}{C_2}$.
  Using the IH, we can demonstrate that $\sub{\G, x : P^\prime, \Delta}{C}{C_2}$.
  If $x \ne y$, the binding for $y$ remains unaffected
  and we can immediately apply the IH and the \ruleref{sc-var} rule to conclude.
  Otherwise, we have $x = y$,
  implying that $P = \tCap{C}{S}$.
  Inpsecting the subtype judgment $\sub{\G}{P^\prime}{P}$,
  we can show that $\sub{\G}{C^\prime}{C}$
  where $C^\prime$ denotes $\cs{P^\prime}$.
  Invoking the weakening lemma and the subcapturing transitivity lemma,
  we can derive that $\sub{\G, x : P^\prime, \Delta}{C^\prime}{C_2}$.
  Finally, we conclude by applying the \ruleref{sc-var} rule.


  \emph{Case \ruleref{ni-var}}.
  In this case, $C_1 = \set{y}$, $C_2 = \set{z}$
  $y : \tCap{C}{S} \in \G, x : P, \Delta$
  and $\ninter{\G, x : P, \Delta}{C}{z}$.
  Using IH, we can demonstrate that $\ninter{\G, x : P^\prime, \Delta}{C}{z}$.
  If $y \ne x$, then the binding for $y$ stays unaffected
  and we can immediately invoke the IH and the \ruleref{ni-var} rule to conclude.
  Otherwise we have $y = x$, implying that $P = \tCap{C}{S}$.
  From the subtype judgment $\sub{\G}{P^\prime}{P}$
  we can show that $\sub{\G}{C^\prime}{C}$
  where $C^\prime$ is $\cs{P^\prime}$.
  We invoke the weakning lemma and Lemma \ref{lemma:subcapt-pres-ninter} to prove that
  $\ninter{\G, x : P^\prime, \Delta}{C^\prime}{z}$.
  Finally, we can apply the \ruleref{ni-var} rule to conclude this case.
\end{proof}

\begin{lemma}[Separation degree expansion]
  \label{lemma:sepdegree-expansion}
  Given $\G, x :_D T, \Delta$,
  and that $D \subseteq D^\prime$,
  the following propositions hold:
  \begin{itemize}
  \item $\wfTyp{\G, x :_D T, \Delta}{T}$ implies $\wfTyp{\G, x :_{D^\prime} T, \Delta}{T}$;
  \item $\wfTyp{\G, x :_D T, \Delta}{D}$ implies $\wfTyp{\G, x :_{D^\prime} T, \Delta}{D}$;
  \item $\sub{\G, x :_D T, \Delta}{C_1}{C_2}$ implies $\sub{\G, x :_{D^\prime} T, \Delta}{C_1}{C_2}$;
  \item $\sub{\G, x :_D T, \Delta}{T_1}{T_2}$ implies $\sub{\G, x :_{D^\prime} T, \Delta}{T_1}{T_2}$;
  \item $\isrdr{\G, x :_D T, \Delta}{y}$ implies $\isrdr{\G, x :_{D^\prime} T, \Delta}{y}$;
  \item $\ninter{\G, x :_D T, \Delta}{C_1}{C_2}$ implies $\ninter{\G, x :_{D^\prime} T, \Delta}{C_1}{C_2}$;
  \item $\typ{\G, x :_D T, \Delta}{t}{U}$ implies $\typ{\G, x :_{D^\prime} T, \Delta}{t}{U}$.
  \end{itemize}
\end{lemma}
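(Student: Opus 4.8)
The plan is to prove all seven implications simultaneously by mutual induction on the given derivations, mirroring the structure of the two preceding narrowing lemmas (Lemmas~\ref{lemma:bound-narrowing} and~\ref{lemma:type-narrowing}). The observation that drives the whole argument is that the separation degree attached to a context binding is consulted by precisely one rule across all judgment forms, namely \ruleref{ni-degree}. Every other rule that inspects the context either looks a binding up by its type ($x : \tCap{C}{S} \in \G$, as in \ruleref{sc-var}, \ruleref{var}, \ruleref{ni-var}, and in the definition of \textsf{is-reader}) or looks up a type-variable bound ($X <: S \in \G$, as in \ruleref{tvar} and \ruleref{rd-tvar}); none of these reads the separation degree. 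Consequently, replacing $x :_D T$ by $x :_{D'} T$ leaves the premises of all these rules untouched, so each such case closes by the induction hypotheses, a reapplication of the same rule, and—where a rule's premises span several judgment forms—the corresponding clause of this lemma itself.

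The one load-bearing case is \ruleref{ni-degree}. There the conclusion is $\ninter{\G, x:_D T, \Delta}{\set{z}}{\set{y}}$, justified by some binding $z :_{D_z} T_z$ in the context with $y \in D_z$. If $z \neq x$ this binding survives the modification unchanged and the rule reapplies verbatim. If $z = x$, then $D_z = D$, so the premise gives $y \in D$; since $D \subseteq D'$ we still have $y \in D'$, and \ruleref{ni-degree} applies with the enlarged degree to yield $\ninter{\G, x:_{D'} T, \Delta}{\set{x}}{\set{y}}$. This is the sole point at which the hypothesis $D \subseteq D'$ is actually used.

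For the rules that extend the context—\ruleref{fun} and \ruleref{tfun} in subtyping, and \ruleref{abs}, \ruleref{tabs}, \ruleref{let}, \ruleref{dvar} in typing—the freshly introduced binding enters the trailing segment $\Delta$, so after the usual rebracketing the induction hypotheses apply to the sub-derivations under the extended context. The typing cases that carry well-formedness or separation premises—for instance the $\wfTyp{\G}{D_0}$ premises of \ruleref{abs} and \ruleref{dvar}, and the $\ninter{\G}{y}{D_0}$ premise of \ruleref{app}—are discharged by the well-formedness and non-interference clauses of this very lemma, applied to those premises, while the type recovered for the head in \ruleref{app} is unchanged by the IH. Throughout we rely on the standing assumption that all environments in play are well-formed, so that in particular $\G, x :_{D'} T, \Delta$ is a legal environment; this requires $D'$ to remain a valid separation degree ($\univ, \rdroot \notin D'$ and $D' \subseteq \dom{\G}$), which we take as part of the implicit well-formedness hypothesis on the target context.

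I expect no genuine obstacle: this is the gentlest of the structural lemmas, precisely because separation degrees are read in exactly one rule. The only step demanding a moment's care is the case split in \ruleref{ni-degree} on whether the consulted binding is the one whose degree we are enlarging, together with the elementary observation that $y \in D \subseteq D'$ keeps that rule applicable.
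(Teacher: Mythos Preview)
Your proposal is correct and follows essentially the same approach as the paper: a straightforward mutual induction on the derivations, with the sole interesting case being \ruleref{ni-degree}, handled by the case split on whether the consulted binding is $x$ and the inclusion $D \subseteq D'$. Your write-up is in fact more careful than the paper's, which is terse to the point of containing what appears to be a typo (it says ``invoke \ruleref{ni-var}'' where it clearly means \ruleref{ni-degree}).
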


\begin{proof}
  The proof is carried out through straightforward induction on the derivations.
  Only the \ruleref{ni-degree} case depends on the separation degrees.
  Other cases can be concluded by IH, other expansion lemmas and the same rule.

  \emph{Case \ruleref{ni-degree}}.
  Then $C_1 = \set{z_1}$, $C_2 = \set{z_2}$,
  $z_1 :_{D_1} T_1 \in \G, x :_D T, \Delta$,
  and $z_2 \in D_1$.
  If $z_1 = x$, then $D_1 = D$.
  Considering $D \subseteq D^\prime$,
  we can show that $z_2 \in D^\prime$.
  Finally, we invoke \ruleref{ni-var} and conclude.
\end{proof}

\subsubsection{Properties of Evaluation Configurations}

\begin{lemma}[Value typing (I)]
  \label{lemma:val-typing}
	If $\typDft{v}{T}$
  then $T$ is not in the form of $X\capt C$.
\end{lemma}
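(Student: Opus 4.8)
The plan is to proceed by induction on the derivation of $\typDft{v}{T}$. The first step is to observe that, since $v$ is a value, it is syntactically one of $\lambda(x :_D U)\,s$, $\lambda[X <: S]\,s$, $\tBox{x}$, or $\reader x$; in particular a value is never a bare variable. Consequently the only typing rules whose conclusion can match a value are \ruleref{abs}, \ruleref{tabs}, \ruleref{box}, \ruleref{reader}, and the subsumption rule \ruleref{sub}. Note that \ruleref{var} types a bare variable, which is an \emph{answer} but not a \emph{value}, so it does not apply here.

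In each of the four structural cases the assigned type has a determined shape: \ruleref{abs} gives a function type $\forall(x :_D U)\,T'$, \ruleref{tabs} gives a polymorphic type $\tTForall{X}{S}{T'}$, \ruleref{box} gives a boxed type $\tBox{T'}$, and \ruleref{reader} gives a reader type $\tRdr{S}$, each under some capture set. None of these shapes is a type variable, so $T$ is not of the form $X\capt C$ and the conclusion is immediate in these cases.

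The only case requiring an argument is \ruleref{sub}, where $\typ{\G}{v}{T'}$ and $\subDft{T'}{T}$. By the induction hypothesis, $T'$ is not of the form $Y\capt C'$. Suppose, for contradiction, that $T = X\capt C$ for some type variable $X$. Then $\subDft{T'}{X\capt C}$, and the subtype inversion lemma for type variables (Lemma \ref{lemma:subtype-inversion-tvar}) forces $T'$ to have the shape $Y\capt C'$ for some type variable $Y$, contradicting the induction hypothesis. Hence $T$ is not of the form $X\capt C$, which closes the induction.

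I do not expect a genuine obstacle: the four structural cases are read off directly from the shapes produced by the typing rules, and the subsumption case is exactly the contrapositive of Lemma \ref{lemma:subtype-inversion-tvar}. The one point to be careful about is ensuring that the enumeration of applicable rules is complete, and in particular keeping straight that \ruleref{var} types variables (answers) rather than values, so it contributes no type-variable case.
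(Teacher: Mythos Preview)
Your proposal is correct and follows essentially the same approach as the paper: induction on the typing derivation, with the structural value-typing cases (\ruleref{abs}, \ruleref{tabs}, \ruleref{box}, \ruleref{reader}) immediate from the shape of the concluded type, and the \ruleref{sub} case handled by the induction hypothesis together with Lemma~\ref{lemma:subtype-inversion-tvar} to derive a contradiction. Your explicit remark that \ruleref{var} does not apply because values are distinct from variables is a helpful clarification the paper leaves implicit.
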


\begin{proof}
	By straightforward induction on the derivation.

  \emph{Case \ruleref{sub}}.
  Then $\typDft{v}{T^\prime}$,
  and $\subDft{T^\prime}{T}$.
  By IH we know that $T^\prime$ is not in the form of $X\capt C$.
  If $T$ is in the form of $X\capt C$,
  then we invoke Lemma \ref{lemma:subtype-inversion-tvar}
  and derive a contradiction.
  We therefore conclude.

  \emph{Other cases}.
  Other cases for typing values are immediate.
\end{proof}

\begin{lemma}[Value typing (II)]
  \label{lemma:val-typing-alt}
	If $\typDft{v}{T}$
  then $T$ is not in the form of $\tRef{S}\capt C$.
\end{lemma}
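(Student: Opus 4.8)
The plan is to mirror the proof of Lemma~\ref{lemma:val-typing} (Value typing (I)) almost line for line, proceeding by induction on the derivation of $\typDft{v}{T}$. Since $v$ is a value, it is one of $\lambda(x:_D U)t$, $\lambda[X<:S]t$, $\tBox x$, or $\reader x$, so the only rules whose conclusion can directly type it are \ruleref{abs}, \ruleref{tabs}, \ruleref{box}, and \ruleref{reader}. I would dispatch these base cases first: each assigns $v$ a type whose shape is respectively a function type, a type-function type, a box type, or a reader type with shape $\tRdr{S}$ --- and none of these shapes is a mutable reference $\tRef{S}$, so the claim holds immediately in each case.

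The only case requiring real work is \ruleref{sub}, where $\typDft{v}{T'}$ and $\subDft{T'}{T}$. Here the induction hypothesis gives that $T'$ is not of the form $\tRef{S}\capt C$. I would argue by contradiction: assume $T = \tRef{S}\capt C$ and apply Lemma~\ref{lemma:subtype-inversion-mut} to the derivation $\subDft{T'}{\tRef{S}\capt C}$. This yields exactly two possibilities for $T'$: either $T' = \tCap{C^\prime}{X}$ for some type variable $X$, or $T' = \tCap{C^\prime}{\tRef{S}}$. The first possibility contradicts Lemma~\ref{lemma:val-typing}, since a value cannot be typed with a bare type variable as its shape; the second possibility is precisely the forbidden form and so contradicts the induction hypothesis. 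Either way we obtain a contradiction, ruling out $T = \tRef{S}\capt C$ and closing the case.

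The hard part --- really the only non-routine point --- is the subsumption case, and the key observation is that it requires \emph{two} inversion facts in tandem: Lemma~\ref{lemma:subtype-inversion-mut} to enumerate what can lie below a mutable reference in the subtyping order, and Lemma~\ref{lemma:val-typing} to discharge the type-variable branch that inversion leaves open. Everything else reduces to a routine case split over the four value-forming typing rules, so I expect no further difficulties.
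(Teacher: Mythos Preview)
Your proposal is correct and follows essentially the same approach as the paper: induction on the typing derivation, with the base cases immediate and the \ruleref{sub} case handled via subtyping inversion for $\tRef{S}$. Your treatment of the \ruleref{sub} case is in fact more explicit than the paper's---the paper simply says ``by induction on $\subDft{T'}{T}$ we can show that $T$ is also not in this form,'' whereas you spell out that this requires both Lemma~\ref{lemma:subtype-inversion-mut} (to enumerate the possible shapes of $T'$) and Lemma~\ref{lemma:val-typing} (to eliminate the type-variable branch), which is the right level of detail.
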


\begin{proof}
	By straightforward induction on the typing derivation.
  No typing rule for values results in a $\tRef{S}\capt C$ type.
  In the \ruleref{sub} case,
  we have $\typDft{v}{T^\prime}$ and $\subDft{T^\prime}{T}$.
  By IH we have $T^\prime$ is not in the form of $\tRef{S}\capt C$.
  By induction on the $\subDft{T^\prime}{T}$ we can show that
  $T$ is also not in this form.
  We can therefore conclude this case.
\end{proof}

\begin{lemma}[Store lookup: mutable variables]
  \label{lemma:store-lookup-mut}
	If
  (i) $\match{}{\sta}{\G}$,
  (ii) $\typ{\G}{x}{\tRef{S}\capt C}$,
  then
  $\exists v. \sta(\tVar{x}) = v$.
\end{lemma}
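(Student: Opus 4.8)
The plan is to reduce the statement to a purely structural fact about store-typed contexts: a variable whose type (possibly after subsumption) is a mutable-reference type must be bound in $\G$ with a reference shape, and every such binding is installed by \ruleref{st-var} together with a matching $\tVar{x}$ entry in the store. First I would invert the typing $\typ{\G}{x}{\tRef{S}\capt C}$: since a variable is typed only by \ruleref{var} followed by zero or more uses of \ruleref{sub}, there is a binding $x :_{D} S_0\capt C_0 \in \G$ with $\sub{\G}{S_0\capt\set{x}}{\tRef{S}\capt C}$.

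Next I would feed this subtyping into the subtype-inversion lemma for mutable references (Lemma \ref{lemma:subtype-inversion-mut}), which yields two cases: either $S_0$ is a type variable $X$ with $\sub{\G}{X}{\tRef{S}}$, or $S_0$ is already a reference shape $\tRef{S}$. The first case I would discard on the grounds that a store-typed context carries no type-variable bindings --- neither \ruleref{st-val} nor \ruleref{st-var} ever introduces one --- so by well-formedness of the binding $x :_{D} S_0\capt C_0$ its shape $S_0$ cannot be a type variable. This leaves $S_0 = \tRef{S}$, so the obligation becomes: if $x :_{D} \tRef{S}\capt C_0 \in \G$ and $\match{}{\sta}{\G}$, then $\sta(\tVar{x})$ is defined.

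I would prove this remaining claim by induction on the derivation of $\match{}{\sta}{\G}$, reading off the lookup behaviour from Definition \ref{def:lookup-mut}. The \ruleref{st-empty} case is vacuous. In \ruleref{st-var} the freshly added binding is paired with a store entry $\tVar{y} := v$, so for $x = y$ the lookup returns $v$ and for $x \ne y$ it skips the entry and the induction hypothesis on the smaller store applies; the \ruleref{st-set} case is analogous, using that it leaves $\G$ unchanged so the hypothesis fires directly. The only case needing attention is \ruleref{st-val}: there the new binding $y :_{\set{}} \tCap{\cv{v}}{S_1}$ comes from a value, and $x = y$ would force $S_1 = \tRef{S}$, contradicting Lemma \ref{lemma:val-typing-alt} (no value has a reference type); hence $x \ne y$ and again the induction hypothesis closes the case.

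The hard part is not any single calculation but the plumbing that links the static shape of $x$'s binding to the physical presence of a mutable cell in $\sta$. Two facts carry this weight: that store-typed contexts are free of type-variable bindings, which kills the spurious type-variable branch of the subtype inversion, and Lemma \ref{lemma:val-typing-alt}, which guarantees that reference-shaped bindings can only originate from \ruleref{st-var} --- exactly the rule that simultaneously records a $\tVar{x}$ entry in the store.
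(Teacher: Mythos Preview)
Your proof is correct and follows essentially the same approach as the paper: induction on the store-typing derivation, with the \ruleref{st-val} case eliminated via subtype inversion for references (Lemma~\ref{lemma:subtype-inversion-mut}) together with Lemma~\ref{lemma:val-typing-alt}. The only organizational difference is that you invert the typing of $x$ up front and then induct with the simpler hypothesis $x :_{D} \tRef{S}\capt C_0 \in \G$, whereas the paper carries the full typing judgment through the induction and performs the inversion inside the \ruleref{st-val} case; relatedly, you discard the type-variable branch by observing that store-typed contexts contain no $X <: S$ bindings, while the paper uses Lemma~\ref{lemma:val-typing} for the same purpose.
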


\begin{proof}
  By induction on $\match{}{\sta}{\G}$.

  \emph{Case \ruleref{st-empty}}.
  Then $\sta = \G = \varemptyset$.
  It is contradictory to have $\typ{\varemptyset}{x}{\tRef{S}\capt C}$.

  \emph{Case \ruleref{st-val}}.
  Then $\sta = \sta_0, \tValM{D}{y} \mapsto v$
  $\G = \G_0, y :_D \cv{v}\, S^\prime$,
  and $\typ{\G_0}{v}{\cv{v}\, S^\prime}$.
  \begin{itemize}
  \item If $x = y$,
    by induction on the typing judgment and applying weakening,
    we can show that
    $\sub{\G}{S^\prime\capt \cv{v}}{\tRef{S}\capt C}$.
    By Lemma \ref{lemma:subtype-inversion-mut} and Lemma \ref{lemma:val-typing},
    we can show that $\sub{\G}{\cv{v}}{C}$
    and $S^\prime = \tRef{S}$.
    By Lemma \ref{lemma:val-typing-alt} we can derive the contradiction.

  \item If $x \ne y$,
    then we can show that $x \in \dom{\G_0}$,
    and therefore $\typ{\G_0}{x}{C\,\tRef{S}}$.
    We conclude this case by IH.
  \end{itemize}

  \emph{Case \ruleref{st-var}}.
  Then $\sta = \sta_0, \tVar{y} = v$,
  and $\G = \G_0, y :_{\dom{\G_0}} \tCap{\set{\univ}}{\tRef{S^\prime}}$.
  If $y = x$, we conclude this case by
  the fact that $\sta(\tVar{x}) = v$.
  Otherwise, we have $\typ{\G_0}{x}{C\,\tRef{S}}$
  and conclude again by IH.

  \emph{Case \ruleref{st-set}}.
  Analogous to the previous case.
\end{proof}

\begin{lemma}[Store lookup: pure values]
  \label{lemma:store-lookup-pure-val}
	If
  (i) $\match{}{\sta}{\G}$,
  (ii) $\typ{\G}{x}{S}$,
  then
  $\exists v. \sta(\tVal{x}) = v$.
\end{lemma}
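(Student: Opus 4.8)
The plan is to mirror the proof of Lemma~\ref{lemma:store-lookup-mut} (store lookup for mutable variables), proceeding by induction on the derivation of $\match{}{\sta}{\G}$, but with the roles of the two interesting cases exchanged: here the \emph{immutable} binding case is the one that discharges the goal, while the \emph{mutable} binding case is the one that must be ruled out. At each step I exploit that the last binding of the store simultaneously determines the shape of the binding added to $\G$ and whether the immutable-lookup function $\sta(\tVal{x})$ skips over it.

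The routine cases are as follows. In \ruleref{st-empty} we have $\G = \emptyset$, so $\typ{\emptyset}{x}{S}$ is impossible and the case is vacuous. In \ruleref{st-val} we have $\sta = \sta_0, \tVal{y} \mapsto v$ and $\G = \G_0, y :_{\set{}} \tCap{\cv{v}}{S'}$; if $x = y$ then $\sta(\tVal{x}) = v$ and we are done immediately (this is the favourable case, dual to the \ruleref{st-var} case of Lemma~\ref{lemma:store-lookup-mut}), whereas if $x \ne y$ then $x \in \dom{\G_0}$ and, after strengthening away the unused last binding, $\typ{\G_0}{x}{S}$ holds, so we conclude by the induction hypothesis on $\match{}{\sta_0}{\G_0}$. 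In \ruleref{st-set} the context $\G$ is unchanged and $\sta = \sta_0, \tSet{x'} := v$; since the immutable lookup skips $\tSet{\cdot}$ bindings we have $\sta(\tVal{x}) = \sta_0(\tVal{x})$, and the induction hypothesis applied to $\match{}{\sta_0}{\G}$ furnishes the required value.

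The main obstacle is the \ruleref{st-var} case with $x = y$. Here the store adds a mutable binding, $\G$ assigns $x$ the reference-shaped type $\tCap{\set{\univ}}{\tRef{S'}}$, and no $\tVal{x}$ binding is present, so I must show this situation is incompatible with the hypothesis $\typ{\G}{x}{S}$. Since the only typing rules for a variable are \ruleref{var} and \ruleref{sub}, inverting the derivation yields $\sub{\G}{\tCap{\set{x}}{\tRef{S'}}}{S}$, and the task reduces to proving that a capturing type whose shape is $\tRef{S'}$ cannot be a subtype of the demanded shape type $S$. This is the dual of how Lemma~\ref{lemma:store-lookup-mut} dispatches its immutable case via Lemma~\ref{lemma:val-typing-alt}: whenever $S$ is a type variable, function, type-function, box, or reader shape, the corresponding subtype-inversion lemmas (Lemmas~\ref{lemma:subtype-inversion-tvar}, \ref{lemma:subtype-inversion-fun}, \ref{lemma:subtype-inversion-tfun}, \ref{lemma:subtype-inversion-boxed}, and~\ref{lemma:subtype-inversion-reader}) each force the left-hand shape to match, contradicting that it is $\tRef{S'}$. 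The one delicate point is that these are precisely the shape types at which the lemma is consumed: the reduction rules that consult $\sta(\tVal{x})$ (\ruleref{apply}, \ruleref{tapply}, \ruleref{open}, \ruleref{get}) each expect a function, type-function, box, or reader value, and in particular the box-opening case supplies the pure box shape $\tBox{\tCap{C}{S}}$ demanded by \ruleref{open}. Hence the contradiction is available at every application site, and the mutable case is discharged.
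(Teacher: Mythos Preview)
Your overall induction on $\match{}{\sta}{\G}$ and the treatment of \ruleref{st-empty}, \ruleref{st-val}, and \ruleref{st-set} match the paper. The divergence is in the contradictory \ruleref{st-var} subcase with $x = y$, and there your argument has a genuine gap relative to the stated lemma.

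You argue by shape mismatch: $\tRef{S'}$ cannot sit below a function, type-function, box, reader, or type-variable shape, citing a different inversion lemma for each form of $S$. But the hypothesis is $\typ{\G}{x}{S}$ for an \emph{arbitrary} shape type $S$, and your case split omits $S = \top$ and $S = \tRef{S''}$. Falling back on ``these are precisely the shape types at which the lemma is consumed'' is not a proof of the lemma as written; it is a proof of several specialised variants, one per call site.

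The paper's route is both shorter and uniform in $S$: it ignores the shape entirely and works with the capture set. A shape type has empty capture set, so inverting $\typ{\G}{x}{S}$ when $x$ is bound at $\tCap{\set{\univ}}{\tRef{S'}}$ yields $\sub{\G}{\set{\univ}}{\varemptyset}$, which is immediately contradictory. The point is that a mutable-variable binding always carries $\univ$ in its capture set, and no pure type can absorb that---regardless of what the shape component of $S$ happens to be. This single observation replaces your five inversion lemmas and does not require restricting attention to the shapes that happen to arise at call sites.
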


\begin{proof}
	By induction on the derivation of $\match{}{\sta}{\G}$.

  \emph{Case \ruleref{st-empty}}.
  Contradictory.

  \emph{Case \ruleref{st-val}}.
  Then $\sta = \sta_0, \tVal{y} \mapsto v$,
  $\match{}{\sta_0}{\G_0}$,
  $\typ{\G_0}{v}{T}$,
  and $\G = \G_0, y :_{\set{}} T$.
  If $x \ne y$ we conclude by IH.
  Otherwise we have $v$ such that $\sta(\tVal{x}) = v$.

  \emph{Case \ruleref{st-var}}.
  Then $\sta = \sta_0, \tVar{y} = v$,
  $\match{}{\sta_0}{\G_0}$,
  $\typ{\G_0}{S^\prime}$,
  and $\G = \G_0, y :_{\dom{\G_0}} \tCap{\set{\univ}}{\tRef{S^\prime}}$.
  If $x = y$, we can show that $\sub{\G}{\set{\univ}}{\varemptyset}$
  which is contradictory.
  Otherwise we conclude the case by IH.

  \emph{Case \ruleref{st-set}}.
  By IH.
\end{proof}

\begin{lemma}[Store lookup: term abstractions]
  \label{lemma:store-lookup-fun}
	If
  (i) $\match{}{\sta}{\G}$,
  (ii) $\typ{\G}{x}{\forall(z :_D U) T\capt C}$,
  then
  $\exists v. \sta(\tVal{x}) = v$.
\end{lemma}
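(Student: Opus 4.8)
The plan is to mirror the two preceding store-lookup lemmas (Lemmas \ref{lemma:store-lookup-mut} and \ref{lemma:store-lookup-pure-val}) and proceed by induction on the derivation of $\match{}{\sta}{\G}$, case-splitting on the last store-typing rule. The \ruleref{st-empty} case is vacuous, since $\typ{\varemptyset}{x}{\forall(z :_D U) T\capt C}$ is underivable. In the \ruleref{st-set} case the trailing entry $\tSet{y} := v$ binds no new variable and is transparent to value lookup, so $\sta(\tVal{x})$ coincides with the lookup in the shorter store while $\G$ is unchanged; hence the IH applies directly. The two substantial cases are \ruleref{st-val} and \ruleref{st-var}, and in each I split on whether $x$ is the freshly added variable $y$ or lies in the prefix $\G_0$. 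The prefix subcase ($x \ne y$) is discharged by observing $x \in \dom{\G_0}$, re-deriving a term-abstraction typing of $x$ in $\G_0$ by strengthening, and invoking the IH, exactly as in the earlier lemmas.

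The interesting subcases are where $x = y$. In \ruleref{st-val} this is immediate: the store entry is $\tVal{x}\mapsto v$, so $\sta(\tVal{x}) = v$ witnesses the conclusion. The real work is the \ruleref{st-var} case with $x = y$, where the binding is $x :_{\dom{\G_0}} \tCap{\set{\univ}}{\tRef{S'}}$; here I expect to \emph{derive a contradiction}, paralleling the way Lemma \ref{lemma:store-lookup-mut} rules out its own offending subcase. Concretely, by inverting the typing of the variable $x$ (only \ruleref{var} and \ruleref{sub} can type a variable) together with weakening, I obtain $\sub{\G}{\tRef{S'}\capt\set{x}}{\forall(z :_D U) T\capt C}$. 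Applying the subtype-inversion lemma for term abstractions (Lemma \ref{lemma:subtype-inversion-fun}) then forces the subtype to have either a type-variable shape or a $\forall(\cdots)$ shape, neither of which matches $\tRef{S'}$; this is the sought contradiction, so the case cannot arise.

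The main obstacle is thus precisely the $x = y$ subcase of \ruleref{st-var}: the whole argument hinges on the fact that a mutable-reference shape can never be a subtype of a term-abstraction type, which I extract from Lemma \ref{lemma:subtype-inversion-fun}. Everything else is routine bookkeeping of the store/context correspondence: value bindings $\tVal{x}\mapsto v$ are exactly the entries that contribute to $\sta(\tVal{x})$, mutable bindings $\tVar{x} := v$ yield reference-typed context entries, and $\tSet$ entries are invisible to value lookup. The only point needing a little care is the strengthening step used in each prefix subcase, namely re-deriving $\typ{\G_0}{x}{\forall(z :_D U) T\capt C}$ from $\typ{\G}{x}{\forall(z :_D U) T\capt C}$ when $x \ne y$; this is justified because, under the Barendregt convention and well-formedness, $x$'s binding resides in $\G_0$ and is independent of the last binding $y$, just as in Lemmas \ref{lemma:store-lookup-mut} and \ref{lemma:store-lookup-pure-val}.
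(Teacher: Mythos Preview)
Your proposal is correct and follows essentially the same approach as the paper: induction on $\match{}{\sta}{\G}$, with the \ruleref{st-var} subcase $x = y$ discharged by deriving a contradiction via Lemma~\ref{lemma:subtype-inversion-fun}, and all other subcases handled by the IH or immediately. The only cosmetic difference is that the paper obtains the offending subtyping judgment with capture set $\set{\univ}$ (the binding's capture set) whereas you use $\set{x}$ (from \ruleref{var}); both feed Lemma~\ref{lemma:subtype-inversion-fun} equally well.
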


\begin{proof}
	By induction on the $\match{}{\sta}{\G}$ derivation.

  \emph{Case \ruleref{st-empty}}.
  Contradictory.

  \emph{Case \ruleref{st-val}}.
  Then $\sta = \sta_0, \tVal{y} \mapsto v$,
  $\match{}{\sta_0}{\G_0}$,
  $\typ{\G_0}{v}{T}$,
  and $\G = \G_0, y :_D T$.
  If $x = y$, then we conclude immediately.
  Otherwise we conclude the goal by IH.

  \emph{Case \ruleref{st-var}}.
  Then $\sta = \sta_0, \tVar{y} = v$,
  $\match{}{\sta_0}{\G_0}$,
  $\typ{\G_0}{v}{S^\prime}$,
  and $\G = \G_0, y :_{\dom{\G_0}} \tCap{\set{\univ}}{\tRef{S^\prime}}$.
  If $x = y$,
  we can show that $\subDft{\tCap{\set{\univ}}{\tRef{S^\prime}}}{\tCap{C}{\forall(z :_D T) U}}$.
  By Lemma \ref{lemma:subtype-inversion-fun},
  we can derive a contradiction.
  Otherwise, $x \ne y$.
  We conclude this case by IH.

  \emph{Case \ruleref{st-set}}.
  As above.
\end{proof}

\begin{lemma}[Store lookup: type abstractions]
  \label{lemma:store-lookup-tfun}
	If
  (i) $\match{}{\sta}{\G}$,
  (ii) $\typ{\G}{x}{\forall[X <: S] T\capt C}$,
  then
  $\exists v. \sta(\tVal{x}) = v$.
\end{lemma}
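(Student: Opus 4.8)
The plan is to mirror the proof of Lemma~\ref{lemma:store-lookup-fun} almost verbatim, proceeding by induction on the store-typing derivation $\match{}{\sta}{\G}$. The only store-typing rule that introduces an immutable value binding $\tVal{x}\mapsto v$ is \ruleref{st-val}, so the argument reduces to showing that a variable whose type is a type-abstraction capturing type cannot have been introduced as a mutable reference, while any other case is resolved by peeling off the top store binding and appealing to the induction hypothesis.

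First, the \ruleref{st-empty} case is vacuous: the empty context cannot assign any type to $x$, contradicting hypothesis (ii). For \ruleref{st-val}, we have $\sta = \sta_0, \tVal{y}\mapsto v$ and $\G = \G_0, y :_D T$. If $x = y$ we are done immediately, since $\sta(\tVal{x}) = v$ witnesses the claim. If $x \ne y$, then $x \in \dom{\G_0}$ with unchanged type, so $\typ{\G_0}{x}{\forall[X <: S] T\capt C}$ holds and the induction hypothesis applies.

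The interesting case is \ruleref{st-var}, where $\sta = \sta_0, \tVar{y} = v$ and $\G = \G_0, y :_{\dom{\G_0}} \tCap{\set{\univ}}{\tRef{S^\prime}}$. When $x \ne y$ we again conclude by the induction hypothesis. When $x = y$, inverting the variable typing gives $\subDft{\tCap{\set{\univ}}{\tRef{S^\prime}}}{\tCap{C}{\forall[X <: S] T}}$; here I would invoke the subtype-inversion lemma for type abstractions (Lemma~\ref{lemma:subtype-inversion-tfun}) to derive a contradiction, since a mutable-reference shape can be neither a type variable nor a type-abstraction shape. The \ruleref{st-set} case does not alter the domain of the store (it only records an update) and is handled exactly as in the previous lemma, by the induction hypothesis.

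The proof presents no real obstacle: it is structurally identical to the term-abstraction lookup lemma, and the single load-bearing step is the appeal to Lemma~\ref{lemma:subtype-inversion-tfun} in the \ruleref{st-var} case to rule out the spurious possibility that $x$ names a mutable reference. The only point requiring mild care is, in the $x \ne y$ subcases, justifying that stripping the top binding preserves the typing of $x$ at the same type, which holds because the variable's declared type in the context is unaffected by removing a later binding.
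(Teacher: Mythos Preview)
Your proposal is correct and essentially identical to the paper's own proof, which simply says ``Analogous to the proof of Lemma~\ref{lemma:store-lookup-fun}.'' You have correctly identified the induction on the store-typing derivation, the handling of each case, and in particular the use of Lemma~\ref{lemma:subtype-inversion-tfun} in the \ruleref{st-var} case to rule out $x$ being a mutable reference.
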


\begin{proof}
	Analogous to the proof of Lemma \ref{lemma:store-lookup-fun}.
\end{proof}

\begin{lemma}[Store lookup: reader]
  \label{lemma:store-lookup-reader}
  If
  (i) $\match{}{\sta}{\G}$,
  (ii) $\typ{\G}{x}{\tRdr{S}\capt C}$,
  then
  $\exists v. \sta(\tVal{x}) = v$.
\end{lemma}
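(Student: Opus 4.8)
The plan is to proceed by induction on the derivation of $\match{}{\sta}{\G}$, mirroring exactly the structure of the preceding store-lookup lemmas (Lemmas~\ref{lemma:store-lookup-mut}--\ref{lemma:store-lookup-tfun}), with the reader type $\tRdr{S}\capt C$ now playing the role that the mutable-reference and function types played there. The \ruleref{st-empty} case is vacuous, since no variable is typeable in the empty context. The \ruleref{st-set} case is immediate: a set binding $\tSet{y} := v$ does not affect the immutable lookup $\sta(\tVal{x})$, so $\sta(\tVal{x}) = \sta_0(\tVal{x})$ and the claim follows from the induction hypothesis applied to the smaller store, whose typing context is unchanged.

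In the \ruleref{st-val} case, write $\sta = \sta_0, \tVal{y} \mapsto v$ and $\G = \G_0, y :_{\set{}} T$. If $x = y$, then $\sta(\tVal{x}) = v$ witnesses the conclusion at once. If $x \neq y$, then $x$ is bound in $\G_0$, so I would re-derive $\typ{\G_0}{x}{\tRdr{S}\capt C}$ (variable typing depends only on the binding of $x$) and conclude by the induction hypothesis, using $\sta(\tVal{x}) = \sta_0(\tVal{x})$.

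The main obstacle is the \ruleref{st-var} case, where $\sta = \sta_0, \tVar{y} := v$ and $y :_{\dom{\G_0}} \tCap{\set{\univ}}{\tRef{S^\prime}} \in \G$. The subcase $x \neq y$ is routine and dispatched by the induction hypothesis as above. The subcase $x = y$ must be shown impossible: here the only binding of $x$ gives it the mutable-reference shape $\tRef{S^\prime}$, yet the hypothesis types $x$ at a reader type. Concretely, I would first invert the typing $\typ{\G}{x}{\tRdr{S}\capt C}$ --- which can only arise from \ruleref{var} followed by \ruleref{sub} --- to obtain $\sub{\G}{\tRef{S^\prime}\capt\set{x}}{\tRdr{S}\capt C}$, exactly as the analogous step in Lemma~\ref{lemma:store-lookup-mut} extracts a subtyping from a variable binding. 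I would then apply the reader subtype-inversion lemma (Lemma~\ref{lemma:subtype-inversion-reader}), which forces the shape on the left to be either a type variable or a reader; since $\tRef{S^\prime}$ is neither, this yields the required contradiction, ruling out the subcase. This is the only place where reader-specific reasoning enters, and the point needing care is threading the self-capture set $\set{x}$ correctly through the variable-typing inversion before appealing to subtype inversion.
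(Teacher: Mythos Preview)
Your proposal is correct and follows essentially the same approach as the paper, which simply marks the proof as analogous to Lemma~\ref{lemma:store-lookup-fun}: induction on $\match{}{\sta}{\G}$, with the only nontrivial subcase being \ruleref{st-var} when $x = y$, dispatched via the reader subtype-inversion lemma (Lemma~\ref{lemma:subtype-inversion-reader}) to rule out $\tRef{S'}$ as a subtype of a reader shape. The minor cosmetic difference is that the paper's analogous proof writes the left-hand capture set as $\set{\univ}$ rather than $\set{x}$, but since the contradiction comes purely from the shape mismatch, either formulation works.
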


\begin{proof}
  Analogous to the proof of Lemma \ref{lemma:store-lookup-fun}.
\end{proof}

\begin{lemma}[Inversion of evaluation context typing]
  \label{lemma:evctx-typing-inversion}
	If $\typ{\G}{\evctx{s}}{T}$,
  then $\exists \Delta, U$ such that
  (i) $\typec{\G}{e}{\Delta}{U}{s}{T}$,
  and (ii) $\typ{\G, \Delta}{s}{U}$.
\end{lemma}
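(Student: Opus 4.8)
The plan is to proceed by induction on the structure of the evaluation context $e$, at each step inverting the typing of $\evctx{s}$ to expose the focused subterm and collecting the bindings contributed by the context into $\Delta$. The three cases mirror the grammar of $e$ and the rules \ruleref{ev-empty}, \ruleref{ev-let-1}, \ruleref{ev-let-2} for $\match{\G}{e}{\Delta}$. Throughout, when I ``invert the typing of a let'' I mean applying the standard let-inversion that absorbs any trailing \ruleref{sub} into a residual subtyping $\sub{\G}{U'}{T}$, so that I recover the premises of \ruleref{let} plus this subtyping and can reconstruct at $U'$ and re-subsume.

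For the base case $e = []$, I take $\Delta = \varemptyset$ and $U = T$: clause (i) of Definition~\ref{defn:evctx-inversion} is \ruleref{ev-empty}, clause (ii) is immediate since $\evctx{s'} = s'$, and $\typ{\G}{s}{T}$ is the hypothesis. For the binding-hole case $e = \tLetMode{m}{x}{e'}{t}$, inversion of $\typ{\G}{\evctx{s}}{T}$ yields $\typ{\G}{\evpctx{s}}{T_1}$, $\typ{\G, x :_{\set{}} T_1}{t}{U'}$, $x \notin \fv{U'}$, $\sub{\G}{U'}{T}$, and, when $m = \parmode$, $\ninter{\G}{\evpctx{s}}{t}$. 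Since \ruleref{ev-let-1} gives $\match{\G}{e}{\Delta}$ from $\match{\G}{e'}{\Delta}$, I apply the IH to $e'$ to get $\Delta, U$ with $\typec{\G}{e'}{\Delta}{U}{s}{T_1}$ and $\typ{\G, \Delta}{s}{U}$, and reuse them for $e$. To discharge clause (ii) for a fresher well-typed $s'$, the IH reconstructs $\typ{\G}{\evpctx{s'}}{T_1}$; the body $t$ still types against $x :_{\set{}} T_1$, and when $m = \parmode$ the premise $\ninter{\G}{\evpctx{s'}}{t}$ follows by instantiating $\subi{\G; \Delta}{s'}{s}$ with $\Delta_1 = \varemptyset$, $\Delta_2 = \Delta$, $e'' = e'$ (legitimate precisely because $\match{\G}{e'}{\Delta}$ holds) at $C = \cv{t} \cap \dom{\G}$. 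Reassembling with \ruleref{let} and \ruleref{sub} gives $\typ{\G}{\evctx{s'}}{T}$. This case is clean exactly because the hole sits at the $\G$ level, so the two sides of the separation check both intersect $\dom{\G}$.

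The body-hole case $e = \tLetMode{\parmode}{x}{t}{e'}$ is where the work concentrates. Here \ruleref{ev-let-2} sets $\Delta = (x :_{\set{}} T_t), \Delta'$ with $\typ{\G}{t}{T_t}$ and $\match{\G, x :_{\set{}} T_t}{e'}{\Delta'}$, and let-inversion gives $\typ{\G, x :_{\set{}} T_t}{\evpctx{s}}{U'}$, $x \notin \fv{U'}$, $\sub{\G}{U'}{T}$, and the outer check $\ninter{\G}{t}{\evpctx{s}}$. I apply the IH to $e'$ under the extended context to obtain $\Delta', U$ and $\typec{\G, x :_{\set{}} T_t}{e'}{\Delta'}{U}{s}{U'}$, then prepend $x :_{\set{}} T_t$ to form the $\Delta$ for $e$; $\typ{\G, \Delta}{s}{U}$ transfers directly. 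Feeding the IH's inversion for a fresher $s'$ needs its freshness premise under the shifted boundary, $\subi{\G, x :_{\set{}} T_t; \Delta'}{s'}{s}$, which I extract from the hypothesis $\subi{\G; (x :_{\set{}} T_t), \Delta'}{s'}{s}$ by Lemma~\ref{lemma:subi-shift-boundary}; this yields $\typ{\G, x :_{\set{}} T_t}{\evpctx{s'}}{U'}$, and the body typing and $x \notin \fv{U'}$ carry over.

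The crux — and the step I expect to be the main obstacle — is re-establishing the outer parallel separation premise $\ninter{\G}{t}{\evpctx{s'}}$ demanded by \ruleref{let}. Intuitively it must hold, since a reduction inside the body cannot create new \emph{external} conflicts with $t$, and $\subi{\G; \Delta}{s'}{s}$ is exactly the invariant recording that reductions preserve separation; formally I would derive it from the freshness hypothesis together with the structural separation lemmas (\ruleref{ni-symm}, Lemma~\ref{lemma:ninter-to-elem}, Corollary~\ref{coro:subset-pres-ninter}, and weakening/strengthening of $\ninter{\cdot}{\cdot}{\cdot}$ across an unused binder), starting from the given $\ninter{\G}{t}{\evpctx{s}}$. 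The delicate point is the bookkeeping around the bound variable $x$: this check is taken in $\G$ and must exclude $x$, because $x$ aliases the binding $t$ and hence is \emph{not} separated from $t$ itself, whereas the only freshness instance usable for the body context $e'$ is stated under $\G, x :_{\set{}} T_t$, whose $\dom$-intersection may reintroduce $x$. The argument therefore has to be routed so that $x$ never appears on either side of the separation judgments it manipulates — keeping it confined to the harmless future-value position — and it is this accounting that the case hinges on. Once $\ninter{\G}{t}{\evpctx{s'}}$ is secured, re-applying \ruleref{let} and then \ruleref{sub} with $\sub{\G}{U'}{T}$ closes the case and completes the induction.
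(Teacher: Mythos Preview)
Your proposal is correct and follows essentially the same route as the paper: induction on $e$, with the three cases handled via \ruleref{ev-empty}, \ruleref{ev-let-1}, \ruleref{ev-let-2}, and the body-hole separation re-established by instantiating the freshness hypothesis at the split $\Delta_1 = x :_{\set{}} T_t$, $\Delta_2 = \Delta'$, $e'' = e'$ (giving $\ninter{\G, x :_{\set{}} T_t}{\cv{t}}{\cv{\evpctx{s'}}}$) and then strengthening to drop $x$ and recover $\ninter{\G}{t}{\evpctx{s'}}$. Your diagnosis of the crux and the tools you name (boundary-shifting via Lemma~\ref{lemma:subi-shift-boundary}, weakening to lift the original separation into $\G, x$, then strengthening back) are exactly what the paper does; the only cosmetic difference is that the paper absorbs trailing \ruleref{sub} into the body typing during inversion rather than carrying an explicit $\sub{\G}{U'}{T}$.
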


\begin{proof}
	By induction on $e$.

  \emph{Case $e = []$}.
  Set $\Delta = \varemptyset$ and $U = T$.
  This case can be concluded immediately.

  \emph{Case $e = \tLetMode{m}{x}{e^\prime}{u}$}.
  Then $\typ{\G}{\tLetMode{m}{x}{\evpctx{s}}{u}}{T}$.
  By induction on this typing derivation.
  The sub-goal is that $\typ{\G}{\evpctx{s}}{T_0}$ for some $T_0$,
  $\typ{\G, x :_{\set{}} T_0}{u}{T}$,
  $x \notin \fv{T}$,
  and $\ninter{\G}{\evpctx{s}}{u}$ if $m = \parmode$.
  \begin{itemize}
  \item \emph{Case \ruleref{let}}.
    Conclude immediately from the premises.

  \item \emph{Case \ruleref{sub}}.
    In this case, $\typ{\G}{\tLetMode{\parmode}{x}{\evpctx{s}}{u}}{T^\prime}$ for some $T^\prime$,
    and $\sub{\G}{T^\prime}{T}$.
    We conclude this case by IH and \ruleref{sub}.

  \item \emph{Other cases.}
    Not applicable.
  \end{itemize}
  Now we invoke IH on $\typ{\G}{\evpctx{s}}{T_0}$
  and show that
  $\exists \Delta_0, U_0$ such that
  $\typec{\G}{e^\prime}{\Delta_0}{U_0}{s}{T_0}$.
  Now we set $\Delta = \Delta_0$ and $U = U_0$.
  First, we have $\match{\G}{e}{\Delta}$ by \ruleref{ev-let-1}.
  Given $s^\prime$ such that
  $\typ{\G, \Delta}{s^\prime}{U}$
  and $\subi{\G; \Delta}{s^\prime}{s}$,
  we have $\typ{\G}{\evpctx{s^\prime}}{T_0}$ by IH.
  Also, since we have $\match{\G}{e^\prime}{\Delta}$,
  from $\subi{\G; \Delta}{s^\prime}{s}$
  we can show that $\ninter{\G}{\evpctx{s^\prime}}{u}$.
  We can conclude this case by \ruleref{let}.

  \emph{Case $e = \tLetMode{\parmode}{x}{t}{e^\prime}$}.
  Then $\typ{\G}{\tLetMode{\parmode}{x}{t}{\evpctx{s}}}{T}$.
  By a similar induction on this typing derivation, we can show that
  $\typ{\G}{t}{T_0}$ for some $T_0$,
  $\typ{\G, x :_{\set{}} T_0}{\evpctx{s}}{T}$,
  $x \notin \fv{T}$,
  and $\ninter{\G}{t}{\evpctx{s}}$.
  Now we invoke IH to show that
  $\exists \Delta_0, U_0$ such that
  $\typec{\G, x :_{\set{}} T_0}{e^\prime}{\Delta_0}{U_0}{s}{T}$,
  and $\typ{\G, x :_{\set{}} T_0, \Delta_0}{s}{U_0}$.
  Set $\Delta = x :_{\set{}} T_0, \Delta_0$ and $U = U_0$.
  We have $\match{\G}{e}{x :_{\set{}} T_0, \Delta_0}$ by \ruleref{ev-let-2}.
  $\forall s^\prime$ such that
  $\typ{\G, \Delta}{s^\prime}{U}$
  and $\subi{\G; \Delta}{s^\prime}{s}$.
  the goal is to show that
  $\typ{\G}{\tLetMode{\parmode}{x}{t}{\evpctx{s^\prime}}}{T}$.
  By Lemma \ref{lemma:subi-shift-boundary},
  we have $\subi{\G, x :_{\set{}} T_0; \Delta_0}{s^\prime}{s}$.
  By IH we can show that
  $\typ{\G, x :_{\set{}} T_0}{\evpctx{s^\prime}}{T}$.
  From $\subi{\G; \Delta}{s^\prime}{s}$
  we can show that
  $\ninter{\G, x :_{\set{}} T_0}{\cv{t}}{\cv{\evpctx{s^\prime}}}$.
  By strengthening, we show that
  $\ninter{\G}{\cv{t}}{\cv{\evpctx{s^\prime}}\setminus \set{x}}$,
  which is the same as
  $\ninter{\G}{t}{\evpctx{s^\prime}}$.
  This case is therefore concluded by \ruleref{let}.
\end{proof}

\begin{lemma}[Weakening of evaluation context inversion]
  \label{lemma:evctx-typing-weakening}
  $\typec{\G}{e}{\Delta}{U}{s}{T}$
  implies
  $\typec{\G, \Delta}{e}{\Delta}{U}{s}{T}$.
\end{lemma}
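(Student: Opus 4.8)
The plan is to discharge the claim directly from Definition~\ref{defn:evctx-inversion}: writing $\G^{+}$ for the environment obtained by weakening $\G$, I must re-establish the two clauses of $\typec{\G^{+}}{e}{\Delta}{U}{s}{T}$, namely (i) $\match{\G^{+}}{e}{\Delta}$ and (ii) that for every plug-in term $s'$ with $\typ{\G^{+}, \Delta}{s'}{U}$ and $\subi{\G^{+}; \Delta}{s'}{s}$ one has $\typ{\G^{+}}{\evctx{s'}}{T}$.

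Clause (i) is routine: $\match{\G}{e}{\Delta}$ weakens to $\match{\G^{+}}{e}{\Delta}$ by induction on $e$, the only interesting case being \ruleref{ev-let-2}, where the stored premise $\typ{\G}{s}{T}$ is lifted by the Weakening lemma for term typing.

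Clause (ii) is the substance of the lemma, and the obstacle is that it cannot be obtained by using $\typec{\G}{e}{\Delta}{U}{s}{T}$ as a black box. Its plugging condition is an implication universally quantified over plug-in terms, so enlarging the base environment enlarges the set of admissible $s'$; weakening only the conclusion of the original implication is therefore not enough. Instead I would re-run the structural induction on $e$ from Lemma~\ref{lemma:evctx-typing-inversion}, carrying the extension through each case. In the base case $e = []$ the premise forces $\Delta = \varemptyset$ and (as in the matching case of Lemma~\ref{lemma:evctx-typing-inversion}) $U = T$, so the goal reduces to $\typ{\G^{+}}{s'}{T}$, which is exactly the hypothesis on $s'$. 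In the \ruleref{ev-let-1} and \ruleref{ev-let-2} cases I would decompose $e = \textsf{let}_m\,x = e'\,\textsf{in}\,u$ (resp.\ the symmetric parallel shape), apply the induction hypothesis to $e'$, and reassemble the binding with \ruleref{let}.

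The hard part will be the bookkeeping for the freshness relation $\subi{}{}{}$ that feeds the separation premise of \ruleref{let} in the parallel-let case. After moving to $\G^{+}$ I must show that the freshness hypothesis $\subi{\G^{+}; \Delta}{s'}{s}$ still yields the separation check $\ninter{\G^{+}}{t}{\evctx{s'}}$ needed to re-apply \ruleref{let}. For this I expect to use the boundary-shifting lemma (Lemma~\ref{lemma:subi-shift-boundary}) to relocate the split between the base bindings and the evaluation-context bindings, Lemma~\ref{lemma:evctx-reification-subcapt} to propagate the freshness inward through $e'$, and strengthening of separation to discharge the bound variable of the let. Once these preservation properties of $\subi$ are in place, each reconstructed typing is lifted from $\G$ to $\G^{+}$ by the Weakening lemma, closing the induction.
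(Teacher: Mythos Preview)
The paper dispatches this in one line: ``By weakening of environment matching, typing and separation.'' Your proposal is far more involved, hinging on the claim that clause~(ii) cannot be obtained black-box because enlarging $\G$ enlarges the set of admissible plug-in terms~$s'$. That worry would be legitimate for an arbitrary extension~$\G^{+}$, but the lemma as stated weakens $\G$ by exactly~$\Delta$. The typing hypothesis on~$s'$ in the target clause is then $\typ{(\G,\Delta),\Delta}{s'}{U}$, which under the paper's standing well-formedness convention collapses to $\typ{\G,\Delta}{s'}{U}$---identical to the original premise. So the set of admissible~$s'$ does \emph{not} grow; one applies the original clause~(ii) directly and then weakens the conclusion $\typ{\G}{\evctx{s'}}{T}$ to $\typ{\G,\Delta}{\evctx{s'}}{T}$. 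The freshness premise is handled in the same spirit, which is what the paper's ``weakening of separation'' is gesturing at. This bypasses any structural induction on~$e$.

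Separately, your induction plan has a concrete gap even at the base case. You assert that $e=[]$ forces $U=T$, but the hypothesis $\typec{\G}{[]}{\varemptyset}{U}{s}{T}$ does not say this: it only says that every $s'$ of type~$U$ (satisfying freshness) also has type~$T$. In Lemma~\ref{lemma:evctx-typing-inversion} the type~$U$ was \emph{chosen} to be~$T$ as an existential witness; here $U$ is given, and nothing prevents it from being, say, a strict subtype of~$T$. So the goal $\typ{\G^{+}}{s'}{T}$ does not reduce to the hypothesis $\typ{\G^{+}}{s'}{U}$, and your base case does not close as written.
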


\begin{proof}
	By weakening of environment matching, typing and separation.
\end{proof}

\begin{lemma}[Downgrading separation degree preserves subcapturing]
  \label{lemma:downgrading-sepdegree-pres-subcapt}
	Given $\G = \G_1, x :_{\dom{\G_1}} \tCap{\set{\univ{}}}{S}, \G_2$,
  and $y :_{D} T \in \G_1$ where $\univ{} \notin \cs{T}$,
  $\sub{\G}{T_1}{T_2}$
  implies $\sub{\G^\prime}{T_1}{T_2}$
  where $\G^\prime = \G_1, x :_{\dom{\G_1}/y} \tCap{\set{\univ{}}}{S}, \G_2$.
\end{lemma}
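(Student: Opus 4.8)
The plan is to prove the statement by a simultaneous induction on the derivations of subtyping $\sub{\G}{T_1}{T_2}$, subcapturing, and the reader check $\isrdrtype{\cdot}{\cdot}$, generalizing over the suffix $\G_2$ (treating it as an arbitrary well-formed extension) so that the induction hypothesis survives the environment extensions performed by \ruleref{fun} and \ruleref{tfun}. The guiding observation is that $\G$ and $\G^\prime$ differ \emph{only} in the separation degree recorded for $x$: every type, every type-variable bound, and every capture set stored in the two environments is identical, since $x$'s type stays $\tCap{\set{\univ}}{S}$. Because separation degrees are consulted by no subcapturing or subtyping rule — they enter the system solely through \ruleref{ni-degree} of separation checking, which plays no role here — each rule application in the given derivation can be replayed verbatim over $\G^\prime$. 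This mirrors the companion Lemma~\ref{lemma:sepdegree-expansion}, where only the \ruleref{ni-degree} case ever touched the degrees.

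Concretely, the cases split into those with no environment lookup (\ruleref{refl}, \ruleref{top}, \ruleref{sc-elem}, \ruleref{sc-set}, \ruleref{sc-rdr-cap}, \ruleref{rd-reader}), which are immediate, and those that look something up. For \ruleref{tvar} and \ruleref{rd-tvar} the lookup is a type-variable bound $X <: S$, unchanged between $\G$ and $\G^\prime$, so the same rule reapplies. For \ruleref{sc-var} the lookup returns the \emph{type} $\tCap{C}{S}$ of a variable; this too is unchanged — even when that variable is $x$ itself — so the premise transfers and we re-derive by \ruleref{sc-var}. The transitivity and structural cases (\ruleref{trans}, \ruleref{sc-trans}, \ruleref{capt}, \ruleref{boxed}) follow from the induction hypotheses on their premises, with \ruleref{capt} invoking the subcapturing component of the mutual induction. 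For \ruleref{fun} and \ruleref{tfun}, the recursive subtyping premise is derived in $\G$ extended by a binder drawn from the function type; since the degree of that new binder comes from the type and not from $x$, the extended environments $\G,\, z :_{D'} U$ and $\G^\prime,\, z :_{D'} U$ again differ only in $x$'s degree, and the suffix-general induction hypothesis applies directly.

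The only case that superficially reaches back into subtyping is \ruleref{sc-reader}, whose premise is a reader check $\isrdr{\G}{x'}$. The clean way to handle it is to route the reader check through the derivation-based judgment $\isrdrtype{\cdot}{\cdot}$ together with Lemma~\ref{lemma:is-rdr-eqv}: $\isrdrtype{\G}{T}$ depends only on $x'$'s type and on type-variable bounds, never on separation degrees, so it is preserved by the same induction, and the equivalence lets us re-establish $\isrdr{\G^\prime}{x'}$ and conclude by \ruleref{sc-reader}. I expect this interplay — keeping the mutual recursion among subtyping, subcapturing, and the reader check well-founded as it crosses the \ruleref{sc-reader}/\ruleref{capt} boundary — to be the only genuine bookkeeping obstacle; everything else is a verbatim replay. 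Finally, note that the hypotheses $y :_D T \in \G_1$ and $\univ \notin \cs{T}$ are not exercised by the subcapturing/subtyping argument itself (they constrain the companion separation-preservation argument in which this lemma is used): the proof goes through for any change of $x$'s degree that keeps $\G^\prime$ well-formed, which holds here since $\dom{\G_1}\setminus y \subseteq \dom{\G_1}$.
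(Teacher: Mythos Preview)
Your proposal is correct and matches the paper's approach: the paper's proof is the one-liner ``By straightforward induction on the derivation, wherein no rule makes use of the separation degrees in the context,'' which is exactly the observation driving your argument. Your write-up is considerably more explicit than the paper's --- in particular, you spell out the handling of the \ruleref{sc-reader}/\textsf{is-reader} circularity via $\isrdrtype{\cdot}{\cdot}$ and Lemma~\ref{lemma:is-rdr-eqv}, and you make the suffix-generalization for \ruleref{fun}/\ruleref{tfun} explicit --- but the underlying idea is identical. (Note also that the paper states this lemma with type metavariables $T_1, T_2$ but names it ``preserves subcapturing'' and uses it only for subcapturing premises; your mutual-induction treatment covers both readings regardless.)
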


\begin{proof}
	By straightfoward induction on the derivation,
  wherein no rule makes use of the separation degrees in the context.
\end{proof}

\begin{lemma}[Downgrading separation degree preserves separation]
  \label{lemma:downgrading-sepdegree-pres-ninter}
	Given $\G = \G_1, x :_{\dom{\G_1}} \tCap{\set{\univ{}}}{S}, \G_2$,
  and $y :_{D} T \in \G_1$ where $\univ{} \notin \cs{T}$,
  $\ninter{\G}{C_1}{C_2}$
  implies $\ninter{\G^\prime}{C_1}{C_2}$
  where $\G^\prime = \G_1, x :_{\dom{\G_1}/y} \tCap{\set{\univ{}}}{S}, \G_2$.
\end{lemma}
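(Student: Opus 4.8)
The plan is to induct on the derivation of $\ninter{\G}{C_1}{C_2}$, exploiting that the passage from $\G$ to $\G^\prime$ changes nothing except shrinking the separation degree of the single binding $x$ by the one element $y$. Among the separation rules, only \ruleref{ni-degree} actually inspects separation degrees, so for every other rule the step is routine: \ruleref{ni-symm}, \ruleref{ni-set}, and \ruleref{ni-var} go through by the induction hypothesis followed by re-application of the same rule (note that the \emph{type} component of every binding, in particular the capture set $\set{\univ}$ of $x$, is untouched, so \ruleref{ni-var} still fires), and for \ruleref{ni-reader} the two subcapturing premises transport from $\G$ to $\G^\prime$ by the companion Lemma \ref{lemma:downgrading-sepdegree-pres-subcapt}.

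The only genuine case is \ruleref{ni-degree} instantiated at $x$'s binding with the removed element, i.e. a sub-derivation concluding $\ninter{\G}{\set{x}}{\set{y}}$ from $y \in \dom{\G_1}$; every other instance of \ruleref{ni-degree} still picks an element of $x$'s surviving degree $\dom{\G_1}\setminus\set{y}$ (or uses an unrelated binding), and so re-applies verbatim. To rebuild $\ninter{\G^\prime}{\set{x}}{\set{y}}$ I would invoke the hypotheses on $y$. Writing $T = S^\prime\capt\cs{T}$, well-formedness places $\cs{T}$ strictly before $y$, so every ordinary (non-root) member $z$ of $\cs{T}$ satisfies $z \in \dom{\G_1}\setminus\set{y}$, i.e. $z$ still lies in $x$'s separation degree; hence \ruleref{ni-degree} gives $\ninter{\G^\prime}{\set{x}}{\set{z}}$, and collecting these with \ruleref{ni-symm} and \ruleref{ni-set} yields $\ninter{\G^\prime}{\cs{T}}{\set{x}}$. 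Applying \ruleref{ni-var} to $y$'s binding then lifts this to $\ninter{\G^\prime}{\set{y}}{\set{x}}$, and \ruleref{ni-symm} closes the case. The hypothesis $\univ \notin \cs{T}$ is exactly what keeps $\cs{T}$ made of such ``real'' variables rather than the universal root, so that narrowing $y$ down to its capture set stays within the information $x$ still carries.

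I expect the main obstacle to be the \emph{reader root} $\rdroot$: if $\rdroot \in \cs{T}$ (as happens, e.g., for a read-only closure whose capture set is $\set{\rdroot}$), that member cannot sit in any separation degree, so the \ruleref{ni-degree} step above is unavailable for it, and $\ninter{\G^\prime}{\set{\rdroot}}{\set{x}}$ is not derivable when $x$ is a full (non-reader) reference. Discharging this corner therefore needs more than the stated hypothesis alone — either an appeal to inertness and the mutable-reference shape of $x$ to rule out $\rdroot \in \cs{T}$ in the situations where the lemma is actually invoked, or a strengthening paralleling the side condition ``$\sub{\G}{\set{x}}{\set{\rdroot}}$ does not hold'' of the widening lemma \ref{lemma:widening-pres-ninter}, from which the $\rdroot$ sub-case would be handled uniformly. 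Reconciling this is where I would focus the real effort; everything else is bookkeeping driven by the induction hypothesis and Lemma \ref{lemma:downgrading-sepdegree-pres-subcapt}.
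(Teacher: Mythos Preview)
Your proof follows the paper's exactly: induction on the separation derivation, with \ruleref{ni-degree} at the pair $(x,y)$ as the only nontrivial case, discharged by passing to $\cs{T}$ via \ruleref{ni-var} on $y$'s binding and then re-establishing separation from $x$ for each member of $\cs{T}$ using \ruleref{ni-degree} (since those members lie strictly before $y$ in $\G_1$ and hence remain in $x$'s reduced degree). The paper's proof handles this step by asserting $\cs{T}\subseteq\dom{\G}$ from well-formedness together with $\univ\notin\cs{T}$, thereby sidestepping the $\rdroot$ corner you flag; your analysis of that corner is more scrupulous than the paper's own sketch, which does not address it explicitly.
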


\begin{proof}
  By induction on the separation derivation.

  \emph{Case \ruleref{ni-symm}}. By IH and \ruleref{ni-symm} again.

  \emph{Case \ruleref{ni-set}}.
  Then $C_1 = \set{x_i}_{i=1,\cdots,n}$,
  and $\overline{\ninter{\G}{x_i}{C_2}}^{i=1,\cdots,n}$.
  By repeated IH, we have
  $\overline{\ninter{\G^\prime}{x_i}{C_2}}^{i=1,\cdots,n}$.
  This case is therefore concluded by \ruleref{ni-set}.

  \emph{Case \ruleref{ni-degree}}.
  Then $C_1 = \set{z_1}$,
  $C_2 = \set{z_2}$,
  $z_1 :_{D_1} T_1 \in \G$,
  and $z_2 \in D_1$.
  If $z_1 = x$ and $z_2 = y$,
  Since by the well-formedness of the environment and that $\univ{} \notin \cs{T}$,
  we have $\cs{T} \subseteq \dom{\G}$.
  We can therefore show that $\ninter{\G}{x}{\cs{T}}$
  by \ruleref{ni-var} and \ruleref{ni-set}.
  Otherwise the goal follows directly from the preconditions.

  \emph{Case \ruleref{ni-var}}.
  By applying the IH and the same rule.

  \emph{Case \ruleref{ni-reader}}.
  By applying Lemma \ref{lemma:downgrading-sepdegree-pres-subcapt} and the same rule.
\end{proof}

\begin{lemma}[Downgrading separation degree preserves subtyping]
  \label{lemma:downgrading-sepdegree-pres-sub}
	Given $\G = \G_1, x :_{\dom{\G_1}} \tCap{\set{\univ{}}}{S}, \G_2$,
  and $y :_{D} T \in \G_1$ where $\univ{}\notin \cs{T}$,
  $\sub{\G}{T}{U}$
  implies $\sub{\G^\prime}{T}{U}$
  where $\G^\prime = \G_1, x :_{\dom{\G_1}/y} \tCap{\set{\univ{}}}{S}, \G_2$.
\end{lemma}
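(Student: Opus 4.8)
The plan is to mirror the structure of the two preceding downgrading lemmas and proceed by induction on the derivation of $\sub{\G}{T}{U}$. The key observation is that, unlike the separation judgment, the subtyping rules never inspect separation degrees directly: the context is consulted only through type-variable bounds (in \ruleref{tvar}), through the subcapturing premise (in \ruleref{capt}), and through context extensions in the higher-order cases (\ruleref{fun} and \ruleref{tfun}). Since downgrading only shrinks the separation degree of the single binding $x$ from $\dom{\G_1}$ to $\dom{\G_1}/y$ and leaves every type binding, every type-variable bound, and every type itself untouched, most cases carry over essentially verbatim, and the side condition $\univ \notin \cs{T}$ is merely inherited from the setup and not actually exercised by the subtyping argument.

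Concretely, I would dispatch the cases as follows. The \ruleref{refl} and \ruleref{top} cases are immediate, since the conclusion does not change. The \ruleref{trans} and \ruleref{boxed} cases follow by applying the induction hypothesis to the premises and re-applying the same rule. For \ruleref{tvar}, the premise $X <: S \in \G$ survives the downgrading unchanged, since we alter a term binding's separation degree and not the type-variable bindings, so the same rule applies in $\G'$. The only case that appeals to an external result is \ruleref{capt}: there the premises are $\sub{\G}{C_1}{C_2}$ and $\sub{\G}{S_1}{S_2}$, and I would discharge the subcapturing premise using Lemma \ref{lemma:downgrading-sepdegree-pres-subcapt} and the shape-subtyping premise using the induction hypothesis, then conclude by \ruleref{capt}.

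The two higher-order cases, \ruleref{fun} and \ruleref{tfun}, require only a small bookkeeping point: their premises are derived under an extended context, e.g. $\sub{\G, z :_{D_0} U_2}{T_1}{T_2}$ for \ruleref{fun}, with the contravariant premise $\sub{\G}{U_2}{U_1}$ handled by the induction hypothesis directly. By the Barendregt convention the freshly bound variable $z$ is distinct from $x$ and $y$, so the extension can be folded into the suffix, and the induction hypothesis applies to the decomposition $\G_1, x :_{\dom{\G_1}} \tCap{\set{\univ}}{S}, (\G_2, z :_{D_0} U_2)$, yielding the premise over $\G', z :_{D_0} U_2$. Re-applying the rule then closes the case, and \ruleref{tfun} is analogous with a type-variable binding.

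I do not expect a genuine obstacle here; the statement is essentially a corollary of the fact that subtyping is insensitive to separation degrees, with the single nontrivial dependency being the subcapturing premise already handled by Lemma \ref{lemma:downgrading-sepdegree-pres-subcapt}. The main thing to be careful about is the \ruleref{capt} case, where one must invoke the subcapturing lemma with exactly the same decomposition of $\G$ (the same $\G_1$, $x$, $y$, $\G_2$) as in the subtyping statement, together with the routine re-indexing of the context suffix in the \ruleref{fun} and \ruleref{tfun} cases.
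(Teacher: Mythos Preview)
Your proposal is correct and follows the same approach as the paper, which simply notes that no subtyping rule inspects separation degrees and proceeds by straightforward induction. Your explicit appeal to Lemma~\ref{lemma:downgrading-sepdegree-pres-subcapt} in the \ruleref{capt} case makes precise what the paper leaves implicit, and your handling of the context-extending rules \ruleref{fun} and \ruleref{tfun} is the expected bookkeeping.
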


\begin{proof}
	By straightforward induction on the subtyping derivation.
  No rule makes use of the separation degree on the bindings.
\end{proof}

\begin{lemma}[Downgrading separation degree preserves typing]
  \label{lemma:downgrading-sepdegree-pres-typing}
	Given $\G = \G_1, x :_{\dom{\G_1}} \tCap{\set{\univ{}}}{S}, \G_2$,
  and $y :_{D} T \in \G_1$ where $\univ{}\notin \cs{T}$,
  $\typ{\G}{t}{T}$
  implies $\typ{\G^\prime}{t}{T}$
  where $\G^\prime = \G_1, x :_{\dom{\G_1}/y} \tCap{\set{\univ{}}}{S}, \G_2$.
\end{lemma}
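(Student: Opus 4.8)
The plan is to proceed by straightforward induction on the derivation of $\typ{\G}{t}{T}$, mirroring the three preceding downgrading lemmas and reusing them at exactly the points where the typing rules appeal to subtyping or separation checking. The guiding observation is that no typing rule inspects the separation degree recorded on a binding directly: separation degrees enter a typing derivation only through the side conditions $\ninter{\G}{y}{D}$ in \ruleref{app} and $\ninter{\G}{s}{t}$ in \ruleref{let}, and through the subtyping premise of \ruleref{sub}. Moreover, $\dom{\G} = \dom{\G^\prime}$ since only the decoration on $x$ changes, so every well-formedness side condition ($\wfTyp{\G}{U}$, $\wfTyp{\G}{D}$, $\wfTyp{\G}{S}$) transfers verbatim to $\G^\prime$.

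Concretely, I would dispatch the cases as follows. For \ruleref{sub}, I reconstruct the subtyping premise under $\G^\prime$ with Lemma \ref{lemma:downgrading-sepdegree-pres-sub} and re-apply the rule. For \ruleref{app} and the parallel instance of \ruleref{let}, I re-derive the separation side condition under $\G^\prime$ with Lemma \ref{lemma:downgrading-sepdegree-pres-ninter}; this is the one place where the hypotheses $y :_{D} T \in \G_1$ and $\univ{} \notin \cs{T}$ are genuinely consumed, since that lemma is precisely what re-proves the separation judgments that had relied on $y$ sitting inside $x$'s separation degree. All remaining cases (\ruleref{var}, \ruleref{tapp}, \ruleref{box}, \ruleref{unbox}, \ruleref{reader}, \ruleref{read}, \ruleref{write}, and the sequential instance of \ruleref{let}) follow immediately from the induction hypotheses and the same rule, because their premises are either pure lookups or recursive typing judgments that carry no separation-degree obligation.

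The only point requiring care is the set of rules that extend the context — \ruleref{abs}, \ruleref{tabs}, \ruleref{dvar}, and both instances of \ruleref{let}. When such a rule adds a binding $z :_{D'} V$ (or $X <: S'$), the sub-derivation is typed under $\G_1, x :_{\dom{\G_1}} \tCap{\set{\univ}}{S}, \G_2, z :_{D'} V$. Since the new binding is appended to the right of $x$, this environment still has exactly the shape required by the statement, with $\G_2$ replaced by $\G_2, z :_{D'} V$ and with $\G_1$ and $y$ unchanged; hence the induction hypothesis applies directly and downgrades $x$'s separation degree in the enlarged context. Reassembling with the original rule, appealing to permutation where needed to keep the environment well-formed, closes each such case.

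I expect no substantive obstacle here. All of the real work — in particular the \ruleref{ni-degree} reasoning that re-establishes separation after $y$ is struck from $x$'s degree, using $\cs{T} \subseteq \dom{\G_1}$ and $\ninter{\G}{x}{\cs{T}}$ — has already been isolated in Lemma \ref{lemma:downgrading-sepdegree-pres-ninter}. Consequently this lemma is purely a matter of threading that result, together with its subtyping counterpart, through the typing rules while preserving the contextual decomposition under binders.
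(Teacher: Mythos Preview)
The proposal is correct and follows essentially the same route as the paper: induction on the typing derivation, invoking Lemma~\ref{lemma:downgrading-sepdegree-pres-sub} in the \ruleref{sub} case and Lemma~\ref{lemma:downgrading-sepdegree-pres-ninter} in the \ruleref{app} and \ruleref{let} cases, with all remaining cases handled by the induction hypothesis and the same rule. Your treatment of the binder-extending cases is sound and more explicit than the paper's terse ``IH and the same rule''; the appeal to permutation there is unnecessary, however, since the new binding is appended on the right and the environment already has the required shape.
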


\begin{proof}
	By induction on the typing derivation.

  \emph{Case \ruleref{var}}.
  By the precondition and the same rule.

  \emph{Case \ruleref{sub}}.
  By IH, Lemma \ref{lemma:downgrading-sepdegree-pres-sub}, and the same rule.

  \emph{Case \ruleref{app} and \ruleref{let}}.
  By IH, Lemma \ref{lemma:downgrading-sepdegree-pres-ninter} and the same rule.

  \emph{Other cases}.
  By IH and the same rule.
\end{proof}

\begin{lemma}[Evaluation context inversion and reification for mutable variables]
  \label{lemma:evctx-typing-inversion-mut}
  $\typ{\G}{\evctx{\tLetVar{x}{y}{s}}}{T}$
  implies
  $\exists \Delta, S, U$ such that
  (i) $\match{\G}{e}{\Delta}$,
  (ii) $\typ{\G, \Delta}{y}{S}$,
  (iii) $\typ{\G, x :_{\dom{\G}} \tRef{S}\capt \set{\univ{}}, \Delta}{s}{U}$,
  and (iv) $\typ{\G, x :_{\dom{\G}} \tRef{S}\capt \set{\univ{}}}{\evctx{s}}{T}$.
\end{lemma}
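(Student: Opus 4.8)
The plan is to peel the term apart with evaluation context inversion, invert the \ruleref{dvar} rule at the hole, and then relocate the freshly-bound mutable variable to the front of the context. First I would apply Lemma~\ref{lemma:evctx-typing-inversion} to $\typ{\G}{\evctx{\tLetVarM{D}{x}{y}{s}}}{T}$, obtaining some $\Delta$ and $U_0$ with $\typec{\G}{e}{\Delta}{U_0}{\tLetVarM{D}{x}{y}{s}}{T}$ (hence $\match{\G}{e}{\Delta}$, which is~(i)) and $\typ{\G,\Delta}{\tLetVarM{D}{x}{y}{s}}{U_0}$. Inverting this last judgment --- a routine sub-induction that absorbs leading uses of \ruleref{sub} --- produces a shape type $S$ with $\typ{\G,\Delta}{y}{S}$, which is~(ii), together with $\typ{\G,\Delta,x:_{D}\tRef{S}\capt\set{\univ}}{s}{U_0}$; folding any residual subsumption into the body via weakening and \ruleref{sub} lets me take $U=U_0$.

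The core of the proof is turning this body judgment, where $x$ sits at the tail of the context with its declared degree $D$, into~(iii), where $x$ sits directly after $\G$ with degree $\dom{\G}$. Since the variable is fresh, I would first widen its separation degree to the whole prefix using Lemma~\ref{lemma:sepdegree-expansion}, then strip the $\Delta$-part of the degree one binding at a time with Lemma~\ref{lemma:downgrading-sepdegree-pres-typing} (and its separation companion, Lemma~\ref{lemma:downgrading-sepdegree-pres-ninter}), and finally commute $x$ in front of $\Delta$ with Lemma~\ref{lemma:permutation}. Downgrading is sound precisely because the bindings of $\Delta$ are inert and their capture sets are anchored in $\G$ --- they arise from \ruleref{ev-let-2}, whose right-hand sides are typed in the outer context --- so any separation of $x$ from a variable $w$ that $D$ supplied can be recovered through \ruleref{ni-var} and \ruleref{ni-degree} once $x$ carries the degree $\dom{\G}$. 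I also have to check that the relocated binding is well-formed under $\G$, i.e.\ that $S$ can be chosen well-formed in $\G$; this uses that mutable contents are pure shape types and that the initial value $y$ ultimately lives in the store $\G$.

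With~(iii) in hand I would derive~(iv) by rebuilding the evaluation-context typing over the extended context. Weakening $\match{\G}{e}{\Delta}$ by the fresh binding gives $\match{\G,x:_{\dom{\G}}\tRef{S}\capt\set{\univ}}{e}{\Delta}$ (weakening extends to evaluation-context typing because its rules only appeal to typing and separation judgments), and re-running the reconstruction argument from the proof of Lemma~\ref{lemma:evctx-typing-inversion} --- every ingredient of which, the per-let typings, the $x\notin\fv{U}$ side conditions, and the parallel separation checks, is stable under weakening by a fresh variable --- yields $\typ{\G,x:_{\dom{\G}}\tRef{S}\capt\set{\univ}}{\evctx{s}}{T}$.

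I expect the relocation step to be the main obstacle: shrinking $x$'s degree to $\dom{\G}$ while moving it past $\Delta$ must not invalidate any separation check that the typing of $s$ depends on. The delicate case is a binding $w\in\Delta$ that $D$ mentions but whose capture set is not grounded in $\G$ (for instance one whose type captures $\univ$), for which the \ruleref{ni-var} recovery route is unavailable; discharging it requires leaning on the inertness of $\Delta$ and the discipline imposed by \ruleref{ev-let-2}. The accompanying well-formedness bookkeeping --- pinning down a $\G$-grounded choice of $S$ despite the invariance of $\tRef{S}$ --- is the secondary, more tedious, difficulty.
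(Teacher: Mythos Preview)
Your approach and the paper's differ structurally. The paper does not factor through Lemma~\ref{lemma:evctx-typing-inversion} at all; it proves the result by direct induction on $e$. Each inductive step peels off one layer of the evaluation context, applies the IH to the inner context, and then rebuilds the outermost \ruleref{let} by (a) weakening the unchanged branch, (b) using part~(iv) of the IH on the modified branch, and (c) re-establishing the parallel separation check by hand. The key observation for~(c) is that replacing $\evpctx{\tLetVar{x}{y}{s}}$ by $\evpctx{s}$ can introduce at most the new free variable $x$ into the cv, and $x$ is separated from the other branch --- whose cv is contained in $\dom{\G}$ --- by \ruleref{ni-degree}, since $x$ carries degree $\dom{\G}$. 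In the $\tLetMode{\parmode}{z}{t}{e'}$ case the IH is applied at $\G,z$, and the single downgrading-plus-permutation (Lemmas~\ref{lemma:downgrading-sepdegree-pres-typing} and~\ref{lemma:permutation}) moves $x$ in front of $z$; there is no wholesale relocation of $x$ past an entire $\Delta$.

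The gap in your plan is at step~(iv). The reification clause that Lemma~\ref{lemma:evctx-typing-inversion} hands you, $\typec{\G}{e}{\Delta}{U_0}{\tLetVarM{D}{x}{y}{s}}{T}$, lets you replace the hole-term by any $s'$ that is typeable in $\G,\Delta$ and is \emph{fresher than} $\tLetVarM{D}{x}{y}{s}$. The body $s$ fails both conditions: it mentions the freshly-bound $x$, so it is not typeable in $\G,\Delta$; and even after weakening the reification to $\G,x,\Delta$, the relation $\subi{\G,x;\Delta}{s}{\tLetVarM{D}{x}{y}{s}}$ does not hold, because $\cv{s}$ may contain $x$ while fresher-than quantifies over \emph{all} $C$, not just subsets of $\dom{\G}$ (for $C=\set{x}$ the required implication fails by Lemma~\ref{lemma:ninter-self}). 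So the reification cannot be used as a black box. Your fallback of ``re-running the reconstruction argument'' under the extended context is circular as stated --- Lemma~\ref{lemma:evctx-typing-inversion} takes $\typ{\G'}{\evctx{s}}{T}$ as a premise, which is exactly~(iv) --- and if interpreted as redoing the induction on $e$ from scratch, it is the paper's proof. Even then, your claim that the parallel separation checks are ``stable under weakening by a fresh variable'' is wrong: the new occurrence of $x$ in $\cv{\evctx{s}}$ must be discharged separately via the degree argument above, which you do not mention. The relocation step you flag as the main obstacle is, in the paper's organization, the routine part.
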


\begin{proof}
  By induction on $e$.

  \emph{Case $e = []$}.
  Then ${\G}\vdash \tLetVar{x}{y}{s} : {T}$.
  By induction on the typing derivation we can show that
  $\typ{\G}{y}{S}$ for some $S$,
  and $\typ{\G, x :_D \tRef{S}\capt\set{\univ{}}}{s}{T}$ for some $D$.
  We set $\Delta = \varemptyset$ and $U = T$.
  We can conclude this case by separation degree expansion.

  \emph{Case $e = \tLetMode{m}{z}{e^\prime}{u}$}.
  Then ${\G}\vdash\tLetMode{m}{z}{e'[\tLetVar{x}{y}{s}]}{u}: {T}$.
  By inspecting this derivation,
  we can show that
  $\typ{\G}{\evpctx{\tLetVar{x}{y}{s}}}{T_0}$ for some $T_0$,
  $\typ{\G, z :_{\set{}} T_0}{u}{T}$,
  and $\ninter{\G}{\evpctx{\tLetVar{x}{y}{s}}}{u}$.
  By IH, we show that $\exists \Delta_0, S_0, U_0$ such that
  $\match{\G}{e'}{\Delta_0}$,
  $\typ{\G, \Delta_0}{y}{S_0}$,
  $\typ{\G, x :_{\dom{\G}} \tRef{S}\capt\set{\univ{}}, \Delta_0}{s}{U_0}$,
  and $\typ{\G, x :_{\dom{\G}} \tRef{S}\capt\set{\univ{}}}{\evctx{s}}{T}$.
  Note that $\match{\G}{e}{\Delta_0}$.
  By weakening, we can show that
  $\ninter{\G, x :_{\dom{\G}}}{D^\prime}{\cs{T_0}}$.
  We can show that
  $\cv{\evpctx{s}} \subseteq \cv{\evpctx{\tLetVar x y s}}\setminus \set{y} \cup \set{x}$.
  Now, we can show that
  $\ninter{\G, x :_{\dom{\G}} \tRef{S}\capt\set{\univ{}}}{\cv{\evpctx{\tLetVar x y s}}\setminus \set{y}}{\cv{u}\setminus \set{z}}$
  from the precondition and weakening.
  Then, we can show that
  $\G, x :_{\dom{\G}} \tRef{S}\capt\set{\univ{}}\vdash {x}\bowtie \cv{u}\setminus \set{z}$
  by \ruleref{ni-degree} and \ruleref{ni-set}.
  We can therefore show that
  $\ninter{\G, x :_{\dom{\G}} \tRef{S}\capt\set{\univ{}}}{\evpctx{s}}{u}$.
  This case is therefore concluded by setting $\Delta = \Delta_0$ and $U = U_0$,
  and using the \ruleref{let} rule.

  \emph{Case $e = \tLetMode{\parmode}{z}{t}{e^\prime}$}.
  Then ${\G}\vdash\tLetMode{\parmode}{z}{t} e'[\tLetVar{x}{y}{s}]: {T}$.
  By inspecting the typing derivation we can show that
  $\typ{\G}{t}{T_0}$ for some $T_0$,
  $\typ{\G, z :_{\set{}} T_0}{\evpctx{\tLetVar{x}{y}{s}}}{T}$,
  and $\ninter{\G}{t}{\evpctx{\tLetVar{x}{y}{s}}}$.
  Now we invoke IH on the typing derivation $\typ{\G, z :_{\set{}} T_0}{\evpctx{\tLetVar{x}{y}{s}}}{T}$
  to show that $\exists \Delta_0, S_0, U_0$ such that
  $\match{\G, z :_{\set{}} T_0}{e^\prime}{\Delta_0}$,
  $\typ{\G, z :_{\set{}} T_0, \Delta_0}{y}{S_0}$,
  $\typ{\G, z :_{\set{}} T_0, x :_{\dom{\G} \cup \set{z}} \tRef{S_0}\capt\set{\univ{}}, \Delta_0}{s}{U_0}$
  and $\typ{\G, x :_{\dom{\G}} \tRef{S_0}\capt\set{\univ{}}, z :_{\set{}} T_0}{\evpctx{s}}{T}$.
  Set $\Delta = z :_{\set{}} T_0, \Delta_0$.
  We can first show that $\match{\G}{e}{z :_{\set{}} T_0, \Delta_0}$.
  Also, we have $\typ{\G, \Delta}{y}{S_0}$.
  Then we have to show that
  $\typ{\G, x :_{\dom{\G}} \tRef{S_0}\capt\set{\univ{}}, \Delta}{s}{U_0}$,
  which requires dropping $z$ from the separation degree of $x$.
  Now we invoke Lemma \ref{lemma:downgrading-sepdegree-pres-typing} to show that
  $\typ{\G, z :_{\set{}} T_0, x :_{\dom{\G}} \tRef{S_0}\capt\set{\univ{}}, \Delta_0}{s}{U_0}$.
  Then by permutation we have
  $\typ{\G, x :_{\dom{\G}} \tRef{S_0}\capt\set{\univ{}}, \Delta}{s}{U_0}$.
  We can similarly show that
  $\typ{\G, x :_{\dom{\G}} \tRef{S_0}\capt\set{\univ{}}, z :_{\set{}} T_0}{\evpctx{s}}{T}$.
  Similar to the previous case,
  we can show that $\ninter{\G, x :_{\dom{\G}} \tRef{S}\capt\set{\univ{}}}{t}{\evpctx{s}}$.
  We can therefore conclude this case by \ruleref{let}.
\end{proof}

\subsubsection{Substitution}

\paragraph{Term Substitution}

\begin{lemma}[Term substitution preserves readers]
	\label{lemma:term-subst-pres-rdr-checking}
  If
  (i) $\isrdrtype{\G, x : P, \Delta}{T}$,
  and (ii) $\typ{\G}{x'}{P}$,
  then
  $\isrdr{\G, \theta\Delta}{\theta T}$,
  where $\theta = \fSubst{x}{x'}$.
\end{lemma}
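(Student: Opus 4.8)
The plan is to induct on the derivation of hypothesis (i), $\isrdrtype{\G, x : P, \Delta}{T}$. The guiding observation is that the substitution $\theta = \fSubst{x}{x'}$ replaces a \emph{term} variable, so it leaves every type variable $X$ untouched and commutes with all type constructors; in particular it preserves the syntactic shape of reader types and the bound recorded for a type variable. As a consequence I expect hypothesis (ii), $\typ{\G}{x'}{P}$, to play no role here: the \textsf{is-reader} judgment inspects only type-variable bounds and reader shapes, neither of which depends on how $x'$ is typed. The whole argument should thus be structural, mirroring the two rules \ruleref{rd-reader} and \ruleref{rd-tvar}.

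For the base case \ruleref{rd-reader} we have $T = \tRdr{S}\capt C$. Applying $\theta$ gives $\theta T = \tRdr{\theta S}\capt \theta C$, which is again a reader type, so \ruleref{rd-reader} re-applies directly and yields $\isrdrtype{\G, \theta\Delta}{\theta T}$.

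The inductive case \ruleref{rd-tvar} is where the only real care is needed. Here $T = X\capt C$, with $X <: S \in \G, x : P, \Delta$ and a subderivation of $\isrdrtype{\G, x : P, \Delta}{S\capt C}$. Since $X$ is a type variable while $x$ is a term variable, the binding $x : P$ cannot be the source of $X <: S$, so $X <: S$ lives in either $\G$ or $\Delta$; moreover $\theta T = X\capt \theta C$. I would split on where the bound of $X$ sits. If $X <: S$ comes from $\G$, then by well-formedness of the environment $S$ is well-formed in the prefix of $\G$, hence $x \notin \fv{S}$, so $\theta S = S$ and the binding $X <: S$ survives unchanged in $\G, \theta\Delta$. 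If instead $X <: S$ comes from $\Delta$, then $X <: \theta S \in \theta\Delta$. In either case the bound recorded for $X$ in $\G, \theta\Delta$ is exactly $\theta S$. Applying the induction hypothesis to the subderivation gives $\isrdrtype{\G, \theta\Delta}{\theta S\capt \theta C}$, which is precisely the premise required to re-apply \ruleref{rd-tvar} and conclude $\isrdrtype{\G, \theta\Delta}{X\capt \theta C}$.

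The main, and only mild, obstacle is the bookkeeping in the \ruleref{rd-tvar} case: one must confirm that the bound of $X$ in $\G, \theta\Delta$ is $\theta S$, which relies on the well-formedness argument that $x$ cannot occur in bounds introduced in $\G$. Everything else reduces to a direct appeal to the induction hypothesis together with the matching \textsf{is-reader} rule.
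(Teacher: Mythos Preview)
Your proposal is correct and follows essentially the same approach as the paper: induction on the derivation of $\isrdrtype{\G, x : P, \Delta}{T}$, with the \ruleref{rd-reader} case immediate and the \ruleref{rd-tvar} case handled by a case split on where the bound $X <: S$ lives, showing in each case that $X <: \theta S \in \G, \theta\Delta$ and then invoking the IH. Your write-up is in fact more careful than the paper's own proof sketch, explicitly noting that the term binding $x : P$ cannot supply a type-variable bound and that hypothesis~(ii) is unused.
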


\begin{proof}
	By induction on the first derivation.

  \emph{Case \ruleref{rd-reader}}.
  We conclude by the same rule.

  \emph{Case \ruleref{rd-tvar}}.
  Then $\mathcal{X} = C\,X$,
  $X <: S \in \G$ for some $S$,
  and $\isrdr{\G, x :_D P, \Delta}{C\,S}$.
  Now we inspect where $X <: S$ is bound,
  wherein in each case we can show that $X <: \theta S \in \theta \G, \theta\Delta$.
  Now we conclude by using the IH and the same rule.


\end{proof}

\begin{lemma}[Term substitution preserves \textsf{is-reader}]
	\label{lemma:term-subst-pres-isrdr}
  If
  (i) $\isrdr{\G, x : P, \Delta}{z}$,
  and (ii) $\typ{\G}{x'}{P}$,
  then
  $\isrdr{\G, \theta\Delta}{\theta z}$,
  where $\theta = \fSubst{x}{x'}$.
\end{lemma}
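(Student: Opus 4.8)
The plan is to reduce this statement to the previous lemma (Lemma~\ref{lemma:term-subst-pres-rdr-checking}) together with the subtyping characterization of \textsf{is-reader} (Lemma~\ref{lemma:is-rdr-eqv}). First I would unfold the hypothesis $\isrdr{\G, x : P, \Delta}{z}$: by definition there is a binding $z : T \in \G, x : P, \Delta$ with $\sub{\G, x : P, \Delta}{T}{\tRdr{S}\capt C}$ for some $S, C$, which by Lemma~\ref{lemma:is-rdr-eqv} is the same as $\isrdrtype{\G, x : P, \Delta}{T}$. Applying Lemma~\ref{lemma:term-subst-pres-rdr-checking} to this judgment and to hypothesis (ii) $\typ{\G}{x'}{P}$ yields $\isrdrtype{\G, \theta\Delta}{\theta T}$, so the remaining work is only to locate the binding for $\theta z$ in $\G, \theta\Delta$ and attach this reader-type check to it.

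I would then split on where $z$ is bound. If $z \in \dom{\Delta}$, then $\theta z = z$ and the substituted binding $z : \theta T$ lives in $\theta\Delta$; combined with $\isrdrtype{\G, \theta\Delta}{\theta T}$ and Lemma~\ref{lemma:is-rdr-eqv} this gives $\isrdr{\G, \theta\Delta}{z}$ directly. If $z \in \dom{\G}$ with $z \neq x$, then again $\theta z = z$, and since the type $T$ of a binding in $\G$ is well-formed in a prefix of $\G$ that precedes $x$, we have $x \notin \fv{T}$, so $\theta T = T$ and the same argument applies using the binding $z : T \in \G \subseteq \G, \theta\Delta$. Neither of these cases uses hypothesis (ii) beyond its role in the appeal to Lemma~\ref{lemma:term-subst-pres-rdr-checking}.

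The only case requiring real work is $z = x$. Here $T = P$, $\theta z = x'$, and the opening step gives $\isrdrtype{\G, \theta\Delta}{\theta P}$; since $P$ is well-formed in $\G$ we have $\theta P = P$, hence $\isrdrtype{\G, \theta\Delta}{P}$, equivalently $\sub{\G, \theta\Delta}{P}{\tRdr{S}\capt C}$ by Lemma~\ref{lemma:is-rdr-eqv}. To turn this into $\isrdr{\G, \theta\Delta}{x'}$ I need to connect it with the \emph{declared} type of $x'$. I would invert the variable typing $\typ{\G}{x'}{P}$ (the only applicable rules being \ruleref{var} and \ruleref{sub}) to obtain $x' : S_0\capt C_0 \in \G$ with $\sub{\G}{S_0\capt\set{x'}}{P}$. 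Weakening this subtyping to $\G, \theta\Delta$ and using transitivity gives $\sub{\G, \theta\Delta}{S_0\capt\set{x'}}{\tRdr{S}\capt C}$, i.e. $\isrdrtype{\G, \theta\Delta}{S_0\capt\set{x'}}$; Lemma~\ref{lemma:captset-irrelevant-rdr-checking} then lets me swap the capture set to obtain $\isrdrtype{\G, \theta\Delta}{S_0\capt C_0}$ for the actual declared type of $x'$, whence $\isrdr{\G, \theta\Delta}{x'}$.

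The main obstacle I anticipate is precisely this $z = x$ case: unlike the others it genuinely consumes hypothesis (ii), and it requires care in moving between the type $P$ that is assigned to $x'$ (possibly through subsumption) and the reader-type shape of $x'$'s declared binding. This is where the variable-typing inversion and the capture-set-irrelevance lemma do the heavy lifting; the other two cases are essentially bookkeeping about which part of the context the binding of $z$ comes from.
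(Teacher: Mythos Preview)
Your proposal is correct and follows essentially the same approach as the paper's proof: unfold $\isrdr{}{}$ via Lemma~\ref{lemma:is-rdr-eqv}, push the substitution through with Lemma~\ref{lemma:term-subst-pres-rdr-checking}, and case-split on where $z$ is bound, handling the $z=x$ case by inverting $\typ{\G}{x'}{P}$ and using Lemma~\ref{lemma:captset-irrelevant-rdr-checking}. The only cosmetic difference is that where you spell out the $z=x$ step via transitivity of subtyping and two uses of Lemma~\ref{lemma:is-rdr-eqv}, the paper packages the same reasoning as a single appeal to Lemma~\ref{lemma:subtype-pres-rdr-checking}.
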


\begin{proof}
  We have $z : T \in \G, x : P, \Delta$ for some $T$
  and $\sub{\G, x : P, \Delta}{T}{\tRdr{S}\capt C}$ for some $C, S$.
  We first invoke Lemma \ref{lemma:is-rdr-eqv} to show that
  $\isrdrtype{\G, x : P, \Delta}{T}$.
  Using Lemma \ref{lemma:term-subst-pres-rdr-checking} we can show that
  $\isrdr{\G, \theta\Delta}{\theta T}$.
  Then we perform a case analysis on where $z$ is bound.
  If $z = x$ then we have $\theta z = x'$ and $\isrdrtype{\G, \theta\Delta}{\theta P}$.
  By induction on the derivation of $\typ{\G}{x'}{P}$ we can show that
  there exists $S'\capt C'$ such that $x' : S'\capt C' \in \G$ and $\sub{\G}{S'\capt\set{x'}}{P}$.
  By the well-formedness of the environment,
  we can show that $\theta P = P$, implying that $\isrdrtype{\G, \theta\Delta}{P}$.
  Now, by weakening and Lemma \ref{lemma:subtype-pres-rdr-checking},
  we can show that $\isrdrtype{\G, \theta\Delta}{S'\capt\set{x'}}$.
  Finally, we use Lemma \ref{lemma:captset-irrelevant-rdr-checking} to show that
  $\isrdrtype{\G, \theta\Delta}{S'\capt C'}$,
  and then use Lemma \ref{lemma:is-rdr-eqv} again to conclude this case.
  If $z$ is bound in $\G$ or $\Delta$,
  in both cases we can show that $z : \theta T \in \G, \theta\Delta$
  and therefore conclude directly.
\end{proof}

\begin{lemma}[Term substitution preserves subcapturing]
  \label{lemma:term-subst-pres-subcapt}
	If
  (i) $\sub{\G, x :_D P, \Delta}{C_1}{C_2}$,
  and (ii) $\typ{\G}{x^\prime}{P}$
  then
  $\sub{\G, \theta\Delta}{\theta C_1}{\theta C_2}$
  where $\theta = \fSubst{x}{x^\prime}$.
\end{lemma}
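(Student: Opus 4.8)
The plan is to proceed by induction on the derivation of $\sub{\G, x :_D P, \Delta}{C_1}{C_2}$, with a case analysis on the last rule applied, transporting each subderivation through $\theta = \fSubst{x}{x'}$. The structural rules are routine: for \ruleref{sc-trans} I apply the induction hypothesis to both premises and recombine with \ruleref{sc-trans}; for \ruleref{sc-set} I apply the induction hypothesis to each element premise and conclude with \ruleref{sc-set}, using that $\theta C_1 = \bigcup_{y \in C_1}\theta\set{y}$. The axiom \ruleref{sc-rdr-cap} is unaffected, since $\theta$ fixes the root capabilities $\univ$ and $\rdroot$. For \ruleref{sc-elem}, where $C_1 = \set{y}$ and $y \in C_2$, we have $\theta y \in \theta C_2$ whether or not $y = x$, so \ruleref{sc-elem} applies directly.

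The real work is in the \ruleref{sc-var} case, where $C_1 = \set{y}$, a binding $y : \tCap{C}{S}$ occurs in the context, and $C_2 = C$. First I would split on where $y$ is bound. If $y \neq x$ and the binding lies in $\G$, then by well-formedness its type cannot mention $x$, so $\theta$ leaves it unchanged and \ruleref{sc-var} gives the goal; if it lies in $\Delta$, then $y : \tCap{\theta C}{\theta S} \in \theta\Delta$ and \ruleref{sc-var} again applies, noting $\theta C_2 = \theta C$. The interesting subcase is $y = x$, so that $P = \tCap{C}{S}$, $\theta C_1 = \set{x'}$, and $\theta C_2 = \theta C = C$ (the last equality because $P$ is well-formed in $\G$, hence $C \subseteq \dom{\G}$ and $x \notin \dom{\G}$). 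Here I would consume premise (ii): from $\typ{\G}{x'}{\tCap{C}{S}}$ one extracts $\sub{\G}{\set{x'}}{C}$ by a short sub-induction on the variable-typing derivation, whose only applicable rules are \ruleref{var} (yielding $\sub{\G}{\set{x'}}{\set{x'}}$ by reflexivity) and \ruleref{sub} (combining the hypothesis with the fact that subtyping of capturing types entails subcapturing of their capture sets). Weakening this to $\G, \theta\Delta$ closes the case.

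Finally, the \ruleref{sc-reader} case, where $C_1 = \set{y}$, $\isrdr{\G, x :_D P, \Delta}{y}$, and $C_2 = \set{\rdroot}$, follows from Lemma \ref{lemma:term-subst-pres-isrdr}: it gives $\isrdr{\G, \theta\Delta}{\theta y}$, whence \ruleref{sc-reader} re-derives $\sub{\G, \theta\Delta}{\set{\theta y}}{\set{\rdroot}}$, with $\theta C_2 = \set{\rdroot}$ fixed by $\theta$. I expect the main obstacle to be precisely the $y = x$ subcase of \ruleref{sc-var}: it is the only place where premise (ii) is used, and it hinges on the auxiliary fact that a variable's typing bounds its singleton capture set by the capture set of its type. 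Everything else is a mechanical transcription through $\theta$, appealing to the induction hypothesis and, where the environment must be re-grown to $\G, \theta\Delta$, to weakening.
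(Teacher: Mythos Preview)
Your proposal is correct and follows essentially the same approach as the paper's proof: induction on the subcapturing derivation, with the only substantive case being \ruleref{sc-var} when the variable is $x$ itself, handled by inverting the typing assumption (ii) to obtain $\sub{\G}{\set{x'}}{C}$ and weakening. Your treatment of \ruleref{sc-reader} via Lemma~\ref{lemma:term-subst-pres-isrdr} and your observation that \ruleref{sc-rdr-cap} and \ruleref{sc-elem} are purely syntactic under $\theta$ match the paper exactly.
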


\begin{proof}
	By induction on the subcapture derivation.

  \emph{Case \ruleref{sc-trans}}.
  By applying the IH twice and use the same rule.

  \emph{Case \ruleref{sc-elem}}.
  Then $C_1 = \set{y}$,
  and $y \in C_2$.
  If $y \ne x$,
  we can show that $x \in \theta C_2$,
  and therefore conclude the case by \ruleref{sc-elem} again.
  Otherwise, if $y = x$,
  we can show that $x' \in \theta C_2$.
  This case is therefore conclude by \ruleref{sc-elem}.

  \emph{Case \ruleref{sc-set}}.
  Then $\overline{\sub{\G, x :_D P, \Delta}{\set{x}}{C_2}}^{x \in C_1}$.
  By applying the IH repeatedly,
  we can show that
  $\overline{\sub{\G, \theta \Delta}{\theta \set{x}}{\theta C_2}}^{x \in C_1}$.
  Note that
  $\theta C_1 = \bigcup_{x \in C_1} \theta \set{x}$.
  This case is therefore concluded by \ruleref{sc-set}.

  \emph{Case \ruleref{sc-var}}.
  Then $C_1 = \set{z}$,
  and $z : \tCap{C_2}{S} \in \G, x :_D P, \Delta$.
  Now we inspect where $z$ is bound.
  \begin{itemize}
  \item \emph{When $x = z$}.
    The goal becomes $\sub{\G, \theta\Delta}{x^\prime}{\theta C_2}$.
    By inspecting the derivation of $\typ{\G}{x^\prime}{P}$,
    we can show that
    $x^\prime : P^\prime \in \G$
    and $\sub{\G}{\set{x^\prime}}{C_2}$.
    By weakening we have
    $\sub{\G, \theta\Delta}{\set{x^\prime}}{C_2}$.
    Also, by the well-formedness,
    we can show that $x \notin C_2$
    and therefore $\theta C_2 = C_2$.
    This case is therefore concluded.

  \item \emph{When $z$ is bound in $\G$}.
    By well-formedness we know that $x \notin \fv{C_2}$.
    Therefore, $\theta C_2 = C_2$.
    Also, $z : \tCap{C_2}{S} \in \G, \theta\Delta$,
    and we conclude this case by \ruleref{sc-var}.

  \item \emph{When $z$ is bound in $\Delta$}.
    Then $z : \tCap{\theta S}{\theta C_2} \in \G, \theta \Delta$.
    Therefore this case is concluded by IH and \ruleref{sc-var}.
  \end{itemize}

  \emph{Case \ruleref{sc-rdr-cap}}.
  By the same rule.

  \emph{Case \ruleref{sc-reader}}.
  We conclude using the IH, Lemma \ref{lemma:term-subst-pres-isrdr}, and the same rule.
\end{proof}

\begin{lemma}[Term substitution preserves subtyping]
  \label{lemma:term-subst-pres-subtyping}
	If
  (i) $\sub{\G, x :_D P, \Delta}{T}{U}$,
  and (ii) $\typ{\G}{x^\prime}{P}$
  then
  $\sub{\G, \theta\Delta}{\theta T}{\theta U}$
  where $\theta = \fSubst{x}{x^\prime}$.
\end{lemma}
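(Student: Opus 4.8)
The plan is to proceed by induction on the derivation of $\sub{\G, x :_D P, \Delta}{T}{U}$, following the same structure as the proof of the companion subcapturing lemma (Lemma~\ref{lemma:term-subst-pres-subcapt}). The base cases are immediate or nearly so. For \ruleref{refl} we have $U = T$, and since $\theta T = \theta T$ we conclude by \ruleref{refl}; for \ruleref{top} we note $\theta \top = \top$ and conclude by \ruleref{top}. The \ruleref{tvar} case requires a short case analysis: here $T = X$, $U = S$, and $X <: S \in \G, x :_D P, \Delta$. Since $\theta$ is a term substitution it leaves type variables untouched, so $\theta X = X$; if $X$ is bound in $\G$ then well-formedness gives $x \notin \fv{S}$, hence $\theta S = S$ and $X <: S \in \G, \theta\Delta$, while if $X$ is bound in $\Delta$ then $X <: \theta S \in \G, \theta\Delta$. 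Either way we conclude by \ruleref{tvar}.

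The structural cases follow by the induction hypothesis combined with the corresponding rule. For \ruleref{trans} we apply the IH to both premises and re-apply \ruleref{trans}. For \ruleref{capt}, where $T = \tCap{C_1}{S_1}$ and $U = \tCap{C_2}{S_2}$ with $\subDft{C_1}{C_2}$ and $\subDft{S_1}{S_2}$, we discharge the capture-set premise with Lemma~\ref{lemma:term-subst-pres-subcapt} and the shape-type premise with the IH, then conclude by \ruleref{capt}. The \ruleref{boxed} case is a direct appeal to the IH and \ruleref{boxed}. The binder cases \ruleref{fun} and \ruleref{tfun} are handled by applying the IH contravariantly to the argument (resp.\ bound) and covariantly to the body under the extended context $\G, x :_D P, \Delta, z :_{D'} U_2$ (resp.\ $\G, x :_D P, \Delta, X <: S_2$); by the Barendregt convention the freshly-bound variable is distinct from $x$ and from $\dom{\G}$, so $\theta$ commutes with the binder and $\theta(z :_{D'} U_2) = z :_{D'} \theta U_2$.

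I expect no genuine obstacle: this lemma is a routine congruence argument, and the only delicate points are pure bookkeeping. The first is the \ruleref{tvar} case, where one must distinguish whether the type-variable bound lives in $\G$ (left untouched by $\theta$, by well-formedness) or in $\Delta$ (substituted into). The second is ensuring that $\theta$ commutes past the binders in \ruleref{fun} and \ruleref{tfun}, which is precisely what the Barendregt convention guarantees. All the real content---propagating the substitution through subcapturing, including the reader-capability reasoning behind \ruleref{sc-reader}---has already been established in Lemma~\ref{lemma:term-subst-pres-subcapt}, which this proof invokes as a black box in the \ruleref{capt} case.
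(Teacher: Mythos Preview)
Your proposal is correct and matches the paper's proof essentially step for step: both proceed by induction on the subtyping derivation, dispatch \ruleref{refl} and \ruleref{top} immediately, handle \ruleref{tvar} via the same case split on whether $X$ is bound in $\G$ or in $\Delta$, invoke Lemma~\ref{lemma:term-subst-pres-subcapt} for the capture-set premise of \ruleref{capt}, and close the remaining structural cases (\ruleref{trans}, \ruleref{boxed}, \ruleref{fun}, \ruleref{tfun}) by the IH and the same rule.
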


\begin{proof}
	By induction on the subtype derivation.

  \emph{Case \ruleref{refl} and \ruleref{top}}.
  By the same rule.

  \emph{Case \ruleref{capt}}.
  By IH, Lemma \ref{lemma:term-subst-pres-subcapt}
  and the same rule.

  \emph{Case \ruleref{trans}, \ruleref{boxed}, \ruleref{fun} and \ruleref{tfun}}.
  By IH and the same rule.

  \emph{Case \ruleref{tvar}}.
  Then $T = X$, $U = {S}$, and $X <: {S} \in \G, x :_D P, \Delta$.
  Our goal is to show that
  $\sub{\G, \theta\Delta}{\theta X}{\theta {S}}$.
  We inspect where $X$ is bound.
  First, we show that $\theta X = X$ since $X \ne x$.
  If $X \in \dom{\G}$,
  we can show that $x \notin \fv{{S}}$ by the well-formedness of the environment.
  Therefore $\theta S = S$.
  Since $X <: S \in \G, \theta\Delta$, we can conclude this case by \ruleref{tvar}.
  Otherwise if $X \in \dom{\Delta}$,
  we have $X <: \theta S \in \G, \theta\Delta$.
  This case is therefore concluded by \ruleref{tvar} too.
\end{proof}

\begin{lemma}[Term substitution preserves separation]
  \label{lemma:term-subst-pres-ninter}
	If
  (i) $\ninter{\G, x :_D P, \Delta}{C_1}{C_2}$,
  (ii) $\typ{\G}{x^\prime}{P}$
  and (iii) $\ninter{\G}{D}{x^\prime}$,
  then
  $\ninter{\G, \theta\Delta}{\theta C_1}{\theta C_2}$
  where $\theta = \fSubst{x}{x^\prime}$.
\end{lemma}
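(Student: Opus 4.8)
The plan is to proceed by induction on the derivation of $\ninter{\G, x :_D P, \Delta}{C_1}{C_2}$, with a case analysis on the last rule applied. The structural rules \ruleref{ni-symm} and \ruleref{ni-set} are routine: for \ruleref{ni-set} I would note that $\theta C_1 = \bigcup_{y \in C_1}\theta\set{y}$, apply the induction hypothesis to each premise $\ninter{\G, x:_D P, \Delta}{\set{y}}{C_2}$, and reassemble with \ruleref{ni-set}. The reader case \ruleref{ni-reader} reduces to subcapturing: from its two premises $\sub{\G, x:_D P, \Delta}{\set{z_i}}{\set{\rdroot}}$ I would invoke Lemma~\ref{lemma:term-subst-pres-subcapt}, observing that $\theta\set{\rdroot} = \set{\rdroot}$ since $\rdroot$ is a root capability untouched by $\theta$, and then conclude by \ruleref{ni-reader}.

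The substantial cases are \ruleref{ni-degree} and \ruleref{ni-var}, where $C_1 = \set{z_1}$ and $C_2 = \set{z_2}$; in both I would further split on where $z_1$ is bound in $\G, x :_D P, \Delta$. When $z_1$ lies in $\G$, well-formedness gives $x \notin \fv{\cdot}$ for the relevant binding, so the binding and its capture/degree component survive $\theta$ unchanged; when $z_1$ lies in $\Delta$, the substitution acts uniformly on its type and separation degree. In both of these sub-cases I would apply the induction hypothesis to the sub-derivation (for \ruleref{ni-var}) and re-apply the same rule, after checking that $\theta z_2 \in \theta D_1$ (for \ruleref{ni-degree}) or that $\ninter{\G,\theta\Delta}{\theta C}{\theta z_2}$ still holds (for \ruleref{ni-var}).

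The crux is the sub-case $z_1 = x$. For \ruleref{ni-degree} we then have the degree of $x$ as $D$ with $z_2 \in D$; since $D$ is well-formed in $\G$ we get $z_2 \in \dom{\G}$, hence $\theta z_2 = z_2$ while $\theta z_1 = x^\prime$. I would then invoke premise (iii), $\ninter{\G}{D}{x^\prime}$, together with Lemma~\ref{lemma:ninter-to-elem} to extract $\ninter{\G}{z_2}{x^\prime}$, and finish by \ruleref{ni-symm} and weakening. For \ruleref{ni-var} with $z_1 = x$, the type of $x$ forces $C = \cs{P}$, and the sub-derivation is $\ninter{\G, x:_D P, \Delta}{\cs{P}}{z_2}$; the induction hypothesis gives $\ninter{\G,\theta\Delta}{\cs{P}}{\theta z_2}$, using $\theta\cs{P} = \cs{P}$ since $x \notin \fv{\cs{P}}$ by well-formedness of $P$ in $\G$. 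From premise (ii), inverting $\typ{\G}{x^\prime}{P}$ through \ruleref{var} and \ruleref{sub} yields $\sub{\G}{\set{x^\prime}}{\cs{P}}$; after weakening I would apply Lemma~\ref{lemma:subcapt-pres-ninter} to obtain $\ninter{\G,\theta\Delta}{\set{x^\prime}}{\theta z_2}$, which is exactly the goal.

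The main obstacle is this $z_1 = x$ sub-case of \ruleref{ni-var}: it is the only point where the separation hypothesis on the substituted variable $x$ must be transferred onto its replacement $x^\prime$ through the type $P$, and it requires threading together the inversion of the typing premise, the well-formedness fact $x \notin \fv{\cs{P}}$, and the transport of separation along subcapturing provided by Lemma~\ref{lemma:subcapt-pres-ninter}. Everywhere else the argument is a mechanical propagation of the induction hypothesis through the unchanged bindings.
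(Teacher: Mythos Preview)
Your proposal is correct and follows essentially the same approach as the paper's proof: induction on the separation derivation, with the case split on where the bound variable lies for \ruleref{ni-degree} and \ruleref{ni-var}, handling the crucial $z_1 = x$ sub-cases via premise~(iii) with Lemma~\ref{lemma:ninter-to-elem} (for \ruleref{ni-degree}) and via inversion of premise~(ii) together with Lemma~\ref{lemma:subcapt-pres-ninter} (for \ruleref{ni-var}), and dispatching \ruleref{ni-reader} through Lemma~\ref{lemma:term-subst-pres-subcapt}. The key lemmas you invoke and the structure of the argument match the paper's proof point for point.
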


\begin{proof}
	By induction on the separation derivation.

  \emph{Case \ruleref{ni-symm}}.
  Then $\ninter{\G, x :_D P, \Delta}{C_2}{C_1}$.
  We conclude this case by IH and the same rule.

  \emph{Case \ruleref{ni-set}}.
  Then $C_1 = \set{\tau_i}_{i=1,\cdots,n}$.
  By repeated IH we have
  $\overline{\ninter{\G, \theta\Delta}{\theta \tau_i}{\theta C_2}}^{i=1,\cdots,n}$.
  Note that $\theta C_1 = \bigcup_{i=1,\cdots,n} \theta \tau_i$.
  we can therefore conclude this case by \ruleref{ni-set}.

  \emph{Case \ruleref{ni-degree}}.
  Then $C_1 = \set{z_1}$,
  $C_2 = \set{z_2}$,
  $z_1 :_{D_1} T \in \G, x :_D P, \Delta$,
  and $z_2 \in D_1$.
  Now we do a case analysis on where $z_1$ is bound.
  \begin{itemize}
  \item \emph{When $z_1 = x$}.
    Then we have $\theta C_1 = x^\prime$ and $z_2 \in D$.
    By the well-formedness of the environment,
    $x \notin D$,
    therefore $z_2 \ne x$
    and $\theta z_2 = z_2$.
    The goal becomes $\ninter{\G, \theta\Delta}{x^\prime}{z_2}$.
    Note that we have $\ninter{\G}{D}{x^\prime}$,
    on which we can invoke Lemma \ref{lemma:ninter-to-elem}
    to show that $\ninter{\G}{z_2}{x^\prime}$.
    Now we conclude by \ruleref{ni-symm} and weakening.

  \item If $z_1 \in \dom{\G}$.
    We have $\theta D_1 = D_1$,
    and $x \notin D_1$ by the well-formedness.
    Therefore, we have $z_2 \ne x$, which implies that
    $\theta\set{z_2} = \set{z_2}$.
    Now we can conclude this case by \ruleref{ni-degree}

  \item If $z_1 \in \dom{\Delta}$.
    Then $z_1 :_{\theta D_1} \theta T \in \theta\Delta$.
    We can show that
    $\theta z_2  \in \theta D_1$.
    This case is therefore concluded by IH and \ruleref{ni-degree}.
  \end{itemize}

  \emph{Case \ruleref{ni-var}}.
  Then $C_1 = \set{y}$,
  $y :_{D_1} T \in \G, x :_D P, \Delta$,
  and $\ninter{\G, x :_D P, \Delta}{\cs{T}}{C_2}$.
  By the IH we can show that
  $\ninter{\G, \theta\Delta}{\theta\cs{T}}{\theta C_2}$.
  Now we inspect where $y$ is bound.
  \begin{itemize}
  \item If $x = y$,
    then $D = D_1$ and $T = P$.
    The goal becomes $\ninter{\G, \theta\Delta}{x'}{\theta C_2}$.
    By inspecting the derivation of $\typ{\G}{x'}{P}$,
    we can show that
    $\sub{\G}{\set{x'}}{\cs{P}}$.
    By the well-formedness we can show that $\theta\cs{P} = \cs{P}$.
    Now we can conclude this case by Lemma \ref{lemma:subcapt-pres-ninter}.

  \item If $y \in \dom{\G} \cup \dom{\Delta}$,
    we can show that in both cases
    $y :_{\theta D_1} \theta T \in \G, \theta D$.
    We can conclude immediately by using the IH and the \ruleref{ni-var} rule.
  \end{itemize}

  \emph{Case \ruleref{ni-reader}}.
  Then $C_1 = \set{z_1}$, $C_2 = \set{z_2}$,
  and $\sub{\G}{\set{z_i}}{\set{\rdroot}}$ for $i = 1, 2$.
  We conclude by using Lemma \ref{lemma:term-subst-pres-subcapt}.
\end{proof}

\begin{lemma}[Term substitution preserves typing]
  \label{lemma:term-subst-pres-typing}
	If
  (i) $\typ{\G, x :_D P, \Delta}{t}{T}$,
  (ii) $\typ{\G}{x^\prime}{P}$
  and (iii) $\ninter{\G}{D}{x'}$,
  then
  $\typ{\G, \theta\Delta}{\theta t}{\theta T}$
  where $\theta = \fSubst{x}{x^\prime}$.
\end{lemma}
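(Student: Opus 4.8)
The plan is to proceed by induction on the derivation of $\typ{\G, x:_D P, \Delta}{t}{T}$, with a case analysis on the last typing rule applied. In each case I appeal to the induction hypothesis on the sub-derivations together with the matching substitution-preservation results already established: Lemma~\ref{lemma:term-subst-pres-subtyping} discharges the subtyping premise of \ruleref{sub}, and Lemma~\ref{lemma:term-subst-pres-ninter} discharges the separation premises $\ninter{\G}{y}{D}$ of \ruleref{app} and $\ninter{\G}{s}{t}$ of \ruleref{let}. (These rely transitively on Lemmas~\ref{lemma:term-subst-pres-subcapt} and~\ref{lemma:term-subst-pres-isrdr}, which are not invoked directly here.) Crucially, precondition~(iii), $\ninter{\G}{D}{x'}$, is precisely the side condition demanded by Lemma~\ref{lemma:term-subst-pres-ninter}; since $\G$, $D$, and $x'$ are untouched as the induction descends under the binders accumulated in $\Delta$, this precondition remains available verbatim at every recursive call.

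The genuinely substantive case is \ruleref{var}. Here $t = y$ with $y :_{D_y} S\capt C \in \G, x:_D P, \Delta$ and $T = S\capt\set{y}$. When $y \ne x$ the binding for $y$ survives substitution (with $\theta$ applied to its type if $y \in \dom{\Delta}$), and \ruleref{var} reapplies directly. When $y = x$ we have $\theta t = x'$ and $P = S\capt C$, and the goal is $\typ{\G, \theta\Delta}{x'}{\theta(S\capt\set{x})}$; by well-formedness of the environment $x \notin \fv{P}$, so $\theta S = S$ and $\theta(S\capt\set{x}) = S\capt\set{x'}$. To obtain this I would invert $\typ{\G}{x'}{P}$ to recover the actual binding $x' : S'\capt C' \in \G$ together with $\sub{\G}{S'}{S}$, apply \ruleref{var} to $x'$ to get $\typ{\G}{x'}{S'\capt\set{x'}}$, and close the gap with \ruleref{capt} and \ruleref{sub}, finishing by weakening from $\G$ to $\G, \theta\Delta$.

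For the binder-introducing rules \ruleref{abs}, \ruleref{tabs}, \ruleref{let}, and \ruleref{dvar}, the recursive call runs on a premise typed under an extended context $\G, x:_D P, \Delta, (\text{new binding})$, so the IH yields a judgment under $\G, \theta\Delta, \theta(\text{new binding})$, after which the rule is reassembled; the well-formedness premises ($\wfTyp{\G}{U}$, $\wfTyp{\G}{D}$, etc.) are preserved since substitution maps well-formed types and degrees to well-formed ones. The arithmetic that needs care is the interaction of $\theta$ with $\cv$ in \ruleref{abs} and \ruleref{tabs}, whose conclusion carries the capture set $\cv{t}\setminus z$: using that $\cv$ commutes with term substitution, $\cv{\theta t} = \theta\cv{t}$, and that the bound variable $z$ is distinct from $x$ and $x'$ by the Barendregt convention, we get $\theta(\cv{t}\setminus z) = \cv{\theta t}\setminus z$, which matches the substituted conclusion. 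The \ruleref{box} and \ruleref{unbox} cases additionally require $\theta C \subseteq \dom{\G, \theta\Delta}$, which follows from the original domain condition together with the observation that $\theta$ only renames $x$ to $x' \in \dom{\G}$; the remaining rules (\ruleref{tapp}, \ruleref{reader}, \ruleref{read}, \ruleref{write}) conclude by the IH and the same rule.

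I expect the main obstacle to be the $y = x$ subcase of \ruleref{var}: reconciling the type $S\capt\set{x'}$ produced by reapplying \ruleref{var} to $x'$ with the ascribed type $P$ from hypothesis~(ii) requires carefully inverting the typing of $x'$ and then invoking subsumption, and it leans essentially on well-formedness of the environment to guarantee $x \notin \fv{P}$ so that $\theta P = P$. The other genuinely delicate point is maintaining the separation side conditions in \ruleref{app} and \ruleref{let}, which is exactly where precondition~(iii) is consumed through Lemma~\ref{lemma:term-subst-pres-ninter}.
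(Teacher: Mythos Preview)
Your proposal is correct and matches the paper's proof essentially step for step: induction on the typing derivation, with Lemma~\ref{lemma:term-subst-pres-subtyping} discharging \ruleref{sub} and Lemma~\ref{lemma:term-subst-pres-ninter} (fed by precondition~(iii)) discharging the separation premises of \ruleref{app} and \ruleref{let}. Your treatment of the $y = x$ subcase of \ruleref{var} is in fact more careful than the paper's, which conflates the binding type $P$ with the \ruleref{var} result type $S\capt\set{x}$; your inversion-then-\ruleref{capt}-then-\ruleref{sub} argument is exactly what is needed to close that gap.
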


\begin{proof}
	By induction on the typing derivation.

  \emph{Case \ruleref{var}}.
  Then $t = y$
  and $y :_{D_y} T \in \G, x :_D P, \Delta$.
  If $x = y$,
  we have $T = P$
  and the goal is to show that
  $\typ{\G, \theta\Delta}{x^\prime}{\theta P}$.
  By the well-formedness of $\G, x :_D P$,
  $x \notin \theta P$.
  Therefore, $\theta P = P$
  and we conclude this case from the premise.
  Otherwise, if $x \ne y$,
  we have $\theta t = y$,
  and we proceed by a case analysis on where $y$ is bound.
  \begin{itemize}
  \item If $y \in \dom{\G}$,
    then $x \notin T$ by the well-formedness of $\G, x :_D P$,
    and therefore $\theta T = T$.
    This case can be concluded by \ruleref{var}.

  \item If $y \in \dom{\Delta}$,
    then the goal becomes $\typ{\G, \theta\Delta}{y}{\theta T}$.
    We have $y :_{\theta D} \theta T \in \Delta$
    and this case is again concluded by \ruleref{var}.
  \end{itemize}

  \emph{Case \ruleref{sub}}.
  By IH, Lemma \ref{lemma:term-subst-pres-subtyping}, and \ruleref{sub}.

  \emph{Case \ruleref{abs}}.
  Then $t = \lambda(z :_{D_z} U) s$,
  $T = \cv{s}/z \ \forall(z :_{D_z} U) Q$,
  and $\typ{\G, x :_D P, \Delta, z :_{D_z} U}{s}{Q}$.
  By IH we show that
  $\typ{\G, x :_D P, \theta\Delta, z :_{\theta D_z} \theta U}{\theta s}{\theta Q}$.
  This case is therefore concluded by \ruleref{abs}.

  \emph{Case \ruleref{tabs}}. As above.

  \emph{Case \ruleref{app}}.
  Then $t = y_1\, y_2$,
  $\typ{\G, x :_D P, \Delta}{y_1}{C\,\forall(z :_{D_f} U) T^\prime}$
  where $T = \fSubst{z}{y_2} T^\prime$,
  $\typ{\G, x :_D P, \Delta}{y_2}{U}$,
  and $\ninter{\G, x :_D P, \Delta}{D_f}{y_2}$.
  By IH,
  $\typ{\G, \theta \Delta}{\theta y_1}{\theta C\,\forall(z :_{\theta D_f} \theta U) \theta T}$
  and $\typM{D_f^{\prime\prime}}{\G, \theta\Delta}{\theta y_2}{\theta U}$.
  By Lemma \ref{lemma:term-subst-pres-ninter},
  we can show that
  $\typ{\G, \theta\Delta}{\theta D_f}{\theta y_2}$.
  Now we can invoke \ruleref{app-r} to show that
  $\typ{\G, \theta \Delta}{\theta(y_1\,y_2)}{\fSubst{z}{\theta y_2}(\theta T^\prime)}$.
  Since $z$ is fresh we have $\theta(\fSubst{z}{y_2} T^\prime)$
  and conclude this case.

  \emph{Case \ruleref{tapp}}.
  As above.

  \emph{Case \ruleref{box}}.
  Then $t = \Box y$,
  and $\typ{\G, x :_D P, \Delta}{y}{\tCap{C}{S}}$
  and $T = \Box\tCap{C}{S}$.
  By IH,
  we have $\typ{\G, \theta\Delta}{\theta y}{\theta(\tCap{C}{S})}$.
  Since $\theta C \subseteq \dom{\G, \theta\Delta}$,
  we can show that $\typ{\G, \theta}{\theta(\Box y)}{\Box\theta(\tCap{C}{S})}$
  by \ruleref{box}
  and conclude this case.

  \emph{Case \ruleref{unbox}}.
  As above.

  \emph{Case \ruleref{let}}.
  Then $t = \tLetMode{m}{z}{s}{u}$,
  $\typ{\G, x :_{\set{}} P, \Delta}{s}{U}$,
  $\typ{\G, x :_D P, \Delta, z :_{\set{}} U}{u}{T}$,
  and $\ninter{\G, x :_D P, \Delta}{s}{u}$.
  By IH, we have
  $\typ{\G, \theta\Delta}{\theta s}{\theta U}$,
  and $\typ{\G, \theta\Delta, z :_{\set{}} \theta U}{\theta u}{\theta T}$.
  By Lemma \ref{lemma:term-subst-pres-ninter},
  we show that
  $\ninter{\G, \theta\Delta}{\theta s}{\theta u}$.
  Now we can conclude this case by \ruleref{let}.

  \emph{Case \ruleref{dvar}}.
  Then $t = \tLetVarM{D_z}{z}{y}{s}$,
  $\typ{\G, x :_D P, \Delta}{y}{S}$,
  $\typ{\G, x :_D P, \Delta, z :_{D_z} \set{\univ{}}\,\tRef{S}}{s}{T}$.
  We can conclude this case by IH and \ruleref{dvar}.

  \emph{Case \ruleref{read} and \ruleref{write}}.
  By IH and the same rule.
\end{proof}

\paragraph{Type Substitution}

\begin{lemma}[Type substitution preserves reader checking]
  \label{lemma:type-subst-pres-reader-checking}
  If
  (i) $\isrdrtype{\G, X <: S, \Delta}{T}$,
  and (ii) $\subDft{R}{S}$,
  then $\isrdr{\G, \theta\Delta}{\theta T}$,
  where $\theta = \fSubst{X}{R}$.
\end{lemma}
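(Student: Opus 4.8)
The plan is to proceed by induction on the derivation of $\isrdrtype{\G, X <: S, \Delta}{T}$, reusing the structure of the proof of Lemma~\ref{lemma:term-subst-pres-rdr-checking} but replacing the term-variable reasoning by the type-variable bound $X <: S$ together with the hypothesis $\subDft{R}{S}$. Only two rules can conclude the first judgment. In the \ruleref{rd-reader} case we have $T = \tRdr{S'}\capt C$; since $\theta$ maps a reader shape type to a reader shape type, $\theta T = \tRdr{\theta S'}\capt \theta C$ is again a reader, and \ruleref{rd-reader} discharges the goal immediately.

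The substantive case is \ruleref{rd-tvar}, where $T = Y\capt C$, $Y <: S' \in \G, X <: S, \Delta$, and $\isrdrtype{\G, X <: S, \Delta}{S'\capt C}$. I would split on whether $Y = X$. When $Y \neq X$ the variable survives substitution ($\theta Y = Y$), and a case analysis on whether $Y$ is bound in $\G$ or in $\Delta$ shows that $Y <: \theta S' \in \G, \theta\Delta$ (well-formedness gives $\theta S' = S'$ when $Y$ is bound in $\G$, since $X$ cannot occur in an earlier bound). The inductive hypothesis applied to $\isrdrtype{\G, X <: S, \Delta}{S'\capt C}$ yields $\isrdrtype{\G, \theta\Delta}{\theta S'\capt \theta C}$, and \ruleref{rd-tvar} then delivers $\isrdrtype{\G, \theta\Delta}{Y\capt \theta C}$, which is exactly $\isrdrtype{\G, \theta\Delta}{\theta T}$.

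The crux is the subcase $Y = X$. Here $\theta T = R\capt \theta C$, and by the Barendregt convention $S' = S$, so the premise is $\isrdrtype{\G, X <: S, \Delta}{S\capt C}$. Applying the inductive hypothesis, and noting $\theta S = S$ by well-formedness since $X$ cannot appear in its own bound $S$, gives $\isrdrtype{\G, \theta\Delta}{S\capt \theta C}$. It then remains to transfer reader checking from $S$ to the subtype $R$: I would weaken the hypothesis to $\sub{\G, \theta\Delta}{R}{S}$, lift it to capturing types via \ruleref{capt} and reflexivity of subcapturing to obtain $\sub{\G, \theta\Delta}{R\capt \theta C}{S\capt \theta C}$, and then invoke Lemma~\ref{lemma:subtype-pres-rdr-checking} to conclude $\isrdrtype{\G, \theta\Delta}{R\capt \theta C}$; Lemma~\ref{lemma:captset-irrelevant-rdr-checking} is available should the capture sets need realigning along the way. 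I expect this subcase --- reconciling the substituted shape $R$ with the original bound $S$ through subtyping --- to be the main obstacle, while the remaining variable-location bookkeeping is routine.
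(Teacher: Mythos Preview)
Your proposal is correct and follows essentially the same approach as the paper's proof: induction on the derivation, with the \ruleref{rd-reader} case immediate and the \ruleref{rd-tvar} case split on whether $Y = X$, using the inductive hypothesis together with well-formedness (so that $\theta S = S$) and Lemma~\ref{lemma:subtype-pres-rdr-checking} via weakening in the $Y = X$ subcase. The only cosmetic difference is that you write $\theta C$ where the paper writes $C$, but since $\theta$ substitutes a type variable and capture sets contain only term variables, these coincide.
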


\begin{proof}
  By induction on the derivation.

  \emph{Case \ruleref{rd-reader}}.
  We conclude immediately using the same rule.

  \emph{Case \ruleref{rd-tvar}}.
  Then $T = X\capt C$ for some $C, Y$,
  $Y<:S_0 \in \G, X<:S, \Delta$,
  and $\isrdrtype{\G, X<:S, \Delta}{S_0\capt C}$.
  By the IH we can show that
  $\isrdrtype{\G, \theta\Delta}{\theta(S_0\capt C)}$.
  Now, we proceed by a case analysis on where $Y$ is bound.
  \begin{itemize}
  \item If $Y = X$. Then
    $\theta(Y\capt C) = R\capt C$ and $S_0 = S$.
    First, by the wellformedness we can show that $\theta(S_0\capt C) = S_0\capt C$,
    and therefore $\isrdrtype{\G, \theta\Delta}{S_0\capt C}$.
    Now we conclude this case by weakening and Lemma \ref{lemma:subtype-pres-rdr-checking}.

  \item If $Y$ is bound in either $\G$ or $\Delta$ then in both cases
    we have $Y<:\theta S_0 \in \G, \theta\Delta$.
    We conclude by the IH and the \ruleref{rd-tvar} rule.
  \end{itemize}
\end{proof}

\begin{lemma}[Type substitution preserves \textsf{is-reader}]
  \label{lemma:type-subst-pres-isrdr}
  If
  (i) $\isrdr{\G, X <: S, \Delta}{z}$,
  and (ii) $\subDft{R}{S}$,
  then $\isrdr{\G, \theta\Delta}{\theta z}$,
  where $\theta = \fSubst{X}{R}$.
\end{lemma}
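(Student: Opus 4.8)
The plan is to reduce the variable-level judgment $\isrdr{\G, X <: S, \Delta}{z}$ to the type-level judgment $\isrdrtype{}{\cdot}$, push the type substitution through with the companion Lemma~\ref{lemma:type-subst-pres-reader-checking}, and then translate back. First I would unfold the definition of $\isrdr{}{\cdot}$: from premise (i) there is a type $T$ with $z : T \in \G, X <: S, \Delta$ and $\sub{\G, X <: S, \Delta}{T}{\tRdr{S'}\capt C}$ for some $S', C$. By Lemma~\ref{lemma:is-rdr-eqv} this subtyping fact is equivalent to $\isrdrtype{\G, X <: S, \Delta}{T}$, which is exactly the input required by Lemma~\ref{lemma:type-subst-pres-reader-checking}. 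Applying that lemma, using premise (ii) that $\sub{\G}{R}{S}$, yields $\isrdrtype{\G, \theta\Delta}{\theta T}$.

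Next I observe that $z$ is a term variable while $\theta = \fSubst{X}{R}$ is a type substitution, so $\theta z = z$; it therefore suffices to exhibit a binding for $z$ in $\G, \theta\Delta$ whose type satisfies the reader condition. I would do a case analysis on where $z$ is bound in $\G, X <: S, \Delta$. If $z$ is bound in $\Delta$, then $z : \theta T \in \theta\Delta$, hence $z : \theta T \in \G, \theta\Delta$. If $z$ is bound in $\G$, then by well-formedness of $\G, X <: S$ the type $T$ cannot mention $X$, so $\theta T = T$ and the binding $z : T = \theta T$ survives unchanged in $\G, \theta\Delta$. There is no case $z = X$, since $z$ is a term variable and $X$ a type variable. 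In both cases $z : \theta T \in \G, \theta\Delta$, and we already have $\isrdrtype{\G, \theta\Delta}{\theta T}$; applying Lemma~\ref{lemma:is-rdr-eqv} in the reverse direction gives $\sub{\G, \theta\Delta}{\theta T}{\tRdr{S''}\capt C''}$ for some $S'', C''$, which together with the binding establishes $\isrdr{\G, \theta\Delta}{z}$, i.e. $\isrdr{\G, \theta\Delta}{\theta z}$.

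This lemma is essentially a bookkeeping step that mirrors the term-substitution analogue (Lemma~\ref{lemma:term-subst-pres-isrdr}) but is in fact simpler, since type substitution leaves the subject variable $z$ untouched and so no delicate $z = X$ case arises. I do not anticipate a genuine obstacle; the only points requiring care are recognizing $\theta z = z$ and invoking well-formedness to argue $\theta T = T$ when $z$ is bound in $\G$. The substantive work is delegated to Lemma~\ref{lemma:type-subst-pres-reader-checking} and the equivalence of Lemma~\ref{lemma:is-rdr-eqv}, which between them carry the propagation of reader checking across the type substitution.
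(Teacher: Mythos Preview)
Your proposal is correct and follows essentially the same approach as the paper's proof: unfold $\isrdr{}{}$, convert to the auxiliary $\isrdrtype{}{}$ judgment, push the type substitution through via Lemma~\ref{lemma:type-subst-pres-reader-checking}, case-split on where $z$ is bound to locate $z : \theta T$ in $\G, \theta\Delta$, and convert back. The only cosmetic difference is that the paper invokes Lemma~\ref{lemma:subtype-pres-rdr-checking} (with the trivial premise $\isrdrtype{}{\tRdr{S_0}\capt C_0}$) rather than Lemma~\ref{lemma:is-rdr-eqv} for the forward conversion, but these are interchangeable here.
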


\begin{proof}
  Then we have $z : T \in \G, X <: S, \Delta$,
  and $\sub{\G, X <: S, \Delta}{T}{\tRdr{S_0}\capt C_0}$ for some $C_0, S_0$.
  We first invoke Lemma \ref{lemma:subtype-pres-rdr-checking} to show that
  $\isrdrtype{\G, X<:S, \Delta}{T}$.
  Then, by Lemma \ref{lemma:type-subst-pres-reader-checking} we can show that
  $\isrdrtype{\G, \theta\Delta}{\theta T}$.
  By inspecting where $z$ is bound, we can show that
  $z : \theta T \in \G, \theta\Delta$,
  and finally conclude by Lemma \ref{lemma:is-rdr-eqv} and the \ruleref{rd-tvar} rule.





\end{proof}

\begin{lemma}[Type substitution preserves subcapturing]
  \label{lemma:type-subst-pres-subcapt}
	If
  (i) $\sub{\G, X <: S, \Delta}{C}{D}$,
  and (ii) $\sub{\G}{R}{S}$,
  then
  $\typ{\G, \theta\Delta}{C}{D}$,
  where $\theta = \fSubst{X}{R}$.
\end{lemma}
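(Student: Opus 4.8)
The plan is to proceed by induction on the derivation of $\sub{\G, X <: S, \Delta}{C}{D}$, following the same structure as the analogous term-substitution lemma (Lemma~\ref{lemma:term-subst-pres-subcapt}). The crucial simplifying observation is that capture sets contain only term variables, so the type substitution $\theta = \fSubst{X}{R}$ acts as the identity on them: $\theta C = C$ and $\theta D = D$. Hence $\theta$ only ever changes the shape types recorded in $\Delta$, and the goal reduces to reconstructing each subcapturing derivation in the substituted context $\G, \theta\Delta$.

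The structural and leaf cases are routine. For \ruleref{sc-trans} and \ruleref{sc-set} I would apply the induction hypothesis (repeatedly, in the latter case) and re-apply the same rule. The \ruleref{sc-elem} case holds because set membership of term variables is untouched by $\theta$, and \ruleref{sc-rdr-cap} is concluded directly since it mentions only the root capabilities $\rdroot$ and $\univ$, which $\theta$ leaves fixed.

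The \ruleref{sc-var} case requires a case analysis on where $x$ is bound, given $C = \set{x}$ and $x : \tCap{D}{S'} \in \G, X <: S, \Delta$. If $x$ is bound in $\G$, then by well-formedness its type cannot mention $X$, so $\theta$ fixes the whole binding and \ruleref{sc-var} applies unchanged. If $x$ is bound in $\Delta$, then its shape type gets substituted but its capture set $D$ is unaffected, so $x : \tCap{D}{\theta S'} \in \G, \theta\Delta$, and \ruleref{sc-var} again concludes.

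The main obstacle is the \ruleref{sc-reader} case, where $C = \set{x}$, $D = \set{\rdroot}$, and the premise is $\isrdr{\G, X <: S, \Delta}{x}$. Because the \textsf{is-reader} check is entangled with subtyping, I cannot reason about it inline; instead I would dispatch it to the already-established Lemma~\ref{lemma:type-subst-pres-isrdr}, obtaining $\isrdr{\G, \theta\Delta}{x}$ (noting $\theta x = x$ for a term variable), and then re-apply \ruleref{sc-reader}. This isolates the subtyping dependency into a single auxiliary lemma and keeps the present induction confined to the subcapturing rules.
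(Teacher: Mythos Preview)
Your proposal is correct and follows essentially the same approach as the paper: induction on the subcapturing derivation, with most cases handled by the IH and the same rule (since capture sets contain no type variables), and the \ruleref{sc-reader} case dispatched to Lemma~\ref{lemma:type-subst-pres-isrdr}. Your write-up is more detailed than the paper's terse proof, but the structure and key ideas are identical.
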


\begin{proof}
	By induction on the subcapture derivation,
  wherein most cases do not rely on the type bindings in the context
  and thus can be concluded immediatley by the IH and the same rule.
  In the \ruleref{sc-reader} case
  we invoke Lemma \ref{lemma:type-subst-pres-isrdr} to conclude.
\end{proof}

\begin{lemma}[Type substitution preserves subtyping]
  \label{lemma:type-subst-pres-subtyp}
	If
  (i) $\sub{\G, X <: S, \Delta}{T}{U}$,
  and (ii) $\sub{\G}{R}{S}$,
  then
  $\sub{\G, \theta\Delta}{\theta T}{\theta U}$,
  where $\theta = \fSubst{X}{R}$.
\end{lemma}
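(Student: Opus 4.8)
The plan is to proceed by structural induction on the derivation of $\sub{\G, X <: S, \Delta}{T}{U}$, mirroring the structure of the term-substitution analogue (Lemma~\ref{lemma:term-subst-pres-subtyping}). Most of the rules---\ruleref{refl}, \ruleref{top}, \ruleref{trans}, \ruleref{fun}, \ruleref{tfun}, and \ruleref{boxed}---do not inspect the type-variable bindings of the context, so they follow by applying the induction hypothesis to the premises and re-applying the same rule, observing that $\theta$ commutes with the type constructors. For the congruence rules \ruleref{fun} and \ruleref{tfun} the second premise is derived in a context extended by the bound variable (e.g. $\G, X <: S, \Delta, x :_D U_2$); I would fold the new binding into $\Delta$ and invoke the induction hypothesis there, using that $\theta$ leaves the freshly-bound variable untouched.

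The only genuinely substantive cases are \ruleref{capt} and \ruleref{tvar}. For \ruleref{capt} I would discharge the subcapturing premise using the already-established Lemma~\ref{lemma:type-subst-pres-subcapt} (type substitution preserves subcapturing) and the shape-type premise by the induction hypothesis, then conclude with \ruleref{capt}.

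The crux is \ruleref{tvar}, where $T = Y$, $U = S_0$, and $Y <: S_0 \in \G, X <: S, \Delta$. I would split on where $Y$ is bound. If $Y \ne X$ and $Y$ is bound in $\G$, then by well-formedness $X \notin \fv{S_0}$, so $\theta S_0 = S_0$ and $\theta Y = Y$; since $Y <: S_0 \in \G, \theta\Delta$ the conclusion follows by \ruleref{tvar}. If $Y$ is bound in $\Delta$, then $Y <: \theta S_0 \in \theta\Delta$ and \ruleref{tvar} applies again. The interesting subcase is $Y = X$: here $S_0 = S$, $\theta T = \theta X = R$, and the goal reduces to $\sub{\G, \theta\Delta}{R}{\theta S}$. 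By well-formedness of $\G, X <: S$ we have $X \notin \fv{S}$, hence $\theta S = S$, and the premise $\sub{\G}{R}{S}$ gives $\sub{\G, \theta\Delta}{R}{S}$ after weakening, which is exactly what is needed.

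The main obstacle is the bookkeeping around this $Y = X$ subcase: I must be careful that substitution is a no-op on $S$ (justified by the well-formedness convention that a variable's bound cannot mention the variable itself) and that weakening is applicable to extend $\G$ by $\theta\Delta$. Everything else is a routine congruence argument, and the proof is structurally identical to the term-substitution version already carried out.
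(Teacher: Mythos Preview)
Your proposal is correct and matches the paper's proof essentially point for point: induction on the subtyping derivation, with \ruleref{refl}, \ruleref{top}, \ruleref{trans}, \ruleref{fun}, \ruleref{tfun}, \ruleref{boxed} handled by the IH and the same rule, \ruleref{capt} discharged via Lemma~\ref{lemma:type-subst-pres-subcapt}, and \ruleref{tvar} split into the three subcases (bound in $\G$, bound in $\Delta$, equal to $X$) exactly as you describe, using well-formedness and weakening in the $Y = X$ subcase.
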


\begin{proof}
	By induction on the subtype derivation.

  \emph{Case \ruleref{refl} and \ruleref{top}}.
  By the same rule.

  \emph{Case \ruleref{capt}}.
  By Lemma \ref{lemma:type-subst-pres-subcapt},
  and \ruleref{capt}.

  \emph{Case \ruleref{trans}, \ruleref{boxed}, \ruleref{fun} and \ruleref{tfun}}.
  By IH and application of the same rule.

  \emph{Case \ruleref{tvar}}.
  Then $T = Y$,
  $Y <: S^\prime \in \G, X <: S, \Delta$,
  and $U = S^\prime$.
  Our goal is
  $\sub{\G, \theta\Delta}{\theta Y}{\theta S^\prime}$.
  Now we inspect where $Y$ is bound.
  \begin{itemize}
  \item \emph{When $X = Y$}.
    Then $S^\prime = S$,
    and the goal becomes
    $\sub{\G, \theta\Delta}{R}{\theta S}$.
    By the well-formedness,
    we can show that $X \notin \fv{S}$
    and therefore $\theta S = S$.
    Now we conclude by weakening the premise.

  \item \emph{When $Y \in \dom{\G}$}.
    By the well-formedness,
    $X \notin \fv{S^\prime}$.
    Therefore, $\theta S^\prime = S^\prime$.
    Also, $\theta Y = Y$,
    and $Y <: S^\prime \in \G, \theta\Delta$.
    This case is therefore concluded by \ruleref{tvar}.

  \item \emph{When $Y \in \dom{\Delta}$}.
    Then $Y <: \theta S^\prime \in \G, \theta\Delta$.
    This case is therefore concluded by \ruleref{tvar}.
  \end{itemize}
\end{proof}

\begin{lemma}[Type substitution preserves separation]
  \label{lemma:type-subst-pres-ninter}
  If
  (i) $\ninter{\G, X <: S, \Delta}{C_1}{C_2}$,
  and (ii) $\sub{\G}{R}{S}$,
  then
  $\ninter{\G, \theta\Delta}{C_1}{C_2}$,
  where $\theta = \fSubst X R$.
\end{lemma}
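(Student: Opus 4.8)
The plan is to proceed by induction on the derivation of $\ninter{\G, X <: S, \Delta}{C_1}{C_2}$, mirroring the structure of the subtyping analogue (Lemma~\ref{lemma:type-subst-pres-subtyp}) and the term-substitution counterpart. The organizing observation, which is also why the conclusion carries $C_1, C_2$ rather than $\theta C_1, \theta C_2$, is that capture sets and separation degrees contain only \emph{term} variables. Since $\theta = \fSubst{X}{R}$ substitutes a shape type for a type variable, it leaves every capture set and every separation degree untouched: $\theta C_1 = C_1$, $\theta C_2 = C_2$, and for any binding $x :_D T$ the degree $D$ and the capture-set part $\cs{T}$ are both invariant under $\theta$ (only the shape part of $T$ can change). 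This reduces almost every case to applying the induction hypothesis and reusing the same rule.

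Concretely, the \ruleref{ni-symm} case follows by the IH and \ruleref{ni-symm}, and \ruleref{ni-set} follows by decomposing $C_1$ into singletons, applying the IH to each, and recombining with \ruleref{ni-set}. For \ruleref{ni-degree}, where $C_1 = \set{x}$, $C_2 = \set{y}$, $x :_D T \in \G, X <: S, \Delta$ and $y \in D$, I would case-split on where $x$ is bound: if $x \in \dom{\G}$ the binding (and hence $D$) is unchanged, and if $x \in \dom{\Delta}$ then $x :_D \theta T \in \G, \theta\Delta$ with the \emph{same} degree $D$; either way $y \in D$ still holds and I conclude by \ruleref{ni-degree}. For \ruleref{ni-var}, where $x :_D S_0 \capt C \in \G, X <: S, \Delta$ and $\ninter{\G, X <: S, \Delta}{C}{y}$, the key point is that $\cs{\theta(S_0 \capt C)} = C$, so the capture set used by the rule survives substitution; I apply the IH to obtain $\ninter{\G, \theta\Delta}{C}{y}$ and re-apply \ruleref{ni-var}.

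The only case that reaches outside the immediate induction is \ruleref{ni-reader}: here $C_1 = \set{x}$, $C_2 = \set{y}$ with $\sub{\G, X <: S, \Delta}{\set{x}}{\set{\rdroot}}$ and $\sub{\G, X <: S, \Delta}{\set{y}}{\set{\rdroot}}$. I would discharge this by appealing to Lemma~\ref{lemma:type-subst-pres-subcapt} (type substitution preserves subcapturing) to transport both subcapturing facts into $\G, \theta\Delta$ — noting $\theta\set{x} = \set{x}$ and $\theta\set{\rdroot} = \set{\rdroot}$ — and then re-applying \ruleref{ni-reader}.

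I do not expect a genuine obstacle here; the lemma is essentially bookkeeping. The one place requiring care is being explicit that type substitution does not perturb capture sets or separation degrees, since that is what makes the \ruleref{ni-degree} and \ruleref{ni-var} cases go through with the degree and capture-set side conditions intact. The \ruleref{ni-reader} case is the only one with a nontrivial dependency, and it is fully covered by the already-established subcapturing preservation lemma.
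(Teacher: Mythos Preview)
Your proposal is correct and follows essentially the same approach as the paper: induction on the separation derivation, with the key observation that type substitution leaves capture sets and separation degrees unchanged, and the \ruleref{ni-reader} case discharged via Lemma~\ref{lemma:type-subst-pres-subcapt}. The paper's proof is slightly terser but the case structure and reasoning coincide.
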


\begin{proof}
	By induction on the separation derivation.

  \emph{Case \ruleref{ni-symm}}.
  Then $\ninter{\G, X <: S, \Delta}{C_2}{C_1}$.
  We conclude this case by the IH and \ruleref{ni-symm}.

  \emph{Case \ruleref{ni-set}}.
  Then $\overline{\ninter{\G, X <: S, \Delta}{y}{C_2}}^{y \in C_1}$.
  By applying the IH repeated, we can show that
  $\overline{\ninter{\G, \theta\Delta}{y}{C_2}}^{y \in C_1}$.
  This case is therefore concluded by \ruleref{ni-set}.

  \emph{Case \ruleref{ni-degree}}.
  Then $C_1 = \set{x}$, $C_2 = \set{y}$,
  $x :_D T \in \G, X <: S, \Delta$,
  and $y \in D$.
  Note that $X \notin D$,
  therefore we have $\theta D = D$.
  Therefore, no matter where $x$ is bound,
  we have $x :_D T^\prime \in \G, \theta\Delta$.
  This case can therefore be concluded by \ruleref{ni-degree}.

  \emph{Case \ruleref{ni-var}}.
  Then $C_1 = \set{x}$, $C_2 = \set{y}$,
  $x :_D T \in \G, X <: S, \Delta$,
  and $\ninter{\G, X <: S, \Delta}{\cs{T}}{y}$.
  Depending on where $x$ is bound,
  either $x :_D T \in \G, \theta\Delta$,
  or $x :_{\theta D} \theta T \in \theta\Delta$.
  Since $X \notin \cs{T}$,
  we can show that $\cs{\theta T} = \theta \cs{T} = \cs{T}$.
  By IH, we can show that
  $\ninter{\G, \theta\Delta}{\cs{T}}{y}$.
  We can therefore conclude this case by \ruleref{ni-var}.

  \emph{Case \ruleref{ni-reader}}.
  By applying the IH, Lemma \ref{lemma:type-subst-pres-subcapt}
  and the same rule.
\end{proof}

\begin{lemma}[Type substitution preserves typing]
  \label{lemma:type-subst-pres-typing}
	If
  (i) $\typ{\G, X <: S, \Delta}{t}{T}$,
  and (ii) $\sub{\G}{R}{S}$,
  then
  $\typ{\G, \theta\Delta}{\theta t}{\theta T}$
  where $\theta = \fSubst{X}{R}$.
\end{lemma}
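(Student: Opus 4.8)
The plan is to proceed by structural induction on the typing derivation $\typ{\G, X <: S, \Delta}{t}{T}$, following the same case structure as the term-substitution lemma (Lemma~\ref{lemma:term-subst-pres-typing}). In every case I would apply the induction hypothesis to the typing premises, discharge the side conditions using the companion substitution lemmas already proven for the auxiliary judgments, and then reassemble the conclusion with the same typing rule. Concretely, the subtyping premise of \ruleref{sub} is handled by Lemma~\ref{lemma:type-subst-pres-subtyp} (which in turn draws on Lemmas~\ref{lemma:type-subst-pres-subcapt} and~\ref{lemma:type-subst-pres-isrdr}), and the separation premises of \ruleref{app} and \ruleref{let} by Lemma~\ref{lemma:type-subst-pres-ninter}. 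The mutable-variable and reader rules \ruleref{dvar}, \ruleref{reader}, \ruleref{read}, and \ruleref{write}, as well as \ruleref{box} and \ruleref{unbox}, then go through by the induction hypothesis and the same rule. A recurring simplification throughout is that $\theta$ is a \emph{type} substitution: it leaves every term variable untouched, so each capture set and separation degree satisfies $\theta C = C$ and $\theta D = D$, and moreover $\cv{\theta t} = \cv{t}$ since \textsf{cv} never inspects types. This immediately preserves the containment side conditions $C \subseteq \dom{\cdot}$ of \ruleref{box} and \ruleref{unbox} (now read against $\dom{\G, \theta\Delta}$) and keeps the computed capture set of an abstraction unchanged.

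The \ruleref{var} case is actually easier here than in the term-substitution lemma, since $X$ is a type variable and hence $\theta t = t = y$ for a term variable $y$. I would split on where $y$ is bound: if $y \in \dom{\G}$, then $X \notin \fv{T}$ by well-formedness, so $\theta T = T$ and we conclude by \ruleref{var}; if $y \in \dom{\Delta}$, then $y :_{\theta D} \theta T \in \theta\Delta$ and \ruleref{var} again applies. For the context-extending rules \ruleref{abs} and \ruleref{tabs}, I push $\theta$ under the binder and apply the induction hypothesis to the body typed in $\G, X <: S, \Delta, z :_{D_z} U$ (resp.\ with a type-variable extension); here I rely on well-formedness being preserved under $\theta$ so that $\G, \theta\Delta, z :_{D_z} \theta U$ is a legal environment, and on the \textsf{cv} invariant above so that the abstraction's capture set in the conclusion is the expected $\cv{t}\setminus z$.

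The one genuinely delicate case is \ruleref{tapp}, since its result type is itself produced by a substitution. With the premise $\typ{\G, X <: S, \Delta}{x}{\tCap{C}{\tTForall{Y}{S'}{U}}}$ and conclusion type $\fSubst{Y}{S'} U$, I would apply the induction hypothesis to obtain $\typ{\G, \theta\Delta}{\theta x}{\tCap{C}{\tTForall{Y}{\theta S'}{\theta U}}}$, re-apply \ruleref{tapp} to get result type $\fSubst{Y}{\theta S'}(\theta U)$, and then invoke the substitution-commutation identity $\theta(\fSubst{Y}{S'} U) = \fSubst{Y}{\theta S'}(\theta U)$. By the Barendregt convention the bound variable $Y$ is distinct from $X$ and does not occur free in $R$, so this identity is the standard substitution-composition fact and the only care is keeping the freshness bookkeeping explicit.

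I expect the main obstacle to be not any single hard case but the discipline of threading well-formedness through the binder cases together with this commuting-substitutions identity in \ruleref{tapp}; once those are in place, every remaining case collapses to ``apply the induction hypothesis, invoke the matching auxiliary lemma (subtyping, subcapturing, separation, or \textsf{is-reader}), and reapply the same rule.''
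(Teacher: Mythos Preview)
Your proposal is correct and follows essentially the same approach as the paper: induction on the typing derivation, with the auxiliary lemmas for subtyping, subcapturing, and separation discharging the side conditions, and the substitution-commutation identity handling \ruleref{tapp}. The paper's proof is organized identically, including the observation that $\theta$ fixes capture sets and \textsf{cv}, and the case split on the binding location of the variable in \ruleref{var} and \ruleref{box}.
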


\begin{proof}
	By induction on the typing derivation.

  \emph{Case \ruleref{var}}.
  Then $t = x$,
  $x : \tCap{C}{S^\prime} \in \G, X <: S, \Delta$,
  and the goal becomes
  $\typ{\G, \theta\Delta}{\theta x}{\theta(\tCap{\set{x}}{S^\prime})}$.
  Since $x \ne X$,
  we have $\theta x = x$.
  Now we inspect where $x$ is bound.
  \begin{itemize}
  \item \emph{When $x \in \dom{\G}$}.
    By the well-formedness we can show that
    $X \notin \fv{\tCap{C}{S^\prime}}$.
    Therefore, $\theta S^\prime = S^\prime$.
    We have $x : \tCap{C}{S^\prime} \in \G, \theta\Delta$.
    This case is concluded by \ruleref{var}.

  \item \emph{When $x \in \dom{\Delta}$}.
    We have $x : \theta(\tCap{C}{S^\prime}) \in \G, \theta\Delta$.
    This case is therefore concluded by \ruleref{var}.
  \end{itemize}

  \emph{Case \ruleref{abs} and \ruleref{tabs}}.
  In both cases we conclude by IH and the same rule.

  \emph{Case \ruleref{app-r}}.
  Then $t = x\,y$,
  $\typ{\G, X <: S, \Delta}{x}{C\,\forall(z :_D U) T^\prime}$,
  $\typ{\G, X <: S, \Delta}{y}{U}$,
  $\ninter{\G, X<:S, \Delta}{D}{y}$,
  and $T = \fSubst z y T^\prime$.
  We can conclude this case
  by IH, Lemma \ref{lemma:type-subst-pres-ninter},
  the fact that $\theta\fSubst z y T^\prime = \fSubst z y \theta T^\prime$
  and the \ruleref{app-r} rule.

  \emph{Case \ruleref{tapp}}.
  Then $t = x[S^\prime]$,
  $\typ{\G, X <: S, \Delta}{x}{C\,\forall[Z <: S^\prime] T^\prime}$
  and $T = \fSubst Z {S^\prime} T^\prime$.
  The goal becomes
  ${\G, \theta\Delta}\vdash {\theta(x[S^\prime])}: {\theta \fSubst Z {S^\prime} T^\prime}$.
  By IH we have
  $\typ{\G, \theta\Delta}{\theta x}{\theta C\, \forall[Z <: \theta S^\prime] \theta T^\prime}$.
  By \ruleref{tapp} we can show that
  $\typ{\G, \theta\Delta}{\theta(x[S^\prime])}{\fSubst Z {\theta S^\prime} \theta T^\prime}$.
  We observe that
  $\fSubst Z {\theta S^\prime} \theta T^\prime = \theta \fSubst Z {S^\prime} T^\prime$
  and conclude this case.

  \emph{Case \ruleref{box}}.
  Then $t = \Box\ y$
  $\typ{\G, X <: S, \Delta}{y}{\tCap{C}{S^\prime}}$,
  and $C \subseteq \dom{\G, X <: S, \Delta}$.
  The goal becomes
  $\typ{\G, \theta\Delta}{\theta(\Box\ y)}{\theta(\Box\ \tCap{C}{S^\prime})}$.
  Note that since $y \ne X$,
  we have $\theta y = y$.
  Proof proceeds by inspecting the location of $y$ in the context.
  \begin{itemize}
  \item \emph{When $y \in \dom{\G}$}.
    By the well-formedness we can show that
    $X \notin \fv{\tCap{C}{S^\prime}}$.
    Therefore, $\theta(\tCap{C}{S^\prime}) = \tCap{C}{S^\prime}$.
    We observe that $y : \tCap{C}{S^\prime} \in \G, \theta\Delta$,
    and show that
    $\typ{\G, \theta\Delta}{y}{\tCap{C}{S^\prime}}$ by \ruleref{var}.
    Now this case can be concluded by \ruleref{box}.

  \item \emph{When $y \in \dom{\Delta}$}.
    Then we observe that
    $y : \theta(\tCap{C}{S^\prime}) \in \G, \theta\Delta$,
    and derive that
    $\typ{\G, \theta\Delta}{y}{\theta(\tCap{C}{S^\prime})}$ by \ruleref{var}.
    This case is concluded by \ruleref{box}.
  \end{itemize}

  \emph{Case \ruleref{unbox}}.
  Analogous to the previous case.

  \emph{Case \ruleref{sub}}.
  By IH, Lemma \ref{lemma:type-subst-pres-subtyp} and \ruleref{sub}.

  \emph{Case \ruleref{let}}.
  Then $t = \tLetMode{m}{x}{s}{u}$,
  $\typ{\G, X <: S, \Delta}{s}{T_0}$,
  $\typ{\G, X <: S, \Delta, x :_{\set{}} T_0}{u}{T}$,
  and $\ninter{\G, X <: S, \Delta}{s}{u}$.
  By IH we can show that
  $\typ{\G, \theta\Delta}{\theta s}{\theta T_0}$,
  and $\typ{\G, \theta\Delta, x :_{\set{}} \theta T_0}{\theta u}{\theta T}$.
  Since $X \notin \cv{s}, \cv{u}, \cs{T_0}$,
  we can show that
  $\theta \cv{s} = \cv{s}$,
  and $\theta \cv{u} = \cv{u}$.
  By Lemma \ref{lemma:type-subst-pres-ninter},
  we can show that
  $\ninter{\G, \theta\Delta}{\theta s}{\theta u}$.
  We can therefore conclude this case by \ruleref{let}.

  \emph{Case \ruleref{dvar}}.
  Then $t = \tLetVar{x}{y}{s}$,
  $\typ{\G, X <: S, \Delta}{y}{S}$,
  $\typ{\G, X <: S, \Delta, x :_D \tCap{\set{\univ{}}}{\tRef{S}}}{s}{T}$,
  and $\wfTyp{\G, X <: S, \Delta}{D}$.
  By IH we can show that
  $\typ{\G, \theta\Delta}{\theta y}{\theta S}$,
  and $\typ{\G, \theta\Delta, x :_{\theta D} \theta(\tCap{\set{\univ{}}}{\tRef{S}})}{\theta s}{\theta T}$.
  Also, we can show that $\wfTyp{\G, \theta\Delta}{D}$.
  This case is therefore concluded by \ruleref{dvar}.

  \emph{Case \ruleref{read} and \ruleref{write}}.
  By IH and the same rule.
\end{proof}

\subsubsection{Soundness Theorems}

\begin{lemma}[Canonical forms: term abstraction]
  \label{lemma:cf-fun}
	$\typ{\G}{v}{C\,\forall(x :_D U) T}$
  implies
  $v = \lambda(x :_D U^\prime) t$ for some $U^\prime$ and $t$,
  such that
  $\subDft{U}{U^\prime}$ and $\typ{\G, x :_D U^\prime}{t}{T}$.
\end{lemma}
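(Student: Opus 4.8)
The plan is to exploit the fact that a value admits only a very restricted shape of typing derivation. Since $v$ is a value of function shape, the only rules that can conclude $\typ{\G}{v}{\tCap{C}{\forall(x :_D U) T}}$ are \ruleref{abs} (the unique introduction rule producing a function-shaped type) and \ruleref{sub}. Because \ruleref{sub} is unary and re-types the same term, the derivation is necessarily one application of \ruleref{abs} followed by a chain of \ruleref{sub} steps. First I would collapse this chain: writing $v = \lambda(x :_D U^\prime) t$ for the lambda's declared parameter type $U^\prime$, the \ruleref{abs} leaf gives $\typ{\G, x :_D U^\prime}{t}{T_0}$ for some $T_0$ together with the principal type $\tCap{\cv{t} \setminus x}{\forall(x :_D U^\prime) T_0}$, and the trailing subsumptions collapse via \ruleref{trans} into a single judgment $\subDft{\tCap{\cv{t} \setminus x}{\forall(x :_D U^\prime) T_0}}{\tCap{C}{\forall(x :_D U) T}}$.

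Next I would invert this single subtyping with Lemma \ref{lemma:subtype-inversion-fun}. Since the left-hand subject literally has function shape, only case (ii) of that lemma applies, yielding $\subDft{U}{U^\prime}$ together with $\sub{\G, x :_D U^\prime}{T_0}{T}$. The first of these is exactly the required $\subDft{U}{U^\prime}$. For the second, I combine $\typ{\G, x :_D U^\prime}{t}{T_0}$ with $\sub{\G, x :_D U^\prime}{T_0}{T}$ through \ruleref{sub} to obtain $\typ{\G, x :_D U^\prime}{t}{T}$, which closes the proof.

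The crux — and the reason for collapsing the subsumptions up front rather than inducting naively on the derivation — is the contravariance of the parameter type. Had I instead kept the \ruleref{sub} case of an induction and inverted an intermediate type $\tCap{C^\prime}{\forall(x :_D U_1) T_1}$, the inversion would return the return-type subtyping $\sub{\G, x :_D U_1}{T_1}{T}$ in the context carrying the tighter bound $U_1$ (with $\subDft{U_1}{U^\prime}$ supplied by the induction hypothesis), whereas the body is typed under $U^\prime$; reconciling the two would require \emph{loosening} the bound from $U_1$ back to $U^\prime$, which the narrowing lemma (Lemma \ref{lemma:type-narrowing}) does not license, as narrowing only replaces a bound by a subtype. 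Collapsing the subsumptions first forces the inversion to act directly on the \ruleref{abs} output, whose parameter type is exactly the declared $U^\prime$, so the return-type subtyping lands in precisely the context in which the body is typed. (Case (i) of the inversion, producing a type variable, does not arise because the \ruleref{abs} conclusion has function shape; it can alternatively be discharged via Lemma \ref{lemma:val-typing}.)
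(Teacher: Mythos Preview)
Your proposal is correct and takes a genuinely different path from the paper's. The paper proceeds by straightforward induction on the typing derivation: the \ruleref{abs} case is immediate, and in the \ruleref{sub} case the paper inverts the intermediate subtyping $\subDft{T_0}{\tCap{C}{\forall(x :_D U) T}}$ (via Lemma~\ref{lemma:subtype-inversion-fun} together with Lemma~\ref{lemma:val-typing}) to obtain $T_0 = \tCap{C_0}{\forall(x :_D U_0) T_0'}$ with $\subDft{U}{U_0}$ and $\sub{\G, x :_D U_0}{T_0'}{T}$, then applies the IH to get $\subDft{U_0}{U'}$ and $\typ{\G, x :_D U'}{t}{T_0'}$, and closes ``by \ruleref{trans} and \ruleref{sub}''.

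The concern you articulate about this inductive scheme is exactly on point: to finish with \ruleref{sub} one would need $\sub{\G, x :_D U'}{T_0'}{T}$, but the inversion delivers the return-type subtyping only under the binder $x :_D U_0$, and since $\subDft{U_0}{U'}$ this is the wrong direction for narrowing (Lemma~\ref{lemma:type-narrowing}). The paper's terse ``by \ruleref{trans} and \ruleref{sub}'' does not address this. Your strategy of first collapsing all subsumptions and then inverting once guarantees that the left-hand side of the single inverted subtyping carries the declared parameter $U'$ itself; accepting Lemma~\ref{lemma:subtype-inversion-fun} as stated (with the \emph{left}-side parameter in the extended context), the return-type subtyping is then produced directly in $\G, x :_D U'$, so no bound adjustment is needed. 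Your argument is therefore tighter than the paper's at precisely the delicate step you identify.
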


\begin{proof}
	By induction on the typing derivation.

  \emph{Case \ruleref{abs}}.
  We conclude immediately from the premise.

  \emph{Case \ruleref{sub}}.
  Then $\typ{\G}{v}{T_0}$,
  and $\sub{\G}{T_0}{C\,\forall(x :_D U) T}$.
  By Lemma \ref{lemma:subtype-inversion-fun} and Lemma \ref{lemma:val-typing},
  we can show that
  $T_0 = C_0,\forall(x :_D U_0) T_0$,
  $\subDft{U}{U_0}$,
  and $\sub{\G, x :_D U_0}{T_0}{T}$.
  Invoking IH,
  we can show that
  $v = \lambda(x :_D U^\prime) t$ such that
  $\sub{\G}{U_0}{U^\prime}$,
  and $\typ{\G, x :_D U^\prime}{t}{T_0}$.
  We conclude by the \ruleref{trans}
  and the \ruleref{sub} rule.
\end{proof}

\begin{lemma}[Canonical forms: type abstraction]
  \label{lemma:cf-tfun}
	$\typ{\G}{v}{C\,\forall[X <: S] T}$
  implies
  $v = \lambda[X <: S^\prime] t$ for some $S^\prime$ and $t$,
  such that
  $\subDft{S}{S^\prime}$ and $\typ{\G, X <: S^\prime}{t}{T}$.
\end{lemma}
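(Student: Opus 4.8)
The plan is to mirror the proof of Lemma~\ref{lemma:cf-fun} almost verbatim, proceeding by induction on the derivation of $\typ{\G}{v}{\tCap{C}{\forall[X <: S] T}}$. Since $v$ is a syntactic value, the only rules that can conclude a judgment assigning it a type-abstraction shape are \ruleref{tabs} and \ruleref{sub}; every other value-typing rule (\ruleref{abs}, \ruleref{box}, \ruleref{reader}) produces a shape type that no subtyping rule can relate to $\forall[X <: S] T$, so those cases are vacuous. I would handle \ruleref{tabs} first, as it is immediate: its premise forces $v = \lambda[X <: S^\prime] t$ with $S^\prime = S$ and $\typ{\G, X <: S}{t}{T}$, and $\subDft{S}{S^\prime}$ then holds by reflexivity.

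The real work is in the \ruleref{sub} case, where $\typ{\G}{v}{R}$ and $\subDft{R}{\tCap{C}{\forall[X <: S] T}}$ for some $R$. First I would apply the subtype-inversion lemma for type abstractions (Lemma~\ref{lemma:subtype-inversion-tfun}) to analyze $R$; the type-variable alternative (i) is ruled out by Lemma~\ref{lemma:val-typing}, since a value cannot receive a type of the form $X \capt C$. This leaves $R = \tCap{C^\prime}{\forall[X <: S_0] T^\prime}$ with $\subDft{S}{S_0}$ and $\sub{\G, X <: S_0}{T^\prime}{T}$. Invoking the induction hypothesis on $\typ{\G}{v}{\tCap{C^\prime}{\forall[X <: S_0] T^\prime}}$ yields $v = \lambda[X <: S^\prime] t$ with $\subDft{S_0}{S^\prime}$ and $\typ{\G, X <: S^\prime}{t}{T^\prime}$. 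The required bound inclusion $\subDft{S}{S^\prime}$ then follows by \ruleref{trans} from $\subDft{S}{S_0}$ and $\subDft{S_0}{S^\prime}$.

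The one delicate step---and what I expect to be the main obstacle---is reassembling the body judgment. The induction hypothesis delivers the body $t$ typed under the term's own declared bound $S^\prime$ (namely $\typ{\G, X <: S^\prime}{t}{T^\prime}$), whereas inversion delivers the body subtyping $\sub{\G, X <: S_0}{T^\prime}{T}$ under the intermediate bound $S_0$. To apply \ruleref{sub} and obtain $\typ{\G, X <: S^\prime}{t}{T}$, these two facts must be brought into a common context across a change of type-variable bound, which is exactly what bound narrowing (Lemma~\ref{lemma:bound-narrowing}) is designed to supply, using $\subDft{S_0}{S^\prime}$. This is the type-variable counterpart of the reasoning compressed into ``conclude by \ruleref{trans} and \ruleref{sub}'' in Lemma~\ref{lemma:cf-fun}, and it is the only point where the bounded-quantification structure---rather than the purely mechanical recursion shared with the term-abstraction case---actually comes into play.
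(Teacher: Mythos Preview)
Your proposal is correct and takes essentially the same approach as the paper, which simply says the proof is analogous to that of Lemma~\ref{lemma:cf-fun}. You spell out the details---induction on the typing derivation, the \ruleref{tabs} base case, and in the \ruleref{sub} case invoking Lemma~\ref{lemma:subtype-inversion-tfun} together with Lemma~\ref{lemma:val-typing}, then the induction hypothesis, \ruleref{trans}, and bound narrowing---which is precisely the structure of the paper's proof of Lemma~\ref{lemma:cf-fun} transported to the type-abstraction setting.
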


\begin{proof}
	Analogous to the proof of Lemma \ref{lemma:cf-fun}.
\end{proof}

\begin{lemma}[Canonical forms: boxed term]
  \label{lemma:cf-boxed}
	$\typ{\G}{v}{C\,\Box\ T}$
  implies
  $v = \Box\ x$ for some $x$,
  such that
  $\typ{\G}{x}{T}$
  and ${\cs{T}} \subseteq {\dom{\G}}$.
\end{lemma}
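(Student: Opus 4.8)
The plan is to proceed by induction on the derivation of $\typ{\G}{v}{\tCap{C}{\Box\ T}}$, mirroring the proofs of Lemmas~\ref{lemma:cf-fun} and \ref{lemma:cf-tfun}. Since $v$ is a value, the only typing rules whose conclusion can assign it a boxed type are \ruleref{box} and \ruleref{sub}: the remaining value-typing rules \ruleref{abs}, \ruleref{tabs}, and \ruleref{reader} produce function, type-abstraction, and reader shapes respectively, none of which is of the form $\tCap{C}{\Box\ T}$, so they are not applicable.

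In the \ruleref{box} case I read the result directly off the premises. Here $v = \Box\ x$, the premise gives $\typ{\G}{x}{\tCap{C_0}{S_0}}$ with $C_0 \subseteq \dom{\G}$, and the derived type is $\Box\ (\tCap{C_0}{S_0})$ with empty outer capture set. Matching against the goal sets $T = \tCap{C_0}{S_0}$, so $\typ{\G}{x}{T}$ is exactly the premise and $\cs{T} = C_0 \subseteq \dom{\G}$ is the side condition, concluding the case.

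For the \ruleref{sub} case I have $\typ{\G}{v}{T_0}$ and $\subDft{T_0}{\tCap{C}{\Box\ T}}$. I first invoke Lemma~\ref{lemma:subtype-inversion-boxed}, which tells me that $T_0$ is either $\tCap{C'}{Y}$ (a type variable) or $\tCap{C'}{\Box\ T'}$ with $\subDft{C'}{C}$ and $\sub{\G}{T'}{T}$; Lemma~\ref{lemma:val-typing} rules out the type-variable shape for a value, leaving the second alternative. Applying the induction hypothesis to $\typ{\G}{v}{\tCap{C'}{\Box\ T'}}$ yields $v = \Box\ x$ with $\typ{\G}{x}{T'}$, and then \ruleref{sub} together with $\sub{\G}{T'}{T}$ gives $\typ{\G}{x}{T}$.

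The one point requiring care is the side condition $\cs{T} \subseteq \dom{\G}$ in the \ruleref{sub} case, and this is the main obstacle. Propagating it inductively from $\cs{T'} \subseteq \dom{\G}$ does not work, since $\sub{\G}{T'}{T}$ only yields $\subDft{\cs{T'}}{\cs{T}}$ by inversion of \ruleref{capt}, so the containment runs in the wrong direction. Instead I obtain it from well-formedness: the type $\tCap{C}{\Box\ T}$ occurs in a typing judgment under a well-formed environment, hence $T$ is well-formed in $\G$ and therefore $\cs{T} \subseteq \dom{\G}$. Thus the side condition is discharged by the standing well-formedness assumption rather than by the induction hypothesis.
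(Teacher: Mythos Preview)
Your inductive structure matches the paper exactly: induction on the typing derivation, with only \ruleref{box} and \ruleref{sub} applicable, and in the \ruleref{sub} case Lemma~\ref{lemma:subtype-inversion-boxed} together with Lemma~\ref{lemma:val-typing} to force the boxed shape before invoking the IH. The paper simply says ``analogous to the proof of Lemma~\ref{lemma:cf-fun}'', and your write-up is precisely that analogy spelled out.

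The gap is in your treatment of the side condition $\cs{T}\subseteq\dom{\G}$. Your appeal to well-formedness does not deliver this: well-formed types may contain $\univ$ (and $\rdroot$) in capture sets---witness the \ruleref{dvar} rule, which places $x :_D \tRef{S}\capt\{\univ\}$ into a well-formed context. So well-formedness of $T$ under $\G$ only yields $\cs{T}\subseteq\dom{\G}\cup\{\univ,\rdroot\}$, not $\cs{T}\subseteq\dom{\G}$. Worse, the side condition is not provable from the hypothesis as stated: take $\G = y : S\capt\{\univ\}$; then $\typ{\G}{\Box\,y}{\Box(S\capt\{y\})}$ by \ruleref{box}, and by \ruleref{sub}, \ruleref{boxed}, \ruleref{capt}, and \ruleref{sc-var} one obtains $\typ{\G}{\Box\,y}{\Box(S\capt\{\univ\})}$, yet $\{\univ\}\not\subseteq\dom{\G}$.

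This is a defect of the lemma statement rather than of your argument; the paper's ``analogous'' proof does not address it either, and the side condition is never used at the lemma's sole call site in the proof of Progress (the \ruleref{unbox} case already has $C\subseteq\dom{\G}$ as a premise). The clean fix is to drop that clause from the conclusion; your proof then goes through as written.
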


\begin{proof}
	Analogous to the proof of Lemma \ref{lemma:cf-fun}.
\end{proof}

\begin{lemma}[Canonical forms: reader]
  \label{lemma:cf-reader}
  $\typ{\G}{v}{C\,\tRdr{S}}$
  implies
  $v = \reader x$ for some $x$,
  such that
  $\typ{\G}{x}{C\,\tRef{S}}$.
\end{lemma}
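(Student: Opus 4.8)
The plan is to mirror the proof of Lemma~\ref{lemma:cf-fun} and proceed by induction on the derivation of $\typ{\G}{v}{\tRdr{S}\capt C}$. Because $v$ is a value whose ascribed shape is a reader, the only rules whose conclusion can match this type are the introduction rule \ruleref{reader} and the subsumption rule \ruleref{sub}: the rules \ruleref{abs}, \ruleref{tabs}, and \ruleref{box} assign a function, $\forall$-, or box shape rather than $\tRdr{S}$, and \ruleref{var} does not apply since $v$ is a value and not a variable. Hence all remaining cases are vacuous, and I only have two cases to discharge.

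First I would handle the \ruleref{sub} case, where $\typ{\G}{v}{T_0}$ with $\sub{\G}{T_0}{\tRdr{S}\capt C}$. Invoking the reader subtype-inversion lemma (Lemma~\ref{lemma:subtype-inversion-reader}) leaves two possible shapes for $T_0$, one of which is a type-variable shape $X\capt C_1$; this alternative is excluded by the value-typing lemma (Lemma~\ref{lemma:val-typing}), since a value cannot be typed at a bare type variable. Thus $T_0 = \tRdr{S}\capt C_0$ with $\sub{\G}{C_0}{C}$, and the induction hypothesis gives $v = \reader x$ together with $\typ{\G}{x}{\tRef{S}\capt C_0}$. It then remains to widen the capture set from $C_0$ to $C$: from $\sub{\G}{C_0}{C}$ and reflexivity of subtyping on the shape $\tRef{S}$, the \ruleref{capt} rule yields $\sub{\G}{\tRef{S}\capt C_0}{\tRef{S}\capt C}$, and one application of \ruleref{sub} closes the case.

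The base case \ruleref{reader} is where I expect the real work, precisely because the introduction rule pins the reader's capture set to $\set{x}$ rather than letting it inherit the capture set of the underlying reference. Concretely $v = \reader x$, the premise is $\typ{\G}{x}{\tRef{S}\capt C'}$ for some $C'$, and the derived type is $\tRdr{S}\capt\set{x}$, which forces $C = \set{x}$; the goal becomes $\typ{\G}{x}{\tRef{S}\capt\set{x}}$, i.e.\ to re-type the \emph{variable} $x$ with the tight, self-capturing set produced by \ruleref{var}. I would recover this by re-deriving $x$ through \ruleref{var} from its binding $x : S'\capt C'' \in \G$, which gives $\typ{\G}{x}{S'\capt\set{x}}$, and then extracting $\sub{\G}{S'}{\tRef{S}}$ from the premise via the mutable-reference subtype-inversion lemma (Lemma~\ref{lemma:subtype-inversion-mut}). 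Reconciling the type-variable alternative of that inversion with the concrete shape $\tRef{S}$, and finally combining \ruleref{var}, \ruleref{capt}, and \ruleref{sub}, is the one delicate point; the rest is routine.
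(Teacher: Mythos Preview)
Your proposal is correct and follows the same overall approach as the paper: induction on the typing derivation, with the \ruleref{sub} case handled via Lemma~\ref{lemma:subtype-inversion-reader} and Lemma~\ref{lemma:val-typing} followed by the induction hypothesis. The paper merely writes ``Analogous to the proof of Lemma~\ref{lemma:cf-fun}'' and leaves the details implicit; you go further by correctly spotting that the \ruleref{reader} base case is \emph{not} as immediate as the \ruleref{abs} base case in Lemma~\ref{lemma:cf-fun}, since the rule fixes the capture set to $\set{x}$ while its premise allows an arbitrary $C'$, and your re-derivation via \ruleref{var}, Lemma~\ref{lemma:subtype-inversion-mut}, \ruleref{capt}, and \ruleref{sub} is exactly what is needed to bridge that gap.
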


\begin{proof}
  Analogous to the proof of Lemma \ref{lemma:cf-fun}.
\end{proof}

\begin{lemma}[Store lookup inversion: typing]
  \label{lemma:store-lookup-inversion-typing}
  If (i) $\match{}{\sta}{\G}$
  (ii) $\typ{\G, \Delta}{x}{\tCap{C}{S}}$,
  (iii) $\sta(\tVal{x}) = v$,
  then $\typ{\G, \Delta}{v}{\tCap{C^\prime}{S}}$.
\end{lemma}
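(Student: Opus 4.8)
The plan is to induct on the derivation of $\typ{\G, \Delta}{x}{\tCap{C}{S}}$. Since the subject is a variable, only \ruleref{var} and \ruleref{sub} can derive this judgment, so the induction has exactly two cases. Throughout I keep $x$, $\sta$, $\G$, $\Delta$, $v$ fixed and generalize over $C$ and $S$, so that the induction hypothesis applies to the immediate subderivation in the \ruleref{sub} case. I first record that, because $\sta(\tVal{x}) = v$ succeeds, $x$ is bound by an immutable value binding in $\sta$; hence $x \in \dom{\G}$, and by the Barendregt convention $x \notin \dom{\Delta}$, so the (unique) binding of $x$ lies in $\G$.

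A preliminary \emph{store-lookup inversion} is the key auxiliary fact: if $\match{}{\sta}{\G}$ and $\sta(\tVal{x}) = v$, then $x :_{\set{}} \tCap{\cv{v}}{S_x} \in \G$ for some shape $S_x$ and $\typ{\G}{v}{\tCap{\cv{v}}{S_x}}$. I would prove this by induction on $\match{}{\sta}{\G}$: \ruleref{st-empty} is vacuous; \ruleref{st-val} either matches $x$ (reading $S_x$ and the value typing directly off its premise, then weakening) or recurses on an unrelated value binding; and \ruleref{st-var}, \ruleref{st-set} leave $\sta(\tVal{x})$ unchanged, so I simply recurse.

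In the \ruleref{var} case the conclusion is $\tCap{\set{x}}{S}$, so $S$ is precisely the shape of $x$'s binding. By uniqueness of that binding together with store-lookup inversion, it is $x :_{\set{}} \tCap{\cv{v}}{S} \in \G$ with $\typ{\G}{v}{\tCap{\cv{v}}{S}}$; weakening to $\G, \Delta$ and taking $C' = \cv{v}$ closes the case. In the \ruleref{sub} case I have $\typ{\G,\Delta}{x}{\tCap{C''}{S'}}$ (a bare shape being the empty-capture instance) and $\sub{\G,\Delta}{\tCap{C''}{S'}}{\tCap{C}{S}}$. The induction hypothesis yields $\typ{\G,\Delta}{v}{\tCap{C_v}{S'}}$ for some $C_v$. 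I then transport this along the shapes using a \emph{shape-projection} property of subtyping, namely $\sub{\G,\Delta}{\tCap{C_1}{S_1}}{\tCap{C_2}{S_2}}$ implies $\sub{\G,\Delta}{S_1}{S_2}$, which is immediate by induction on the subtyping derivation (\ruleref{capt} supplies it as a premise; \ruleref{refl}, \ruleref{tvar}, \ruleref{top}, \ruleref{fun}, \ruleref{tfun}, \ruleref{boxed} either equate or already relate the shapes; \ruleref{trans} composes). This gives $\sub{\G,\Delta}{S'}{S}$, and then \ruleref{capt} (with reflexivity of subcapturing on $C_v$) followed by \ruleref{sub} yields $\typ{\G,\Delta}{v}{\tCap{C_v}{S}}$, so $C' = C_v$ works.

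The step that needs the most care is the \ruleref{sub} case. A naive induction that tries to preserve the entire type fails, because the intermediate type $\tCap{C''}{S'}$ generally has a different \emph{shape} from $S$; the conclusion only fixes the shape $S$ while leaving the capture set free, so the right move is to carry the shape through the induction and recover $\sub{\G,\Delta}{S'}{S}$ via shape projection. The remaining subtlety is purely bookkeeping: confirming via store-lookup inversion that the value binding of $x$ carries exactly the shape that \ruleref{var} reports, which supplies a clean base case.
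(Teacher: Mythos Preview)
Your proof is correct, but it is organized dually to the paper's. The paper inducts on the store typing derivation $\match{}{\sta}{\G}$: when it reaches the \ruleref{st-val} binding for $x$, it ``inspects'' the typing derivation $\typ{\G,\Delta}{x}{\tCap{C}{S}}$ to extract $\sub{\G,\Delta}{S'}{S}$ (your shape-projection fact, left implicit) and then closes with weakening and \ruleref{sub}. You instead induct on the typing derivation, handling \ruleref{var} and \ruleref{sub} explicitly, and push the store-side reasoning into a separately proved ``store-lookup inversion'' auxiliary. Both arguments use exactly the same ingredients---the precise store typing of the value, weakening, and shape projection for subtyping---so neither buys real extra generality; your decomposition is a bit more modular (the auxiliary lemma and the shape-projection fact are reusable), while the paper's single induction is marginally shorter because it avoids stating those facts as standalone lemmas.
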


\begin{proof}
  By induction on the derivation of $\match{}{\sta}{\G}$.

  \emph{Case \ruleref{st-empty}}.
  Contradictory.

  \emph{Case \ruleref{st-val}}.
  Then $\sta = \sta_0, \tVal{y} \mapsto v$,
  $\match{}{\sta_0}{\G_0}$,
  $\typ{\G_0}{v}{T}$,
  and $\G = \G_0, y : T$.
  If $x = y$,
  then $x : T \in \G, \Delta$.
  We have $\sta(\tVal{x}) = v$.
  By inspecting the typing derivation $\typ{\G, \Delta}{x}{\tCap{C}{S}}$,
  we can show that
  $\sub{\G, \Delta}{\set{x}}{C}$
  and $\sub{\G, \Delta}{S^\prime}{S}$
  where $T = \tCap{C^\prime}{S^\prime}$ for some $C^\prime$.
  By weakening we show that
  $\typ{\G, \Delta}{v}{\tCap{C^\prime}{S^\prime}}$.
  Then, we invoke the \ruleref{sub} rule and conclude.
  Otherwise, we have $x \ne y$.
  Finally by the IH we conclude this case.

  \emph{Case \ruleref{st-var}}.
  Then $\sta = \sta_0, \tVar{y} = v$,
  $\match{}{\sta_0}{\G_0}$,
  $\typ{\G_0}{v}{S}$,
  and $\G = \G_0, y :_{\dom{\G_0}} \tCap{\set{\univ{}}}{\tRef{S}}$.
  If $x = y$,
  it contradicts with $\sta(\tVal{x}) = v$.
  Otherwise, we have $x \ne y$,
  we conclude this case by the IH.

  \emph{Case \ruleref{st-set}}. By IH.
\end{proof}

\begin{lemma}[Store lookup inversion: term abstraction]
  \label{lemma:lookup-inversion-fun}
	If
  (i) $\match{}{\sta; e}{\G; \Delta}$;
  (ii) $\typ{\G, \Delta}{x}{C\,\forall(z:_D T)U}$
  and (iii) $\sta(\tVal{x}) = \lambda(z:_D T^\prime) t$,
  then
  $\sub{\G, \Delta}{T}{T^\prime}$
  and $\typ{\G, \Delta, z : T^\prime}{t}{U}$.
\end{lemma}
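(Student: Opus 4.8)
The plan is to obtain the result by composing two inversion lemmas already proved above, with essentially no new induction. First I would unfold the shorthand in hypothesis (i), extracting the conjunct $\match{}{\sta}{\G}$ (the evaluation-context matching $\match{\G}{e}{\Delta}$ is not needed for this lemma). The shape type of $x$ under $\G, \Delta$ is the function type $\forall(z :_D T) U$, and by hypothesis (iii) the store lookup $\sta(\tVal{x})$ returns the value $\lambda(z :_D T^\prime) t$. I would therefore invoke Lemma~\ref{lemma:store-lookup-inversion-typing}: since it preserves the shape type while only possibly replacing the capture set, it yields $\typ{\G, \Delta}{\lambda(z :_D T^\prime) t}{\tCap{C^\prime}{\forall(z :_D T) U}}$ for some $C^\prime$.

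The second and final step is to apply the canonical-forms lemma for term abstractions, Lemma~\ref{lemma:cf-fun}, to this judgement. That lemma asserts the existence of some declared argument type and body for which the abstraction is well typed; because the value in question is syntactically a $\lambda$-abstraction, those existential witnesses are forced to coincide with the concrete $T^\prime$ and $t$ returned by the store (a value determines its components syntactically, up to the Barendregt renaming of the bound variable $z$). Reading off the two conclusions of Lemma~\ref{lemma:cf-fun} with these witnesses gives precisely $\sub{\G, \Delta}{T}{T^\prime}$ and $\typ{\G, \Delta, z :_D T^\prime}{t}{U}$, the two goals of the lemma.

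I do not expect a genuine obstacle, as the real work has been front-loaded into the two supporting lemmas; this statement is simply their composition. The only points requiring a little care are (a) that Lemma~\ref{lemma:store-lookup-inversion-typing} is phrased over the combined context $\G, \Delta$, so the extra evaluation-context bindings in $\Delta$ are carried along without any separate weakening step, and (b) the identification of the existential witnesses of the canonical-forms lemma with the specific value delivered by the store lookup, which is immediate from the syntactic rigidity of values.
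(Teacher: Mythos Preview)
Your proposal is correct and matches the paper's own proof essentially line for line: first invoke Lemma~\ref{lemma:store-lookup-inversion-typing} to type the looked-up value at $\tCap{C^\prime}{\forall(z :_D T) U}$, then apply the canonical-forms Lemma~\ref{lemma:cf-fun} to extract the two conclusions. Your additional remarks about the role of $\Delta$ and the identification of the existential witnesses are accurate and add clarity without departing from the paper's argument.
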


\begin{proof}
  We first invoke Lemma \ref{lemma:store-lookup-inversion-typing} to show that
  $\typ{\G, \Delta}{\lambda(z :_D T^\prime) t}{\tCap{C^\prime}{\forall(z :_D T) U}}$.
  By Lemma \ref{lemma:cf-fun},
  we can show that
  $\sub{\G, \Delta}{T}{T^\prime}$,
  and $\typ{\G, \Delta, z :_D T^\prime}{t}{U}$.
  We therefore conclude.
\end{proof}

\begin{lemma}[Store lookup inversion: type abstraction]
  \label{lemma:lookup-inversion-tfun}
	If
  (i) $\match{}{\sta; e}{\G; \Delta}$;
  (ii) $\typ{\G, \Delta}{x}{C\,\forall[X <: S] U}$
  and (iii) $\sta(\tVal{x}) = \lambda[X <: S^\prime] t$,
  then
  $\sub{\G, \Delta}{S}{S^\prime}$
  and $\typ{\G, \Delta, X <: S^\prime}{t}{U}$.
\end{lemma}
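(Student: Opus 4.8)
The plan is to follow the proof of the term-abstraction version, Lemma \ref{lemma:lookup-inversion-fun}, essentially verbatim, merely swapping in the canonical-forms lemma for type abstractions. First I would unfold the combined judgment $\match{}{\sta; e}{\G; \Delta}$ into its two conjuncts and extract $\match{}{\sta}{\G}$, which is the only part needed here.

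The first substantive step is to apply store lookup inversion for typing, Lemma \ref{lemma:store-lookup-inversion-typing}, to hypotheses (ii) and (iii). I instantiate the shape type of that lemma with the polymorphic shape $\forall[X<:S]U$ and its capture set with $C$. Since that lemma preserves the shape while allowing only the capture set to vary, this yields $\typ{\G, \Delta}{\lambda[X<:S']t}{\tCap{C'}{\forall[X<:S]U}}$ for some capture set $C'$. Next I would invoke the canonical-forms lemma for type abstractions, Lemma \ref{lemma:cf-tfun}, on this judgment. That lemma guarantees that the value has the form $\lambda[X<:S'']t''$ with $\sub{\G, \Delta}{S}{S''}$ and $\typ{\G, \Delta, X<:S''}{t''}{U}$. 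Because the value is already syntactically $\lambda[X<:S']t$, and under the Barendregt convention the bound type variable and body are determined by the term, the existentially introduced $S''$ and $t''$ coincide with the given $S'$ and $t$. This delivers exactly the two desired conclusions $\sub{\G, \Delta}{S}{S'}$ and $\typ{\G, \Delta, X<:S'}{t}{U}$, completing the argument.

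There is no real obstacle of substance in this proof; it is almost entirely routine. The only point requiring a moment's care is the identification of the existentially quantified witness $S''$ produced by canonical forms with the concrete $S'$ appearing in hypothesis (iii), which is justified by the unique syntactic form of type-abstraction values. One should also confirm that Lemma \ref{lemma:store-lookup-inversion-typing} applies with an arbitrary shape type — here $\forall[X<:S]U$ — which it does, since that lemma is stated for any capturing type $\tCap{C}{S}$. Accordingly I expect the final write-up to read simply ``analogous to the proof of Lemma \ref{lemma:lookup-inversion-fun}.''
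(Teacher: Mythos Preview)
Your proposal is correct and follows exactly the paper's own approach, which simply states ``Analogous to the proof of Lemma \ref{lemma:lookup-inversion-fun}.'' Your detailed unpacking via Lemma~\ref{lemma:store-lookup-inversion-typing} followed by Lemma~\ref{lemma:cf-tfun} is precisely the intended argument, and your care about identifying the existentially produced $S''$ with the given $S'$ is the right observation.
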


\begin{proof}
	Analogous to the proof of Lemma \ref{lemma:lookup-inversion-fun}.
\end{proof}

\begin{lemma}[Store lookup inversion: boxed term]
  \label{lemma:lookup-inversion-tfun}
	If
  (i) $\match{}{\sta; e}{\G; \Delta}$;
  (ii) $\typ{\G, \Delta}{x}{C^\prime\,\Box\ \tCap{C}{S}}$
  and (iii) $\sta(\tVal{x}) = \Box\ y$,
  then
  $\typ{\G, \Delta}{y}{\tCap{C}{S}}$.
\end{lemma}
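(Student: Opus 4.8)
The plan is to reuse the same two-step recipe that settles the term- and type-abstraction versions of store lookup inversion (Lemma~\ref{lemma:lookup-inversion-fun} and its type-abstraction analogue): peel the type off the stored value with the store-lookup inversion lemma for typing, and then apply the matching canonical-forms lemma to read off the content typing. No induction is needed here; the two supporting lemmas do all the work, and the argument is a pure composition of them.

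Concretely, I would first extract $\match{}{\sta}{\G}$ from the shorthand $\match{}{\sta; e}{\G; \Delta}$ in hypothesis~(i). Treating the boxed type $\Box\ \tCap{C}{S}$ as the \emph{shape} component, I would apply Lemma~\ref{lemma:store-lookup-inversion-typing} to hypotheses~(ii) and~(iii), obtaining $\typ{\G, \Delta}{\Box\ y}{\tCap{C^{\prime\prime}}{\Box\ \tCap{C}{S}}}$ for some capture set $C^{\prime\prime}$. The key observation is that this lemma preserves the shape component $\Box\ \tCap{C}{S}$ while only possibly widening the outer capture set from $C^\prime$ to $C^{\prime\prime}$, and that widening is irrelevant to the goal.

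Next I would invoke Lemma~\ref{lemma:cf-boxed} (canonical forms for boxed terms), instantiated at the context $\G, \Delta$ and with the content type $T := \tCap{C}{S}$. Applied to the value typing $\typ{\G, \Delta}{\Box\ y}{\tCap{C^{\prime\prime}}{\Box\ \tCap{C}{S}}}$, it produces some $x$ with $\Box\ x = \Box\ y$ (hence $x = y$) together with $\typ{\G, \Delta}{y}{\tCap{C}{S}}$, which is exactly the conclusion. I do not expect any genuine obstacle, since the statement is analogous to the two already-proved cases; the only points requiring a moment of care are ensuring that Lemma~\ref{lemma:store-lookup-inversion-typing} is applied with the \emph{entire} boxed type $\Box\ \tCap{C}{S}$ playing the role of the shape argument, and that Lemma~\ref{lemma:cf-boxed} is instantiated under $\G, \Delta$ rather than a bare $\G$ (harmless, as that lemma is stated for an arbitrary context).
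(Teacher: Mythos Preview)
Your proposal is correct and follows essentially the same approach as the paper: the paper's proof is stated as ``Analogous to the proof of Lemma~\ref{lemma:lookup-inversion-fun}'', which unfolds to precisely the two-step recipe you describe (Lemma~\ref{lemma:store-lookup-inversion-typing} followed by Lemma~\ref{lemma:cf-boxed}).
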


\begin{proof}
	Analogous to the proof of Lemma \ref{lemma:lookup-inversion-fun}.
\end{proof}

\begin{lemma}[Store lookup inversion: reader]
  \label{lemma:lookup-inversion-reader}
  If
  (i) $\match{}{\sta; e}{\G; \Delta}$;
  (ii) $\typ{\G, \Delta}{x}{C\,\tRdr{S}}$
  and (iii) $\sta(\tVal{x}) = \reader y$,
  then
  $\typ{\G, \Delta}{y}{C'\,\tRef{S}}$.
\end{lemma}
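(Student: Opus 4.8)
The plan is to follow exactly the two-step recipe used for the term-abstraction case (Lemma~\ref{lemma:lookup-inversion-fun}): first recover a typing for the looked-up value directly from the store, and then apply the corresponding canonical-forms lemma to peel off the reader constructor. Since $\reader y$ is a value and its declared type is a reader type, these two ingredients are precisely what is needed.

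Concretely, I would first unfold the shorthand in hypothesis~(i) to obtain $\match{}{\sta}{\G}$ (the $\match{\G}{e}{\Delta}$ component plays no role in this lemma). Then, observing that $\tRdr{S}$ is itself a shape type, I would instantiate Lemma~\ref{lemma:store-lookup-inversion-typing} with this shape, feeding it hypotheses (ii) $\typ{\G, \Delta}{x}{C\,\tRdr{S}}$ and (iii) $\sta(\tVal{x}) = \reader y$. This yields $\typ{\G, \Delta}{\reader y}{C'\,\tRdr{S}}$ for some capture set $C'$. The point to note here is that Lemma~\ref{lemma:store-lookup-inversion-typing} transmits the shape component $\tRdr{S}$ unchanged (only the capture set is allowed to differ), so the reader shape, and in particular the stored type $S$, is carried over verbatim.

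Finally, I would apply the canonical-forms lemma for readers, Lemma~\ref{lemma:cf-reader}, to the derivation $\typ{\G, \Delta}{\reader y}{C'\,\tRdr{S}}$. Because the value is already exhibited as $\reader y$, the lemma directly produces $\typ{\G, \Delta}{y}{C'\,\tRef{S}}$, which is exactly the desired conclusion, with the witness $C'$ supplied by the previous step.

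I do not expect any genuine obstacle: the argument is routine once Lemmas~\ref{lemma:store-lookup-inversion-typing} and~\ref{lemma:cf-reader} are in hand, and it is structurally identical to the proof of Lemma~\ref{lemma:lookup-inversion-fun}. The only detail worth checking is that the shape type passes through the store lookup intact, so that the $S$ appearing in the conclusion's $\tRef{S}$ is the same $S$ as in the hypothesis's $\tRdr{S}$; this is guaranteed by the form of Lemma~\ref{lemma:store-lookup-inversion-typing}.
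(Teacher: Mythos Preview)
Your proposal is correct and follows exactly the paper's approach: the paper's proof says merely ``Analogous to the proof of Lemma~\ref{lemma:lookup-inversion-fun},'' and that proof is precisely the two-step recipe you describe---invoke Lemma~\ref{lemma:store-lookup-inversion-typing} to type the looked-up value, then apply the canonical-forms lemma (here Lemma~\ref{lemma:cf-reader}).
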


\begin{proof}
	Analogous to the proof of Lemma \ref{lemma:lookup-inversion-fun}.
\end{proof}

\begin{lemma}[Value typing with strict capture set]
  \label{lemma:value-typing-strict-cs}
	$\typ{\G}{v}{T}$
  implies $\exists T^\prime$
  such that $\typ{\G}{v}{T^\prime}$ and $\cs{T^\prime} = \cv{v}$.
\end{lemma}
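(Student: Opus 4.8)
The plan is to proceed by induction on the derivation of $\typ{\G}{v}{T}$, analysing the last rule applied. Because $v$ ranges only over the syntactic value forms $\lambda(x :_D U)t$, $\lambda[X <: S]t$, $\Box\ x$, and $\reader x$, the only rules that can conclude such a derivation are the four value-introduction rules \ruleref{abs}, \ruleref{tabs}, \ruleref{box}, and \ruleref{reader}, together with \ruleref{sub}; every other typing rule yields a term outside the value grammar and is therefore vacuously impossible as the final step.

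First I would dispatch the four introduction rules, in each of which the conclusion already assigns a type whose capture set is exactly $\cv{v}$, so I simply take $T' = T$. Concretely, \ruleref{abs} produces $\tCap{\cv{t}\setminus x}{\forall(x :_D U) T_0}$, whose capture set is $\cv{t}\setminus x = \cv{\lambda(x :_D U)t}$; \ruleref{tabs} produces a type with capture set $\cv{t} = \cv{\lambda[X <: S]t}$; \ruleref{box} produces the shape type $\tBox{\tCap{C}{S}}$, whose capture set is empty, matching $\cv{\Box\ x} = \varemptyset$; and \ruleref{reader} produces $\tRdr{S}\capt\set{x}$, matching $\cv{\reader x} = \set{x}$. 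Each of these identities is read off directly by comparing the rule's conclusion against the corresponding clause in the definition of $\cv{\cdot}$ in Section~\ref{sec:cv}, so these cases are entirely routine.

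The remaining case is \ruleref{sub}, where $\typ{\G}{v}{T_0}$ and $\sub{\G}{T_0}{T}$. Here the key observation is that the lemma asserts only the \emph{existence} of some type with capture set $\cv{v}$, with no required relationship to $T$; consequently the induction hypothesis applied to the subderivation $\typ{\G}{v}{T_0}$ immediately supplies the desired $T'$, and the subtyping premise is simply discarded. I do not expect any genuine obstacle in this proof: the result holds essentially by construction, since each value-introduction rule is engineered to record precisely the captured variables of the value, and \ruleref{sub} is the only rule capable of widening the capture set beyond $\cv{v}$. The sole point requiring care is the bookkeeping that a bare shape type (as in the \ruleref{box} case) is treated as carrying an empty capture set, so that $\cs{\tBox{\tCap{C}{S}}} = \varemptyset$ as claimed.
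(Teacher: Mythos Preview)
Your proposal is correct and follows essentially the same approach as the paper's proof: induction on the typing derivation, with the value-introduction rules concluded immediately because their capture set is already $\cv{v}$, and the \ruleref{sub} case handled by the induction hypothesis. You even explicitly cover the \ruleref{reader} case, which the paper's proof omits but which is handled identically.
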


\begin{proof}
	By induction on the typing derivation.
  Only the following cases are applicable.

  \emph{Case \ruleref{abs}}.
  Then $v = \lambda(x :_D U) t$,
  $\typ{\G, x :_D U}{t}{T^\prime}$,
  and $T = \cv{t}/x\ \forall(z :_D U) T^\prime$.
  Then $\cs{T} = \cv{v} = \cv{t}/x$.
  This case is therefore concluded.

  \emph{Case \ruleref{tabs} and \ruleref{box}}.
  Analogous to \ruleref{abs}.

  \emph{Case \ruleref{sub}}.
  By IH.
\end{proof}

\begin{lemma}[Value typing inversion: capture set]
  \label{lemma:value-typing-inversion-cs}
	$\typ{\G}{v}{T}$ implies $\sub{\G}{\cv{v}}{\cs{T}}$.
\end{lemma}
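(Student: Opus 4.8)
The plan is to induct on the derivation of $\typ{\G}{v}{T}$. By Lemma~\ref{lemma:val-typing} a value is never assigned a bare type-variable type, so the only rules that can conclude the derivation are \ruleref{abs}, \ruleref{tabs}, \ruleref{box}, \ruleref{reader}, and \ruleref{sub}.

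For the four structural rules the result is immediate by reflexivity of subcapturing, because each of them assigns $v$ a type whose capture set is \emph{exactly} $\cv{v}$. Concretely, \ruleref{abs} gives $\tCap{\cv{t}\setminus x}{\forall(x :_D U) T_0}$ with $\cs{T} = \cv{t}\setminus x = \cv{v}$; \ruleref{tabs} gives a type with capture set $\cv{t} = \cv{v}$; \ruleref{reader} gives $\tRdr{S}\capt\set{x}$ with $\cs{T} = \set{x} = \cv{\reader{x}}$; and \ruleref{box} gives a bare boxed shape whose capture set is empty, matching $\cv{\Box\,x} = \set{}$. This is the same computation that underlies Lemma~\ref{lemma:value-typing-strict-cs}, and in every case $\sub{\G}{\cv{v}}{\cs{T}}$ holds reflexively.

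The only substantial case is \ruleref{sub}, where $\typ{\G}{v}{T'}$ and $\sub{\G}{T'}{T}$. The induction hypothesis gives $\sub{\G}{\cv{v}}{\cs{T'}}$, so by \ruleref{sc-trans} it suffices to show $\sub{\G}{\cs{T'}}{\cs{T}}$. I would establish this as a short auxiliary fact by induction on the derivation of $\sub{\G}{T'}{T}$: the \ruleref{capt} rule supplies $\sub{\G}{\cs{T'}}{\cs{T}}$ directly, \ruleref{refl} is trivial, and \ruleref{trans} composes two instances. Since by Lemma~\ref{lemma:val-typing} the type $T'$ of a value is headed by a concrete shape rather than a type variable, \ruleref{tvar} cannot be the last step, and \ruleref{fun}, \ruleref{tfun}, \ruleref{boxed} only relate two capturing types carrying identical capture sets.

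The main obstacle is the \ruleref{top} rule, the one place where the supertype can discard capture information: here $T = \top$ and $\cs{T}$ is the universal capture set $\set{\univ}$, so I must discharge $\sub{\G}{\cv{v}}{\set{\univ}}$. This is resolved by the fact that the universal capability is the greatest element of the subcapturing order: every capture set subcaptures $\set{\univ}$, which follows by a routine induction that grounds ordinary variables through \ruleref{sc-var}, reader capabilities through \ruleref{sc-reader}, and $\rdroot$ through \ruleref{sc-rdr-cap}. With that, the \ruleref{top} case closes, the auxiliary fact follows, and hence so does the lemma.
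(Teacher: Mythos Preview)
Your approach---induction on the typing derivation, with the \ruleref{abs}, \ruleref{tabs}, \ruleref{box}, and \ruleref{reader} cases closed by reflexivity and the \ruleref{sub} case handled by extracting a capture-set ordering from the subtyping premise---is precisely what the paper intends by ``analogous to the proof of Lemma~\ref{lemma:value-typing-strict-cs}.'' You also spell out the \ruleref{sub} case more carefully than the paper does.

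There is one concrete error. You close the \ruleref{top} branch of your auxiliary induction by asserting that $\cs{\top} = \set{\univ}$, but the paper's convention (stated explicitly in the proof of Lemma~\ref{lemma:subtype-inversion-tvar}: ``we consider $S$ to be equivalent to a capturing type with an empty capture set'') is that a bare shape type carries the \emph{empty} capture set, so $\cs{\top} = \set{}$. Under that convention your auxiliary fact ``$\sub{\G}{T'}{T}$ implies $\sub{\G}{\cs{T'}}{\cs{T}}$'' fails at \ruleref{top}, and indeed the lemma itself is too strong there: a term abstraction that captures a tracked capability $x$ can be typed at $\top$ via \ruleref{sub} and \ruleref{top}, yet $\sub{\G}{\set{x}}{\set{}}$ is not derivable in general. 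The paper's one-line proof does not address this either, so you have in fact uncovered a gap the paper glosses over; but your proposed resolution contradicts the paper's stated convention rather than repairing it. A separate minor point: your exclusion of \ruleref{tvar} via Lemma~\ref{lemma:val-typing} is unnecessary, since even when \ruleref{tvar} fires (e.g., on an intermediate type inside a \ruleref{trans} chain, which need not be a value type) both sides are bare shape types with empty capture sets and the case is trivial.
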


\begin{proof}
	Analogous to the proof of Lemma \ref{lemma:value-typing-strict-cs}.
\end{proof}

\preservation*

\begin{proof}
  Proceed by case analysis on the reduction derivation.

  \emph{Case \ruleref{apply}}.
  Then $t = \evctx{x\,y}$,
  $\sta(\tVal{x}) = \lambda(z:_D U)s$,
  $t^\prime = \fSubst{z}{y}s$,
  and $\sta^\prime = \sta$.
  By Lemma \ref{lemma:evctx-typing-inversion},
  $\exists \Delta, Q$ such that
  $\typec{\G}{e}{\Delta}{Q}{x\,y}{T}$
  and $\typ{\G, \Delta}{x\,y}{Q^\prime}$,
  where $Q = \fSubst{z}{y} Q^\prime$.
  Now, by induction on this typing derivation,
  we prove that
  (i) $\typ{\G, \Delta}{x}{C\,\forall(z :_D U^\prime)}{Q^{\prime}}$
  (ii) $\typM{D^\prime}{\G, \Delta}{y}{U^\prime}$,
  and (iii) $\ninter{\G, \Delta}{D}{y}$.
  For the \ruleref{app} case, it is conclude immediately from the preconditions.
  For the \ruleref{sub} case, it can be concluded from IH and \ruleref{sub}.
  Other cases are not applicable.
  By Lemma \ref{lemma:lookup-inversion-fun},
  we can show that
  $\sub{\G, \Delta}{U^\prime}{U}$,
  and $\typ{\G, \Delta, z :_D U}{s}{Q^\prime}$.
  We invoke Lemma \ref{lemma:term-subst-pres-typing}
  to show that $\typ{\G, \Delta}{\fSubst{z}{y} s}{\fSubst{z}{y} Q^\prime}$.
  Since $\sta(\tVal{x}) = \lambda(z :_D U) s$
  and $\match{}{\sta}{\G}$,
  we have $x : T_x \in \G$ for some $T_x$
  where $\cs{T_x} = \cv{s}/z$.
  We can show that
  $\cv{\fSubst{z}{y} s} \subseteq \cv{x\,y}\setminus \set{x} \cup \cs{T_x}$.
  Now we invoke Lemma \ref{lemma:widening-to-subi}
  to show that
  $\subi{\G; \Delta}{\fSubst{z}{y} s}{x\,y}$.
  By evaluation context reification,
  we show that
  $\typ{\G}{\evctx{\fSubst{z}{y} s}}{T}$
  and conclude this case.

  \emph{Case \ruleref{tapply}}.
  As above, but making use of Corollary \ref{coro:subset-to-subi}.

  \emph{Case \ruleref{open}}.
  As above, but making use of Lemma \ref{lemma:subcapt-to-subi}.

  \emph{Case \ruleref{get}}.
  As above.

  \emph{Case \ruleref{lift-let}}.
  Then $t = \evctx{\tLetMode{m}{x}{v}{s}}$,
  $\sta^\prime = \sta, \tValM{m}{x} \mapsto v$,
  $t^\prime = \evctx{s}$,
  and $\fv{v} \cup D \subseteq \dom{\sta}$.
  Now we invoke Lemma \ref{lemma:evctx-typing-inversion}
  to show that
  $\exists \Delta, Q$ such that
  $\typec{\G}{e}{\Delta}{Q}{\tLetMode{m}{x}{v}{s}}{T}$,
  and $\typ{\G, \Delta}{\tLetMode{m}{x}{v}{s}}{Q}$.
  By inspecting this typing derivation,
  we can show that
  $\typ{\G, \Delta}{v}{U}$,
  $\typ{\G, \Delta, x :_{\set{}} U}{s}{Q}$,
  and $\ninter{\G, \Delta}{v}{s}$ if $m = \parmode$.
  By Lemma \ref{lemma:value-typing-strict-cs}, narrowing,
  and Lemma \ref{lemma:subcapt-pres-ninter},
  we can show that $\exists U^\prime$ such that
  $\cs{U^\prime} = \cv{v}$,
  $\typ{\G, \Delta}{v}{U^\prime}$,
  and $\typ{\G, \Delta, x :_D U^\prime}{s}{Q}$.
  Set $\G^\prime = \G, x :_D U^\prime$.
  We can show that $\match{}{\sta^\prime}{\G^\prime}$ by \ruleref{st-val}.
  In the next, we show that
  $\sub{\G, \Delta, x :_D U^\prime}{\cv{s}}{\cv{\tLetMode{D}{x}{v}{s}}}$.
  If $x \notin \cv{s}$,
  we have $\cv{\tLetMode{D}{x}{v}{s}} = \cv{s} = \cv{s}\setminus \set{x}$.
  And the sub-goal is proven by the reflexivity of subcapturing.
  Otherwise, if $x \in \cv{s}$,
  we have $\cv{\tLetMode{D}{x}{v}{s}} = \cv{s}\setminus \set{x} \cup \cv{v}$,
  and $\cv{s} \subseteq \cv{s}\setminus \set{x} \cup \set{x}$.
  We can show that
  $\sub{\G, \Delta}{\set{x}}{\cv{v}}$.
  By the reflexivity of subcapturing and Corollary \ref{coro:subcapt-join-both},
  we can prove the goal.
  Since $\cv{v} \subseteq \dom{\sta}$,
  we can show that $\G, x :_{\set{}} U^\prime$ is well-formed.
  Therefore by permutation
  we have $\sub{\G, x :_{\set{}} U^\prime, \Delta}{\cv{s}}{\cv{\tLetMode{D}{x}{v}{s}}}$.
  Now we invoke Lemma \ref{lemma:subcapt-to-subi}
  to show that
  $\subi{\G, x :_{\set{}} U^\prime; \Delta}{s}{\tLetMode{D}{x}{v}{s}}$.
  By Lemma \ref{lemma:evctx-typing-weakening},
  we can show that $\typec{\G, x :_{\set{}} U^\prime}{e}{\Delta}{Q}{\tLetMode{D}{x}{v}{s}}{T}$.
  Again by permutation, we show that
  $\typ{\G, x :_{\set{}} U^\prime, \Delta}{s}{Q}$.
  We can therefore invoke the reification to show that
  $\typ{\G^\prime}{\evctx{s}}{T}$
  and conclude this case.

  \emph{Case \ruleref{rename}}.
  Then $t = \evctx{\tLetMode{m}{x}{y}{s}}$,
  $t^\prime = \evctx{\fSubst{x}{y} s}$,
  and $\sta^\prime = \sta$.
  We start by invoking Lemma \ref{lemma:evctx-typing-inversion}
  to show that $\exists \Delta, Q$
  such that
  $\typec{\G}{e}{\Delta}{Q}{\tLetMode{m}{x}{y}{s}}{T}$
  and $\typ{\G, \Delta}{\tLetMode{m}{x}{y}{s}}{Q}$.
  By inspecting this typing derivation,
  we can show that
  $\typ{\G, \Delta}{y}{U}$ for some $U$,
  $\typ{\G, \Delta, x :_m U}{s}{Q}$,
  and $\ninter{\G, \Delta}{y}{s}$.
  Now we invoke Lemma \ref{lemma:term-subst-pres-typing} to show that
  $\typ{\G, \Delta}{\fSubst{x}{y} s}{\fSubst{x}{y} Q}$.
  Since $x \notin \fv{Q}$, we have $\fSubst{x}{y} Q = Q$.
  Note that we have
  $\cv{\fSubst{x}{y} s} = \cv{\tLetMode{D}{x}{y}{s}}$.
  By Corollary \ref{coro:subset-to-subi},
  we can show that $\subi{\G; \Delta}{\fSubst{x}{y} s}{\tLetMode{D}{x}{y}{s}}$.
  Now we invoke the reification of evaluation context to show that
  $\typ{\G}{\evctx{\fSubst{x}{y} s}}{T}$
  and conclude this case.

  \emph{Case \ruleref{lift-var}}.
  Then $t = \evctx{\tLetVar{x}{y}{s}}$,
  $\sta(\tVal{y}) = v$,
  $\sta^\prime = \sta, \tVar{x} = v$,
  and $t^\prime = \evctx{s}$.
  By Lemma \ref{lemma:evctx-typing-inversion-mut},
  we show that
  $\typ{\G, x :_{\dom{\G}} \tRef{S}\capt\set{\univ{}}}{\evctx{s}}{T}$.
  We can show that
  $\match{\G}{\sta^\prime}{\G, x :_{\dom{\G}} \tRef{S}\capt\set{\univ{}}}$
  by Lemma \ruleref{st-var}.
  We can therefore conclude this case.

  \emph{Case \ruleref{lift-set}}.
  Then $t = \evctx{\tWrite{x}{y}}$,
  $\sta(\tVal{y}) = v$,
  $t^\prime = \evctx{v}$,
  and $\sta^\prime = \sta, \tSet{x} = v$.
  Now we invoke Lemma \ref{lemma:evctx-typing-inversion}
  to show that $\exists \Delta, Q$ such that
  $\typec{\G}{e}{\Delta}{Q}{\tWrite{x}{y}}{T}$,
  and $\typ{\G, \Delta}{\tWrite{x}{y}}{Q}$.
  By induction on this typing derivation,
  we can show that $\exists S$
  such that
  $\typ{\G, \Delta}{x}{C\,\tRef{S}}$,
  $\typ{\G, \Delta}{y}{S}$,
  and $\sub{\G, \Delta}{S}{Q}$.
  By Lemma \ref{lemma:store-lookup-inversion-typing},
  we can show that $\typ{\G, \Delta}{v}{S}$.
  By Lemma \ref{lemma:value-typing-inversion-cs},
  we have $\sub{\G, \Delta}{\cv{v}}{\varemptyset}$.
  Therefore we have $\sub{\G, \Delta}{\cv{v}}{\cv{\tWrite{x}{y}}}$.
  Now we invoke Lemma \ref{lemma:subcapt-to-subi} to show that
  $\subi{\G; \Delta}{v}{\tWrite{x}{y}}$.
  Now we reify the evaluation context to show that
  $\typ{\G}{\evctx{v}}{T}$.
  The remaining to be shown is that
  $\match{}{\sta^\prime}{\G}$.
  By inverting the typing judgment
  $\typ{\G, \Delta}{x}{\tRef{S}\capt C}$
  and considering that $x \in \dom{\G}$,
  we can show that
  $x :_D \tRef{S}\capt C' \in \G$.
  By induction on $\match{}{\sta}{\G}$,
  we can show that
  $\typ{\G}{v}{S}$ by the fact that
  $\sta(\tVal{y}) = v$ and weakening.
  Now $\match{}{\sta^\prime}{\G}$ can be derived by \ruleref{st-set}.
  This case is therefore concluded.
\end{proof}

\begin{lemma}[Evaluation context trampolining]
  \label{lemma:evctx-trampoline}
  Consider a reduction derivation
  $\reduction{\stDft{t}}{\st{\sta^\prime}{t^\prime}}$.
  We have $\reduction{\stDft{\evctx{t}}}{\st{\sta^\prime}{\evctx{t^\prime}}}$ for any $e$.
\end{lemma}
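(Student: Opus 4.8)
The plan is to exploit the fact that every reduction rule in Figure~\ref{fig:all-typing} is stated \emph{uniformly over an arbitrary evaluation context}. In each rule the reduced configuration has the shape $\stDft{e_0[r]}$ and the reduct the shape $\st{\sta'}{e_0[r']}$ for some evaluation context $e_0$, where the redex $r$, its contractum $r'$, and the resulting store $\sta'$ are determined by the rule and its premises; crucially, all premises are of the form $\sta(\tVal{x}) = \cdots$, $\sta(\tVar{x}) = \cdots$, or $\fv{v} \subseteq \dom{\sta}$, so they mention only $\sta$ and the redex $r$ and are insensitive to the surrounding context. Hence enlarging the context leaves every premise and every store transformation intact.

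First I would establish a composition lemma for evaluation contexts: if $e$ and $e_0$ are evaluation contexts, then the context $e \circ e_0$ obtained by plugging $e_0$ into the hole of $e$ is again an evaluation context, and moreover $\evctx{e_0[s]} = (e \circ e_0)[s]$ for every term $s$. Both facts follow by a routine induction on the structure of $e$ over its three productions $[]$, $\tLetMode{m}{x}{e'}{t}$, and $\tLetMode{\parmode}{x}{t}{e'}$, using the Barendregt convention to rule out capture of the let-bound variables when the contexts are nested.

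With this in hand I would prove the lemma by case analysis on the rule $R$ concluding $\reduction{\stDft{t}}{\st{\sta'}{t'}}$. Whichever rule it is, the derivation fixes an evaluation context $e_0$ with $t = e_0[r]$ and $t' = e_0[r']$. By the composition lemma, $\evctx{t} = \evctx{e_0[r]} = (e \circ e_0)[r]$, and since $e \circ e_0$ is itself an evaluation context I can re-apply the same rule $R$ instantiated at $e \circ e_0$: its premises are unchanged (they depend only on $\sta$ and $r$), so it derives $\reduction{\stDft{(e \circ e_0)[r]}}{\st{\sta'}{(e \circ e_0)[r']}}$. Finally $(e \circ e_0)[r'] = \evctx{e_0[r']} = \evctx{t'}$, which closes the case. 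Equivalently, one may induct directly on $e$, peeling off one $\textsf{let}$ layer at a time; the single-layer step is exactly the composition argument applied to $e_0$ extended by one binding.

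I do not expect a genuine obstacle, since the statement is essentially a syntactic bookkeeping fact. The only points requiring care are verifying that the contextual shape and context-independence of premises really hold for \emph{all} reduction rules --- in particular that the store extensions in \ruleref{lift-let}, \ruleref{lift-var}, and \ruleref{lift-set} yield the same $\sta'$ regardless of the ambient context --- and that the composition of evaluation contexts is well defined under the Barendregt convention.
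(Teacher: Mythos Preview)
Your proposal is correct and takes essentially the same approach as the paper: a case analysis on the reduction derivation, observing that no rule's premises depend on the surrounding evaluation context, so the same rule fires with the enlarged context. The paper's proof is terser (it illustrates only the \ruleref{apply} case and leaves context composition implicit), whereas you spell out the evaluation-context composition lemma explicitly; both arguments are the same in substance.
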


\begin{proof}
	By straightforward case analysis on the reduction derivation.
  No rule makes use of the evaluation context.
  All cases are analogous, thus we present only the \ruleref{apply} case.

  \emph{Case \ruleref{apply}}.
  Then $t = \evpctx{x\,y}$,
  $\sta(\tVal{x}) = \lambda(z :_D T) s$,
  $\sta^\prime = \sta$,
  and $t^\prime = \evpctx{\fSubst{z}{y} s}$.
  None of these preconditions depends on the evaluation context $e$.
  We therefore conclude this case by the same rule.
\end{proof}

\begin{theorem}[Progress]
  If
  (i) $\match{}{\sta}{\G}$,
  (ii) $\typ{\G}{t}{T}$,
  then
  either $t$ is an answer,
  or $\exists \sta^\prime, t^\prime$ such that
  $\reduction{\stDft{t}}{\st{\sta^\prime}{t^\prime}}$.
\end{theorem}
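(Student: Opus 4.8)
The plan is to proceed by induction on the derivation of $\typ{\G}{t}{T}$, establishing in each case that $t$ is either an answer or admits a reduction step. The rules that introduce values or variables---\ruleref{var}, \ruleref{abs}, \ruleref{tabs}, \ruleref{box}, and \ruleref{reader}---are immediate, since the conclusion term is already an answer. The \ruleref{sub} case is discharged by the induction hypothesis applied to its subderivation, which types the very same $t$ under the same $\G$ and matching store $\sta$.

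For the elimination forms the strategy is uniform: use the store-lookup lemmas to locate the needed value in $\sta$, use canonical forms (or store-lookup inversion) to pin down its shape, and fire the matching reduction rule under the empty evaluation context. For \ruleref{app}, Lemma~\ref{lemma:store-lookup-fun} and Lemma~\ref{lemma:lookup-inversion-fun} give $\sta(\tVal{x}) = \lambda(z :_D U')s$, so \ruleref{apply} reduces $x\,y$ to $\fSubst{z}{y}s$; \ruleref{tapp} is analogous via Lemma~\ref{lemma:store-lookup-tfun} and \ruleref{tapply}, and the \ruleref{unbox} case proceeds similarly, obtaining $\sta(\tVal{x}) = \tBox{y}$ through the corresponding store-lookup and canonical-forms reasoning (Lemma~\ref{lemma:cf-boxed}) before applying \ruleref{open}. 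For \ruleref{read}, Lemma~\ref{lemma:store-lookup-reader} produces a stored value for $x$, which Lemma~\ref{lemma:cf-reader} and Lemma~\ref{lemma:lookup-inversion-reader} identify as $\reader y$ with $y$ of mutable-reference type; Lemma~\ref{lemma:store-lookup-mut} then supplies $\sta(\tVar{y}) = v$, so \ruleref{get} applies. For \ruleref{write} and \ruleref{dvar} the operand whose value is required is of shape type, so Lemma~\ref{lemma:store-lookup-pure-val} yields $\sta(\tVal{y}) = v$ and \ruleref{lift-set} resp.\ \ruleref{lift-var} fires.

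The only case calling for structural reasoning is \ruleref{let}, where $t = \tLetMode{m}{x}{s}{t_0}$, and I would analyze the binding $s$. If $s$ is a value $v$, then its free variables, being bound in $\G$, are all bound in $\sta$, so the side condition of \ruleref{lift-let} is met and the rule applies; if $s$ is a variable, \ruleref{rename} applies; otherwise $s$ is a non-answer typed under $\G$ with the matching store $\sta$, so the induction hypothesis gives $\stDft{s} \red \st{\sta'}{s'}$, and Lemma~\ref{lemma:evctx-trampoline} propagates this step through the context $\tLetMode{m}{x}{[]}{t_0}$. The main obstacle I anticipate is conceptual rather than computational: seeing that progress for a let---whether sequential or parallel---is always witnessed by a step on the \emph{binding} $s$ alone, never requiring a step in the body $t_0$. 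This is exactly what keeps the induction clean, since $s$ is typed under the same $\G$ that matches $\sta$ and the trampolining lemma lifts its reduction uniformly. The remaining effort is the routine bookkeeping of chaining the store-lookup and canonical-forms lemmas so that every elimination redex finds its value in the store.
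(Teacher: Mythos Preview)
Your proposal is correct and follows essentially the same case analysis and lemma chain as the paper's proof. One minor citation slip: in the \ruleref{app} (and \ruleref{read}) case you invoke Lemma~\ref{lemma:lookup-inversion-fun} (resp.\ Lemma~\ref{lemma:lookup-inversion-reader}) to obtain the lambda (resp.\ reader) shape of the stored value, but those lemmas take that shape as a \emph{premise}; the paper instead uses Lemma~\ref{lemma:store-lookup-inversion-typing} followed by the canonical-forms lemma (Lemma~\ref{lemma:cf-fun}, Lemma~\ref{lemma:cf-reader}), exactly the ``canonical forms'' step you already name in your general description.
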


\begin{proof}
  By induction on the typing derivation.

  \emph{Case \ruleref{var}}.
  Then $t = x$.
  We can conclude this case because $x$ is an answer.

  \emph{Case \ruleref{sub}}.
  We directly apply the IH and conclude.

  \emph{Case \ruleref{abs}, \ruleref{tabs}, \ruleref{box} and \ruleref{reader}}.
  We conclude these cases because $t$ is a value.

  \emph{Case \ruleref{app}}.
  Then $t = x\,y$,
  $\typ{\G}{x}{C\,\forall(z :_D U) T^\prime}$
  where $T = \fSubst z y T^\prime$,
  and $\typ{\G}{y}{U}$.
  By Lemma \ref{lemma:store-lookup-fun},
  we can show that there exists $v$ such that
  $\sta(\tVal{x}) = v$.
  By Lemma \ref{lemma:store-lookup-inversion-typing},
  we have $\typ{\G}{v}{C\,\forall(z :_D U) T^\prime}$.
  By Lemma \ref{lemma:cf-fun},
  we have
  $v = \lambda(z :_D U^\prime) s$ for some $U^\prime$ and $s$.
  Now we conclude by \ruleref{apply}.

  \emph{Case \ruleref{tapp}}.
  As above.

  \emph{Case \ruleref{unbox}}.
  Then $t = \tUnbox{C}{x}$,
  $\typ{\G}{x}{\Box\ (S\capt C)}$,
  and $C \subseteq \dom{\G}$.
  By Lemma \ref{lemma:store-lookup-pure-val},
  we can show that $\exists v$ such that
  $\sta(\tVal{x}) = v$.
  By Lemma \ref{lemma:store-lookup-inversion-typing},
  we can show that $\typ{\G}{v}{\Box\ (S\capt C)}$.
  By Lemma \ref{lemma:cf-boxed},
  we have
  $z = \Box\ y$ for some $y$.
  Now we conclude by \ruleref{open}.

  \emph{Case \ruleref{let}}.
  Then $t = \tLetMode{m}{x}{s}{u}$,
  $\typ{\G}{s}{T_0}$ for some $T_0$,
  and $\typ{\G, x :_{\set{}} T_0}{s}{T}$.
  Proceed by a case analysis on the kind of $s$.
  \begin{itemize}
  \item
    If $s$ is a value,
    we can show that $\fv{s} \subseteq \dom{\G} = \dom{\sta}$.
    This allows for the application of the \ruleref{lift-let} rule.

  \item
    If $s$ is a variable,
    we can conclude by \ruleref{rename}.

  \item
    Otherwise $s$ is a term.
    By IH we can show that
    $\reduction{\stDft{s}}{\st{\sta^\prime}{{s^\prime}}}$
    for some $\sta^\prime$ and $s^\prime$.
    By Lemma \ref{lemma:evctx-trampoline},
    we can derive that
    $\reduction{\stDft{\tLetMode{D}{x}{s}{u}}}{\st{\sta^\prime}{\tLetMode{D}{x}{s^\prime}{u}}}$
    and conclude this case.
  \end{itemize}

  \emph{Case \ruleref{dvar}}.
  Then $t = \tLetVar{x}{y}{s}$,
  $\typ{\G}{y}{S}$,
  and $\typ{\G, x :_D \tRef{S}\capt\set{\univ}}{s}{T}$.
  We can show that $\sta(\tVal{y}) = v$ for some $v$.
  This case is therefore concluded by \ruleref{lift-var}.

  \emph{Case \ruleref{read}}.
  Then $t = \tRead{x}$,
  and $\typ{\G}{x}{\tRdr{S}\capt {C}}$.
  By Lemma \ref{lemma:store-lookup-reader},
  we can show that
  $\exists y. \sta(\tVal{x}) = \reader y$.
  Now we invoke Lemma \ref{lemma:cf-reader} to show that
  $\typ{\G}{y}{\tRef{S}\capt C}$.
  By using Lemma \ref{lemma:store-lookup-mut} we can show that
  $\exists v$ such that
  $\sta(\tVar{y}) = v$.
  Now we conclude this case by \ruleref{get}.

  \emph{Case \ruleref{write}}.
  Then $t = \tWrite{x}{y}$,
  $\typ{\G}{x}{\tRef{S}\capt C}$,
  and $\typ{\G}{y}{S}$.
  By Lemma \ref{lemma:store-lookup-pure-val},
  we can show that
  $\exists v$ such that $\sta(\tVal{y}) = v$.
  Now we conclude this case by \ruleref{lift-set}.
\end{proof}


\subsection{Confluence}
\label{sec:confluence-proof}

\subsubsection{Properties of Evaluation Context Focuses}

A focus is one way to split a term $t$ into an evaluation context $e$ and the focused term $s$
such that $t = \evctx{s}$.
When evaluating a let-binding,
either the bindee or the continuation get reduced,
resulting in multiple possible focuses for the same term.
Each focus is determined by the part we choose to reduce for each let expression.

\begin{definition}[Subfocus]
	We say $t = \ctx{e_1}{t_1}$ is a \emph{subfocus} of $t = \ctx{e_2}{t_2}$,
  written $\ctx{e_1}{t_1} \subfocus \ctx{e_2}{t_2}$, iff
  $\exists e$ such that $t_1 = \ctx{e}{t_2}$ and $e_2 = \ctx{e_1}{e}$.
\end{definition}

\begin{definition}[Branched focus]
	We say $t = \ctx{e_1}{t_1}$ and $t = \ctx{e_2}{t_2}$ are a pair of branched focuses,
  written $\ctx{e_1}{t_1} \branch \ctx{e_2}{t_2}$,
  iff $\exists e, D, e'_1, e'_2$ such that
  $t = \evctx{\tLetMode{D}{x}{\ctx{e'_1}{t_1}}{\ctx{e'_2}{t_2}}}$,
  $e_1 = \evctx{\tLetMode{D}{x}{\ctx{e'_1}{}}{\ctx{e'_2}{t_2}}}$,
  and $e_2 = \evctx{\tLetMode{D}{x}{\ctx{e'_1}{t_1}}{\ctx{e'_2}{}}}$.
\end{definition}

\begin{lemma}[Extending subfocus]
  \label{lemma:extending-subfocus}
	Given two focuses of the same term $t = \evctxN{1} = \evctxN{2}$,
  we can show that for any $e'$,
  $\evctxN{1} \subfocus \evctxN{2}$ implies
  $\evpctx{\evctxN{1}} \subfocus \evpctx{\evctx{2}}$.
\end{lemma}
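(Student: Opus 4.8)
The plan is to unfold the definition of subfocus on both the hypothesis and the goal, discharge the goal with the very same witnessing context, and lean only on the associativity of context plugging. Write the two focuses of $t$ as $\ctx{e_1}{t_1}$ and $\ctx{e_2}{t_2}$, matching the notation of the subfocus definition, so that $t = \ctx{e_1}{t_1} = \ctx{e_2}{t_2}$ and the hypothesis reads $\ctx{e_1}{t_1}\subfocus\ctx{e_2}{t_2}$.

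First I would apply the definition of $\subfocus$ to the hypothesis. This produces a context $e$ with $t_1 = \ctx{e}{t_2}$ (the smaller focused term decomposes through $e$ into the larger one) and $e_2 = \ctx{e_1}{e}$ (the larger outer context factors as the smaller one with $e$ plugged in). Next I would unfold the goal. Prepending $e'$ to a focus leaves the focused term untouched and merely composes $e'$ in front of the outer context, so $\evpctx{\ctx{e_1}{t_1}}$ is the focus with outer context $\ctx{e'}{e_1}$ and focused term $t_1$, and analogously for the $2$-side. Hence $\evpctx{\ctx{e_1}{t_1}} \subfocus \evpctx{\ctx{e_2}{t_2}}$ requires exhibiting a context $\hat e$ with $t_1 = \ctx{\hat e}{t_2}$ and $\ctx{e'}{e_2} = \ctx{\ctx{e'}{e_1}}{\hat e}$. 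I would take $\hat e = e$; the first equation is then exactly the first fact obtained from the hypothesis, and the focused terms $t_1,t_2$ are unchanged by the outer $e'$, so nothing more is needed there.

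The only substantive step is the second equation. Substituting $e_2 = \ctx{e_1}{e}$ yields $\ctx{e'}{e_2} = \ctx{e'}{\ctx{e_1}{e}}$, and it remains to see that $\ctx{e'}{\ctx{e_1}{e}} = \ctx{\ctx{e'}{e_1}}{e}$. This is precisely the associativity of plugging one evaluation context into another, and I expect it to be the main (indeed the only) obstacle. It holds by a routine structural induction on $e'$ over the grammar of evaluation contexts, the base case $e' = []$ being immediate and each let-framed case following from the induction hypothesis; in practice this associativity should already be available as a basic structural fact about composition of evaluation contexts, so I would invoke it directly rather than re-derive it in place. With that identity in hand both required equations hold for $\hat e = e$, and the lemma follows.
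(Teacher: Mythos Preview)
Your proposal is correct and follows essentially the same approach as the paper: unfold the subfocus hypothesis to obtain the witness $e$, reuse it for the goal, and close with the associativity identity $\ctx{e'}{\ctx{e_1}{e}} = \ctx{\ctx{e'}{e_1}}{e}$. The only difference is expository: the paper dismisses this last step as ``trivial,'' whereas you spell out that it is associativity of context plugging and sketch the structural induction on $e'$.
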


\begin{proof}
  Then for some $e$,
  we have $s_1 = \evctx{s_2}$,
  and $e_2 = \ctx{e_1}{e}$
  First of all,
  we can show that
  $\evctx{t} = \evctx{\evctxN{1}} = \evctx{\evctxN{2}}$,
  which means that the extended focuses are still for the same term.
  $s_1 = \evctx{s_2}$ is unaffected by the extension.
  And $\ctx{e'}{e_2} = \ctx{e'}{\ctx{e_1}{e}}$ is trivial.
\end{proof}

\begin{lemma}[Extending branched focus]
  \label{lemma:extending-branched-focus}
  Given two focuses of the same term $t = \evctxN{1} = \evctxN{2}$,
  for any $e'$ we can show that
  $\evctxN{1} \branch \evctxN{2}$ implies
  $\evpctx{\evctxN{1}} \branch \evpctx{\evctxN{2}}$.
\end{lemma}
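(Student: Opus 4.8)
The plan is to mirror the proof of Lemma~\ref{lemma:extending-subfocus}: unfold the definition of $\branch$, then prepend $e'$ to the outermost evaluation context and observe that all the branching data carries over verbatim. Writing the two given focuses of $t$ as $\ctx{e_1}{s_1}$ and $\ctx{e_2}{s_2}$, the hypothesis $\evctxN{1} \branch \evctxN{2}$ unfolds to witnesses $e, D, x, e'_1, e'_2$ with $t = \evctx{\tLetMode{D}{x}{\ctx{e'_1}{s_1}}{\ctx{e'_2}{s_2}}}$, $e_1 = \evctx{\tLetMode{D}{x}{\ctx{e'_1}{}}{\ctx{e'_2}{s_2}}}$, and $e_2 = \evctx{\tLetMode{D}{x}{\ctx{e'_1}{s_1}}{\ctx{e'_2}{}}}$.

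I would then take the witness for the extended pair to be the composite outer context $\ctx{e'}{e}$, keeping $D, x, e'_1, e'_2$ unchanged. Using that plugging is associative, namely $\evpctx{\evctx{\cdot}} = \ctx{\ctx{e'}{e}}{\cdot}$, prepending $e'$ to each of the three equalities above yields $\evpctx{t} = \ctx{\ctx{e'}{e}}{\tLetMode{D}{x}{\ctx{e'_1}{s_1}}{\ctx{e'_2}{s_2}}}$ together with the analogous identities for $\evpctx{e_1}$ and $\evpctx{e_2}$, in which the hole sits inside $e'_1$ and inside $e'_2$ respectively. Since $\evpctx{t} = \evpctx{\ctx{e_1}{s_1}} = \evpctx{\ctx{e_2}{s_2}}$, the two extended pairs are focuses of the same term $\evpctx{t}$, and the three identities are precisely the conditions required for $\evpctx{\evctxN{1}} \branch \evpctx{\evctxN{2}}$, which closes the argument.

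This is essentially pure bookkeeping and I anticipate no real obstacle; the only points to verify with care are that evaluation contexts are closed under composition, so that $\ctx{e'}{e}$ is again a legal context (immediate from the grammar in Figure~\ref{fig:syntax}), and that prepending $e'$ leaves the inner let-binding $\tLetMode{D}{x}{\cdot}{\cdot}$ and the two hole positions untouched, so that the binding-side/body-side structure distinguishing a branched focus from a subfocus is preserved. Because that inner structure is copied verbatim, no mode or separation side-condition has to be re-established.
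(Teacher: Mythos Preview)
Your proposal is correct and matches the paper's approach exactly: the paper's proof simply says ``Analogous to proof of Lemma~\ref{lemma:extending-subfocus},'' and your unfolding of the $\branch$ witnesses, prepending $e'$ to the outer context, and checking that the three defining equalities carry over by associativity of plugging is precisely that analogy spelled out.
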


\begin{proof}
  Analogous to proof of Lemma \ref{lemma:extending-subfocus}.
\end{proof}

\begin{lemma}[Different focuses of the same term]
  \label{lemma:analyze-focus}
	Given two focuses $\ctx{e_1}{s_1}$ and $\ctx{e_2}{s_2}$ of the same term $t$,
  we can show that one of the followings holds:
  \begin{enumerate}[(i)]
  \item $\ctx{e_1}{s_1} \subfocus \ctx{e_2}{s_2}$;
  \item $\ctx{e_2}{s_2} \subfocus \ctx{e_1}{s_1}$;
  \item $\ctx{e_1}{s_1} \branch \ctx{e_2}{s_2}$;
  \item $\ctx{e_2}{s_2} \branch \ctx{e_1}{s_1}$.
  \end{enumerate}
\end{lemma}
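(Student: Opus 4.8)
The plan is to proceed by structural induction on the shared term $t$ (equivalently, on the combined size of the two evaluation contexts). The guiding intuition is that an evaluation context is a path descending through a sequence of let-bindings, choosing at each parallel binding whether to enter the bindee or the body. Two such paths for the same term must either be nested — one a prefix of the other, giving a subfocus — or split apart at some parallel binding, one path entering the bindee and the other the body, giving a branched focus. The extending lemmas (Lemma~\ref{lemma:extending-subfocus} and Lemma~\ref{lemma:extending-branched-focus}) are precisely what let me transport a relation established on a subterm back up through a common prefix.

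First I would dispatch the degenerate cases. If $e_1 = []$ then $s_1 = t = \ctx{e_2}{s_2}$, so taking the witness $e := e_2$ gives $s_1 = \ctx{e}{s_2}$ and $e_2 = \ctx{e_1}{e}$, which is exactly $\ctx{e_1}{s_1} \subfocus \ctx{e_2}{s_2}$, i.e.\ conclusion (i); the case $e_2 = []$ is symmetric and yields (ii). Otherwise both contexts are nonempty, so each begins with a let at the top. Since they are contexts for the same $t$, the head constructor is uniquely determined, say $t = \tLetMode{m}{x}{u}{w}$, and both $e_1$ and $e_2$ agree on $m$, $x$, $u$, and $w$. Each context then either places its hole in the bindee, as $\tLetMode{m}{x}{e'_i}{w}$, or, when $m = \parmode$, in the body, as $\tLetMode{\parmode}{x}{u}{e'_i}$ (a body hole is forced to be parallel by the grammar of evaluation contexts).

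This produces four subcases. When both holes are in the bindee, $u = \ctx{e'_1}{s_1} = \ctx{e'_2}{s_2}$ are two focuses of the strictly smaller subterm $u$; the induction hypothesis relates $(e'_1,s_1)$ and $(e'_2,s_2)$ by one of the four relations, and I lift whichever relation it is through the common prefix $\tLetMode{m}{x}{[]}{w}$ using the corresponding extending lemma. The case where both holes are in the body (so $m = \parmode$) is analogous, lifting through $\tLetMode{\parmode}{x}{u}{[]}$. When the holes go to opposite sides — one into the bindee and one into the body, forcing $m = \parmode$ — the two focuses already match the shape in the definition of $\branch$ with empty outer context $e := []$, so I read off conclusion (iii) directly by instantiating the existentials with the let-kind $\parmode$ and the inner contexts $e'_1, e'_2$; the mirror arrangement yields (iv).

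The only real care needed is bookkeeping: confirming that the top-level binding is genuinely shared (immediate, since $t$ determines its own head constructor), that a body hole forces $\parmode$, and that the witnesses supplied to $\subfocus$ and $\branch$ line up with their definitions after prepending the stripped-off let-frame. I expect the two ``common prefix'' subcases to be the most delicate point, since there one must align the prefix chosen for the extending lemma with the let-frame removed at the head of $e_1$ and $e_2$; everything else is a routine matching of existential witnesses.
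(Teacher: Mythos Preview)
Your proposal is correct and essentially matches the paper's proof: both dispatch the empty-context cases to obtain subfocus, then in the nonempty case split on which side of the top-level let each context enters, invoking the induction hypothesis plus the extending lemmas when both go to the same side and reading off a branched focus when they diverge. The only cosmetic difference is that the paper organizes the induction on $e_1$ with an inner case analysis on $e_2$, whereas you induct on the shared term $t$; the case structure and the appeals to Lemmas~\ref{lemma:extending-subfocus} and~\ref{lemma:extending-branched-focus} are identical.
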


\begin{proof}
  We begin by induction on the first evaluation context $e_1$.

  \emph{Case $e_1 = []$}.
  Then we can demonstrate that $\ctx{e_1}{s_1} \subfocus \ctx{e_2}{c_2}$
  by setting $e = e_2$.

  \emph{Case $e_1 = \tLetMode{D}{x}{e'_1}{u}$}.
  We proceed the proof by induction on $e_2$.
  \begin{itemize}
  \item \emph{Case $e_2 = []$}.
    Then we conclude by showing that $\evctxN{2} \subfocus \evctxN{1}$.

  \item \emph{Case $e_2 = \tLetMode{m}{x}{e'_2}{u}$}.
    We can invoke IH and use Lemma \ref{lemma:extending-subfocus}
    and Lemma \ref{lemma:extending-branched-focus} to conclude this case.

  \item \emph{Case $e_2 = \tLetMode{m}{x}{u'}{e'_2}$}.
    Then we can show that $u' = \ctx{e'_1}{s_1}$
    and $u = \ctx{e'_2}{s_2}$.
    We can therefore conclude this case immediately
    by showing that $\evctxN{1} \branch \evctxN{2}$.
  \end{itemize}

  \emph{Case $e_1 = \tLetMode{m}{x}{u}{e'_1}$}.
  Analogous to the previous case. They are symmetric.
\end{proof}

\subsubsection{Confluence Theorems}

Now we demonstrate that the reduction of the calculus is \emph{confluent}.
This means that though we can arbitrarily interleave the reduction of both sides of the let bindings,
the result of the program will always be the same,
which implies the absence of data races and thus achieves the safe concurrency guarantee.

First of all,
in different reduction paths,
the store bindings can be lifted to the store in different orders.
For example,
we may introduce variables in different orders,
or have the set-bindings arranged in different ways.
Regardless of these permutations,
the respective stores should possess the same interpretations.
Thus, to carry out the proof
we have consider the equivalence between the stores up to the permutations.
To this end,
we define the \emph{equivalence} between two evaluation stores up to permutation.
\begin{definition}[Equivalent stores]
  We say two stores $\sta_1$
  and $\sta_2$ are equivalent,
  written $\sta_1 \sequiv \sta_2$,
  iff
  \begin{enumerate}[(1)]
  \item $\sta_2$ is permuted from $\sta_1$;
  \item $\forall x \in \bvar{\sta_1}$,
    $\sta_1(\tVar{x}) = \sta_2(\tVar{x})$.
  \end{enumerate}
\end{definition}

\begin{definition}[Equivalent configurations]
	We say two configurations
  $\st{\sta_1}{t_1}$ and $\st{\sta_2}{t_2}$ are equivalent
  iff $\sta_1 \sequiv \sta_2$ and $t_1 = t_2$.
\end{definition}

\begin{fact}[Store equivalence is an equivalence relation]
	The equivalence between stores $\sta_1 \sequiv \sta_2$
  is an equivalence relation:
  it is reflexive, symmetric and transitive.
\end{fact}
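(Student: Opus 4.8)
The plan is to verify each of the three defining properties of an equivalence relation directly from the definition of $\sequiv$, splitting each into two subgoals matching the two clauses of the definition: that the \emph{permuted-from} relation of clause~(1) holds, and that the mutable-lookup agreement of clause~(2) holds. The crucial auxiliary observation, which I would establish first, is that being a permutation is itself an equivalence relation on stores, and that whenever $\sta_2$ is permuted from $\sta_1$ the two stores have identical bound-variable sets, i.e. $\bvar{\sta_1} = \bvar{\sta_2}$. This lets me freely transport the universally quantified lookup condition of clause~(2) between $\bvar{\sta_1}$ and $\bvar{\sta_2}$, which is exactly what the directionality of symmetry and transitivity demands.

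For reflexivity, clause~(1) holds by the identity permutation and clause~(2) holds since $\sta(\tVar{x}) = \sta(\tVar{x})$ trivially for every $x \in \bvar{\sta}$. For symmetry, assuming $\sta_1 \sequiv \sta_2$, clause~(1) follows because the inverse of a permutation is a permutation, and clause~(2) follows by reading the hypothesis $\sta_1(\tVar{x}) = \sta_2(\tVar{x})$ backwards, using $\bvar{\sta_2} = \bvar{\sta_1}$ so that the quantifier ranges over the same set of variables. For transitivity, assuming $\sta_1 \sequiv \sta_2$ and $\sta_2 \sequiv \sta_3$, clause~(1) follows by composing the two permutations, and clause~(2) follows by chaining $\sta_1(\tVar{x}) = \sta_2(\tVar{x}) = \sta_3(\tVar{x})$ for each $x \in \bvar{\sta_1}$, where the second equality invokes the second hypothesis together with $\bvar{\sta_1} = \bvar{\sta_2}$.

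I expect no genuine obstacle: the argument is essentially bookkeeping once the permutation lemma is in place. The only point requiring care is the interplay between the quantifier domain in clause~(2) and the permutation of clause~(1). Since the lookup agreement is stated only over $\bvar{\sta_1}$, pushing it through symmetry and transitivity requires knowing that permutation preserves the bound-variable set; without this, the quantified condition could not be realigned to the correct store. With that single fact established up front, all three properties close immediately.
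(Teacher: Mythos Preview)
Your proposal is correct. The paper states this as a \emph{Fact} with no proof at all, so there is nothing to compare against; you have simply spelled out the routine verification that the paper leaves implicit, and your observation that permutation preserves $\bvar{\cdot}$ is exactly the point needed to realign the quantifier in clause~(2) for symmetry and transitivity.
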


\begin{lemma}[Value lookup in equivalent stores]
  \label{lemma:val-lookup-eqv-store}
	If $\sta_1 \sequiv \sta_2$
  then $\sta_1(\tVal{x}) = \sta_2(\tVal{x})$.
\end{lemma}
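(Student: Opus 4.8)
The plan is to reduce the whole statement to a single observation: the immutable value lookup $\sta(\tVal{x})$ is insensitive to the order in which store bindings appear, so that clause (1) of store equivalence alone suffices. Clause (2), which constrains the mutable lookups $\sta(\tVar{x})$, plays no role here — the asymmetry between the two lookup notions is precisely what makes value lookup the easy case.

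First I would recall the Barendregt convention adopted at the start of the proofs: all bound variables are distinct. Consequently, for any fixed $x$, a well-formed store contains \emph{at most one} binding of the form $\tVal{x}\mapsto v$. This uniqueness is the crux of the argument.

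Next I would establish that $\sta(\tVal{x})$ depends only on the multiset of bindings of $\sta$, not on their order. Concretely, $\sta(\tVal{x}) = v$ exactly when $\tVal{x}\mapsto v$ occurs somewhere in $\sta$, and $\sta(\tVal{x})$ is undefined otherwise; the intervening bindings $\tVar{y} := w$, $\tSet{y} := w$, and $\tVal{y}\mapsto w$ with $y\neq x$ are simply skipped by the defining equations of the lookup. I would make this precise by a short induction on $\sta$ following the definition of $\sta(\tVal{\cdot})$, using the uniqueness from the previous step to rule out a second matching binding so that the scan cannot return a different value depending on direction.

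Finally, since $\sta_1 \sequiv \sta_2$ gives by clause (1) that $\sta_2$ is a permutation of $\sta_1$, the two stores carry the same bindings, and in particular the same (unique, if any) value binding for $x$. By permutation-invariance of value lookup this yields $\sta_1(\tVal{x}) = \sta_2(\tVal{x})$, concluding the lemma. I do not expect a genuine obstacle: the only point requiring care is spelling out permutation-invariance of $\sta(\tVal{\cdot})$, which is exactly where uniqueness of value bindings is consumed; everything else is immediate from the definitions.
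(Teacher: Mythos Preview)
Your proposal is correct and follows essentially the same approach as the paper: the paper's proof simply observes that the value lookup function does not rely on the order of bindings, so permutation (clause (1) of store equivalence) immediately gives the result. Your version spells out the permutation-invariance argument in more detail, including the role of the Barendregt convention in guaranteeing uniqueness of the $\tVal{x}$ binding, but the underlying idea is identical.
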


\begin{proof}
  By the definition of store equivalence and the lookup function.
  The result of the lookup function does not rely on the order of bindings.
\end{proof}

\begin{lemma}[Variable lookup in equivalent stores]
  \label{lemma:var-lookup-eqv-store}
	If $\sta_1 \sequiv \sta_2$,
  then $\sta_1(\tVar{x}) = \sta_2(\tVar{x})$.
\end{lemma}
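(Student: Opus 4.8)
The plan is to observe that this lemma is almost a direct restatement of clause (2) in the definition of store equivalence, extended from $x \in \bvar{\sta_1}$ to an arbitrary variable $x$. So the whole argument splits into two cases according to whether $x$ is a bound mutable variable of $\sta_1$, and only the out-of-domain case requires any work.

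First I would handle the case $x \in \bvar{\sta_1}$. Here there is nothing to do: clause (2) of the definition of $\sta_1 \sequiv \sta_2$ states exactly that $\sta_1(\tVar{x}) = \sta_2(\tVar{x})$ for every $x \in \bvar{\sta_1}$, so the goal follows immediately.

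Next I would treat the case $x \notin \bvar{\sta_1}$. The key auxiliary observation is that permutation preserves the set of bound mutable variables: since clause (1) of the definition tells us that $\sta_2$ is obtained by permuting the bindings of $\sta_1$, the two stores have the same bound variables, i.e.\ $\bvar{\sta_1} = \bvar{\sta_2}$, and hence $x \notin \bvar{\sta_2}$ as well. I would then appeal to Definition~\ref{def:lookup-mut}: when $x$ is matched against a binding for a different variable $y$, the lookup simply recurses into the remaining store, so if $x$ occurs in no $\tVar{x} := v$ or $\tSet{x} := v$ binding the recursion reaches the empty store and the lookup is undefined. Therefore both $\sta_1(\tVar{x})$ and $\sta_2(\tVar{x})$ are undefined, and the two sides agree (both being undefined).

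I do not expect any genuine obstacle here; the lemma is essentially a bookkeeping consequence of the definition of $\sequiv$. The only point that needs a sentence of justification is that permutation preserves $\bvar{}$, which is what lets me transport the hypothesis $x \notin \bvar{\sta_1}$ to $\sta_2$ and conclude that the mutable-variable lookup is equally (un)defined on both stores. This lemma will then serve, alongside the analogous Lemma~\ref{lemma:val-lookup-eqv-store} for value lookups, as the basic compatibility fact about $\sequiv$ used throughout the confluence development.
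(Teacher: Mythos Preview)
Your proposal is correct and follows the same approach as the paper, which simply states that the lemma ``follows directly from the definition of store equivalence.'' You are just more explicit than the paper in spelling out the out-of-domain case $x \notin \bvar{\sta_1}$, where both lookups are undefined; the paper leaves this implicit.
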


\begin{proof}
  This follows directly from the definition of store equivalence.
\end{proof}

\begin{lemma}[Typing implies separation]
  \label{lemma:typing-implies-ninter}
  Given non-value terms $s_1$ and $s_2$,
  $\typDft{\tLetMode{\parmode}{x}{\ctx{e_1}{s_1}}{\ctx{e_2}{s_2}}}{T}$
  implies $\ninter{\G}{s_1}{s_2}$.
\end{lemma}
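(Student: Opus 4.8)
The plan is to extract from the typing derivation the separation premise that the \ruleref{let} rule attaches to parallel bindings, and then transport that separation down from the whole binding and body terms to their focused subterms $s_1$ and $s_2$. Recall that separation between terms unfolds to $\ninter{\G}{t}{u} = \ninter{\G}{\cv{t}\cap\dom{\G}}{\cv{u}\cap\dom{\G}}$, so the actual goal is $\ninter{\G}{\cv{s_1}\cap\dom{\G}}{\cv{s_2}\cap\dom{\G}}$.

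First I would invert the typing judgement. Since the only rules that can type a term of the form $\tLetMode{\parmode}{x}{\ctx{e_1}{s_1}}{\ctx{e_2}{s_2}}$ are \ruleref{let} and \ruleref{sub}, a short induction peeling off any \ruleref{sub} steps yields the \ruleref{let} premise $\ninter{\G}{\ctx{e_1}{s_1}}{\ctx{e_2}{s_2}}$, that is, $\ninter{\G}{\cv{\ctx{e_1}{s_1}}\cap\dom{\G}}{\cv{\ctx{e_2}{s_2}}\cap\dom{\G}}$.

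The key ingredient is a monotonicity property of $\cv$ under evaluation contexts: for a non-value $s$ and any $e$, $\cv{s}\cap\dom{\G}\subseteq\cv{\ctx{e}{s}}\cap\dom{\G}$. I would prove this by induction on $e$. The base case $e=[]$ is immediate. For $e=\tLetMode{m}{x}{e'}{u}$, the focus $\ctx{e'}{s}$ is again a non-value (it is either $s$ itself or a let-term), so the general clause of $\cv$ applies and $\cv{\ctx{e}{s}}=\cv{\ctx{e'}{s}}\cup\cv{u}\setminus\set{x}$; by the Barendregt convention $x$ is fresh, hence $x\notin\dom{\G}$, which gives $\cv{\ctx{e'}{s}}\cap\dom{\G}\subseteq\cv{\ctx{e}{s}}\cap\dom{\G}$, and the IH closes the case. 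The case $e=\tLetMode{\parmode}{x}{u}{e'}$ is symmetric. The non-value hypothesis on $s$ is exactly what prevents the collapsing clause of $\cv$ (the one discarding a value bindee) from dropping the focus, and this is the single place where that assumption is used.

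Finally I would combine the two pieces. Instantiating the monotonicity lemma gives $\cv{s_1}\cap\dom{\G}\subseteq\cv{\ctx{e_1}{s_1}}\cap\dom{\G}$ and likewise on the second side. Applying Corollary \ref{coro:subset-pres-ninter} (set inclusion preserves separation) to the inverted \ruleref{let} premise, once on each side with an intervening use of \ruleref{ni-symm}, narrows the separated capture sets down to $\cv{s_1}\cap\dom{\G}$ and $\cv{s_2}\cap\dom{\G}$, which is precisely $\ninter{\G}{s_1}{s_2}$. I expect the main obstacle to be the monotonicity lemma: carefully tracking the variables that $e$ binds, arguing they are disjoint from $\dom{\G}$ by freshness, and verifying that the non-value focus never triggers the value-discarding clause of $\cv$.
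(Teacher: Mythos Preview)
Your proposal is correct and follows essentially the same approach as the paper's proof: induction on the typing derivation to expose the \ruleref{let} premise $\ninter{\G}{\ctx{e_1}{s_1}}{\ctx{e_2}{s_2}}$, a straightforward induction on the evaluation contexts to establish the $\cv$-monotonicity inclusions (using the non-value hypothesis to avoid the value-discarding clause), and Corollary~\ref{coro:subset-pres-ninter} to conclude. Your write-up is in fact slightly more explicit than the paper's about the Barendregt freshness argument and the need to apply the corollary on both sides via \ruleref{ni-symm}.
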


\begin{proof}
  By induction on the typing derivation.

  \emph{Case \ruleref{let}}.
  Then we have $\ninter{\G}{\ctx{e_1}{s_1}}{\ctx{e_2}{s_2}}$.
  Since neither $s_1$ nor $s_2$ is a value,
  by straightforward induction on $e_1$ and $e_2$ we can show that
  $\cv{s_1} \cap \dom{\G} \subseteq \cv{\ctx{e_1}{s_1}} \cap \dom{\G}$
  and
  $\cv{s_2} \cap \dom{\G} \subseteq \cv{\ctx{e_2}{s_2}} \cap \dom{\G}$.
  Now we can conclude by applying Corollary \ref{coro:subset-pres-ninter}.

  \emph{Case \ruleref{sub}}.
  By the IH.
\end{proof}



\begin{lemma}[Exclusiveness of root capability]
  \label{lemma:ninter-root}
	If $\ninter{\G}{C_1}{C_2}$,
  then
  (i) $\univ{} \in C_1$ implies $\sub{\G}{C_2}{\set{}}$,
  and (ii) $\univ{} \in C_2$ implies $\sub{\G}{C_1}{\set{}}$.
\end{lemma}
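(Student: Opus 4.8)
The plan is to prove both parts (i) and (ii) simultaneously by induction on the height $h$ of the derivation of $\ninter{\G}{C_1}{C_2}$, following the same two-conclusion scheme used in Lemmas \ref{lemma:reader-specialization} and \ref{lemma:univ-specialization}. Proving the two statements together is essential: the \ruleref{ni-symm} case of one part appeals to the inductive hypothesis of the other part on a strictly shorter derivation, so neither statement can be established in isolation.

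First I would dispatch the \ruleref{ni-symm} case. If $\ninter{\G}{C_1}{C_2}$ is concluded from $\ninter{\G}{C_2}{C_1}$, then for (i) with $\univ \in C_1$ I apply the inductive hypothesis (ii) to the shorter subderivation $\ninter{\G}{C_2}{C_1}$ (where $\univ$ sits in the second component) to obtain $\sub{\G}{C_2}{\set{}}$; part (ii) is symmetric, using IH (i). Next, in the \ruleref{ni-set} case with $C_1 = \set{x_1,\dots,x_n}$: for (i), $\univ \in C_1$ means $\univ = x_j$ for some $j$, and the premise $\ninter{\G}{\set{\univ}}{C_2}$ has smaller height, so IH (i) yields $\sub{\G}{C_2}{\set{}}$ directly; for (ii) with $\univ \in C_2$, I apply IH (ii) to each premise $\ninter{\G}{\set{x_i}}{C_2}$ to get $\sub{\G}{\set{x_i}}{\set{}}$ and recombine with \ruleref{sc-set} to conclude $\sub{\G}{C_1}{\set{}}$.

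The remaining rules \ruleref{ni-degree}, \ruleref{ni-var}, and \ruleref{ni-reader} all conclude a singleton-against-singleton judgment $\ninter{\G}{\set{x}}{\set{y}}$, so I split on which side carries $\univ$. Whenever $\univ$ lies on the side that the rule constrains to be a context-bound variable (the subject of \ruleref{ni-degree}/\ruleref{ni-var}) or a degree element, the case is vacuous, because $\univ \notin \dom{\G}$ and, by definition, separation degrees never contain $\univ$. For \ruleref{ni-var} with $\univ$ on the opposite side, I invoke IH (ii) on the premise $\ninter{\G}{C}{\set{\univ}}$ to get $\sub{\G}{C}{\set{}}$ and then chain \ruleref{sc-var} with \ruleref{sc-trans} to derive $\sub{\G}{\set{x}}{\set{}}$. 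For \ruleref{ni-reader}, both premises require $\sub{\G}{\set{\cdot}}{\set{\rdroot}}$; since $\univ$ appears in one side, I would need $\sub{\G}{\set{\univ}}{\set{\rdroot}}$, which is underivable, making the case vacuous.

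The main obstacle is precisely this last observation that $\sub{\G}{\set{\univ}}{\set{\rdroot}}$ cannot be derived --- the same auxiliary fact already used in the proof of Lemma \ref{lemma:univ-specialization}, established by a short induction over the subcapturing rules showing that no rule can place $\univ$ below $\rdroot$. Beyond that, the only delicate point is bookkeeping the mutual induction so that every appeal to the inductive hypothesis is on a strictly shorter derivation; the height-indexed formulation makes this transparent, and all the subcapturing manipulations (\ruleref{sc-set}, \ruleref{sc-var}, \ruleref{sc-trans}) are routine.
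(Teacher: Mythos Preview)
Your proposal is correct and follows essentially the same approach as the paper's proof: a simultaneous induction establishing (i) and (ii) together, with \ruleref{ni-symm} swapping the two conclusions, \ruleref{ni-set} handled by the IH plus \ruleref{sc-set}, \ruleref{ni-degree} and \ruleref{ni-var} dispatched by noting that $\univ$ cannot be context-bound or a degree element (with the remaining \ruleref{ni-var} sub-case closed by the IH and \ruleref{sc-var}), and \ruleref{ni-reader} ruled out because $\sub{\G}{\set{\univ}}{\set{\rdroot}}$ is underivable. The only cosmetic difference is that you index the induction on derivation height while the paper uses plain structural induction; the arguments are otherwise the same.
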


\begin{proof}
	By induction on the derivation.

  \emph{Case \ruleref{ni-symm}}.
  By swapping the two conclusions in the IH.

  \emph{Case \ruleref{ni-set}}.
  Then $\overline{\ninter{\G}{x}{C_2}}^{x \in C_1}$.
  By applying the IH repeatedly,
  we can show that
  for any $x \in C_1$, we have
  (1) $\univ\in \set{x}$ implies $\sub{\G}{C_2}{\set{}}$,
  and (2) $\univ\in C_2$ implies $\sub{\G}{\set{x}}{\set{}}$.
  First, if $\univ\in C_1$, then exists $x \in C_1$ such that $\univ \in \set{x}$,
  which implies that $\sub{\G}{C_2}{\set{}}$ by the IH.
  Besides, if $x \in C_2$, we can show that $\sub{\G}{\set{x}}{\set{}}$ for any $x \in C_1$.
  By invoking Lemma \ref{lemma:subcapt-join-left} repeatedly we can show that
  $\sub{\G}{C_1}{\set{}}$.
  We can therefore conclude this case.

  \emph{Case \ruleref{ni-degree}}.
  Then $C_1 = \set{x}$ and $C_2 = \set{y}$.
  We can show that $x \ne \univ$ and $y \ne \univ$ by definition,
  and thus conclude this case.

  \emph{Case \ruleref{ni-var}}.
  Then $C_1 = \set{x}$, $C_2 = \set{y}$,
  $x : T \in \G$,
  and $\ninter{\G}{\cs{T}}{y}$.
  Firstly, we have $x \ne \univ$ and therefore $\univ \notin \set{x}$.
  Secondly, if $\univ \in \set{y}$ then by the IH we can show that
  $\sub{\G}{\cs{T}}{\set{}}$.
  Now we conclude this case using the \ruleref{sc-var} rule.

  \emph{Case \ruleref{ni-reader}}.
  Then $C_1 = \set{x}$, $C_2 = \set{y}$,
  $\subDft{\set{x}}{\set{\rdroot}}$,
  and $\subDft{\set{y}}{\set{\rdroot}}$.
  By straightforward induction on $\subDft{\set{x}}{\set{\rdroot}}$
  we can show that $\univ \notin \set{x}$
  and similarly for $\univ \notin \set{y}$.
  We can therefore conclude this case.
\end{proof}

\begin{lemma}
  \label{lemma:ninter-self}
	Given any $x :_D \tCap{\set{\univ}}{S} \in \G$,
  $\ninter{\G}{x}{x}$ is impossible.
\end{lemma}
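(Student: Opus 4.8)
The plan is to assume $\ninter{\G}{\set{x}}{\set{x}}$ and derive a contradiction, proceeding by induction on the height of its derivation with a case analysis on the last rule applied. Since both capture sets are the singleton $\set{x}$, only \ruleref{ni-symm}, \ruleref{ni-set}, \ruleref{ni-degree}, \ruleref{ni-var} and \ruleref{ni-reader} can conclude it. The \ruleref{ni-symm} and \ruleref{ni-set} cases each produce a strictly smaller derivation of the very same judgment $\ninter{\G}{\set{x}}{\set{x}}$, so they are dispatched by the induction hypothesis. The \ruleref{ni-degree} case would require $x \in D$, i.e.\ $x$ occurring in its own separation degree; this is ruled out by well-formedness of $\G$, since the separation degree $D$ of the binding $x :_D \tCap{\set{\univ}}{S}$ must be well-formed in the prefix preceding that binding and therefore cannot mention $x$ itself.

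This leaves the two substantive cases, \ruleref{ni-var} and \ruleref{ni-reader}, both of which I would reduce to the underivability of a subcapturing judgment. In the \ruleref{ni-var} case the premise is $\ninter{\G}{\set{\univ}}{\set{x}}$ (the capture set recorded for $x$ is exactly $\set{\univ}$, by uniqueness of the binding); applying Lemma~\ref{lemma:ninter-root} with $\univ \in \set{\univ}$ yields $\sub{\G}{\set{x}}{\varemptyset}$. In the \ruleref{ni-reader} case the premise is directly $\sub{\G}{\set{x}}{\set{\rdroot}}$. Both targets, $\varemptyset$ and $\set{\rdroot}$, fail to mention either $\univ$ or $x$, so it suffices to prove a single auxiliary \emph{root-reachability} fact: whenever $x :_D \tCap{\set{\univ}}{S} \in \G$ and $\sub{\G}{\set{x}}{C'}$, we have $\univ \in C'$ or $x \in C'$.

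I would prove this auxiliary fact by induction on the subcapturing derivation, treating ``$\univ \in C' \vee x \in C'$'' as a forward-monotone invariant of the right-hand side. The base rules are immediate: \ruleref{sc-var} forces $C'$ to be the recorded capture set $\set{\univ}$, \ruleref{sc-elem} puts $x$ into $C'$, and \ruleref{sc-set} is the singleton instance reducing to the same goal. The crucial point is that \ruleref{sc-trans}, which normally obstructs such underivability arguments (a naive height induction fails because the decomposition of the composed derivation is not size-controlled), is here handled trivially by applying the induction hypothesis to both premises in turn, precisely because the invariant only grows along subcapturing. Given the fact, neither $\sub{\G}{\set{x}}{\varemptyset}$ nor $\sub{\G}{\set{x}}{\set{\rdroot}}$ can hold, closing both remaining cases.

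The main obstacle, and the point I would treat with care, is the \ruleref{sc-reader} rule in the auxiliary induction: it concludes $\sub{\G}{\set{z}}{\set{\rdroot}}$ from $\isrdr{\G}{z}$, and $\set{\rdroot}$ witnesses neither $\univ$ nor $z$, so it would break the invariant exactly when $z = x$ is itself a reader. This is not an artifact of the proof technique: a reader capability carrying the root capture set really would be self-separated through \ruleref{ni-reader}, so the statement holds only once we know $x$ is \emph{not} a reader. I would therefore make explicit that the $\set{\univ}$-capturing binding is a mutable reference --- it is introduced as $\tCap{\set{\univ}}{\tRef{S}}$ by \ruleref{st-var} (equivalently \ruleref{dvar}), which is the only way the root capability $\univ$ enters a recorded capture set in an inert, store-typed context --- whence $\isrdr{\G}{x}$ fails by subtype inversion on reader types (Lemma~\ref{lemma:subtype-inversion-reader}) and the \ruleref{sc-reader} case of the auxiliary induction becomes vacuous.
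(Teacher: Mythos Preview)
Your approach matches the paper's exactly at the top level: induction on the separation derivation, with \ruleref{ni-symm}/\ruleref{ni-set} by IH, \ruleref{ni-degree} by well-formedness, \ruleref{ni-var} via Lemma~\ref{lemma:ninter-root} to obtain $\sub{\G}{\set{x}}{\varemptyset}$, and \ruleref{ni-reader} giving $\sub{\G}{\set{x}}{\set{\rdroot}}$, with each of the last two dispatched by an inner induction on the subcapturing derivation that the paper simply labels ``straightforward.''

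Where you go beyond the paper is in spelling out that inner induction as a single forward-monotone invariant and, crucially, noticing that the \ruleref{sc-reader} case is \emph{not} vacuous for arbitrary $S$. If $S$ were $\tRdr{S_0}$ (or a type variable bounded by one) then $\isrdr{\G}{x}$ would hold, $\sub{\G}{\set{x}}{\set{\rdroot}}$ would be derivable, and $\ninter{\G}{x}{x}$ would follow by \ruleref{ni-reader}---so the lemma as stated is in fact false in that corner. The paper's ``straightforward induction'' silently skips this case. Your diagnosis and repair are both correct: in every actual use (Corollary~\ref{coro:ninter-star-neq} directly, and Corollary~\ref{coro:ninter-rw-neq} via the subsequent lemma) the environment arises from store typing, so the only $\set{\univ}$-capturing bindings come from \ruleref{st-var} with shape $\tRef{S'}$, whence $\isrdr{\G}{x}$ fails by Lemma~\ref{lemma:subtype-inversion-reader} and the \ruleref{sc-reader} case becomes vacuous as needed. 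Making that hypothesis explicit is the right fix.
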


\begin{proof}
	By induction on the derivation of $\ninter{\G}{x}{x}$,
  wherein we derive contradiction for each case.

  \emph{Case \ruleref{ni-symm} and \ruleref{ni-set}}. By the IH.

  \emph{Case \ruleref{ni-degree}}.
  Then $x \in D$.
  This is contradictory with the well-formedness.

  \emph{Case \ruleref{ni-var}}.
  Then $\ninter{\G}{\set{\univ}}{x}$.
  By Lemma \ref{lemma:ninter-root}
  we can show that $\sub{\G}{\set{x}}{\set{}}$.
  By straightforward induction on it we can derive the contradiction.

  \emph{Case \ruleref{ni-reader}}.
  Then $\sub{\G}{\set{x}}{\set{\rdroot}}$.
  By straightforward induction on this derivation,
  we can derive the contradiction too.
\end{proof}

\begin{lemma}
  In an inert environment $\G$,
	given any $x : \tCap{\set{y}}{\tRdr{S}} \in \G$,
  and $y : {S'}\capt \set{\univ} \in \G$,
  $\ninter{\G}{x}{y}$ is impossible.
\end{lemma}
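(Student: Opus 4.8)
The plan is to mirror the proof of Lemma~\ref{lemma:ninter-self}. The intuition is that $x$ is a reader of the mutable reference $y$, so a read through $x$ running in parallel with a write through $y$ would be a genuine race; hence the two must not be separable. Because the hypotheses on $x$ and $y$ are asymmetric whereas \ruleref{ni-symm} swaps the two sides, I will strengthen the statement and prove simultaneously, by induction on the height of the separation derivation, that \emph{neither} $\ninter{\G}{\set{x}}{\set{y}}$ \emph{nor} $\ninter{\G}{\set{y}}{\set{x}}$ is derivable. With this formulation the \ruleref{ni-symm} case of each orientation closes by appealing to the other orientation at strictly smaller height, and \ruleref{ni-set} closes immediately since both capture sets are singletons.

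The two \ruleref{ni-degree} cases are discharged using inertness of $\G$ together with the well-formedness ordering of the context. For $\ninter{\G}{\set{x}}{\set{y}}$ via \ruleref{ni-degree} I obtain $y \in D$ where $x :_D \tRdr{S}\capt\set{y}$; since $\univ \notin \cs{\tRdr{S}\capt\set{y}} = \set{y}$, inertness (case~2) gives $\ninter{\G_1}{D}{\set{y}}$ for the prefix $\G_1$ preceding $x$, whence $\ninter{\G_1}{\set{y}}{\set{y}}$ by Lemma~\ref{lemma:ninter-to-elem}, contradicting Lemma~\ref{lemma:ninter-self} applied to $y$ (which is bound in $\G_1$). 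For the reversed orientation via \ruleref{ni-degree} I obtain $x \in D_y$ where $y :_{D_y} S'\capt\set{\univ}$; here $\univ$ is in the capture set, so inertness (case~1) forces $D_y = \dom{\G_1}$, placing $x$ before $y$, while well-formedness of $x$'s binding (whose capture set mentions $y$) places $y$ before $x$ --- a contradiction. The forward \ruleref{ni-var} case is the cleanest: it uses $x$'s capture set $\set{y}$ and leaves the premise $\ninter{\G}{\set{y}}{\set{y}}$, again impossible by Lemma~\ref{lemma:ninter-self}.

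The remaining cases reduce to two subcapturing impossibilities. Ruling out \ruleref{ni-reader} (in either orientation) requires showing $\sub{\G}{\set{y}}{\set{\rdroot}}$ cannot hold, and ruling out the reversed \ruleref{ni-var} --- whose premise $\ninter{\G}{\set{\univ}}{\set{x}}$ yields $\sub{\G}{\set{x}}{\varemptyset}$ via Lemma~\ref{lemma:ninter-root} --- requires showing $\sub{\G}{\set{x}}{\varemptyset}$ cannot hold. Both follow by induction on the subcapturing derivation. The key observations are: (i) $y$ is a genuine mutable reference (shape $\tRef{S_0}$, as in every inert context arising from store typing), and by Lemma~\ref{lemma:subtype-inversion-reader} a $\tRef$-shaped type is not below any reader type, so $\neg\isrdr{\G}{y}$ and \ruleref{sc-reader} is inapplicable to $y$; (ii) the only other rule that can carry $y$ on the left is \ruleref{sc-var}, which reduces the goal to $\sub{\G}{\set{\univ}}{C}$ with $C$ equal to $\set{\rdroot}$ or $\varemptyset$, while the goal about $x$ reduces through $x$'s capture set $\set{y}$ to the goal about $y$. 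I then invoke the auxiliary monotonicity fact that $\sub{\G}{C_1}{C_2}$ with $\univ \in C_1$ implies $\univ \in C_2$ (a straightforward induction, with \ruleref{sc-trans} handled by applying the induction hypothesis to both premises), which forces $\univ \in C$ and contradicts $\univ \neq \rdroot$ and $\univ \notin \varemptyset$.

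I expect the main obstacle to be exactly these subcapturing impossibilities, and specifically the \ruleref{sc-trans} step: a naive induction that expands $y$ into $\set{\univ}$ can, through transitivity, loop back to a judgment with the same left-hand side at no smaller height. The clean way around this is to route everything through the $\univ$-monotonicity fact above, so that the only data propagated across \ruleref{sc-trans} is membership of $\univ$, which does decrease properly; the reader exclusion in point~(i) is what guarantees $y$ cannot reach $\set{\rdroot}$ by a shortcut. This mirrors how the \ruleref{ni-reader} case of Lemma~\ref{lemma:ninter-self} is dispatched, so I treat these inductions as routine once the $\univ$-monotonicity lemma is in hand.
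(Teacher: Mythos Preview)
Your proposal is correct and mirrors the paper's proof, which likewise (though only implicitly) argues both orientations simultaneously and discharges each separation rule the same way --- inertness for \ruleref{ni-degree}, Lemma~\ref{lemma:ninter-self} and Lemma~\ref{lemma:ninter-root} for \ruleref{ni-var}, and a subcapturing induction for \ruleref{ni-reader}. One small slip: in this paper \ruleref{sc-var} is an axiom (it does not ``reduce the goal to $\sub{\G}{\set{\univ}}{C}$'' --- it simply fails to apply when the right-hand side is not $y$'s capture set), so the only nontrivial case in your subcapturing inductions is \ruleref{sc-trans}, and your monotonicity idea handles it once you also track membership of $y$ (and, for the $\varemptyset$ goal, of $x$ and $\rdroot$ as well, since $x$ is itself a reader).
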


\begin{proof}
  By induction on the derivation of
  $\ninter{\G}{x}{y}$.

  \emph{Case \ruleref{ni-symm} and \ruleref{ni-set}}.
  By the IH.

  \emph{Case \ruleref{ni-degree}}.
  Then $x \in D_y$ or $y \in D_x$,
  where $D_x$ and $D_y$ denotes the separation degree of $x$ and $y$
  in the environment respectively.
  We can show that $y \notin D_x$ by the well-formedness.
  If $y \in D_x$, then by the inertness we can show that
  $\ninter{\G}{y}{y}$.
  Now we derive the contradiction via Lemma \ref{lemma:ninter-self}.

  \emph{Case \ruleref{ni-var}}.
  Then either $\ninter{\G}{\univ}{x}$,
  or $\ninter{\G}{y}{y}$.
  In the first case we can show that $\sub{\G}{\set{x}}{\set{}}$ by Lemma \ref{lemma:ninter-root},
  from which we can derive the contradiction.
  In the second case,
  we invoke Lemma \ref{lemma:ninter-self} to derive the contradiction.

  \emph{Case \ruleref{ni-reader}}.
  Then we have $\sub{\G}{\set{y}}{\set{\rdroot}}$.
  By induction on the derivation of it, we can show contradiction in each case.
\end{proof}

\begin{corollary}
  \label{coro:ninter-star-neq}
	If (i) $\ninter{\G}{x}{y}$,
  (ii) $x : S_1\capt\set{\univ} \in \G$
  and (iii) $y : S_2\capt\set{\univ} \in \G$,
  then we have $x \ne y$.
\end{corollary}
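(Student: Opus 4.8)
The plan is to argue by contradiction, reducing the claim in a single step to Lemma~\ref{lemma:ninter-self}. Suppose towards a contradiction that $x = y$. Then hypothesis (i), $\ninter{\G}{x}{y}$, becomes $\ninter{\G}{x}{x}$. Meanwhile, hypothesis (ii) tells us that $x$ is bound in $\G$ at a type whose capture set is exactly $\set{\univ}$; spelling out the binding, this means $x :_D \tCap{\set{\univ}}{S_1} \in \G$ for some separation degree $D$ (every term binding in a typing context carries such a degree, so $D$ is supplied automatically). These are precisely the premises of Lemma~\ref{lemma:ninter-self}, which asserts that $\ninter{\G}{x}{x}$ is impossible for any variable bound at a $\set{\univ}$-captured type. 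The contradiction forces $x \ne y$.

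There is essentially no real obstacle here: the only point that needs checking is that hypothesis (ii) of the corollary matches the shape of the binding demanded by Lemma~\ref{lemma:ninter-self}. Since $\tCap{\set{\univ}}{S_1}$ is literally $S_1 \capt \set{\univ}$, the match is immediate. Note also that hypothesis (iii) on $y$ plays no role once we assume $x = y$; it is present only to keep the statement symmetric in $x$ and $y$, and the argument goes through using (i) and (ii) alone. The substantive work has already been done inside Lemma~\ref{lemma:ninter-self}, whose own proof rules out each separation rule (\ruleref{ni-degree} by well-formedness of the degree, \ruleref{ni-var} via Lemma~\ref{lemma:ninter-root}, and \ruleref{ni-reader} by induction on the subcapturing derivation, since a $\set{\univ}$-captured variable cannot subcapture $\set{\rdroot}$); the corollary simply packages that fact as a distinctness statement.
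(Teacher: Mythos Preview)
Your proof is correct and follows exactly the intended route: the paper states this as an unproven corollary of Lemma~\ref{lemma:ninter-self}, and the argument by contradiction you give is the obvious (and only reasonable) way to extract the distinctness conclusion from that lemma. Your observation that hypothesis~(iii) is redundant once $x=y$ is assumed is also accurate.
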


\begin{corollary}
  \label{coro:ninter-rw-neq}
	If (i) $\ninter{\G}{x}{y}$,
  (ii) $x : \tRdr{S_1}\capt\set{z} \in \G$
  and (iii) $y : S_2\capt\set{\univ} \in \G$,
  then $z \ne y$.
\end{corollary}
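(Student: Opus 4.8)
The plan is to prove this by contradiction, reducing it directly to the immediately preceding lemma (the one asserting that $\ninter{\G}{x}{y}$ is impossible when $x : \tCap{\set{y}}{\tRdr{S}} \in \G$ and $y : S'\capt\set{\univ} \in \G$). Suppose, toward a contradiction, that $z = y$. Then hypothesis (ii) reads $x : \tRdr{S_1}\capt\set{y} \in \G$, which—after identifying $S_1$ with the $S$ of that lemma—is exactly its first premise $x : \tCap{\set{y}}{\tRdr{S}} \in \G$. Together with hypothesis (iii), $y : S_2\capt\set{\univ} \in \G$, matching its second premise (with $S' = S_2$), the lemma concludes that $\ninter{\G}{x}{y}$ cannot hold. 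But hypothesis (i) gives precisely $\ninter{\G}{x}{y}$, a contradiction. Hence $z \ne y$.

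The points needing care are bookkeeping rather than mathematical content. First, I would note that the preceding lemma is stated over an inert environment $\G$; here inertness comes from the ambient assumption that every environment arising in the metatheory is well-formed and inert—in particular from the Fact that $\match{}{\sta}{\G}$ implies $\G$ is inert, which is the setting in which this corollary is invoked. Second, I would make explicit that the capturing-type notations $\tRdr{S_1}\capt\set{z}$ and $\tCap{\set{y}}{\tRdr{S}}$ denote the same shape-plus-capture-set form, so that substituting $z \mapsto y$ literally turns hypothesis (ii) into the lemma's premise.

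There is no genuine obstacle: all the real work was carried out in the preceding lemma, whose proof inducts on the derivation of $\ninter{\G}{x}{y}$ and refutes each separation rule—relying on Lemma \ref{lemma:ninter-self} and Lemma \ref{lemma:ninter-root} in the \ruleref{ni-degree} and \ruleref{ni-var} cases, and on a straightforward induction over the subcapturing derivation in the \ruleref{ni-reader} case. The corollary merely repackages that impossibility as an inequality, so the only thing to verify is that the type forms line up under the assumption $z = y$.
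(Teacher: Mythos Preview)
Your proposal is correct and follows exactly the intended route: the paper states this as an immediate corollary of the preceding (unnamed) lemma, and your proof by contradiction—assuming $z = y$ and instantiating that lemma—is precisely the argument the paper leaves implicit. Your remarks on inertness and on the notational alignment of the capturing-type forms are accurate and appropriately cautious.
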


\begin{theorem}[Diamond property of reduction]
  Given two equivalent configurations
  $\st{\sta_1}{t} \equiv \st{\sta_2}{t}$,
  if
  (1) $\sta_1\vdash t$ and $\sta_2 \vdash t$;
  (2) $\st{\sta_1}{t} \red \st{\sta'_1}{t_1}$;
  and (3) $\st{\sta_2}{t} \red \st{\sta'_2}{t_2}$,
  then
  either $t_1 = t_2$,
  or there exists $\sta''_1, t', \sta''_2$ such that
  (1) $\st{\sta'_1}{t_1} \red \st{\sta''_1}{t'}$,
  (2) $\st{\sta'_2}{t_2} \red \st{\sta''_2}{t'}$,
  and (3) $\st{\sta''_1}{t'} \equiv \st{\sta''_2}{t'}$.
\end{theorem}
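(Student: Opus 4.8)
The plan is to exploit the fact that both reductions start from \emph{syntactically identical} terms $t$ (the two configurations differ only in their stores, which are assumed $\sequiv$), so each reduction is determined by a way of focusing $t$ as $\evctx{s}$ and firing the redex $s$. Writing the two focuses as $\ctx{e_1}{s_1}$ and $\ctx{e_2}{s_2}$, I would first apply Lemma~\ref{lemma:analyze-focus} to reduce to four situations: the two focuses coincide, one is a subfocus of the other ($\subfocus$, in two symmetric orientations), or they are branched ($\branch$, again two orientations). When the focuses coincide, the outcome is forced: each reduction rule consults the store only through $\sta(\tVal{\cdot})$ or $\sta(\tVar{\cdot})$, and Lemmas~\ref{lemma:val-lookup-eqv-store} and~\ref{lemma:var-lookup-eqv-store} yield identical lookups in equivalent stores, so $t_1 = t_2$ (with $\sta'_1 \sequiv \sta'_2$), landing in the first disjunct.

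For the subfocus case $\ctx{e_1}{s_1} \subfocus \ctx{e_2}{s_2}$ (the reverse orientation being symmetric), I would observe that, by the shape of evaluation contexts, a redex $s_1$ can contain a further redex $s_2$ only when $s_1$ is a \emph{parallel} let $\tLetMode{\parmode}{x}{a}{u}$ fired by \ruleref{lift-let} or \ruleref{rename}: its bindee $a$ is already a value or a variable, so $s_2$ must sit inside the body $u$. The commuting step---lifting $a$ as a value binding, or renaming---never touches the mutable store, so it commutes with the inner body reduction: firing $s_1$ then $s_2$, or $s_2$ then $s_1$, both reach $e_1$ applied to the body with $s_2$ reduced, carrying $\tVal{x}\mapsto a$ in the store. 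The two resulting stores agree up to permutation of the freshly introduced value binding (distinct from anything $s_2$ introduces, by the Barendregt convention), hence are $\sequiv$.

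The genuinely parallel case is branched, $\ctx{e_1}{s_1} \branch \ctx{e_2}{s_2}$, i.e.\ $t = \evctx{\tLetMode{\parmode}{x}{\ctx{e'_1}{s_1}}{\ctx{e'_2}{s_2}}}$ with $s_1$ in the bindee and $s_2$ in the body. Here I would first use well-typedness $\sta_1 \vdash t$ together with Lemma~\ref{lemma:typing-implies-ninter} to obtain $\ninter{\G}{s_1}{s_2}$, and then case-analyse the pair of reduction rules. Every pair commutes outright except two order-sensitive interactions: (a) two \ruleref{lift-set} steps, and (b) a \ruleref{lift-set} paired with a \ruleref{get}; in all other pairs at most one side writes the mutable store and any appended value or variable bindings carry distinct fresh names, so reordering preserves every $\sta(\tVar{\cdot})$ lookup. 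In case (a), both targets are full-access references typed $\tRef{\cdot}\capt\set{\univ}$, so $\ninter{\G}{s_1}{s_2}$ and Corollary~\ref{coro:ninter-star-neq} force the two mutated variables to be distinct; in case (b), the reader's capture set pins down the variable it reads, and Corollary~\ref{coro:ninter-rw-neq} forces it to differ from the mutated variable. In both cases the relevant mutable variables are distinct, so the two steps commute and the combined stores are $\sequiv$ by permutation.

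I expect the branched mutable-store subcase to be the main obstacle. The difficulty is not the commutation bookkeeping itself but bridging the \emph{syntactic} redexes to the \emph{separation} judgment: one must read off from the store typing that a reader value $\reader y$ receives capture set $\set{y}$ and a mutable reference receives $\set{\univ}$, so that the hypotheses of Corollaries~\ref{coro:ninter-star-neq} and~\ref{coro:ninter-rw-neq} genuinely apply, and then confirm that distinctness of the targeted variables is exactly what excludes the data race. A secondary nuisance throughout is discharging store equivalence under the several permutations of lifted bindings, for which the definition of $\sequiv$---permutation together with agreement of all $\sta(\tVar{\cdot})$---is precisely tailored.
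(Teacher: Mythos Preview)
Your proposal is correct and follows essentially the same approach as the paper: both proofs hinge on Lemma~\ref{lemma:analyze-focus} to classify the two focuses, isolate the \ruleref{lift-set}/\ruleref{lift-set} and \ruleref{get}/\ruleref{lift-set} interactions as the only genuinely order-sensitive pairs in the branched case, and resolve them via Lemma~\ref{lemma:typing-implies-ninter} together with Corollaries~\ref{coro:ninter-star-neq} and~\ref{coro:ninter-rw-neq}. The paper organizes the case analysis by reduction rule first and focus relation second whereas you invert this order, but the content is the same; one small imprecision is that in your subfocus case the \ruleref{rename} step does not ``carry $\tVal{x}\mapsto a$ in the store'' but rather substitutes, so commutation there rests on substitution commuting with the inner reduction (a point the paper also leaves implicit).
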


\begin{proof}
  Begin with a case analysis on the derivation of $\st{\sta_1}{t} \red \st{\sta'_1}{t_1}$.

  \emph{Case \ruleref{apply}, \ruleref{tapply} and \ruleref{open}}.
    Proceed by a case analysis on the derivation of $\st{\sta_2}{t} \red \st{\sta'_2}{t_2}$.
    \begin{itemize}
    \item \emph{Case \ruleref{apply}, \ruleref{tapply}, \ruleref{open} and \ruleref{get}}.
      We only present the proof when both derivations are derived by the \ruleref{apply} case,
      and all other possibilities follow analogously.
      Firstly, from the preconditions we know that
      $t = \ctx{e_1}{x_1\,y_1}$ for some $e_1, x_1, y_1$,
      $\sta_1(\tVal{x_1}) = \lambda(z: T_1) s_1$,
      and $t'_1 = \ctx{e_1}{\fSubst z {y_1} {s_1}}$.
      Also, we have
      $t = \ctx{e_2}{x_2\,y_2}$ for some $e_2, x_2, y_2$,
      $\sta_2(\tVal{x_2}) = \lambda(z: T_2) s_2$,
      and $t'_2 = \ctx{e_2}{\fSubst z {y_2} {s_2}}$.
      By invoking Lemma \ref{lemma:analyze-focus},
      we can show that either $e_1 = e_2$ and $x_1 \, y_1 = x_2 \, y_2$,
      or $\ctx{e_1}{x_1\,y_1} \branch \ctx{e_2}{x_2\,y_2}$,
      or $\ctx{e_2}{x_2\,y_2} \branch \ctx{e_1}{x_1\,y_1}$.
      \begin{itemize}
      \item When $e_1 = e_2$, $x_1 = x_2$ and $y_1 = y_2$.
      \item When $\ctx{e_1}{x_1\,y_1} \branch \ctx{e_2}{x_2\,y_2}$.
        Then $t = e[\tLetMode{D}{x}{\ctx{e_1}{x_1\,y_1}}{\ctx{e_2}{x_2\,y_2}}]$.
        Therefore, we have $t_1 = \evctx{\tLetMode{D}{x}{\ctx{e_1}{\fSubst z {y_1} s_1}}{\ctx{e_2}{x_2\,y_2}}}$.
        We let
        $t' = e[\ \tLetMode{D}{x}{\ctx{e_1}{\fSubst z {y_1} s_1}}{\ctx{e_2}{\ctx{e_2}{\fSubst z {y_2} s_2}}}\ ]$,
        and can derive that
        $\st{\sta_1}{t_1}\red \st{\sta_1}{t'}$.
        Similarly, we can show that
        $\st{\sta_2}{t_2}\red \st{\sta_2}{t'}$,
        and thus conclude by the fact that $\sta_1 \sequiv \sta_2$.
      \item When $\ctx{e_2}{x_2\,y_2} \branch \ctx{e_1}{x_1\,y_1}$.
        Analogous to the previous case.
      \end{itemize}

    \item \emph{Case \ruleref{lift-let}}.
      We only present the proof of the \ruleref{apply} case,
      and other cases follows analogously.
      Then $t = \ctx{e_1}{x_1\,y_1}$ for some $e_1, x_1, y_1$,
      $\sta_1(\tVal{x_1}) = \lambda(z: T_1) s_1$,
      and $t_1 = \ctx{e_1}{\fSubst z {y_1} s_1}$.
      Also, $t = \ctx{e_2}{\tLetMode{D}{x}{v}{u}}$,
      $\sta'_2 = \sta_2, \tVal{x} \mapsto v$,
      and $t_2 = \ctx{e_2}{u}$.
      Now we proceed the proof by invoking Lemma \ref{lemma:analyze-focus}
      to analyze the relationship of two focuses $t = \ctx{e_1}{x_1\,y_1} = \ctx{e_2}{\tLetMode{D}{x}{v}{u}}$.
      \begin{itemize}
      \item \emph{When $\ctx{e_1}{x_1\,y_1} \subfocus \ctx{e_2}{\tLetMode{m}{x}{v}{u}}$.}
        This implies that for some $e$
        we have $x_1\,y_1 = \evctx{\tLetMode{m}{x}{v}{u}}$,
        which is impossible.

      \item \emph{When $\ctx{e_2}{\tLetMode{m}{x}{v}{u}} \subfocus \ctx{e_1}{x_1\,y_1}$.}
        Then for some $e$ we have
        $\boxed{\tLetMode{m}{x}{v}{u}} = \ctx{e}{x_1\,y_1}$
        and $e_1 = \ctx{e_2}{e}$.
        By analyzing the equality,
        we can show that
        $t = \ctx{e_2}{\tLetMode{m}{x}{v}{\evpctx{x_1\,y_1}}}$
        and $e_1 = \ctx{e_2}{\tLetMode{m}{x}{v}{e'}}$.
        Set $t' = \ctx{e_2}{\evpctx{\fSubst z {y_1} {s_1}}}$.
        First, we can derive that
        $\st{\sta_1}{\ctx{e_1}{\fSubst z {y_1} {s_1}}} \red \st{\sta_1, \tVal{x}\mapsto v}{\ctx{e_2}{\evpctx{\fSubst z {y_1} {s_1}}}}$
        by showing that $\fv{v} \subseteq \dom{\sta_1}$ and then apply the \ruleref{lift-let} rule.
        Then, we can derive that
        $\st{\sta_2, \tVal{x}\mapsto v}{\ctx{e_2}{\evpctx{x_1\,y_1}}} \red \st{\sta_2, \tValM{D}{x}\mapsto v}{\ctx{e_2}{\evpctx{\fSubst z {y_1} s_1}}}$
        by applying Lemma \ref{lemma:val-lookup-eqv-store} and the \ruleref{apply} rule.
        Now we can show that $(\sta_1, \tVal{x}\mapsto v) \sequiv (\sta_2, \tVal{x}\mapsto v)$
        and conclude this case.

      \item \emph{When $\ctx{e_1}{x_1\,y_1} \branch \ctx{e_2}{\tLetMode{m}{x}{v}{u}}$.}
        Then $t = \ctx{e}{\tLetMode{n}{y}{\ctx{e'_1}{x_1\,y_1}}{\ctx{e'_2}{\tLetMode{m}{x}{v}{u}}}}$
        for some $e, n, y, e'_1$ and $e'_2$.
        Let
        $\sta''_1 = \sta_1, \tValM{m}{x}\mapsto v$,
        $\sta''_2 = \sta_2, \tValM{m}{x}\mapsto v$,
        and $t' = e[\textsf{let}_{E} {y} = {e'_1[\fSubst z {y_1} s_1]}$
        $\,\textsf{in}\,{\ctx{e'_2}{u}}]$.
        For both $t_1, t_2$, we can invoke the other rule to reduce to $t'$.
        We can therefore conclude this case.

      \item \emph{When $\ctx{e_2}{\tLetMode{m}{x}{v}{u}} \branch \ctx{e_1}{x_1\,y_1}$.}
        Analogous to the above case.
        These two cases are symmetric.
      \end{itemize}

    \item \emph{Case \ruleref{rename}, \ruleref{lift-var} and \ruleref{lift-set}}.
      Analogous to the previous case.
    \end{itemize}

  \emph{Case \ruleref{get}}.
    Then $t = \ctx{e_1}{\tRead{x}}$,
    $t_1 = \ctx{e_1}{v}$,
    $\sta(\tVal{x}) = \reader x'$
    and $\sta(\tVar{x'}) = v$.
    We proceed the proof by case analysis on the second derivation,
    wherein the \ruleref{apply}, \ruleref{tapply} and the \ruleref{open} cases are symmetric to the above proven cases.
    The proof of \ruleref{lift-let}, \ruleref{lift-var} and \ruleref{rename} is analogous to the previous case.
    Notably, in the \ruleref{lift-var}
    we are sure that the lifted variable is not $x$ since it is a freshly created local variable.
    \begin{itemize}
    \item \emph{Case \ruleref{get}}.
      To prove this case,
      we begin by invoking Lemma \ref{lemma:analyze-focus} to analyze the relation between two focuses,
      wherein in each possibility we can invoke Lemma \ref{lemma:var-lookup-eqv-store}
      to demonstrate the equality between the result of variable lookup in two equivalent stores,
      and therefore invoke \ruleref{get} again to reduce both sides to the same term.

    \item \emph{Case \ruleref{lift-set}}.
      Then $t = \ctx{e_2}{\tWrite{y_1}{y_2}}$,
      $\sta_2(\tVal{y_2}) = w$,
      $\sta'_2 = \sta_2, \tSet{y_1} = w$,
      and $t_2 = \ctx{e_2}{w}$.
      Now we have to show that $y_1 \ne x'$ so that the read and the write can be swapped.
      We first invoke Lemma \ref{lemma:analyze-focus} to analyze the relation between two focuses.
      The two cases where $\ctx{e_1}{\tRead{x}} \subfocus \ctx{e_2}{\tWrite{y_1}{y_2}}$
      or $\ctx{e_2}{\tWrite{y_1}{y_2}}\subfocus \ctx{e_1}{\tRead{x}}$
      are impossible.
      Now we show the proof when
      $\ctx{e_1}{\tRead{x}}\branch \ctx{e_2}{\tWrite{y_1}{y_2}}$,
      whereas the other case is analogous.
      We have $t = e[\textsf{let}_{m}{x} = {\ctx{e'_1}{\tRead{x}}}\,$
      $\textsf{in}\,\ctx{e'_2}{\tWrite{y_1}{y_2}}]$.
      Let $t' = \ctx{e}{\tLetMode{m}{x}{\ctx{e'_1}{v}}{\ctx{e'_2}{w}}}$.
      By invoking Lemma \ref{lemma:val-lookup-eqv-store}
      we can show that
      $\sta_1(\tVal{y_2}) = w$,
      and therefore derive that
      $\st{\sta_1}{t_1} \red \st{\sta_1, \tSet{y_1} = w}{t'}$ by the \ruleref{lift-set} rule.
      Since $\sta_2 \vdash t$ we know that $\typ{\G_2}{t}{T_2}$ for some $\G_2$ and $T_2$.
      By Lemma \ref{lemma:evctx-typing-inversion} we can show that
      $\typ{\G_2, \Delta_2}{\tLetMode{m}{x}{\ctx{e'_1}{\tRead{x}}}{\ctx{e'_2}{\tWrite{y_1}{y_2}}}}{U}$
      for some $\Delta_2$ and $U$.
      Now we invoke Lemma \ref{lemma:typing-implies-ninter} and Lemma \ref{lemma:ninter-to-elem}
      to show that $\ninter{\G_2, \Delta_2}{x}{y_1}$.
      By induction on $\match{}{\sta}{\G_2}$, we can show that
      $x : \set{x'}\,\tRdr{S_0} \in \G_2$,
      and $y_1 : \set{\univ}\,\tRef{S_1} \in \G_2$ for some $S_0, S_1$.
      Now we invoke Corollary \ref{coro:ninter-rw-neq} to show that
      $x' \ne y_1$.
      Therefore, we can show that
      $\sta'_2(\tVar{x'}) = \sta(\tVar{x'}) = v$
      where $\sta'_2 = \sta_2, \tSet{y_1} = w$.
      We can thus derive that
      $\st{\sta'_2}{t_2}\red \st{\sta'_2}{t'}$.
      Finally, we can show that $\sta_1, \tSet{y_1} = w$ and $\sta_2, \tSet{y_1} = w$ are still equivalent
      and conclude this case.
    \end{itemize}

  \emph{Case \ruleref{rename}}.
    Then $t = \ctx{e_1}{\tLetMode{D_1}{x_1}{y_1}{s_1}}$ for some $e_1$,
    and $t_1 = \ctx{e_1}{\fSubst x y u}$.
    We proceed by case analysis on the other reduction.
    The proof of \ruleref{apply}, \ruleref{tapply}, \ruleref{open} and \ruleref{get} cases
    are symmetric to the proof of the previous cases.
    \begin{itemize}
    \item \emph{Case \ruleref{rename}}.
      Then $t = \ctx{e_2}{\tLetMode{D_2}{x_2}{y_2}{s_2}}$ for some $e_2$.
      Now we use Lemma \ref{lemma:analyze-focus} to analyze the relation between two focuses.
      \begin{itemize}
      \item \emph{When $\ctx{e_1}{\tLetMode{D_1}{x_1}{y_1}{s_1}} \subfocus \ctx{e_2}{\tLetMode{D_2}{x_2}{y_2}{s_2}}$}.
        Then we have $\tLetMode{D_1}{x_1}{y_1}{s_1} = \evctx{\tLetMode{D_2}{x_2}{y_2}{s_2}}$ for some $e$,
        and $e_2 = \ctx{e_1}{e}$.
        If $e = []$, then we have $t_1 = t_2$.
        Then we have $t = e[\textsf{let}_{D_1} {x_1} = {y_1}\,\textsf{in}\,{\ctx{e'}{\tLetMode{D}{x_2}{y_2}{s_2}}}]$ for some $e'$.
        Let $t' = e_1[{e'}[\fSubst{x_1}{y_1} \fSubst{x_2}{\fSubst{x_1}{y_1} y_2} s_2]]$.
        Now we inspect whether $x_1 = y_2$,
        and in both cases we can derive that
        $\st{\sta_1}{t_1}\red \st{\sta_1}{t'}$,
        and $\st{\sta_2}{t_2}\red \st{\sta_1}{t'}$,
        which allow us to conclude this case.
      \item \emph{When $\ctx{e_2}{\tLetMode{D_2}{x_2}{y_2}{s_2}} \subfocus \ctx{e_1}{\tLetMode{D_1}{x_1}{y_1}{s_1}}$}.
        Analogous to the previous case.
      \item \emph{When $\ctx{e_1}{\tLetMode{D_1}{x_1}{y_1}{s_1}} \branch \ctx{e_2}{\tLetMode{D_2}{x_2}{y_2}{s_2}}$}.
        Then for both directions, we can apply \ruleref{rename} to rename the variable in the other branch,
        wherein the renaming in one branch is independent from the renaming in the other branch.
        We can therefore conclude this case.
      \item \emph{When $\ctx{e_2}{\tLetMode{D_2}{x_2}{y_2}{s_2}} \branch \ctx{e_1}{\tLetMode{D_1}{x_1}{y_1}{s_1}}$}.
        Analogous to the previous case.
      \end{itemize}

    \item \emph{Case \ruleref{lift-let}}.
      Then $t = \ctx{e_2}{\tLetMode{D_2}{x_2}{v_2}{s_2}}$,
      $\sta'_2 = \sta_2, \tValM{D}{x_2}\mapsto v_2$,
      and $t_2 = \ctx{e_2}{s_2}$.
      Now we apply Lemma \ref{lemma:analyze-focus} to analyze the relation between two focuses.
      \begin{enumerate}[(i)]
      \item \emph{When $\ctx{e_1}{\tLetMode{D_1}{x_1}{y_1}{s_1}}\subfocus \ctx{e_2}{\tLetMode{D_2}{x_2}{v_2}{s_2}}$}.
        Then, by inspecting the equality we can show that
        $t = \ctx{e_1}{\tLetMode{D_1}{x_1}{y_1}{\ctx{e}{\tLetMode{D_2}{x_2}{v_2}{s_2}}}}$.
        We set $t' = \ctx{e_1}{\evctx{\fSubst {x_1} {y_1} s_2}}$.
        Importantly, since $\cv{v_2} \cap D_2 \subseteq \dom{\sta_2}$,
        we can show that
        $\fSubst{x_1}{y_1} {v_2} = v_2$
        and $\fSubst {x_1} {y_1} D_2 = D_2$.
        We can derive that
        $\st{\sta_1}{t_1}\red \st{\sta_1, \tValM{D_2}{x_2}\mapsto v_2}{t'}$
        and $\st{\sta_2, \tValM{D_2}{x_2}\mapsto v_2}{t_2}\red \st{\sta_2, \tValM{D_2}{x_2}\mapsto v_2}{t'}$.
        This case can thus be concluded
        as $\sta_1, \tValM{D_2}{x_2}\mapsto v_2$
        and $\sta_2, \tValM{D_2}{x_2}\mapsto v_2$
        are still equivalent.

      \item \emph{Other cases}.
        In these cases, the renaming and the binding lifting
        do not influence each other.
        So in each case, for both directions we apply the corresponding rule in the other direction to conclude.
      \end{enumerate}

    \item \emph{Case \ruleref{lift-var} and \ruleref{lift-set}}.
      In this two cases,
      we can show that the variable that is looked up is already in the store,
      and by the well-formedness their value does not mention the renamed variable,
      thus staying unaffected by the renaming.
      We can apply the corresponding rule to conclude each case.
    \end{itemize}

  \emph{Case \ruleref{lift-let} and \ruleref{lift-var}}.
    Proceed by case analysis on the other reduction derivation.
    The proof of the
    \ruleref{apply}, \ruleref{tapply}, \ruleref{open}, \ruleref{get},
    and \ruleref{rename} cases are again symmetric to the previous proof.
    In the remaining cases,
    we can always swap the order the two lifted store bindings
    while preserving store equivalence.

  \emph{Case \ruleref{lift-set}}.
    We do a case analysis on the other reduction derivation,
    wherein all but one cases can be proven symmetrically to the previous ones.
    The only unproven case is when both reductions are derived by the \ruleref{lift-set} rule.
    Then $t = \ctx{e_1}{\tWrite{x_1}{y_1}}$,
    $\sta_1(\tVal{y_1}) = v_1$,
    $\sta'_1 = \sta_1, \tSet{x_1} = v_1$,
    and $t_1 = \ctx{e_1}{v_1}$.
    Also,
    $t = \ctx{e_2}{\tWrite{x_2}{y_2}}$,
    $\sta_2(\tVal{y_2}) = v_2$,
    $\sta'_2 = \sta_2, \tSet{x_2} = v_2$,
    and $t_2 = \ctx{e_2}{v_2}$.
    By invoking Lemma \ref{lemma:analyze-focus}
    we can show that
    either $e_1 = e_2$,
    or $\ctx{e_1}{\tWrite{x_1}{y_1}}\branch \ctx{e_2}{\tWrite{x_2}{y_2}}$
    or $\ctx{e_2}{\tWrite{x_2}{y_2}}\branch \ctx{e_1}{\tWrite{x_1}{y_1}}$.
    In the first case we can conclude immediately
    since this implies that $t_1 = t_2$.
    Otherwise,
    we invoke Lemma \ref{lemma:evctx-typing-inversion},
    Lemma \ref{lemma:typing-implies-ninter},
    and Corollary \ref{coro:ninter-star-neq}
    to show that $x_1 \ne x_2$.
    Therefore, swapping the two set-bindings
    preserves the store equivalence.
    For both directions,
    we can apply the \ruleref{lift-set} rule
    to reduce to the same term.
    This case is therefore concluded.
\end{proof}

\begin{definition}[Reduction closures]
	We define $\st{\sta}{t}\redt \st{\sta'}{t'}$
  as the reflexive and transitive closure of
  $\st{\sta}{t}\red \st{\sta'}{t'}$.
  $\st{\sta}{t}\redm \st{\sta'}{t'}$ denotes
  the union of
  the reduction relation
  $\st{\sta}{t}\red \st{\sta'}{t'}$
  and the reflexive relation.
  In other words,
  $\st{\sta}{t}\redm \st{\sta'}{t'}$ means
  a reduction of zero or one step.
\end{definition}

\begin{corollary}[Diamond property of $\redm{}$]
  \label{coro:diamond-property-redm}
  Given two equivalent configurations
  $\st{\sta_1}{t} \sequiv \st{\sta_2}{t}$,
  if
  (1) $\sta_1\vdash t$ and $\sta_2 \vdash t$;
  (2) $\st{\sta_1}{t}\redm \st{\sta'_1}{t_1}$;
  and (3) $\st{\sta_2}{t}\redm \st{\sta'_2}{t_2}$,
  then
  there exists $\sta''_1, t', \sta''_2$ such that
  (1) $\st{\sta'_1}{t_1}\redm \st{\sta''_1}{t'}$,
  (2) $\st{\sta'_2}{t_2}\redm \st{\sta''_2}{t'}$,
  and (3) $\st{\sta''_1}{t'} \sequiv \st{\sta''_2}{t'}$.
\end{corollary}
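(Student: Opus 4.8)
The plan is to lift the diamond property from the single-step relation $\red$ to the zero-or-one-step relation $\redm$ by a routine $2\times 2$ case analysis on whether each of the two hypothesised $\redm$-steps is the reflexive step or a genuine $\red$-step. By the symmetry of $\sequiv$ (it is an equivalence relation) and the symmetric shape of the conclusion, I may without loss of generality fix an ordering of the two sides whenever convenient.

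First I would dispatch the two easy corners. If both $\redm$-steps are reflexive, then $\st{\sta'_1}{t_1} = \st{\sta_1}{t}$ and $\st{\sta'_2}{t_2} = \st{\sta_2}{t}$; I take $t' = t$ and let each side make zero further $\redm$-steps, so that the required equivalence $\st{\sta''_1}{t'}\sequiv\st{\sta''_2}{t'}$ is exactly the hypothesis $\st{\sta_1}{t}\sequiv\st{\sta_2}{t}$. If both steps are genuine $\red$-steps, I invoke the Diamond property of reduction theorem. When that theorem closes the diamond with single $\red$-steps I am done, since every $\red$-step is in particular a $\redm$-step; when it instead returns $t_1=t_2$, I note that in that branch of its proof the two resulting stores differ only by a permutation of equal liftings, hence $\st{\sta'_1}{t_1}\sequiv\st{\sta'_2}{t_2}$ and zero further steps suffice.

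The only genuinely new work, and the step I expect to be the main obstacle, is the mixed case: one side is reflexive and the other takes a genuine step, say $\st{\sta_1}{t}=\st{\sta'_1}{t_1}$ while $\st{\sta_2}{t}\red\st{\sta'_2}{t_2}$. Here the Theorem gives no leverage, because the reflexive side has not moved, so I must \emph{replay} the second side's reduction from the equivalent store $\sta_1$. I would argue by case analysis on the reduction rule used in $\st{\sta_2}{t}\red\st{\sta'_2}{t_2}$ that the same rule fires from $\st{\sta_1}{t}$: every rule dispatches only on the shared term $t$ and on store lookups, and by Lemma \ref{lemma:val-lookup-eqv-store} and Lemma \ref{lemma:var-lookup-eqv-store} the value and variable lookups agree between $\sta_1$ and $\sta_2$. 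Thus $\st{\sta_1}{t}\red\st{\sta''_1}{t_2}$ with the \emph{same} residual term $t_2$, and the produced stores remain equivalent, since any lifted binding is determined by these lookups and appending a fresh-name binding or a set-binding to equivalent stores preserves $\sequiv$. Taking $t'=t_2$, the reflexive side advances by this one $\redm$-step and the genuine side by zero $\redm$-steps, and $\st{\sta''_1}{t_2}\sequiv\st{\sta'_2}{t_2}$ closes the diamond; the symmetric mixed case is identical after swapping roles.
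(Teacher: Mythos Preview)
Your case analysis is correct and is how one would derive the corollary; the paper states it without proof. Your mixed-case ``replay'' argument is exactly the paper's Lemma~\ref{lemma:store-equiv-pres-red} (Store equivalence preserves reduction), which is proved separately and later used in Lemma~\ref{lemma:asym-diamond-property-redt}, so you could cite it rather than rederive it. In the both-genuine-steps subcase where the Diamond theorem returns only $t_1=t_2$, you are right that the theorem's statement carries no claim about the resulting stores and an extra observation is needed; your appeal to the theorem's proof is sound, since $t_1=t_2$ arises there only when the same rule fires at the same focus, and then the store change is fully determined by that rule together with lookups that agree across equivalent stores by Lemmas~\ref{lemma:val-lookup-eqv-store} and~\ref{lemma:var-lookup-eqv-store}, giving $\sta'_1\sequiv\sta'_2$.
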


\begin{fact}
	The transitive and reflexive closure of
  $\st{\sta}{t}\redm \st{\sta'}{t'}$
  equals $\st{\sta}{t}\redt \st{\sta'}{t'}$.
\end{fact}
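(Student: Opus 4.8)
The plan is to prove this identity by a double inclusion between the two relations, using only the definitions of the two closures and the elementary algebra of reflexive--transitive closures. Both sides denote ``a finite, possibly empty, sequence of $\red$-steps'': by definition $\redt$ is the reflexive--transitive closure of $\red$, whereas the reflexive--transitive closure of $\redm$ is a finite sequence of steps each of which is zero or one $\red$-step. Consequently the two inclusions are routine.

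For the direction asserting that the reflexive--transitive closure of $\redm$ is contained in $\redt$, I would unfold a derivation $\st{\sta}{t}\redm\cdots\redm\st{\sta'}{t'}$ into a finite chain of $\redm$-steps and induct on its length. Each $\redm$-step is either a reflexive step, contributing nothing, or a single $\red$-step; in either case it is already subsumed by $\redt$, since $\redt$ is reflexive and contains $\red$. Composing the pieces with the transitivity of $\redt$ then yields $\st{\sta}{t}\redt\st{\sta'}{t'}$. The only point needing attention is that the reflexive steps occurring in a $\redm$-chain collapse harmlessly, which holds because such a step leaves the configuration unchanged.

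For the reverse direction, I would use that $\red$ is contained in $\redm$ (a single step is in particular a zero-or-one step) together with the fact that the reflexive--transitive closure of $\redm$ is, by construction, both reflexive and transitive. Unfolding $\st{\sta}{t}\redt\st{\sta'}{t'}$ into a chain of $\red$-steps and inducting on its length, the empty chain is discharged by reflexivity, while a nonempty chain is obtained by prepending one $\red$-step (hence a $\redm$-step) to a shorter chain and invoking transitivity.

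I do not anticipate any genuine obstacle: this is the standard relation-algebra fact that, for any relation $R$ and the identity relation $I$, the reflexive--transitive closures of $R\cup I$ and of $R$ coincide. Its sole purpose here is to let the diamond property of $\redm$ established in Corollary~\ref{coro:diamond-property-redm} be transported into a statement phrased in terms of $\redt$ when proving Confluence. The nearest thing to a subtlety is the bookkeeping of reflexive steps, which the induction on chain length renders transparent.
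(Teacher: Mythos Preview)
Your argument is correct and complete: it is exactly the standard relation-algebra fact that $(R\cup I)^{*}=R^{*}$, proved by double inclusion with induction on chain length. The paper does not actually prove this statement at all; it is recorded as a \emph{Fact} with no accompanying justification, so you have simply spelled out the routine reasoning the paper leaves implicit.
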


\begin{lemma}[Store equivalence preserves reduction]
  \label{lemma:store-equiv-pres-red}
  If
  (i) $\st{\sta_1}{t}\red \st{\sta_1'}{t'}$
  and (ii) $\sta_1 \sequiv \sta_2$
  then there exists $\sta_2'$
  such that
  (i) $\st{\sta_2}{t}\red \st{\sta_2'}{t'}$
  and (ii) $\sta_1' \sequiv \sta_2'$.
\end{lemma}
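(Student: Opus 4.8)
The plan is to proceed by a straightforward case analysis on the derivation of $\st{\sta_1}{t}\red \st{\sta_1'}{t'}$, exploiting the observation that each reduction rule touches the store in only one of two ways: some rules merely consult it through the lookup functions $\sta(\tVal{\cdot})$ and $\sta(\tVar{\cdot})$ while leaving it unchanged, and the rest append exactly one binding. The two store-lookup lemmas just proven, Lemma~\ref{lemma:val-lookup-eqv-store} and Lemma~\ref{lemma:var-lookup-eqv-store}, guarantee that both kinds of lookup agree on equivalent stores, which is exactly what is needed to replay the same step against $\sta_2$.

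For the store-preserving rules \ruleref{apply}, \ruleref{tapply}, \ruleref{open}, \ruleref{get}, and \ruleref{rename} we have $\sta_1' = \sta_1$. For \ruleref{rename} no lookup is involved, so the same rule fires from $\sta_2$ immediately. For the other four I would invoke Lemma~\ref{lemma:val-lookup-eqv-store} to show $\sta_2(\tVal{x}) = \sta_1(\tVal{x})$, so that the identical value (a $\lambda$-abstraction, a type abstraction, a $\Box$, or a $\reader{}$) is found; in the \ruleref{get} case I additionally use Lemma~\ref{lemma:var-lookup-eqv-store} to obtain the same $v$ from $\sta_2(\tVar{y})$. In each case the same rule fires from $\sta_2$ producing the same $t'$ and $\sta_2' = \sta_2$, and equivalence holds since $\sta_1' = \sta_1 \sequiv \sta_2 = \sta_2'$.

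For the store-extending rules, each appends one binding $b$. In \ruleref{lift-let} the bound value appears literally in the redex and the only store-dependent premise is $\fv{v}\subseteq\dom{\sta}$; since $\sta_2$ is a permutation of $\sta_1$ we have $\dom{\sta_2}=\dom{\sta_1}$, so this side condition transfers to $\sta_2$, and $b = \tVal{x}\mapsto v$. In \ruleref{lift-var} and \ruleref{lift-set} the appended binding (respectively $\tVar{x} := v$ and $\tSet{x} := v$) carries a payload $v = \sta_1(\tVal{y})$, and Lemma~\ref{lemma:val-lookup-eqv-store} yields $\sta_2(\tVal{y}) = v$, so the same payload is appended. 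In all three cases the same rule fires from $\sta_2$, producing the identical $t'$ and $\sta_2' = \sta_2, b$, and it remains only to establish the auxiliary fact that $\sta_1 \sequiv \sta_2$ implies $\sta_1, b \sequiv \sta_2, b$.

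This auxiliary fact is the only genuine content, and I expect the $\tSet{x} := v$ case to be the delicate one, since set-bindings are precisely what alter the outcome of mutable lookup. Condition (1) of store equivalence (permutation) is immediate from the same condition for $\sta_1,\sta_2$. For condition (2) I would reason directly from Definition~\ref{def:lookup-mut}: a query on the freshly added or updated variable returns the prepended $\tVar{}$/$\tSet{}$ binding in both $\sta_1,b$ and $\sta_2,b$, so the results coincide; for any other variable $y$ the top binding $b$ is skipped in both stores, reducing the lookup to $\sta_1(\tVar{y}) = \sta_2(\tVar{y})$, which holds by the original equivalence. When $b = \tVal{x}\mapsto v$ the appended binding is invisible to mutable lookup, so condition (2) reduces directly to the hypothesis. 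Assembling these subcases completes the auxiliary fact and hence every case of the lemma.
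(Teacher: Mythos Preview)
Your proposal is correct and follows the same approach as the paper, which is a terse sketch: case analysis on the reduction derivation, invoking Lemmas~\ref{lemma:val-lookup-eqv-store} and~\ref{lemma:var-lookup-eqv-store} for the lookup premises, and noting that appending the same binding to equivalent stores yields equivalent stores. Your write-up simply spells out the details the paper leaves implicit, including the auxiliary fact that $\sta_1 \sequiv \sta_2$ implies $\sta_1, b \sequiv \sta_2, b$, which the paper dismisses as ``straightforward.''
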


\begin{proof}
	By straightforward case analysis on the reduction derivation,
  wherein in each case we apply the same typing rule.
  In the cases where store lookup is involved,
  we use Lemma \ref{lemma:val-lookup-eqv-store}
  and Lemma \ref{lemma:var-lookup-eqv-store}
  to show that the result is the same under the two equivalent stores.
  In the cases where the store is extended in the reduction step,
  we can straightforwardly show that the resulted stores are still equivalent to each other.
\end{proof}

\begin{corollary}
  \label{coro:store-equiv-pres-redt}
  If
  (i) $\st{\sta_1}{t}\redt \st{\sta_1'}{t'}$
  and (ii) $\sta_1 \sequiv \sta_2$
  then there exists $\sta_2'$
  such that
  (i) $\st{\sta_2}{t}\redt \st{\sta_2'}{t'}$
  and (ii) $\sta_1' \sequiv \sta_2'$.
\end{corollary}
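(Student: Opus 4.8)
The plan is to promote the single-step Lemma~\ref{lemma:store-equiv-pres-red} to its reflexive–transitive closure by inducting on the length of the reduction sequence witnessing $\st{\sta_1}{t}\redt \st{\sta_1'}{t'}$. Since $\redt$ is defined as the reflexive and transitive closure of $\red$, every such reduction decomposes into a finite chain $\st{\sta_1}{t} = \st{\sta_1^0}{t^0}\red \st{\sta_1^1}{t^1}\red \cdots \red \st{\sta_1^n}{t^n} = \st{\sta_1'}{t'}$, and the natural number $n$ serves as the induction metric.

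For the base case $n = 0$, the reduction is purely reflexive, so $t' = t$ and $\sta_1' = \sta_1$. I would simply take $\sta_2' = \sta_2$: the reflexive closure supplies $\st{\sta_2}{t}\redt \st{\sta_2}{t}$, and the hypothesis $\sta_1 \sequiv \sta_2$ already discharges $\sta_1' \sequiv \sta_2'$.

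For the inductive step, I would peel off the first transition $\st{\sta_1}{t}\red \st{\sta_1^1}{t^1}$, leaving a tail $\st{\sta_1^1}{t^1}\redt \st{\sta_1'}{t'}$ of length $n-1$. Applying Lemma~\ref{lemma:store-equiv-pres-red} to that first transition together with $\sta_1 \sequiv \sta_2$ yields some $\sta_2^1$ with $\st{\sta_2}{t}\red \st{\sta_2^1}{t^1}$ and $\sta_1^1 \sequiv \sta_2^1$. Instantiating the induction hypothesis at the shorter tail reduction and the equivalence $\sta_1^1 \sequiv \sta_2^1$ then produces $\sta_2'$ with $\st{\sta_2^1}{t^1}\redt \st{\sta_2'}{t'}$ and $\sta_1' \sequiv \sta_2'$. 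Prepending the single step and appealing to transitivity of $\redt$ gives $\st{\sta_2}{t}\redt \st{\sta_2'}{t'}$, which closes the case.

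I expect this argument to be entirely routine, with no obstacle of substance, because the single-step lemma already carries the full burden of transporting a reduction across equivalent stores and re-establishing equivalence of the results. The only points worth making explicit are that the equivalences thread along the chain without ever requiring two equivalences to be composed (each transition hands the next stage its own witness, so transitivity of $\sequiv$ is not even needed), and that the term component $t^i$ is common to both reduction sequences at every stage, so the configuration equivalences collapse to store equivalences throughout.
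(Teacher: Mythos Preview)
Your proposal is correct and is precisely the routine induction-on-length argument that the paper implicitly intends; the paper states this result as a corollary of Lemma~\ref{lemma:store-equiv-pres-red} without giving an explicit proof, and your expansion of that step is the standard one.
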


\begin{lemma}[Asymmetric diamond property of reduction closure]
  \label{lemma:asym-diamond-property-redt}
  Given two equivalent configurations
  $\st{\sta_1}{t} \sequiv \st{\sta_2}{t}$,
  if
  (1) $\sta_1\vdash t$ and $\sta_2 \vdash t$;
  (2) $\st{\sta_1}{t}\redm \st{\sta'_1}{t_1}$;
  and (3) $\st{\sta_2}{t}\redt \st{\sta'_2}{t_2}$,
  then
  there exists $\sta''_1, t', \sta''_2$ such that
  (1) $\st{\sta'_1}{t_1}\redt \st{\sta''_1}{t'}$,
  (2) $\st{\sta'_2}{t_2}\redm \st{\sta''_2}{t'}$,
  and (3) $\st{\sta''_1}{t'} \sequiv \st{\sta''_2}{t'}$.
\end{lemma}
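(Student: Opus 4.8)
The plan is to induct on the number $n$ of single steps witnessing the right-hand reduction $\st{\sta_2}{t}\redt\st{\sta'_2}{t_2}$. This is the standard strip-lemma argument, turned asymmetric by having a $\redm$-step on one side and a $\redt$-reduction on the other, and complicated only by the fact that every diamond we close produces configurations that agree up to store equivalence rather than on the nose. For the base case $n=0$ we have $\sta'_2 = \sta_2$ and $t_2 = t$. Here I would simply transport the hypothesis $\st{\sta_1}{t}\redm\st{\sta'_1}{t_1}$ across the equivalence $\sta_1 \sequiv \sta_2$: if it is the reflexive step the transport is trivial, and otherwise Lemma~\ref{lemma:store-equiv-pres-red} yields $\st{\sta_2}{t}\redm\st{\sta''_2}{t_1}$ with $\sta'_1 \sequiv \sta''_2$. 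Setting $t' = t_1$, $\sta''_1 = \sta'_1$ (so that $\st{\sta'_1}{t_1}\redt\st{\sta'_1}{t_1}$ holds reflexively) and taking $\sta''_2$ as produced discharges the base case.

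For the inductive case $n = k+1$ I would decompose the right-hand reduction as $\st{\sta_2}{t}\red\st{\bar\sta}{\bar t}\redt\st{\bar\sta'}{t_2}$, where the tail has $k$ steps and $\sta'_2 = \bar\sta'$. Viewing the first step $\st{\sta_2}{t}\red\st{\bar\sta}{\bar t}$ as a $\redm$-step, I apply the diamond property of $\redm$ (Corollary~\ref{coro:diamond-property-redm}) to the two equivalent, well-typed configurations $\st{\sta_1}{t}$ and $\st{\sta_2}{t}$ together with the two $\redm$-steps out of them; its typing preconditions are exactly hypothesis~(1). This produces $\sigma_1, v, \sigma_2$ with $\st{\sta'_1}{t_1}\redm\st{\sigma_1}{v}$, $\st{\bar\sta}{\bar t}\redm\st{\sigma_2}{v}$, and $\st{\sigma_1}{v}\sequiv\st{\sigma_2}{v}$. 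I then invoke preservation (Theorem~\ref{thm:preservation}) on the step $\st{\sta_2}{t}\red\st{\bar\sta}{\bar t}$ to obtain $\bar\sta \vdash \bar t$, which is the typing precondition needed to apply the induction hypothesis to the configuration $\st{\bar\sta}{\bar t}$ (paired with itself) carrying the $\redm$-step $\st{\bar\sta}{\bar t}\redm\st{\sigma_2}{v}$ and the length-$k$ reduction $\st{\bar\sta}{\bar t}\redt\st{\bar\sta'}{t_2}$. The IH yields $\rho_1, w, \rho_2$ with $\st{\sigma_2}{v}\redt\st{\rho_1}{w}$, $\st{\bar\sta'}{t_2}\redm\st{\rho_2}{w}$, and $\st{\rho_1}{w}\sequiv\st{\rho_2}{w}$.

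It remains to glue the two halves, and this is where the store-equivalence bookkeeping is essential. Since $\st{\sigma_1}{v}\sequiv\st{\sigma_2}{v}$, I transport $\st{\sigma_2}{v}\redt\st{\rho_1}{w}$ along this equivalence using Corollary~\ref{coro:store-equiv-pres-redt}, obtaining $\st{\sigma_1}{v}\redt\st{\rho_1'}{w}$ with $\rho_1' \sequiv \rho_1$. Composing $\st{\sta'_1}{t_1}\redm\st{\sigma_1}{v}\redt\st{\rho_1'}{w}$ gives $\st{\sta'_1}{t_1}\redt\st{\rho_1'}{w}$; together with $\st{\bar\sta'}{t_2}\redm\st{\rho_2}{w}$ and $\rho_1' \sequiv \rho_1 \sequiv \rho_2$ (by transitivity of $\sequiv$) I conclude with $\sta''_1 = \rho_1'$, $t' = w$, and $\sta''_2 = \rho_2$.

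The main obstacle is not the combinatorial shape of the induction, which is routine, but keeping the store-equivalence invariants consistent: both the $\redm$-diamond and the induction hypothesis only close reductions to a common \emph{term} under \emph{equivalent} stores, so each closure forces a transport of a reduction across $\sequiv$ (via Lemma~\ref{lemma:store-equiv-pres-red} and Corollary~\ref{coro:store-equiv-pres-redt}) and a final appeal to transitivity of $\sequiv$. A secondary point of care is threading the well-typedness side conditions demanded by the diamond lemma and the induction hypothesis, which is precisely why preservation must be applied along the decomposed reduction before descending into the inductive call.
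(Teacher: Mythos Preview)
Your proposal is correct and follows essentially the same approach as the paper's proof: induction on the length of the $\redt$ reduction, peeling off the first step, closing the small diamond via Corollary~\ref{coro:diamond-property-redm}, invoking preservation to re-establish the typing precondition, applying the induction hypothesis on the tail, and then transporting the resulting $\redt$ reduction across the store equivalence via Corollary~\ref{coro:store-equiv-pres-redt} before concluding by transitivity of $\sequiv$. Your base case is in fact spelled out more carefully than the paper's, which glosses over the transport of the single $\redm$ step across $\sta_1 \sequiv \sta_2$.
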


\begin{proof}
	By induction on the length of the reduction
  $\st{\sta_2}{t}\redt \st{\sta'_2}{t_2}$.

  \emph{When there is zero step}.
  Then $\sta'_2 = \sta_2$ and $t_2 = t$.
  We can set $t' = t_1$,
  then reduce $t_2$ one step to $t_1$,
  and reduce $t_1$ zero step.

  \emph{When $\st{\sta_2}{t}\redt \st{\sta'_2}{t_2} = \st{\sta_2}{t}\redm \st{\sta_0}{t_0}\redt \st{\sta'_2}{t_2}$}.
  By Corollary \ref{coro:diamond-property-redm}
  we can show that there exists $t'$ such that
  $\st{\sta'_1}{t_1}\redm \st{\sta''_1}{t'}$,
  $\st{\sta_0}{t_0}\redm \st{\sta'_0}{t'}$,
  and $\sta''_1 \sequiv \sta'_0$
  Now we use the preservation theorem
  so that we can invoke IH to show that there exists $t''$ such that
  $\st{\sta'_0}{t'}\redt \st{\sta''_0}{t''}$,
  $\st{\sta'_2}{t_2}\redm \st{\sta''_2}{t''}$,
  and $\sta''_0 \sequiv \sta''_2$.
  Now we invoke Corollary \ref{coro:store-equiv-pres-redt} to show that
  there exists $\sta'''_1$ such that
  $\st{\sta''_1}{t'}\redt \st{\sta'''_1}{t''}$,
  and $\sta'''_1 \sequiv \sta''_0$.
  Therefore, we can show that
  $\st{\sta'_1}{t_1}\redt \st{\sta'''_1}{t''}$.
  Also, we can show that $\sta'''_1 \sequiv \sta''_2$,
  and thus conclude this case.
\end{proof}


\begin{proof}
	By induction on the length of the first reduction.

  \emph{When the length is zero}.
  This case can be trivially concluded.

  \emph{When $\st{\sta_1}{t}\redm \st{\sta_0}{t_0}\redt \st{\sta'_1}{t_1}$}.
  We first invoke Lemma \ref{lemma:asym-diamond-property-redt} to show that
  $\exists t'$ such that
  $\st{\sta_0}{t_0}\redt \st{\sta'_0}{t'}$,
  $\st{\sta'_2}{t_2}\redm \st{\sta''_2}{t'}$,
  and $\sta'_0 \sequiv \sta''_2$.
  Then, we use the preservation lemma so that we can invoke IH,
  showing that
  $\exists t''$ such that
  $\st{\sta'_1}{t_1}\redt \st{\sta''_1}{t''}$,
  $\st{\sta'_0}{t_0}\redt \st{\sta''_0}{t''}$,
  and $\sta''_1 \sequiv \sta''_0$.
  Now by Corollary \ref{coro:store-equiv-pres-redt} we can show that
  $\exists \sta'''_2$ such that
  $\st{\sta''_2}{t'}\redt \st{\sta'''_2}{t''}$
  and $\sta'''_2 \sequiv \sta''_0$.
  Therefore, we have
  $\st{\sta'_2}{t_2}\redt \st{\sta'''_2}{t''}$
  and can show that $\sta'''_2 \sequiv \sta''_1$,
  thus concluding this case.
\end{proof}

\begin{lemma}[Answers do not reduce]
	Given any store $\sta$ and an answer $a$,
  $\st{\sta}{a}\red \st{\sta'}{t'}$ is impossible.
\end{lemma}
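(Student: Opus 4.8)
The plan is to argue by contradiction, exploiting the uniform shape of the reduction rules in Figure~\ref{fig:all-typing}: every rule fires only on a configuration whose term has the form $\evctx{s}$, where the focused subterm $s$ is drawn from a fixed list of \emph{redex} shapes --- an application $x\,y$ (for \ruleref{apply}), a type application $x[S]$ (for \ruleref{tapply}), an unboxing $\tUnbox{C}{x}$ (for \ruleref{open}), a read $\tRead{x}$ (for \ruleref{get}), a let-binding $\tLetMode{m}{x}{s_0}{t}$ (for \ruleref{lift-let} and \ruleref{rename}), a mutable-variable binding $\tLetVarM{D}{x}{y}{t}$ (for \ruleref{lift-var}), or a write $\tWrite{x}{y}$ (for \ruleref{lift-set}). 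Hence, if $\st{\sta}{a}\red\st{\sta'}{t'}$ held, I would necessarily be able to decompose $a = \evctx{s}$ for some evaluation context $e$ and some redex $s$ of one of these shapes, and the goal becomes to show this decomposition is impossible when the whole term is an answer.

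First I would recall, from the grammar in Figure~\ref{fig:syntax}, that an answer $a$ is either a value $v$ --- one of $\lambda(x:_D T)\,t$, $\lambda[X<:S]\,t$, $\tBox{x}$, or $\reader{x}$ --- or a bare variable $x$. The argument then proceeds by a case split on the evaluation context $e$ in the hypothetical decomposition $a = \evctx{s}$. If $e = []$, then $a = s$ is itself a redex; but a direct syntactic comparison shows that none of the four value forms, nor a bare variable, coincides with any of the redex shapes above (in particular $\tBox{x}$ is a box, distinct from the unboxing form $\tUnbox{C}{x}$), so this case cannot arise. If $e \neq []$, then $e$ begins with either $\tLetMode{m}{x}{e'}{t}$ or $\tLetMode{\parmode}{x}{t}{e'}$, so $\evctx{s}$ is syntactically a let-expression; but neither a value nor a variable is a let-expression, a contradiction. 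These two cases are exhaustive, which closes the argument.

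Equivalently, one could phrase the whole thing as a single case analysis on the reduction derivation, discharging each of the eight rules by observing that its left-hand side can never be instantiated to an answer. I expect essentially no obstacle: the statement is a syntactic disjointness check between the answer grammar and the redex grammar, and all the supporting machinery (evaluation-context shapes, value/term grammars) is already fixed by Figure~\ref{fig:syntax}. The only point deserving a line of care is the non-empty evaluation-context case, where I would note that a non-empty $e$ forces the entire term to be a let-expression, ruling out both sub-cases of an answer at once; this is immediate from the three productions for $e$.
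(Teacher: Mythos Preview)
Your proposal is correct and takes essentially the same approach as the paper, which simply states ``by straightforward case analysis on the reduction derivation, wherein none of the rules reduces an answer.'' Your version spells out the syntactic disjointness argument more explicitly---in particular the observation that a non-empty evaluation context forces the term to be a let-expression---but this is just a more detailed rendering of the same idea.
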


\begin{proof}
	By straightforward case analysis on the reduction derivation,
  wherein none of the rules reduces an answer.
\end{proof}

\begin{corollary}
	\label{coro:ans-red-self}
  Given any store $\sta$ and an answer $a$,
  $\st{\sta}{a}\redt \st{\sta'}{t}$ implies that
  $\sta' = \sta$ and $t = a$.
\end{corollary}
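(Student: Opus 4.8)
The plan is to treat this as an immediate consequence of the immediately preceding lemma that answers do not reduce, and to make the ``zero or more steps'' phrasing precise by inducting on the length of the reduction sequence $\st{\sta}{a}\redt \st{\sta'}{t}$. Since $\redt$ was defined as the reflexive and transitive closure of the single-step relation $\red$, every such derivation is a finite chain of single steps, and the length of that chain is the natural quantity to induct on.

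First I would dispatch the base case, where the chain has length zero. Here $\redt$ acts purely through its reflexive component, so the configuration is related only to itself, giving $\sta' = \sta$ and $t = a$ directly, which is exactly the desired conclusion. For the inductive (in fact contradiction) step, suppose the chain has positive length, so that it factors as $\st{\sta}{a}\red \st{\sta_0}{t_0}\redt \st{\sta'}{t}$ for some intermediate configuration $\st{\sta_0}{t_0}$. The first transition is a genuine single-step reduction whose source term is the answer $a$. The preceding lemma states precisely that $\st{\sta}{a}\red \st{\sta_0}{t_0}$ cannot hold for any answer $a$, so this factorization is impossible. Hence no positive-length chain exists, and the length-zero case already treated is the only possibility.

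There is essentially no genuine obstacle here: the full content of the corollary is carried by the lemma that answers are irreducible under a single step, and the corollary merely lifts that fact across the reflexive--transitive closure. The only points requiring a little care are recognizing that the reflexive component of $\redt$ is what supplies the base case, and observing that any chain of positive length must open with a single step emanating from the answer $a$ itself, which is exactly the configuration the lemma forbids.
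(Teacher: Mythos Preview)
Your proposal is correct and matches the paper's approach: the corollary is stated without a separate proof precisely because it follows immediately from the preceding lemma by inducting on the length of the $\redt$ chain, exactly as you describe.
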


\uniquenessofanswer*

\begin{proof}
	We first use Theorem \ref{thm:confluence} to show that
  $\exists \sta_1', \sta_2', t'$ such that
  $\st{\sta_1}{a_1}\redt \st{\sta'_1}{t'}$
  and $\st{\sta_2}{a_1}\redt \st{\sta'_2}{t'}$
  where $\sta'_1 \sequiv \sta'_2$.
  Now we use Corollary \ref{coro:ans-red-self} to show that
  $t' = a_1 = a_2$,
  $\sta_1 = \sta'_1$ and $\sta_2 = \sta'_2$,
  which conclude our goal.
\end{proof}


\end{document}